\def\be{\begin{equation}}
\def\ee{\end{equation}}
\def\bea{\begin{eqnarray}}
\def\eea{\end{eqnarray}}
\author{}
\title{}
\DeclareMathOperator*{\argmin}{\arg\!\min}
\DeclareMathOperator*{\plim}{p\!\lim}
\begin{document}
\newcommand\blfootnote[1]{
\begingroup
\renewcommand\thefootnote{}\footnote{#1}
\addtocounter{footnote}{-1}
\endgroup
}

\newtheorem{corollary}{Corollary}
\newtheorem{definition}{Definition}
\newtheorem{lemma}{Lemma}
\newtheorem{proposition}{Proposition}
\newtheorem{remark}{Remark}
\newtheorem{theorem}{Theorem}
\newtheorem{assumption}{Assumption}
\newtheorem{example}{Example}

\numberwithin{corollary}{section}
\numberwithin{definition}{section}
\numberwithin{lemma}{section}
\numberwithin{proposition}{section}
\numberwithin{remark}{section}
\numberwithin{theorem}{section}

\allowdisplaybreaks[4]

\begin{titlepage}

\begin{center}
{\large \textbf{A Class of Time--Varying Vector Moving Average ($\infty$) Models:\\ Nonparametric Kernel Estimation and Application}}\blfootnote{{\it Corresponding author}: Jiti Gao, Department of Econometrics and Business Statistics, Monash University, Caulfield East, Victoria 3145, Australia. Email: \url{Jiti.Gao@monash.edu}.

The authors of this paper would like to thank George Athanasopoulos, Rainer Dahlhaus, David Frazier, Oliver Linton, Gael Martin, Peter CB Phillips and Wei Biao Wu for their constructive comments on earlier versions of this paper. The second author would also like to acknowledge financial support from the Australian Research Council Discovery Grants Program under Grant Numbers: DP170104421 and DP200102769.}

\bigskip

{\sc Yayi Yan, Jiti Gao and Bin Peng
\smallskip}

Monash University

\bigskip

\today

\end{center}

\begin{abstract}

Multivariate dynamic time series models are widely encountered in practical studies, e.g., modelling policy transmission mechanism and measuring connectedness between economic agents. To better capture the dynamics, this paper proposes a wide class of multivariate dynamic models with time--varying coefficients, which have a general time--varying vector moving average (VMA) representation, and nest, for instance, time--varying vector autoregression (VAR), time--varying  vector autoregression moving--average (VARMA), and so forth as special cases. The paper then develops a unified estimation method for the unknown quantities before an asymptotic theory for the proposed estimators is established. In the empirical study, we investigate the transmission mechanism of monetary policy using U.S. data, and uncover a fall in the volatilities of exogenous shocks. In addition, we find that (i) monetary policy shocks have less influence on inflation before and during the so--called Great Moderation, (ii) inflation is more anchored recently, and (iii) the long--run level of inflation is below, but quite close to the Federal Reserve's target of two percent after the beginning of the Great Moderation period.

\bigskip

\noindent{\bf Keywords}: Multivariate Time Series Model; Nonparametric Kernel Estimation; Trending Stationarity

\end{abstract}
\end{titlepage}

\section{Introduction}\label{Section1}
\renewcommand{\theequation}{1.\arabic{equation}}

\setcounter{equation}{0}

Vector autoregressions (VARs), as well as their extensions like vector autoregressive moving average (VARMA)
models and VARs with exogenous variables (VARX), are among some of the most popular frameworks for modelling dynamic interactions among multiple variables. These models arise mainly as a response to the ``incredible'' identification conditions embedded in the large--scale macroeconomic models \cite[]{sims1980}. VAR modelling begins with minimal restrictions on the multivariate dynamic models. Gradually armed with identification information, VARs plus their statistical tool--kits like impulse response functions, are powerful approaches for conducting policy analysis. Also, VARs can be applied to other important tasks including data description and forecasting, see \cite{stock2001vector} for a detailed review. Despite the popularity, linear VAR models can always be rejected by data in empirical studies \cite[cf.,][]{tsay1998testing}. For example, \cite{stock2016dynamic} point out, ``changes associated with the Great Moderation go beyond reduction in variances to include changes in dynamics and reduction in predictability.''

To go beyond linear VAR models, various parametric time--varying VAR models have been proposed (e.g., \citealp{tsay1998testing}, \citealp{sims2006}, and references therein) in order to allow for abrupt structural breaks in economic relationships and obtain efficient estimation. However, model misspecification and parameter instability may undermine the performance of parametric time--varying VAR models. Usually, researchers do not know the true functional forms of the time--varying parameters, so the choice on the functional forms might be somewhat arbitrary. In addition, as pointed out by \cite{hansen2001new}, ``it may seem unlikely that a structural break could be immediate and might seem more reasonable to allow a structural change to take a period of time to take effect''. Therefore, it is reasonable to allow smooth structural changes over a period of time rather than abrupt structural changes. Another strand of the VAR literature assumes that structural coefficients evolve in a random way, such as \cite{primiceri2005time} and \cite{giannone2019priors}. Recently, \cite{giraitis2014inference} point out that ``the theoretical asymptotic properties of estimating such processes via the Kalman, or related filters are unclear''. Along this line, \cite{giraitis2014inference} and \cite{giraitis2018inference} have achieved some useful asymptotic results. However, estimation theory for time--varying impulse response functions, which are of interest in interpreting multivariate dynamic models, are not yet established in the random walk case.

It is worth pointing out that although nonparametric estimation for deterministic time--varying models has received much attention initially on time series regression models (\citealp{robinson1989nonparametric, cai07, lcg11, cgl12, zhang2012inference}) over the past three decades and then on univariate autoregressive models (\citealp{dahlhaus2006statistical, richter2019cross}) in recent years, little study about multivariate autoregressive models with deterministic time--varying coefficients has been done. One exception is \cite{dahlhaus1999nonlinear}, who use a wavelet method to transform the time--varying VAR model to a linear approximation with an orthonormal wavelet basis, and then show that the wavelet estimator attains the usual near--optimal minimax rate of $L_2$ convergence. 
Up to this point, it is worth bringing up the terminology ``local stationarity", which dates back to the seminal work by \cite{dahlhaus1996kullback} at least. While several studies have been conducted along this line (\citealp{dahlhaus1996kullback} and \citealp{zhang2012inference} on time--varying AR;  \citealp{dahlhaus2009empirical} on time--varying ARMA; \citealp{rohan2013nonparametric} and \citealp{truquet2017parameter} on time--varying ARCH/GARCH), the literature has not ventured much outside the univariate setting. A commonly used method is to approximate a locally stationary process by a stationary approximation on each of the segments \cite[]{dahlhaus2019towards}. However, it remains unclear to us how to extend this approximation method for the univariate setting to the multivariate case where the segments on which stationarity approximations for each univariate time series may be quite different.

This paper is to show the versatility of an alternative approach that is especially designed for a wide class of time--varying VMA$(\infty)$ processes. Our approach relies on an explicit decomposition of time--varying VMA$(\infty)$ processes into long--run and transitory elements, which is known as the Beveridge--Nelson (BN) decomposition (\citealp{beveridge1981new,phillips1992asymptotics}). The long--run component in the decomposition yields a martingale approximation to the sum of time--varying VMA$(\infty)$ processes. We are then able to deal with a wide class of multivariate dynamic models with smooth time--varying coefficients, which have a general time--varying VMA$(\infty)$ representation, nesting VAR, VARMA, VARX and so forth as special cases. Specifically, the structural coefficients are unknown functions of the re--scaled time, so that the proposed models can better capture the simultaneous relations among variables of interest over time. Such a modelling strategy is especially useful for analysing time series over a long horizon, since it offers a comprehensive treatment on tracking interests which are affected by frequently updated policies, environment, system, etc. In an economy system consisting of inflation, unemployment and interest rates, one priority in Section \ref{Section4} is inferring time--varying impacts of the interest rate change, which helps stabilize fluctuations in inflation and unemployment in long--run. Under the proposed framework, it is achieved by investigating the corresponding time--varying impulse response functions.

In summary, our contributions are in threefold. First, we propose a wide class of time--varying VMA$(\infty)$ models, which covers several classes of multivariate dynamic models. Second, we develop a time--varying counterpart of the conventional BN decomposition and propose a unified estimation method for a class of unknown time--varying functions. We then establish the corresponding asymptotic theory for the proposed models and estimators. Third, in the empirical study of Section \ref{Section4}, we study the changing dynamics of three key U.S. macroeconomic variables (i.e., inflation, unemployment, and interest rate), and uncover a fall in the volatilities of exogenous shocks. In addition, we find that (i) monetary policy shocks have less influence on inflation before and during the so--called Great Moderation; (ii) inflation is more anchored recently; and (iii) the long--run level of inflation is below, but quite close to the Federal Reserve's target of two percent after the beginning of the Great Moderation period.

The organization of this paper is as follows. Section \ref{Section2} proposes a class of time--varying VMA($\infty$) models. In Section \ref{Section3}, we discuss a class of time--varying VAR models and then establish an estimation theory for the unknown quantities. Section \ref{Section4} presents an empirical study to show the practical relevance and applicability of the proposed models and estimation theory. Section \ref{Section5} gives a short conclusion. The main proofs of the theorems are given in Appendix A. In the online supplementary material, simulation results are given in Appendix B.1 and some technical lemmas and proofs are given in the rest of Appendix B.

Before proceeding further, it is convenient to introduce some notation: $\| \cdot \|$ denotes the Euclidean norm of a vector or the Frobenius norm of a matrix; $\otimes$ denotes the Kronecker product; $\bm{I}_a$ and $\bm{0}_a$ are $a\times a$ identity and null matrices respectively, and $\bm{0}_{a\times b}$ stands for a $a\times b$ matrix of zeros; for a function $g(w)$, let $g^{(j)}(w)$ be the $j^{th}$ derivative of $g(w)$, where $j\ge 0$ and $g^{(0)}(w) \equiv g(w)$; $K_h(\cdot) =K(\cdot/h)/h$, where $K(\cdot)$ and $h$ stand for a nonparametric kernel function and a bandwidth respectively; let $\tilde{c}_k =\int_{-1}^{1} u^k K(u) du$ and $\tilde{v}_k= \int_{-1}^{1} u^k K^2(u) du$ for integer $k\ge 0$; $\mathrm{vec}(\cdot)$ stacks the elements of an $m\times n$ matrix as an $mn \times 1$ vector; for any $a\times a$ square matrix $\bm{A}$, $\mathrm{vech}\left(\bm{A}\right)$ denotes the $\frac{1}{2}a(a+1)\times 1$ vector obtained from $\mathrm{vec}\left(\bm{A}\right)$ by eliminating all supra--diagonal elements of $\bm{A}$; $\mathrm{tr}\left(\bm{A}\right)$ denotes the trace of $\bm{A}$; finally, let $\to_P$ and $\to_D$ denote convergence in probability and convergence in distribution, respectively.

\section{Structure of Time--Varying VMA\texorpdfstring{$(\infty)$}{}}\label{Section2}
\renewcommand{\theequation}{2.\arabic{equation}}

\setcounter{equation}{0}

We start our investigation by considering a class of time--varying VMA$(\infty)$ model:

\begin{equation}\label{eq1}
\bm{x}_t=\bm{\mu}_t+\sum_{j=0}^{\infty}\bm{B}_{j,t}\bm{\epsilon}_{t-j} := \bm{\mu}_t+ \bm{\mathbb{B}}_t(L)\bm{\epsilon}_t
\end{equation}
for $t=1,\ldots,T$, where $\bm{x}_t$ is a $d$--dimensional vector of observable variables, $\bm{\mu}_t$ is a $d$--dimensional unknown trending function, $\bm{\epsilon}_t$ is a vector of $d$--dimensional random innovations, and $d$ is fixed throughout this paper. Moreover, $\bm{\mathbb{B}}_t(L)=\sum_{j=0}^{\infty}\bm{B}_{j,t}L^j$, where $L$ is the lag operator, and $\bm{B}_{j,t}$ is a matrix of $d \times d$ unknown deterministic coefficients.

We first comment on the usefulness of the structure in \eqref{eq1}, and the corresponding BN decomposition. An application of the BN decomposition gives
\begin{eqnarray}
\bm{x}_t=\bm{\mu}_t+\bm{\mathbb{B}}_t(1)\bm{\epsilon}_t+\widetilde{\mathbb{B}}_t(L)\bm{\epsilon}_{t-1}-\widetilde{\mathbb{B}}_t(L)\bm{\epsilon}_t, \label{eq2}
\end{eqnarray}
where we have used the decomposition of $\bm{\mathbb{B}}_t(L)$ as $\bm{\mathbb{B}}_t(L)= \bm{\mathbb{B}}_t(1) -(1-L)\widetilde{\mathbb{B}}_t(L)$, in which $\widetilde{\mathbb{B}}_t(L) = \sum_{j=0}^{\infty}\bm{\widetilde{B}}_{j,t}L^j$ and $\bm{\widetilde{B}}_{j,t}=\sum_{k=j+1}^{\infty}\bm{B}_{k,t}$. Equation (\ref{eq2}) indicates that one may establish some general asymptotic properties for partial sums and quadratic forms of $\bm{x}_t$ with minor restrictions on $\{\bm{\epsilon}_t\}$. For example, one can show that the simple average of $\bm{x}_t$ becomes
\begin{eqnarray}
\frac{1}{T} \sum_{t=1}^T \bm{x}_t = \frac{1}{T} \sum_{t=1}^T \bm{\mu}_t + \frac{1}{T} \sum_{t=1}^T \bm{\mathbb{B}}_t(1)\bm{\epsilon}_t + O_P(T^{-1}).
\label{eq3}
\end{eqnarray}
Similar to \eqref{eq3}, asymptotic properties for partial sums and quadratic forms mainly depend on the probabilistic structure of $\{\bm{\epsilon}_t\}$ and regularity conditions on $\{\bm{B}_{j,t}\}$. In other words, there is no need to impose any further structure on $\bm{x}_t$, such as requiring $\bm{x}_t$ to be locally stationary time series in a similar way to what has been done in the relevant literature for the univariate case. As a consequence, it facilitates to develop general theory for the multivariate case. Moreover, it should be added that our settings in (\ref{eq2}) and (\ref{eq3}) considerably extend similar treatments by \cite{phillips1992asymptotics} for the univariate linear process case where both $\bm{\mu}_t$ and $\bm{B}_{j,t}$ reduce to constant scalars: $\bm{\mu}_t = \mu$ and $\bm{B}_{j,t} = B_j$.

\subsection{Examples and Useful Properties}\label{Section2.1}

Let us now stress that \eqref{eq1} covers a wide range of models, which are of general interest in both theory and practice. Below, we list a few examples, of which the parametric counterparts can be seen in \cite{lutkepohl2005new}.

\begin{example} \label{Example1}
\normalfont

Suppose that $\bm{x}_t$ is a $d$--dimensional time--varying VAR$(p)$ process:
\begin{equation}\label{eq4}
\bm{x}_t=\bm{A}_{1,t}\bm{x}_{t-1}+\cdots +\bm{A}_{p,t}\bm{x}_{t-p}+\bm{\epsilon}_t,
\end{equation}
which has been widely studied in the literature with Bayesian framework being the dominant approach (e.g., \citealp{benati2009var, paul2019time}). Similar to \citet[p. 260]{hamilton1994time}, \eqref{eq4} can be expressed as a time--varying MA($\infty$) process $\bm{x}_t=\sum_{j=0}^{\infty}\bm{B}_{j,t}\bm{\epsilon}_{t-j}$, where $\bm{B}_{0,t}=\bm{I}_d$ and $\bm{B}_{j,t}=\bm{J}\prod_{i=0}^{j-1}\bm{\Phi}_{t-i}\bm{J}^\top$ for $j\geq 1$ with $\bm{J}=[\bm{I}_d,\bm{0}_{d\times d(p-1)}]$ and
\begin{eqnarray*}
\bm{\Phi}_t=\left(\begin{matrix}
\bm{A}_{1,t} & \cdots & \bm{A}_{p-1,t} & \bm{A}_{p,t}  \\
\bm{I}_d & \cdots& \bm{0}_{d} & \bm{0}_{d}\\
\vdots & \ddots&\vdots & \vdots\\
\bm{0}_{d} &\cdots &\bm{I}_d &\bm{0}_{d}\\
\end{matrix} \right).
\end{eqnarray*}
\end{example}

\begin{example} \label{Example2}
\normalfont

Suppose that $\bm{x}_t$ is a $d$--dimensional time--varying VARMA$(p,q)$ process as follows:
\begin{equation}\label{eq5}
\bm{x}_t=\bm{A}_{1,t}\bm{x}_{t-1}+...+\bm{A}_{p,t}\bm{x}_{t-p}+\bm{\epsilon}_t+\bm{\Theta}_{1,t}\bm{\epsilon}_{t-1}+...+\bm{\Theta}_{q,t}\bm{\epsilon}_{t-q},
\end{equation}
which then can be expressed as $\bm{x}_t =\sum_{b=0}^{\infty}\bm{D}_{b,t}\bm{\epsilon}_{t-b}$ with $\bm{D}_{b,t}=\sum_{j=\mathrm{max}(0,b-q)}^{b} \bm{B}_{j,t}\bm{\Theta}_{b-j,t-j}$, $\bm{B}_{j,t}$ defined similarly as in Example \ref{Example1}, and $\bm{\Theta}_{0,t}\equiv \bm{I}_d$ independent of $t$.
\end{example}

\begin{example} \label{Example3}
\normalfont

Let $\bm{x}_t$ be a $d$--dimensional time--varying double MA$(\infty)$ process:
\begin{equation}\label{eq6}
\bm{x}_t=\sum_{j=0}^{\infty}\bm{\Psi}_{j,t}\bm{v}_{t-j}\quad \text{and}\quad \bm{v}_t=\sum_{l=0}^{\infty}\bm{\Theta}_{l,t}\bm{\epsilon}_{t-l},
\end{equation}
in which the innovations $\bm{v}_{t}$'s also follow a time--varying MA$(\infty)$ process. Simple algebra shows that $\bm{x}_t =\sum_{j=0}^{\infty} \bm{B}_{j,t} \bm{\epsilon}_{t-j}$, where $\bm{B}_{j,t}=\sum_{l=0}^{j}\bm{\Psi}_{l,t}\bm{\Theta}_{j-l,t-l}$.
\end{example}

\medskip

To facilitate the development of our general theory, we introduce the following assumptions.

\begin{assumption}\label{Ass1}
$\max_{t\ge 1} \sum_{j=1}^{\infty} j  \|\bm{B}_{j,t}\|  < \infty$, $\limsup_{T\to \infty}\sum_{t=1}^{T-1}\sum_{j=1}^{\infty}j \|\bm{B}_{j,t+1}-\bm{B}_{j,t}\|<\infty$, and $\limsup_{T\to \infty}\sum_{t=1}^{T-1}\|\bm{\mu}_{t+1}-\bm{\mu}_{t}\|<\infty$.
\end{assumption}

\begin{assumption}\label{Ass2}
  $\{\bm{\epsilon}_t\}_{t=-\infty}^{\infty}$ is a martingale difference sequences (m.d.s.) adapted to the filtration $\left\{\mathcal{F}_t\right\}$, where $\mathcal{F}_t=\sigma\left(\bm{\epsilon}_t,\bm{\epsilon}_{t-1},\ldots\right)$ is the $\sigma$--field generated by $\left(\bm{\epsilon}_t,\bm{\epsilon}_{t-1},\ldots\right)$, $E\left(\bm{\epsilon}_t \bm{\epsilon}_t^\top | \mathcal{F}_{t-1}\right)=\bm{I}_d$ almost surely (a.s.), and $ \max_{t\ge 1}E\left[\left\|\bm{\epsilon}_t\right\|^\delta\right] < \infty$ for some $\delta > 2$.
\end{assumption}

Assumption \ref{Ass1} regulates the matrices $\bm{B}_{j,t}$'s, and ensures the validity of the BN decomposition under the time--varying framework. It covers cases such as (i) the parametric setting of \cite{phillips1992asymptotics}, and (ii) $\bm{B}_{j,t}:= \bm{B}_j(\tau_t)$, where $\bm{B}_j(\cdot)$ satisfies Lipschitz continuity on $[0,1]$ for all $j$. Assumption \ref{Ass2} imposes conditions on the innovation error terms by replacing the commonly used independent and identically distributed (i.i.d.) innovations (e.g., \citealp{dahlhaus2009empirical}) with a martingale difference structure.

\medskip

We are now ready to present a summary of useful results for Examples 1--3, which explains why model \eqref{eq1} serves as a foundation of the examples given above.

\begin{proposition}\label{Proposition2.1}
\item

\begin{enumerate}
\item Consider Examples \ref{Example1} and \ref{Example2}. Suppose that the roots of $\bm{I}_d-\bm{A}_{1,t}-\cdots -\bm{A}_{p,t}=\bm{0}_d$ all lie outside the unit circle uniformly over $t$, $\limsup_{T\to \infty}\sum_{t=1}^{T-1}\left\|\bm{A}_{m,t+1}-\bm{A}_{m,t}\right\| < \infty$ for $m=1,\ldots,p$ and $\bm{A}_{m,t}=\bm{A}_{m,1}$ for $t \leq 0$ and $m=1,\ldots,p$. In addition, suppose that in Example 2, $\limsup_{T\to \infty}\sum_{t=1}^{T-1}\left\|\bm{\Theta}_{m,t+1}-\bm{\Theta}_{m,t}\right\| < \infty$ for $m=1,\ldots,q$. Then both \eqref{eq5} and \eqref{eq6} are time--varying MA$(\infty)$ processes, in which the MA coefficients satisfy Assumption \ref{Ass1}.

\item For Example \ref{Example3}, let $\limsup_{T\to \infty}\sum_{t=1}^{T-1}\sum_{j=1}^{\infty}j \|\bm{\Psi}_{j,t+1}-\bm{\Psi}_{j,t}\|<\infty$ and $\max_{t\ge 1} \sum_{j=1}^{\infty} j  \|\bm{\Psi}_{j,t}\|  < \infty$. Moreover, let $\{\bm{\Theta}_{j,t}\}$ satisfy the same conditions as those for $\{\bm{\Psi}_{j,t}\}$. Then \eqref{eq6} is a time--varying MA$(\infty)$, in which the MA coefficients satisfy Assumption \ref{Ass1}.
\end{enumerate}
\end{proposition}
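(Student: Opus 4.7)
My plan is to reduce Part 1 to a uniform geometric decay of products of companion matrices, and to handle Part 2 by a direct convolution-style estimate. Throughout, Assumption \ref{Ass1} has three parts, but only the two conditions on the MA coefficients are at stake here.

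\textbf{Part 1.} The crucial technical step is the uniform geometric bound
\begin{equation*}
\Bigl\|\prod_{i=0}^{j-1}\bm{\Phi}_{t-i}\Bigr\|\le M\rho^j\qquad\text{for all }t,j,
\end{equation*}
for some $M>0$ and $\rho\in(0,1)$. Its ingredients are: (a) $\sup_t\|\bm{\Phi}_t\|<\infty$, obtained by telescoping the summable differences of $\bm{A}_{m,t}$ from the fixed pre-sample values $\bm{A}_{m,1}$; (b) the uniform root hypothesis, which forces $\sup_t\rho(\bm{\Phi}_t)\le\rho_0<1$; and (c) the slow-variation $\sum_t\|\bm{\Phi}_{t+1}-\bm{\Phi}_t\|<\infty$, inherited from the $\bm{A}_{m,t}$ hypothesis. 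Combining (a) and (b) with compactness of the closure of $\{\bm{\Phi}_t\}$ and Gelfand's formula furnishes a uniform $N$ with $\|\bm{\Phi}_t^N\|\le\rho_1^N$ for some $\rho_1\in(\rho_0,1)$; (c) then lets me block the index set into chunks of length $N$ and compare $\prod_i\bm{\Phi}_{t-i}$ with the frozen power $\bm{\Phi}_t^j$ to extract the claimed uniform decay.

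Given this decay, condition (i) of Assumption \ref{Ass1} is immediate from $\bm{B}_{j,t}=\bm{J}\prod_i\bm{\Phi}_{t-i}\bm{J}^\top$ together with $\sum_j j\rho^j<\infty$. For condition (ii), I would use the telescoping product-rule identity
\begin{equation*}
\prod_{i=0}^{j-1}\bm{\Phi}_{t+1-i}-\prod_{i=0}^{j-1}\bm{\Phi}_{t-i}=\sum_{k=0}^{j-1}\Bigl(\prod_{i<k}\bm{\Phi}_{t+1-i}\Bigr)(\bm{\Phi}_{t+1-k}-\bm{\Phi}_{t-k})\Bigl(\prod_{i>k}\bm{\Phi}_{t-i}\Bigr),
\end{equation*}
bound the two outer products by $M\rho^k$ and $M\rho^{j-k-1}$, and swap the order of summation in $t$ and $k$ to get $\sum_t\|\bm{B}_{j,t+1}-\bm{B}_{j,t}\|\le Cj\rho^{j-1}$, whence $\sum_j j\cdot Cj\rho^{j-1}<\infty$. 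The VARMA case in Example~\ref{Example2} then reduces to the VAR case: $\bm{D}_{b,t}=\sum_{j=\max(0,b-q)}^b\bm{B}_{j,t}\bm{\Theta}_{b-j,t-j}$ has only $q+1$ summands, and the hypotheses provide uniform boundedness plus summable differences for $\{\bm{\Theta}_{m,t}\}$, so a triangle-inequality bookkeeping transfers both conditions from the $\bm{B}_{j,t}$ to the $\bm{D}_{b,t}$.

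\textbf{Part 2.} For the double MA$(\infty)$ with $\bm{B}_{j,t}=\sum_{l=0}^j\bm{\Psi}_{l,t}\bm{\Theta}_{j-l,t-l}$, the estimates are a direct convolution computation. Using $j\le l+(j-l)$,
\begin{equation*}
\sum_j j\|\bm{B}_{j,t}\|\le\Bigl(\sum_l l\|\bm{\Psi}_{l,t}\|\Bigr)\Bigl(\sum_m\|\bm{\Theta}_{m,t-l}\|\Bigr)+\Bigl(\sum_l\|\bm{\Psi}_{l,t}\|\Bigr)\Bigl(\sum_m m\|\bm{\Theta}_{m,t-l}\|\Bigr),
\end{equation*}
each factor bounded uniformly in $t$ by hypothesis. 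Condition (ii) is handled by decomposing $\bm{\Psi}_{l,t+1}\bm{\Theta}_{j-l,t+1-l}-\bm{\Psi}_{l,t}\bm{\Theta}_{j-l,t-l}$ by adding and subtracting $\bm{\Psi}_{l,t}\bm{\Theta}_{j-l,t+1-l}$, then re-applying the convolution bound with one of $\|\bm{\Psi}\|$ or $\|\bm{\Theta}\|$ replaced by its first difference, and summing in $t$ using the summability hypotheses. The main obstacle is the uniform geometric bound in Part 1: a uniform spectral-radius hypothesis alone is not enough (one can alternate two nilpotent matrices whose pairwise products have arbitrarily large spectral norm), so the proof must genuinely exploit the summable-difference hypothesis to pass from the frozen-in-time power estimate to the time-varying product estimate; once that is cleared, everything else is bookkeeping.
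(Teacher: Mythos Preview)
Your approach is correct and matches the paper's: both rest on the uniform geometric decay $\|\prod_{i=0}^{j-1}\bm{\Phi}_{t-i}\|\le M\rho^j$, then invoke the same telescoping product-rule identity to obtain $\sum_t\|\bm{B}_{j,t+1}-\bm{B}_{j,t}\|\le Cj\rho^{j-1}$ for condition (ii), handle Example~\ref{Example2} by the same finite-sum bookkeeping, and treat Part~2 via the identical convolution bound with the $j\le l+(j-l)$ split. The only difference is that the paper obtains the geometric decay by citing Proposition~2.4 of Dahlhaus and Polonik (2009) without further comment, whereas you sketch the compactness/Gelfand/blocking argument and correctly note that the uniform spectral-radius hypothesis alone is insufficient without the summable-difference (slow-variation) structure.
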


\medskip

We now move on to investigate asymptotic properties for model \eqref{eq1}. We first propose some estimates for several population moments of $\bm{x}_t$ in \eqref{eq1}, which help derive the asymptotic theory throughout this paper. To conserve space, we present the rates of the uniform convergence below, while extra results on point--wise convergence are given in the supplementary Appendix B.

\begin{theorem}\label{Theorem2.1}
Let Assumptions \ref{Ass1} and \ref{Ass2} hold. In addition, let $\left\{\bm{W}_{T,t}(\cdot)\right\}_{t=1}^T$ be a sequence of $m \times d$ matrices of deterministic functions, in which $m$ is fixed, each functional component is Lipschitz continuous and defined on a compact set $[a,b]$. Moreover, suppose that

\begin{enumerate}
\item $\sup_{\tau\in[a,b]}\sum_{t=1}^{T} \|\bm{W}_{T,t}(\tau) \|=O(1)$;
\item $\sup_{\tau\in[a,b]}\sum_{t=1}^{T-1} \|\bm{W}_{T,t+1}(\tau)-\bm{W}_{T,t}(\tau) \| = O(d_T)$, where $d_T=\sup_{\tau\in[a,b],t\ge 1} \|\bm{W}_{T,t}(\tau) \|$.
\end{enumerate}
Then as $T\to \infty$,
\begin{enumerate}
\item $ \sup_{\tau\in[a,b]} \|\sum_{t=1}^{T}\bm{W}_{T,t}(\tau)\left(\bm{x}_t-E(\bm{x}_t)\right) \|=O_P(\sqrt{d_T\log T})$ provided $T^{\frac{2}{\delta}} d_T \log T \rightarrow0$;

\item $\sup_{\tau\in[a,b]} \|\sum_{t=1}^{T}\bm{W}_{T,t}(\tau)\left(\bm{x}_t\bm{x}_{t+p}^\top-E(\bm{x}_t\bm{x}_{t+p}^\top)\right)\|=O_P (\sqrt{d_T\log T} )$ for any fixed integer $p\geq0$ provided $T^{\frac{4}{\delta}} d_T \log T \rightarrow0$ and $\max_{t\ge 1} E(\|\bm{\epsilon}_t  \|^4 |\mathcal{F}_{t-1} ) < \infty$ a.s., where $\delta>2$ is the same as in Assumption 2.
\end{enumerate}
\end{theorem}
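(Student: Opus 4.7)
My plan is to combine the Beveridge--Nelson decomposition in \eqref{eq2} with a Freedman-type martingale exponential inequality, and to pass from pointwise to uniform concentration over $\tau\in[a,b]$ via a grid-plus-Lipschitz chaining argument. For Part 1, plugging \eqref{eq2} into the target quantity yields the split
\[
\sum_{t=1}^{T}\bm{W}_{T,t}(\tau)\bigl(\bm{x}_t-E(\bm{x}_t)\bigr) \;=\; \bm{M}_T(\tau)+\bm{R}_T(\tau),
\]
where $\bm{M}_T(\tau):=\sum_{t=1}^{T}\bm{W}_{T,t}(\tau)\bm{\mathbb{B}}_t(1)\bm{\epsilon}_t$ is a martingale in $t$ for each fixed $\tau$, and $\bm{R}_T(\tau):=\sum_{t=1}^{T}\bm{W}_{T,t}(\tau)\bigl[\widetilde{\mathbb{B}}_t(L)\bm{\epsilon}_{t-1}-\widetilde{\mathbb{B}}_t(L)\bm{\epsilon}_t\bigr]$ is a near-telescoping remainder.

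To handle $\bm{R}_T(\tau)$, I would set $\bm{\eta}_t:=\widetilde{\mathbb{B}}_t(L)\bm{\epsilon}_t$ and insert $\pm\bm{\eta}_{t-1}$ to obtain $\bm{R}_T(\tau)=\sum_{t}\bm{W}_{T,t}(\tau)(\bm{\eta}_{t-1}-\bm{\eta}_t)+\sum_{t}\bm{W}_{T,t}(\tau)\sum_{j}(\widetilde{\bm{B}}_{j,t}-\widetilde{\bm{B}}_{j,t-1})\bm{\epsilon}_{t-1-j}$. An Abel summation turns the first piece into $\bm{W}_{T,1}(\tau)\bm{\eta}_0-\bm{W}_{T,T}(\tau)\bm{\eta}_T+\sum_{t<T}[\bm{W}_{T,t+1}(\tau)-\bm{W}_{T,t}(\tau)]\bm{\eta}_t$, whose uniform norm is $O_P(d_T)$ by condition 2 together with the uniform second-moment bound $\sup_t E\|\bm{\eta}_t\|^2<\infty$ inherited from Assumption \ref{Ass1}; the second piece is handled identically after noting that Assumption \ref{Ass1} implies $\limsup_T\sum_t\sum_j\|\widetilde{\bm{B}}_{j,t+1}-\widetilde{\bm{B}}_{j,t}\|<\infty$ (since the doubly-summed difference of tails is dominated by $\sum_k k\|\bm{B}_{k,t+1}-\bm{B}_{k,t}\|$). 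For the dominant term $\bm{M}_T(\tau)$, condition 1 and Assumption \ref{Ass1} bound its conditional variance by $\sum_t\|\bm{W}_{T,t}(\tau)\|^2\|\bm{\mathbb{B}}_t(1)\|^2=O(d_T)$. I would truncate $\bm{\epsilon}_t$ at level $cT^{1/\delta}$ (the $\delta$-th moment in Assumption \ref{Ass2} controlling the neglected tails), apply Freedman's inequality to the truncated martingale to obtain $\Pr(\|\bm{M}_T(\tau)\|>c'\sqrt{d_T\log T})\le T^{-A}$ pointwise for $A$ arbitrarily large, and then upgrade to uniformity by union-bounding over a grid of polynomial cardinality $T^B$ in $[a,b]$ while bounding the residual $\tau$-oscillation through the entrywise Lipschitz continuity of $\bm{W}_{T,t}$. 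The restriction $T^{2/\delta}d_T\log T\to 0$ is precisely what forces the Freedman variance term to dominate the boundedness term $\lesssim d_T T^{1/\delta}\sqrt{d_T\log T}$, so that the exponential decay survives truncation.

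For Part 2, the same strategy is applied after expanding $\bm{x}_t\bm{x}_{t+p}^\top$ via \eqref{eq2}: the $\bm{\mu}\bm{\mu}^\top$ term vanishes under centering, the $\bm{\mu}$--noise cross terms reduce to Part 1, and the remaining near-telescoping pieces are handled as above. The essential new object is the bilinear form $\sum_{t}\bm{W}_{T,t}(\tau)\bigl(\bm{u}_t\bm{u}_{t+p}^\top-E(\bm{u}_t\bm{u}_{t+p}^\top)\bigr)$ for $\bm{u}_t:=\bm{\mathbb{B}}_t(1)\bm{\epsilon}_t$, which I would decompose through successive projections onto $\mathcal{F}_{t+p},\mathcal{F}_{t+p-1},\ldots$ into a sum of m.d.s.\ pieces (when $p\ge 1$) or a quadratic-form martingale (when $p=0$), whose conditional variances are finite thanks to $\max_tE(\|\bm{\epsilon}_t\|^4\mid\mathcal{F}_{t-1})<\infty$. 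Freedman then applies again, except that truncated products now live on scale $T^{2/\delta}$, which is exactly what sharpens the required bandwidth condition from $T^{2/\delta}d_T\log T\to 0$ to $T^{4/\delta}d_T\log T\to 0$. The main obstacle throughout is the joint calibration of the truncation level, the grid cardinality, and the Freedman exponent so that the union-bound penalty is absorbed into the $\log T$ factor inside the square root — a balance made delicate by the fact that the Lipschitz constants of $\bm{W}_{T,t}(\cdot)$ are not a priori bounded in $T$, so that the grid has to be chosen polynomially fine while still leaving room for the exponential inequality to deliver the target rate.
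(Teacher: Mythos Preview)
Your Part 1 is correct and essentially identical to the paper's argument: BN decomposition plus Abel summation for the $\widetilde{\mathbb{B}}$ remainder, and truncation at $T^{1/\delta}$ plus Freedman plus a polynomial grid for the martingale part (the paper packages the latter as Lemma \ref{LemmaB.5}).

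Part 2, however, has a genuine gap. Writing $\bm{x}_t-\bm{\mu}_t=\bm{u}_t+\bm{r}_t$ with $\bm{r}_t:=\widetilde{\mathbb{B}}_t(L)\bm{\epsilon}_{t-1}-\widetilde{\mathbb{B}}_t(L)\bm{\epsilon}_t$ and then multiplying out $(\bm{u}_t+\bm{r}_t)(\bm{u}_{t+p}+\bm{r}_{t+p})^\top$ does \emph{not} leave only near-telescoping residuals. The piece $\sum_t\bm{W}_{T,t}(\tau)\bigl(\bm{r}_t\bm{r}_{t+p}^\top-E[\bm{r}_t\bm{r}_{t+p}^\top]\bigr)$ is a centered quadratic form in a VMA($\infty$) process (indeed $\bm{r}_t=-\widetilde{\bm{B}}_{0,t}\bm{\epsilon}_t+\sum_{j\ge1}\bm{B}_{j,t}\bm{\epsilon}_{t-j}$ is of the same structural type as $\bm{x}_t-\bm{\mu}_t$), so bounding it uniformly at rate $\sqrt{d_T\log T}$ is exactly the problem you are trying to solve. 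The cross terms $\bm{u}_t\bm{r}_{t+p}^\top$ and $\bm{r}_t\bm{u}_{t+p}^\top$ likewise do not telescope in $t$; they split into infinitely many m.d.s.\ layers that each require the Freedman machinery, and controlling their \emph{sum} is again as hard as the original statement. In short, the BN decomposition of $\bm{x}_t$ linearizes first moments but not second moments.

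The paper's route is different in a way that matters: it never applies BN to $\bm{x}_t$ for Part 2. Instead it expands $\bm{x}_t\bm{x}_{t+p}^\top$ directly from the MA($\infty$) representation, vectorizes, and then applies a \emph{second-order} BN decomposition to the tensor operators $\bm{\mathbb{B}}_t^r(L)=\sum_{j\ge0}(\bm{B}_{j+r,t}\otimes\bm{B}_{j,t})L^j$ (Lemma \ref{LemmaB.3}.2). This produces a genuinely telescoping remainder plus a new leading martingale $\bm{\zeta}_t\bm{\epsilon}_t$ with $\bm{\zeta}_t=\sum_{r\ge1}\sum_{s\ge0}(\bm{B}_{s+r,t}\bm{\epsilon}_{t-r})\otimes\bm{B}_{s,t}$, whose conditional variance must itself be shown to be uniformly $O_P(1)$ via a further, nested BN-type argument (Lemma \ref{LemmaB.6}.2). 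That nested step is the crux you are missing; without it, the quadratic part cannot be closed.
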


Theorem \ref{Theorem2.1} is readily used for studying many useful cases, including weighted kernel estimators (see Lemma \ref{LemmaB.7} of Appendix B for example), and will be repeatedly used in many of the theoretical derivations of this paper. Theorem \ref{Theorem2.1} is also helpful to a broad range of studies, such as those mentioned in \cite{HLG20}, \cite{FanYao03}, \cite{Gao2007}, \cite{LR2007}, \cite{hansen2008uniform}, \cite{wang2014uniform}, and \cite{li2016uniform}.

\subsection{On the Trending Term --- \texorpdfstring{$\bm{\mu}_t$}{}}\label{Section2.2}

As modelling time--varying means is always an important task in time series analysis (e.g., \citealp{wu2007inference,friedrich2020autoregressive}), we infer $\bm{\mu}_t$ of the model  \eqref{eq1} below. Up to this point, we have not imposed any specific form on the components $\bm{\mu}_t $ and $\bm{B}_{j,t}$ of $\bm{x}_t$. To carry on with our investigation,  we suppose further that $\bm{\mu}_t = \bm{\mu}(\tau_t)$ and $\bm{B}_{j,t}=\bm{B}_{j}(\tau_t)$ with $\tau_t=t/T$, so \eqref{eq1} can be expressed by

\begin{equation}\label{eq7}
\bm{x}_t=\bm{\mu}(\tau_t) +\sum_{j=0}^{\infty}\bm{B}_{j}(\tau_t)\bm{\epsilon}_{t-j}.
\end{equation}
The challenge then lies in the fact that ``residuals'' are time--varying linear processes. Some detailed explanations can be found in \cite{dahlhaus2012locally}.

The following assumptions are necessary for the development of our trend estimation.

\begin{assumption}\label{Ass3}
$\bm{\mu}(\cdot)$ and $\bm{B}_{j}(\cdot)$ are $d \times 1$ vector and $d \times d$ matrix respectively. Each functional component of $\bm{\mu}(\cdot)$ and $\bm{B}_{j}(\cdot)$ is second order continuously differentiable on $[0, 1]$. Moreover, $\sup_{\tau\in [0.1]} \sum_{j=1}^{\infty} j \|\bm{B}_j^{(\ell)}(\tau) \|  < \infty$ for $\ell =0,1$.
\end{assumption}

\begin{assumption}\label{Ass4}
Let $K(\cdot)$ be a symmetric probability kernel function and Lipschitz continuous on $[-1,1]$. Also,  let $h \to 0$ and $Th\to \infty$ as $T\rightarrow \infty$.
\end{assumption}

Assumption \ref{Ass3} can be considered as a stronger version than Assumption \ref{Ass1}. Assumption \ref{Ass4} is standard in the literature of kernel regression (\citealp{LR2007}).

\medskip

For $\forall\tau\in (0,1)$, we recover $\bm{\mu}(\tau)$ by the next estimator.
\begin{equation}\label{eq8}
 \widehat{\bm{\mu}}(\tau)=\left(\sum_{t=1}^{T}K_h(\tau_t-\tau)\right)^{-1}\sum_{t=1}^{T}\bm{x}_t K_h (\tau_t-\tau).
\end{equation}

We are now  ready to establish an important and useful theorem.

\begin{theorem}\label{Theorem2.2}
Let Assumptions \ref{Ass2}--\ref{Ass4} hold. For $\forall\tau \in (0,1)$, as $T\to \infty$,
\begin{equation*}
\sqrt{Th}\left(\widehat{\bm{\mu}}(\tau)-\bm{\mu}(\tau)-\frac{1}{2}h^2\tilde{c}_2{\bm{\mu}}^{(2)}(\tau)\right) \to_D N\left(\bm{0}_{d\times 1},\bm{\Sigma}_{\bm{\mu}}(\tau)\right),
\end{equation*}
where $\bm{\Sigma}_{\bm{\mu}}(\tau) =\tilde{v}_0 \left(\sum_{j=0}^{\infty}\bm{B}_j(\tau) \, \sum_{j=0}^{\infty}\bm{B}_j^\top(\tau)\right)$, and $\tilde{c}_2$ and $\tilde{v}_0$ are defined in the end of Section \ref{Section1}.
\end{theorem}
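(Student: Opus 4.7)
The plan is to decompose the Nadaraya--Watson error $\widehat{\bm{\mu}}(\tau)-\bm{\mu}(\tau)$ into a denominator factor converging to $1$, a deterministic bias term producing the $\tfrac{1}{2}h^2\tilde c_2\bm{\mu}^{(2)}(\tau)$ correction, and a mean-zero stochastic piece to which a multivariate martingale central limit theorem applies. Using the representation \eqref{eq7} and writing $S_T(\tau)=T^{-1}\sum_{t=1}^T K_h(\tau_t-\tau)$,
\begin{align*}
\widehat{\bm{\mu}}(\tau)-\bm{\mu}(\tau)=\frac{1}{TS_T(\tau)}\sum_{t=1}^T K_h(\tau_t-\tau)\{\bm{\mu}(\tau_t)-\bm{\mu}(\tau)\}+\frac{1}{TS_T(\tau)}\sum_{t=1}^T K_h(\tau_t-\tau)\sum_{j=0}^\infty \bm{B}_j(\tau_t)\bm{\epsilon}_{t-j}.
\end{align*}
A standard Riemann-sum argument combined with the Lipschitz continuity of $K$ from Assumption \ref{Ass4} yields $S_T(\tau)=1+O((Th)^{-1})$ for $\tau$ in the interior, so it suffices to analyse the numerator.

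For the deterministic piece, a second-order Taylor expansion $\bm{\mu}(\tau_t)-\bm{\mu}(\tau)=\bm{\mu}^{(1)}(\tau)(\tau_t-\tau)+\tfrac12\bm{\mu}^{(2)}(\tau)(\tau_t-\tau)^2+o((\tau_t-\tau)^2)$, combined with $T^{-1}\sum_t K_h(\tau_t-\tau)(\tau_t-\tau)^k=h^k\tilde c_k+O((Th)^{-1})$ and the symmetry of $K$ that kills $\tilde c_1$, delivers the stated bias $\tfrac12 h^2\tilde c_2\bm{\mu}^{(2)}(\tau)+o(h^2)$. For the stochastic piece, the key move is to apply the time-varying BN decomposition \eqref{eq2} inside the kernel sum, splitting it into a martingale part $M_T(\tau)=T^{-1}\sum_t K_h(\tau_t-\tau)\bm{B}(\tau_t)\bm{\epsilon}_t$ with $\bm{B}(\tau)=\sum_{j\ge 0}\bm{B}_j(\tau)$, plus a remainder $R_T(\tau)$ involving $\widetilde{\mathbb{B}}_t(L)\bm{\epsilon}_{t-1}-\widetilde{\mathbb{B}}_t(L)\bm{\epsilon}_t$. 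I would show $\sqrt{Th}\,R_T(\tau)=o_P(1)$ via summation by parts applied to the near-telescoping bracket, picking up boundary contributions together with a tilde-coefficient-difference correction of the form $[\widetilde{\mathbb{B}}_{t+1}(L)-\widetilde{\mathbb{B}}_t(L)]\bm{\epsilon}_t$; the $O(1/(Th^2))$ pointwise increments of $K_h$ together with the strengthened summability in Assumption \ref{Ass3} drive both contributions below the rate. For $M_T(\tau)$, Lipschitz continuity of $\bm{B}(\cdot)$ reduces $\bm{B}(\tau_t)$ to $\bm{B}(\tau)$ on the kernel support with an $o_P((Th)^{-1/2})$ error, after which $\{K_h(\tau_t-\tau)\bm{\epsilon}_t\}$ is an $\mathcal{F}_t$-martingale difference array. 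A multivariate martingale CLT applied through the Cramer--Wold device then delivers the Gaussian limit: the conditional variance $(Th)^{-1}\sum_t K^2((\tau_t-\tau)/h)\bm{I}_d$ converges by Riemann-sum approximation to $\tilde v_0 \bm{I}_d$, and the Lyapunov condition is provided by the $\delta>2$ moment in Assumption \ref{Ass2}, giving limiting variance $\tilde v_0\bm{B}(\tau)\bm{B}(\tau)^\top$.

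The principal obstacle is the remainder $R_T(\tau)$ from the BN decomposition: because the tilde-coefficients $\widetilde{\bm{B}}_{j,t}=\sum_{k>j}\bm{B}_{k,t}$ themselves depend on $t$, the bracket is not a pure telescoping sum and one must control $\|\widetilde{\bm{B}}_{j,t+1}-\widetilde{\bm{B}}_{j,t}\|\le\sum_{k>j}\|\bm{B}_{k,t+1}-\bm{B}_{k,t}\|$ jointly with kernel increments. This is exactly where the regularity $\sup_\tau\sum_j j\|\bm{B}_j^{(1)}(\tau)\|<\infty$ in Assumption \ref{Ass3}, stronger than mere absolute summability, becomes essential. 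Once $R_T(\tau)$ is neutralized, Slutsky's theorem applied to $S_T(\tau)\to 1$, the bias expansion, and the martingale CLT for $M_T(\tau)$ combine to give the asserted convergence in distribution.
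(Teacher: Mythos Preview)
Your proposal is correct and follows essentially the same route as the paper: BN decomposition to isolate the martingale leading term $\sum_t K_h(\tau_t-\tau)\bm{\mathbb{B}}_t(1)\bm{\epsilon}_t$, summation-by-parts to kill the $\widetilde{\mathbb{B}}_t(L)$ remainder (the paper absorbs this step into ``similar arguments to the proof of Theorem \ref{Theorem2.1}''), Taylor expansion for the bias, and Cram\'er--Wold plus a martingale CLT with the Lindeberg condition supplied by the $\delta>2$ moment. The only cosmetic difference is that you first replace $\bm{B}(\tau_t)$ by $\bm{B}(\tau)$ on the kernel support and then apply the CLT to a constant-coefficient array, whereas the paper keeps $\bm{\mathbb{B}}_t(1)$ inside and verifies the conditional-variance convergence $\sum_t \bm{Z}_t^2(\tau)\to_P \tilde v_0\,\bm{l}^\top\bm{B}(\tau)\bm{B}(\tau)^\top\bm{l}$ directly; both are equivalent and equally rigorous.
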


Note that $\bm{\Sigma}_{\bm{\mu}}(\cdot)$ is the long--run covariance matrix, which in general cannot be estimated directly. To construct confidence intervals practically, we use a dependent wild bootstrap (DWB) method which is initially proposed by \cite{shao2010dependent} for stationary time series. For the sake of space, the detailed procedure with the associated asymptotic properties is presented in Appendix \ref{AppendixA.2}.

\section{Time--Varying VAR}\label{Section3}
\renewcommand{\theequation}{3.\arabic{equation}}

\setcounter{equation}{0}

In this section, we pay particular attention to one of the most popular models of the VMA$(\infty)$ family --- VAR. Many multivariate time series exhibit time--varying simultaneous interrelationships and changes in unconditional volatility \cite[e.g.,][]{justiniano2008time,coibion2011monetary}. Along this line, time--varying VAR models have proven to be especially useful for describing the dynamics of the multivariate time series. Majority time--varying VAR models are investigated under the Bayesian framework, while little has been done using a frequentists' approach. Building on Section \ref{Section2}, we consider a time--varying VAR model under the nonparametric framework, and establish the corresponding estimation theory.

Suppose that we observe $\left(\bm{x}_{-p+1},\ldots,\bm{x}_0,\bm{x}_1,\ldots,\bm{x}_T\right)$ from the following data generating process. Accounting for heteroscedasticity, we consider the next model.
\begin{equation}\label{eq9}
\bm{x}_t=\bm{a}(\tau_t)+\sum_{j=1}^{p}\bm{A}_{j}(\tau_t) \bm{x}_{t-j}+\bm{\eta}_t, \ \ \mbox{with} \ \ \bm{\eta}_t=\bm{\omega}(\tau_t)\bm{\epsilon}_{t},
\end{equation}
where $\bm{\omega}(\tau)$ is a matrix of unknown functions of $\tau$. The model \eqref{eq9} allows dynamic variations for both the coefficients and the covariance matrix. We infer $\bm{a}(\cdot)$ and $ \bm{A}_{j}(\cdot)$'s below, which are respectively $d\times 1$ vector and $d\times d$ matrices of unknown smooth functions. In addition, we are interested in the $d\times d$ dimension $\bm{\omega}(\cdot)$ which governs the dynamics of the covariance matrix of $\{\bm{\eta}_t\}_{t=1}^T$. As mentioned in \cite{primiceri2005time}, allowing $\bm{\omega}(\cdot)$ to vary over time is important theoretically and practically, because a constant covariance matrix implies that the shock to the $i$th variable of $\bm{x}_{t}$ has a time--invariant effect on the $j$th variable of $\bm{x}_{t}$, restricting simultaneous interactions among multiple variables to be time--invariant.

\subsection{Estimation Method and Asymptotic Theory}

To facilitate the development, we first impose the following conditions.

\begin{assumption}\label{Ass5}

\item

\begin{enumerate}
\item The roots of $\bm{I}_d-\bm{A}_{1}(\tau)L-\cdots -\bm{A}_{p}(\tau)L^p=\bm{0}_d$ all lie outside the unit circle uniformly in $\tau \in [0,1]$.

\item Each element of $\bm{A}(\tau)=\left(\bm{a}(\tau),\bm{A}_1(\tau),\ldots,\bm{A}_{p}(\tau)\right)$ is second order continuously differentiable on $[0,1]$ and $\bm{A}(\tau)=\bm{A}(0)$ for $\tau<0$.

\item Each element of $\bm{\omega}(\tau)$ is second order continuously differentiable on $[0,1]$. Moreover, $\bm{\Omega}(\tau)=\bm{\omega}(\tau)\bm{\omega}(\tau)^\top$ is positive definite uniformly in $\tau \in [0,1]$ and $\bm{\omega}(\tau)=\bm{\omega}(0)$ for $\tau<0$.
\end{enumerate}
\end{assumption}

Assumption \ref{Ass5}.1 ensures that model \eqref{eq9} is neither unit--root nor explosive, while Assumption \ref{Ass5}.2 allows the underlying data generate process to evolve over time in a smooth manner. In addition, the conditions $\bm{A}(\tau)=\bm{A}(0)$ and $\bm{\omega}(\tau)=\bm{\omega}(0)$ for $\tau<0$ yield
\be
\bm{x}_t=\bm{a}(0)+\sum_{j=1}^{p}\bm{A}_{j}(0) \bm{x}_{t-j}+\bm{\omega}(0)\bm{\epsilon}_{t}
\label{eqjiti1}
\ee
for all $t\leq0$, which ensures \eqref{eqjiti1} behaves like a stationary parametric VAR$(p)$ model for $t\le 0$. Similar treatments can be found in \cite{vogt2012nonparametric} for the univariate setting. With the above assumptions in hand, the following proposition shows that model \eqref{eq9} can be approximated by a time--varying VMA$(\infty)$ process satisfying Assumption \ref{Ass3}.

\begin{proposition}\label{Proposition3.1}
Under Assumption \ref{Ass5}, there exists a VMA$(\infty)$ process
\begin{eqnarray}\label{eq10}
\widetilde{\bm{x}}_t= \bm{\mu}(\tau_t)+\bm{B}_0(\tau_t)\bm{\epsilon}_t+\bm{B}_1(\tau_t)\bm{\epsilon}_{t-1}+\bm{B}_2(\tau_t)\bm{\epsilon}_{t-2}+\cdots
\end{eqnarray}
such that $\max_{t\geq 1} E\left\|\bm{x}_t-\widetilde{\bm{x}}_t\right\|=O(T^{-1})$, where $\bm{\mu}(\tau)=\bm{a}(\tau)+\sum_ {j=1}^{\infty}\bm{\Psi}_j(\tau)\bm{a}(\tau)$, $\bm{B}_0(\tau)=\bm{\omega}(\tau)$, $\bm{B}_j(\tau)=\bm{\Psi}_j(\tau)\bm{\omega}(\tau)$, $\bm{\Psi}_j(\tau)=\bm{J}\bm{\Phi}^j(\tau) \bm{J}^\top$ for $j\geq 1$, $\bm{\Phi}(\cdot)$ is defined as follows:
\begin{eqnarray}\label{eq10a}
\bm{\Phi}(\tau)=\left(\begin{matrix}
       \bm{A}_{1}(\tau) & \cdots & \bm{A}_{p-1}(\tau) & \bm{A}_{p}(\tau)  \\
       \bm{I}_d & \cdots& \bm{0}_d & \bm{0}_d\\
       \vdots & \ddots&\vdots & \vdots\\
       \bm{0}_d &\cdots &\bm{I}_d & \bm{0}_d\\
    \end{matrix} \right),
\end{eqnarray}
and $\bm{J}=\left[\bm{I}_d,\bm{0}_{d\times d(p-1)}\right]$. Moreover, $\bm{\mu}(\cdot)$ and $\bm{B}_j(\cdot)$ fulfil Assumption \ref{Ass3}.
\end{proposition}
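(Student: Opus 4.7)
The plan is to rewrite \eqref{eq9} in companion form, iterate the resulting first--order recursion backward, and then define $\widetilde{\bm{x}}_t$ by replacing the history--dependent products of companion matrices with ``frozen'' powers evaluated at $\tau_t$. Setting $\bm{X}_t=(\bm{x}_t^\top,\ldots,\bm{x}_{t-p+1}^\top)^\top$, $\bm{c}(\tau)=(\bm{a}(\tau)^\top,\bm{0},\ldots,\bm{0})^\top$ and $\bm{U}_t=(\bm{\eta}_t^\top,\bm{0},\ldots,\bm{0})^\top$, \eqref{eq9} becomes $\bm{X}_t=\bm{c}(\tau_t)+\bm{\Phi}(\tau_t)\bm{X}_{t-1}+\bm{U}_t$. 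The stationarity on $(-\infty,0]$ guaranteed by $\bm{A}(\tau)=\bm{A}(0)$ and $\bm{\omega}(\tau)=\bm{\omega}(0)$ for $\tau<0$ lets one iterate arbitrarily far back to obtain the $L_1$--convergent representation
\begin{equation*}
\bm{X}_t=\sum_{j=0}^{\infty}\Bigl(\prod_{i=0}^{j-1}\bm{\Phi}(\tau_{t-i})\Bigr)\bigl(\bm{c}(\tau_{t-j})+\bm{U}_{t-j}\bigr).
\end{equation*}
I would then define $\widetilde{\bm{X}}_t$ by replacing each matrix product with $\bm{\Phi}^j(\tau_t)$ and each $\bm{c}(\tau_{t-j}),\bm{\omega}(\tau_{t-j})$ by its value at $\tau_t$; extracting the first $d$ components (using $\bm{J}\bm{\Phi}^0(\tau)\bm{J}^\top=\bm{I}_d$) reproduces \eqref{eq10} with the formulas for $\bm{\mu}$ and $\bm{B}_j$ stated in the proposition.

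Next, I would establish a uniform geometric bound $\|\bm{\Phi}^k(\tau)\|\le C_0\rho^k$ for some $\rho\in(0,1)$ and every $\tau\in[0,1]$. Assumption~\ref{Ass5}.1 places the spectral radius of $\bm{\Phi}(\tau)$ uniformly below one on the compact set $[0,1]$, so the parameter--dependent discrete Lyapunov equation $\bm{\Phi}(\tau)^\top\bm{P}(\tau)\bm{\Phi}(\tau)-\bm{P}(\tau)=-\bm{I}$ admits a continuous, uniformly positive--definite solution $\bm{P}(\cdot)$, inducing an equivalent norm in which $\bm{\Phi}(\tau)$ is a uniform contraction. The replacement error is then controlled by the telescoping identity
\begin{equation*}
\prod_{i=0}^{j-1}\bm{\Phi}(\tau_{t-i})-\bm{\Phi}^j(\tau_t)=\sum_{k=0}^{j-1}\Bigl(\prod_{i=0}^{k-1}\bm{\Phi}(\tau_{t-i})\Bigr)\bigl[\bm{\Phi}(\tau_{t-k})-\bm{\Phi}(\tau_t)\bigr]\bm{\Phi}^{j-k-1}(\tau_t),
\end{equation*}
combined with the Lipschitz estimate $\|\bm{\Phi}(\tau_{t-k})-\bm{\Phi}(\tau_t)\|\le Lk/T$ supplied by the $C^2$ regularity in Assumption~\ref{Ass5}.2; this yields an $O(j^2\rho^{j-1}/T)$ bound for each term. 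Since $E\|\bm{U}_{t-j}\|=O(1)$ by Assumption~\ref{Ass2} and analogous Lipschitz bounds apply to $\bm{c}(\tau_{t-j})-\bm{c}(\tau_t)$ and $\bm{\omega}(\tau_{t-j})-\bm{\omega}(\tau_t)$, summing over $j$ with $\sum j^2\rho^j<\infty$ delivers $\max_{t\ge 1}E\|\bm{x}_t-\widetilde{\bm{x}}_t\|=O(T^{-1})$.

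To verify Assumption~\ref{Ass3} for the resulting $\bm{\mu}(\cdot)$ and $\bm{B}_j(\cdot)$, I would differentiate $\bm{\Phi}^j(\tau)$ via the Leibniz rule and invoke the power bound to obtain $\|(\bm{\Phi}^j)^{(\ell)}(\tau)\|=O(j^\ell\rho^j)$ uniformly for $\ell=0,1,2$; multiplying by the $C^2$ factors $\bm{\omega}(\tau)$ and $\bm{a}(\tau)$ and applying the Weierstrass $M$--test gives $C^2$ smoothness of $\bm{\mu}$ and $\bm{B}_j$ together with $\sup_\tau\sum_{j\ge 1}j\|\bm{B}_j^{(\ell)}(\tau)\|<\infty$ for $\ell=0,1$. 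The main obstacle is the uniform geometric control of the history--dependent products $\prod_{i=0}^{k-1}\bm{\Phi}(\tau_{t-i})$, since a single $\bm{\Phi}(\tau)$ may have operator norm exceeding one; the Lyapunov--based equivalent norm is the cleanest route, with a fallback being to split the tail $j>\log T$ (handled via the direct power bound on $\bm{\Phi}^j(\tau_t)$) from the body $j\le\log T$ (where Lipschitz perturbation of each factor preserves geometric decay).
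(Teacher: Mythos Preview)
Your proposal is correct and follows essentially the same route as the paper: companion form, backward iteration to obtain history--dependent MA coefficients $\bm{\Psi}_{j,t}=\bm{J}\prod_{m=0}^{j-1}\bm{\Phi}(\tau_{t-m})\bm{J}^\top$, the same telescoping identity for $\prod\bm{\Phi}(\tau_{t-i})-\bm{\Phi}^j(\tau_t)$, Lipschitz bounds of order $k/T$ on each factor, and summation of $j^2\rho^{j-1}/T$ to reach $O(T^{-1})$; the verification of Assumption~\ref{Ass3} via term--by--term differentiation and the bound $\|(\bm{\Phi}^j)^{(\ell)}(\tau)\|=O(j^\ell\rho^j)$ is likewise the paper's argument. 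The only substantive difference is how the uniform geometric bound on the mixed products $\prod_{i=0}^{k-1}\bm{\Phi}(\tau_{t-i})$ is justified: the paper simply invokes Proposition~2.4 of \cite{dahlhaus2009empirical}, whereas you propose a Lyapunov--based equivalent norm (or a $\log T$ tail split as fallback)---either works, though note that a $\tau$--dependent $\bm{P}(\tau)$ does not by itself yield a single norm contracting every $\bm{\Phi}(\tau)$; one still needs a compactness/continuity step to pass to a common $\bm{P}$, which is effectively what the Dahlhaus argument packages.
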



Under Assumption \ref{Ass5}, when $\tau_t$ is in a small neighbourhood of $\tau$, we have

\begin{eqnarray}\label{eq11}
\bm{x}_t \approx \bm{Z}_{t-1}^\top\mathrm{vec} [\bm{A}(\tau)]+\bm{\eta}_t,
\end{eqnarray}
where $\bm{Z}_{t-1}= \bm{z}_{t-1}\otimes \bm{I}_d$ and $\bm{z}_{t-1}=(1,\bm{x}_{t-1}^\top,\ldots,\bm{x}_{t-p}^\top)^\top$. The estimators of $\bm{A}(\tau)$ and $\bm{\Omega}(\tau)$ are then sequentially given by

\begin{eqnarray}\label{eq12}
\mathrm{vec} [\bm{\widehat{A}}(\tau)]&=&\left( \sum_{t=1}^{T} \bm{Z}_{t-1}\bm{Z}_{t-1}^\top K_h (\tau_t-\tau) \right)^{-1} \sum_{t=1}^{T}\bm{Z}_{t-1}\bm{x}_t K_h (\tau_t-\tau) , \nonumber\\
\bm{\widehat{\Omega}}(\tau)&=&\left(\sum_{t=1}^{T}K_h (\tau_t-\tau )\right)^{-1}\sum_{t=1}^{T}\bm{\widehat{\eta}}_t\bm{\widehat{\eta}}_t^\top K_h (\tau_t-\tau ),
\end{eqnarray}
where $\bm{\widehat{\eta}}_t=\bm{x}_t-\bm{\widehat{A}}(\tau_t)\bm{z}_{t-1}$. The asymptotic properties associated with \eqref{eq12} are summarized in the next theorem.

\begin{theorem}\label{Theorem3.1}
Let Assumptions \ref{Ass2}, \ref{Ass4} and \ref{Ass5} hold. Suppose further that $\frac{T^{1-\frac{4}{\delta}}h}{\log T} \to \infty$ as $T\rightarrow \infty$ and $\max_{t\geq1} E (\|\bm{\epsilon}_t \|^4 |\mathcal{F}_{t-1} ) < \infty $ a.s. Then the following results hold.

\begin{enumerate}
\item $\sup_{\tau \in [h,1-h]} \| \bm{\widehat{A}}(\tau)-\bm{A}(\tau) \|=O_P \left(h^2+ (\frac{\log T}{Th} )^{1/2} \right)$;

\item In addition, conditional on $\mathcal{F}_{t-1}$, the third and fourth moments of $\bm{\epsilon}_t$ are identical to the corresponding unconditional moments a.s.. For $\forall\tau \in (0,1)$,
\begin{equation*}
\sqrt{Th}\left(\begin{matrix}
\mathrm{vec}\left(\bm{\widehat{A}}(\tau)-\bm{A}(\tau)-\frac{1}{2}h^2\tilde{c}_2\bm{A}^{(2)}(\tau)\right) \\
\mathrm{vech}\left(\bm{\widehat{\Omega}}(\tau)-\bm{\Omega}(\tau)-\frac{1}{2}h^2\tilde{c}_2\bm{\Omega}^{(2)}(\tau)\right)  \end{matrix}
\right)\to_D N\left(\bm{0},\bm{V}(\tau)\right),
\end{equation*}
where $\bm{V}(\tau) $ is defined in \eqref{EqA.3} for the sake of presentation.

\item $ \widehat{\bm{V}}(\tau)\to_P \bm{V}(\tau)$, where $\widehat{\bm{V}}(\tau)$ is defined in \eqref{EqA.5} for the sake of presentation.
\end{enumerate}
\end{theorem}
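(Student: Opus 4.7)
The guiding idea is to carry out a kernel local linearisation of $\widehat{\bm{A}}(\tau)$ and $\widehat{\bm{\Omega}}(\tau)$, replace $\bm{x}_t$ throughout by the VMA$(\infty)$ approximation $\widetilde{\bm{x}}_t$ supplied by Proposition \ref{Proposition3.1}, and then invoke Theorem \ref{Theorem2.1} on every resulting weighted sum together with a martingale central limit theorem (CLT) for the distributional statement. The $O(T^{-1})$ first--moment approximation error in Proposition \ref{Proposition3.1}, combined with Markov's inequality, allows us to substitute $\widetilde{\bm{x}}_t$ for $\bm{x}_t$ up to remainders that are negligible relative to the target rates $h^2+\sqrt{\log T/(Th)}$ and $(Th)^{-1/2}$, so Assumption \ref{Ass3} is effectively in force and Theorem \ref{Theorem2.1} applies.

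For part (1), decompose $\bm{x}_t=\bm{Z}_{t-1}^\top\mathrm{vec}[\bm{A}(\tau)]+\bm{Z}_{t-1}^\top[\mathrm{vec}\bm{A}(\tau_t)-\mathrm{vec}\bm{A}(\tau)]+\bm{\eta}_t$, substitute into \eqref{eq12}, and obtain
\begin{equation*}
\mathrm{vec}[\widehat{\bm{A}}(\tau)-\bm{A}(\tau)] = \widehat{\bm{M}}_T(\tau)^{-1}\big\{\widehat{\bm{S}}^b_T(\tau)+\widehat{\bm{S}}^v_T(\tau)\big\},
\end{equation*}
with $\widehat{\bm{M}}_T(\tau)=\sum_t \bm{Z}_{t-1}\bm{Z}_{t-1}^\top K_h(\tau_t-\tau)$, $\widehat{\bm{S}}^b_T(\tau)$ collecting the Taylor terms in $\bm{A}(\tau_t)-\bm{A}(\tau)$, and $\widehat{\bm{S}}^v_T(\tau)=\sum_t \bm{Z}_{t-1}\bm{\eta}_t K_h(\tau_t-\tau)$. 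Theorem \ref{Theorem2.1}(2) applied with $\bm{W}_{T,t}(\tau)=T^{-1}K_h(\tau_t-\tau)\bm{I}$ gives uniform consistency of $T^{-1}\widehat{\bm{M}}_T(\tau)$ to a positive--definite limit $\bm{\Gamma}(\tau)\otimes\bm{I}_d$, where $\bm{\Gamma}(\tau)$ is the second moment of $\bm{z}_{t-1}$ under the stationary VAR approximation obtained by freezing the coefficients at $\tau$; positive definiteness follows from the companion--form stability in Assumption \ref{Ass5}.1. A second--order Taylor expansion plus a further application of Theorem \ref{Theorem2.1}(2) yields $T^{-1}\widehat{\bm{S}}^b_T(\tau)=\tfrac{1}{2}h^2\tilde{c}_2(\bm{\Gamma}(\tau)\otimes\bm{I}_d)\mathrm{vec}\bm{A}^{(2)}(\tau)+o_P(h^2)$ uniformly. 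Since $\bm{Z}_{t-1}\bm{\eta}_t$ is a martingale difference with respect to $\{\mathcal{F}_t\}$ by Assumption \ref{Ass2}, writing its kernel--weighted second moment as a quadratic form in $\bm{x}_{t-i}\bm{x}_{t-j}^\top$ and invoking Theorem \ref{Theorem2.1}(2) once more delivers $\sup_\tau\|T^{-1}\widehat{\bm{S}}^v_T(\tau)\|=O_P(\sqrt{\log T/(Th)})$; the uniform rate follows.

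For part (2), fix an interior $\tau$. The bias has already been isolated in part (1), so it remains to find the weak limit of $\sqrt{Th}\,\widehat{\bm{M}}_T(\tau)^{-1}\widehat{\bm{S}}^v_T(\tau)$. Apply a martingale CLT to the triangular array $\bm{\xi}_{T,t}:=(Th)^{-1/2}K_h(\tau_t-\tau)\bm{Z}_{t-1}\bm{\eta}_t$ with filtration $\{\mathcal{F}_t\}$: the conditional variance sum converges to $\tilde{v}_0\,\bm{\Gamma}(\tau)\otimes\bm{\Omega}(\tau)$, and the Lindeberg condition follows from the compact support of $K$ and the $\delta$--moment bound on $\bm{\epsilon}_t$. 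For $\widehat{\bm{\Omega}}(\tau)$, expand
\begin{equation*}
\widehat{\bm{\eta}}_t\widehat{\bm{\eta}}_t^\top = \bm{\eta}_t\bm{\eta}_t^\top - \bm{\eta}_t\bm{z}_{t-1}^\top[\widehat{\bm{A}}(\tau_t)-\bm{A}(\tau_t)]^\top - [\widehat{\bm{A}}(\tau_t)-\bm{A}(\tau_t)]\bm{z}_{t-1}\bm{\eta}_t^\top + R_t,
\end{equation*}
and use the part (1) rate together with Theorem \ref{Theorem2.1} to show that the cross and remainder contributions are $o_P((Th)^{-1/2})$ under the bandwidth condition $T^{1-4/\delta}h/\log T\to\infty$. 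The leading stochastic part is a centred kernel--weighted sum of the martingale differences $\mathrm{vech}(\bm{\eta}_t\bm{\eta}_t^\top-\bm{\Omega}(\tau_t))$, and joint normality with $\widehat{\bm{A}}(\tau)$ is obtained by stacking the two arrays and applying the martingale CLT via a Cramer--Wold device. The assumption that the third and fourth conditional moments of $\bm{\epsilon}_t$ coincide with their unconditional counterparts converts the cross conditional covariance into the closed--form expression that defines the off--diagonal blocks of $\bm{V}(\tau)$.

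Part (3) is then a plug--in argument: $\widehat{\bm{V}}(\tau)$ is assembled from $\widehat{\bm{A}}(\tau)$, $\widehat{\bm{\Omega}}(\tau)$, and kernel sample analogues of $\bm{\Gamma}(\tau)$ and of the relevant third-- and fourth--order moments of $\bm{\epsilon}_t$; each such building block is uniformly consistent by parts (1)--(2) and by another application of Theorem \ref{Theorem2.1}, so the continuous mapping theorem closes the argument. The \emph{main obstacle} is the joint limit in part (2): controlling the substitution of $\widehat{\bm{\eta}}_t$ for $\bm{\eta}_t$ at the $(Th)^{-1/2}$ level, and pinning down the cross--block of $\bm{V}(\tau)$, both rely crucially on the strengthened conditional--moment conditions and on the bandwidth restriction, invoked through repeated use of Theorem \ref{Theorem2.1} and of the martingale CLT.
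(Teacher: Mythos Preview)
Your overall architecture matches the paper's proof: decompose $\widehat{\bm{A}}(\tau)-\bm{A}(\tau)$ into bias plus $\widehat{\bm{M}}_T^{-1}\widehat{\bm{S}}^v_T$, pass to the VMA($\infty$) approximation of Proposition \ref{Proposition3.1}, apply Theorem \ref{Theorem2.1} to the design--matrix and bias pieces, and finish with a Cram\'er--Wold martingale CLT on the stacked score. Two steps, however, are not covered by the tools you invoke.

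First, the uniform rate for $T^{-1}\widehat{\bm{S}}^v_T(\tau)=\tfrac{1}{T}\sum_t \bm{Z}_{t-1}\bm{\eta}_t K_h(\tau_t-\tau)$ does \emph{not} follow from Theorem \ref{Theorem2.1}(2). That theorem handles deterministic--weight sums of $\bm{x}_t\bm{x}_{t+p}^\top$, whereas $\bm{Z}_{t-1}\bm{\eta}_t$ has a \emph{random}, $\mathcal{F}_{t-1}$--measurable weight multiplying $\bm{\epsilon}_t$. Controlling its conditional second moment via Theorem \ref{Theorem2.1}(2), as you propose, is only one ingredient; to turn that into the uniform $O_P(\sqrt{\log T/(Th)})$ rate one needs a truncation plus Freedman exponential--inequality plus covering argument. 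The paper isolates this as a separate lemma (Lemma \ref{LemmaB.8}(1), proved along the lines of Lemma \ref{LemmaB.6}); you should make the same distinction.

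Second, and more seriously, the cross terms $(Th)^{-1/2}\sum_t \bm{\eta}_t(\widehat{\bm{\eta}}_t-\bm{\eta}_t)^\top K_h(\tau_t-\tau)$ \emph{cannot} be handled by ``the part (1) rate together with Theorem \ref{Theorem2.1}''. Since $\widehat{\bm{A}}(\tau_t)$ varies with $t$, the crude bound is
\[
\sup_{\tau'}\|\widehat{\bm{A}}(\tau')-\bm{A}(\tau')\|\cdot (Th)^{-1/2}\sum_t \|\bm{\eta}_t\|\,\|\bm{z}_{t-1}\|K_h(\tau_t-\tau)=O_P\big(c_T\cdot\sqrt{Th}\big),
\]
with $c_T=h^2+\sqrt{\log T/(Th)}$, and $c_T\sqrt{Th}\ge \sqrt{\log T}\not\to 0$. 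The paper instead substitutes the full decomposition of $\widehat{\bm{A}}(\tau_t)-\bm{A}(\tau_t)$ obtained in part (1) (bias pieces plus $\bm{S}_{T,0}^{-1}(\tau_t)\,\tfrac{1}{Th}\sum_s \bm{z}_{s-1}\bm{\eta}_s^\top K((\tau_s-\tau_t)/h)$) and analyses the resulting double sum directly; the hardest piece is the degenerate $U$--statistic--like term, which requires a second--moment calculation exploiting the martingale structure in both indices (Lemma \ref{LemmaB.8}(2)). Without this finer argument the $o_P((Th)^{-1/2})$ claim, and hence the joint CLT, is not established.
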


The first result of Theorem \ref{Theorem3.1} provides the uniform convergence rate for $\bm{\widehat{A}}(\tau)$. As a consequence, it allows us to establish a joint asymptotic distribution for the estimates of the coefficients and innovation covariance in the second result. For $\delta>5$, the usual optimal bandwidth $h_{opt}=O\left(T^{-1/5}\right)$ satisfies the condition $\frac{T^{1-\frac{4}{\delta}}h}{\log T} \to \infty$. The third result ensures the confidence interval can be constructed practically.

\medskip

Before moving on to impulse responses, we consider a practical issue --- the choice of the lag $p$, which is usually unknown in practice and needs to be decided by the data. We select the number of lags by minimizing the next information criterion:
\begin{eqnarray}\label{eq13}
\widehat{\mathsf{p}} = \argmin_{1\le \mathsf{p}\le\mathsf{P} }\text{IC}(\mathsf{p}),
\end{eqnarray}
where $\text{IC}(\mathsf{p})=\log \left\{\text{RSS}(\mathsf{p})\right\}+\mathsf{p}\cdot\chi_T$, $\text{RSS}(\mathsf{p})=\frac{1}{T}\sum_{t=1}^{T}\widehat{\eta}_{\mathsf{p},t}^\top \widehat{\eta}_{\mathsf{p},t}$, $\chi_T$ is the penalty term, and $\mathsf{P}$ is a sufficiently large fixed positive integer. The next theorem shows the validity of \eqref{eq13}.

\begin{theorem}\label{Theorem3.2}
Let Assumptions \ref{Ass2}, \ref{Ass4} and \ref{Ass5} hold. Suppose $\frac{T^{1-\frac{4}{\delta}}h}{\log T} \to \infty$, $\max_{t\geq1} E (\|\bm{\epsilon}_t \|^4 |\mathcal{F}_{t-1} ) < \infty $ a.s., $\chi_T\to 0$, and $(c_T\phi_T)^{-1}\chi_T\to \infty$, where $c_T=h^2+\left(\frac{\log T}{Th}\right)^{1/2}$ and $\phi_T=h+\left(\frac{\log T}{Th}\right)^{1/2}$. Then $\Pr\left(\widehat{\mathsf{p}}=p\right)\to 1$ as $T\to \infty$.
\end{theorem}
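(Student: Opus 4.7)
The plan is to lower-bound $\text{IC}(\mathsf{p})-\text{IC}(p)$ in probability for every $\mathsf{p}\in\{1,\ldots,\mathsf{P}\}\setminus\{p\}$, splitting into the under-fitting regime $\mathsf{p}<p$ and the over-fitting regime $\mathsf{p}>p$; since $\mathsf{P}$ is fixed, a union bound then yields $\Pr(\widehat{\mathsf{p}}=p)\to 1$. In both regimes I transfer bounds on $\text{RSS}(\mathsf{p})-\text{RSS}(p)$ to bounds on $\log\text{RSS}(\mathsf{p})-\log\text{RSS}(p)$ via a first-order Taylor expansion around the quantity $\bar{\sigma}^{2}=\frac{1}{T}\sum_{t=1}^{T}\mathrm{tr}\,\bm{\Omega}(\tau_{t})$, which is bounded away from $0$ and $\infty$ under Assumption \ref{Ass5}.3 and to which $\text{RSS}(p)$ converges in probability by combining Proposition \ref{Proposition3.1}, Theorem \ref{Theorem3.1}, and a direct application of Theorem \ref{Theorem2.1} to the quadratic sum $\sum_{t}\bm{\eta}_{t}\bm{\eta}_{t}^{\top}$.

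For under-fitting I will show that $\text{RSS}(\mathsf{p})-\text{RSS}(p)$ has a strictly positive probability limit. Let $\widetilde{\bm{A}}_{\mathsf{p}}(\tau)$ be the population best linear predictor coefficient using only $\mathsf{p}$ lags; because the true model is of order $p$ with $\bm{\Omega}(\tau)$ positive definite, the induced forecast-error variance $\widetilde{\sigma}_{\mathsf{p}}^{2}(\tau)=E\|\bm{x}_{t}-\widetilde{\bm{A}}_{\mathsf{p}}(\tau)\bm{z}_{\mathsf{p},t-1}\|^{2}$ strictly exceeds $\mathrm{tr}\,\bm{\Omega}(\tau)$ uniformly in $\tau$. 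Applying Theorem \ref{Theorem2.1} to the kernel-weighted sums that define $\widehat{\bm{A}}_{\mathsf{p}}(\cdot)$ and to the corresponding quadratic forms gives $\text{RSS}(\mathsf{p})\to_{P}\int_{0}^{1}\widetilde{\sigma}_{\mathsf{p}}^{2}(\tau)d\tau>\bar{\sigma}^{2}\leftarrow_{P}\text{RSS}(p)$, so $\log\text{RSS}(\mathsf{p})-\log\text{RSS}(p)$ has a strictly positive probability limit, which dominates the vanishing term $(\mathsf{p}-p)\chi_{T}\to 0$.

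For over-fitting, let $\bm{A}_{\mathsf{p}}(\tau)$ denote $\bm{A}(\tau)$ padded by $\mathsf{p}-p$ zero coefficient blocks, so that $\bm{\eta}_{t}=\bm{x}_{t}-\bm{A}_{\mathsf{p}}(\tau_{t})\bm{z}_{\mathsf{p},t-1}$ remains the true innovation. Decomposing $\widehat{\bm{\eta}}_{\mathsf{p},t}=\bm{\eta}_{t}-(\widehat{\bm{A}}_{\mathsf{p}}(\tau_{t})-\bm{A}_{\mathsf{p}}(\tau_{t}))\bm{z}_{\mathsf{p},t-1}$ and expanding squared residuals, the quadratic term $\frac{1}{T}\sum_{t}\|(\widehat{\bm{A}}_{\mathsf{p}}(\tau_{t})-\bm{A}_{\mathsf{p}}(\tau_{t}))\bm{z}_{\mathsf{p},t-1}\|^{2}$ is $O_{P}(c_{T}^{2})$ by an extension of Theorem \ref{Theorem3.1}(1) to the $\mathsf{p}$-lag fit. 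The cross term $\frac{2}{T}\sum_{t}\bm{\eta}_{t}^{\top}(\widehat{\bm{A}}_{\mathsf{p}}(\tau_{t})-\bm{A}_{\mathsf{p}}(\tau_{t}))\bm{z}_{\mathsf{p},t-1}$ I handle by substituting the closed form for $\widehat{\bm{A}}_{\mathsf{p}}(\tau_{t})-\bm{A}_{\mathsf{p}}(\tau_{t})$ from \eqref{eq12}, producing a double kernel-weighted sum that Theorem \ref{Theorem2.1}, together with the martingale-difference structure in Assumption \ref{Ass2}, controls at rate $O_{P}(c_{T}\phi_{T})$. The same decomposition applied at $\mathsf{p}=p$ shows $\text{RSS}(\mathsf{p})-\text{RSS}(p)=O_{P}(c_{T}\phi_{T})$; Taylor expansion then delivers $\log\text{RSS}(\mathsf{p})-\log\text{RSS}(p)=O_{P}(c_{T}\phi_{T})$, which is of strictly smaller order than the penalty gap $(\mathsf{p}-p)\chi_{T}$ because $(c_{T}\phi_{T})^{-1}\chi_{T}\to\infty$.

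The hard part is obtaining the sharp $O_{P}(c_{T}\phi_{T})$ bound on the cross term in the over-fitting case. A crude bound using only $\|\widehat{\bm{A}}_{\mathsf{p}}-\bm{A}_{\mathsf{p}}\|_{\infty}=O_{P}(c_{T})$ and an $O_{P}(T^{-1/2})$ bound on $\frac{1}{T}\sum_{t}\bm{\eta}_{t}\bm{z}_{\mathsf{p},t-1}^{\top}$ would at best yield $O_{P}(c_{T})$, which is not guaranteed to be dominated by $\chi_{T}$ under the stated assumption $(c_{T}\phi_{T})^{-1}\chi_{T}\to\infty$. The sharper rate requires exploiting the normal equations \eqref{eq12} to rewrite the cross term as a weighted sum of products of martingale differences with deterministic weights of the form handled by Theorem \ref{Theorem2.1}; the extra factor $\phi_{T}$ then emerges from the combined in-window bias $\bm{A}_{\mathsf{p}}(\tau_{s})-\bm{A}_{\mathsf{p}}(\tau_{t})=O(h)$ and the $O_{P}((\log T/(Th))^{1/2})$ stochastic component inside $\widehat{\bm{A}}_{\mathsf{p}}(\tau_{t})-\bm{A}_{\mathsf{p}}(\tau_{t})$.
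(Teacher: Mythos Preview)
Your approach is essentially the paper's: the argument is formalized there as Lemma~B.9, which establishes precisely (i) for $\mathsf{p}\ge p$, $\text{RSS}(\mathsf{p})=\frac{1}{T}\sum_{t}E(\bm{\eta}_{t}^{\top}\bm{\eta}_{t})+O_{P}(c_{T}\phi_{T})$, and (ii) for $\mathsf{p}<p$, $\text{RSS}(\mathsf{p})=\frac{1}{T}\sum_{t}E(\bm{\eta}_{t}^{\top}\bm{\eta}_{t})+c+o_{P}(1)$ with $c>0$. Your decomposition of $\widehat{\bm{\eta}}_{\mathsf{p},t}$, your identification of the cross term as the delicate piece, and your strategy of plugging in the closed form from the normal equations to extract the extra factor $\phi_{T}$ all match the paper's proof of Lemma~B.9 exactly (its terms $I_{T,3}$--$I_{T,6}$).

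One point to tighten: you claim the quadratic term $\frac{1}{T}\sum_{t}\|(\widehat{\bm{A}}_{\mathsf{p}}(\tau_{t})-\bm{A}_{\mathsf{p}}(\tau_{t}))\bm{z}_{\mathsf{p},t-1}\|^{2}$ is $O_{P}(c_{T}^{2})$ ``by an extension of Theorem~\ref{Theorem3.1}(1)''. That theorem gives the rate $c_{T}$ only on $[h,1-h]$; on the boundary strips the local-constant bias is $O(h)$, so the uniform rate there is $\phi_{T}$, not $c_{T}$. The paper handles this by splitting the sum into interior ($\tau_{t}\in[h,1-h]$) and boundary ($\tau_{t}\in[0,h]\cup[1-h,1]$) pieces, obtaining $O_{P}(c_{T}^{2})+O_{P}(h\phi_{T}^{2})$. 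Since $h\phi_{T}\le c_{T}$, this is still $O_{P}(c_{T}\phi_{T})$ and the conclusion is unaffected, but your invocation of Theorem~\ref{Theorem3.1}(1) over all of $[0,1]$ as stated would be incorrect.
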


In view of the conditions on $\chi_T$, a natural choice is
\begin{eqnarray*}
\chi_T = \max\left\{h^3,h\left(\frac{\log T}{Th}\right)^{1/2},\frac{\log T}{Th}\right\}\cdot \log(1/h).
\end{eqnarray*}
In the supplementary Appendix \ref{AppendixB.1}, we conduct intensive simulations to examine the finite sample performance of the above information criterion.

\subsection{Impulse Response Functions}

We now focus on the impulse responses below, which capture the dynamic interactions among the variables of interest in a wide range of practical cases.

By Proposition \ref{Proposition3.1}, the impulse response functions of $\bm{x}_t$ is asymptotically equivalent to those of $\widetilde{x}_t$. Hence, recovering the impulse response functions requires estimating $\bm{\Psi}_j(\cdot)$'s and $\bm{\omega}(\cdot)$, which is then down to the estimation of $\bm{\Phi}(\cdot)$ and $\bm{\omega}(\cdot)$ by construction. Note further that $\bm{\Phi}(\cdot)$ is a matrix consisting of the coefficients of \eqref{eq9}. The estimator of $\bm{\Phi}(\cdot)$ is intuitively defined as $\widehat{\bm{\Phi}}(\cdot)$, in which we replace $\bm{A}_j(\cdot)$ of \eqref{EqA.6} with the corresponding estimator obtained from \eqref{eq12}. Furthermore, we require $\bm{\omega}(\tau)$ to be a lower--triangular matrix in order to fulfil the identification restriction. Thus, $\widehat{\bm{\omega}}(\tau)$ is chosen as the lower triangular matrix from the Cholesky decomposition of $\bm{\widehat{\Omega}}(\tau)$ such that $\bm{\widehat{\Omega}}(\tau)=\bm{\widehat{\omega}}(\tau)\bm{\widehat{\omega}}^\top(\tau)$.

With the above notation in hand, we are now ready to present the estimator of $\bm{B}_j(\tau) $ by

\begin{eqnarray*}
\bm{\widehat{B}}_j(\tau)=\bm{\widehat{\Psi}}_j(\tau)\bm{\widehat{\omega}}(\tau),
\end{eqnarray*}
where $\bm{\widehat{\Psi}}_j(\tau)=\bm{J} \bm{\widehat{\Phi}}^j(\tau) \bm{J}^\top$. The corresponding asymptotic results are summarized in the following theorem.

\begin{theorem}\label{Theorem3.3}
Let Assumptions \ref{Ass2}, \ref{Ass4} and \ref{Ass5} hold, and let $T\to \infty$. Suppose further that $\frac{T^{1-\frac{4}{\delta}}h}{\log T} \to \infty$, $\max_{t\geq1} E\left(\left\|\bm{\epsilon}_t \right\|^4 |\mathcal{F}_{t-1}\right) < \infty $ a.s. and conditional on $\mathcal{F}_{t-1}$, the third and fourth moments of $\bm{\epsilon}_t$ are identical to the corresponding unconditional moments a.s.. Then for any fixed integer $j\ge 0$

\begin{eqnarray*}
 \sqrt{Th}\left(\mathrm{vec}\left(\bm{\widehat{B}}_j(\tau)-\bm{B}_j(\tau)\right)-\frac{1}{2}h^2\tilde{c}_2\bm{B}_j^{(2)}(\tau)\right)\to_D N\left(0,\bm{\Sigma}_{\bm{B}_j}(\tau)\right),
\end{eqnarray*}
where
\begin{eqnarray*}
\bm{\Sigma}_{\bm{B}_j}(\tau)&=&[\bm{C}_{j,1}(\tau),\bm{C}_{j,2}(\tau)]\bm{V}(\tau)[\bm{C}_{j,1}(\tau),\bm{C}_{j,2}(\tau)]^\top,\\
\bm{B}_j^{(2)}(\tau)&=&\bm{C}_{j,1}(\tau)\mathrm{vec}\left(\bm{A}^{(2)}(\tau)\right) +\bm{C}_{j,2}(\tau)\mathrm{vech}\left(\bm{\Omega}^{(2)}(\tau)\right),\\
\bm{C}_{0,1}(\tau)&=&0,\quad \bm{C}_{0,2}(\tau)= \bm{L}_d^\top\left(\bm{L}_d(\bm{I}_{d^2}+\bm{K}_{dd})(\bm{\omega}(\tau)\otimes \bm{I}_d)\bm{L}_d^\top \right)^{-1},\\
\bm{C}_{j,1}(\tau)&=&(\bm{\omega}^\top(\tau)\otimes \bm{I}_d) \left(\sum_{m=0}^{j-1} \bm{J}(\bm{\Phi}^\top(\tau))^{j-1-m}\otimes (\bm{J} \bm{\Phi}^m(\tau)\bm{J}^\top)\right)\cdot [\bm{0}_{d^2p\times d},\bm{I}_{d^2p}],\ j \geq 1,\\
\bm{C}_{j,2}(\tau)&=&(\bm{I}_d\otimes(\bm{J}\bm{\Phi}^j(\tau)\bm{J}^\top)) \bm{L}_d^\top\left(\bm{L}_d(\bm{I}_{d^2}+\bm{K}_{dd})(\bm{\omega}(\tau)\otimes \bm{I}_d)\bm{L}_d^\top \right)^{-1},\ j \geq 1,
\end{eqnarray*}
in which the elimination matrix $\bm{L}_d$ satisfies that $\mathrm{vech}(\bm{F})=\bm{L}_d\mathrm{vec}(\bm{F})$ for any $d\times d$ matrix $\bm{F}$, and the commutation matrix $\bm{K}_{mn}$ satisfies that  $\bm{K}_{mn}\mathrm{vec}(\bm{G})=\mathrm{vec}(\bm{G}^\top)$ for any $m\times n$ matrix $\bm{G}$.
\end{theorem}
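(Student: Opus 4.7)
The plan is to derive the stated CLT for $\widehat{\bm{B}}_j(\tau)$ by a delta-method argument built on top of the joint asymptotic normality statement of Theorem \ref{Theorem3.1}(2). Let $\bm{\theta}(\tau)=(\mathrm{vec}(\bm{A}(\tau))^\top,\mathrm{vech}(\bm{\Omega}(\tau))^\top)^\top$ and write $\bm{B}_j(\tau)=\bm{g}_j(\bm{\theta}(\tau))$, where $\bm{g}_j$ encodes (i) $\bm{A}\mapsto\bm{\Phi}$ via the companion form \eqref{eq10a}, (ii) $\bm{\Phi}\mapsto\bm{J}\bm{\Phi}^j\bm{J}^\top$, (iii) $\bm{\Omega}\mapsto\bm{\omega}$ through the lower-triangular Cholesky factor, and (iv) the product $\bm{J}\bm{\Phi}^j\bm{J}^\top\bm{\omega}$. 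Because $\widehat{\bm{B}}_j(\tau)=\bm{g}_j(\widehat{\bm{\theta}}(\tau))$ by construction, once I compute the Jacobian $\bm{C}_j(\tau)=\partial\,\mathrm{vec}(\bm{g}_j)/\partial\bm{\theta}^\top$ and control the Taylor remainder, Theorem \ref{Theorem3.1}(2) will transfer to $\widehat{\bm{B}}_j$ directly.

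The Jacobian splits as $\bm{C}_j(\tau)=[\bm{C}_{j,1}(\tau),\bm{C}_{j,2}(\tau)]$. For the $\bm{A}$-block, I would apply the product rule $d(\bm{\Phi}^j)=\sum_{m=0}^{j-1}\bm{\Phi}^m(d\bm{\Phi})\bm{\Phi}^{j-1-m}$, then use $\mathrm{vec}(UXV)=(V^\top\otimes U)\,\mathrm{vec}(X)$ on $\bm{J}\bm{\Phi}^m(d\bm{\Phi})\bm{\Phi}^{j-1-m}\bm{J}^\top\bm{\omega}$ to obtain the factor $\bm{\omega}^\top\bm{J}(\bm{\Phi}^\top)^{j-1-m}\otimes\bm{J}\bm{\Phi}^m\bm{J}^\top$, and finally observe that $d\bm{\Phi}=\bm{J}^\top[d\bm{A}_1,\dots,d\bm{A}_p]$ (only the top block varies), so the selector $[\bm{0}_{d^2p\times d},\bm{I}_{d^2p}]$ picks out the relevant coordinates of $\mathrm{vec}(d\bm{A})$; collapsing with $(U\otimes V)(X\otimes Y)=(UX)\otimes(VY)$ yields exactly the expression for $\bm{C}_{j,1}(\tau)$. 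For the $\bm{\Omega}$-block, differentiating $\bm{\Omega}=\bm{\omega}\bm{\omega}^\top$ and using $(\bm{I}_d\otimes\bm{\omega})\bm{K}_{dd}=\bm{K}_{dd}(\bm{\omega}\otimes\bm{I}_d)$ gives $\mathrm{vec}(d\bm{\Omega})=(\bm{I}_{d^2}+\bm{K}_{dd})(\bm{\omega}\otimes\bm{I}_d)\,\mathrm{vec}(d\bm{\omega})$; combining this with the fact that $d\bm{\omega}$ is lower triangular (so $\mathrm{vec}(d\bm{\omega})=\bm{L}_d^\top\mathrm{vech}(d\bm{\omega})$) and $d\bm{\Omega}$ is symmetric lets me invert to get $\mathrm{vec}(d\bm{\omega})=\bm{L}_d^\top[\bm{L}_d(\bm{I}_{d^2}+\bm{K}_{dd})(\bm{\omega}\otimes\bm{I}_d)\bm{L}_d^\top]^{-1}\mathrm{vech}(d\bm{\Omega})$. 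Pre-multiplying by $(\bm{I}_d\otimes\bm{J}\bm{\Phi}^j\bm{J}^\top)$ delivers $\bm{C}_{j,2}(\tau)$, and setting $j=0$ recovers $\bm{C}_{0,1}=0$, $\bm{C}_{0,2}$ as stated.

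With the Jacobian in hand, a first-order Taylor expansion gives $\mathrm{vec}(\widehat{\bm{B}}_j(\tau)-\bm{B}_j(\tau))=\bm{C}_j(\tau)(\widehat{\bm{\theta}}(\tau)-\bm{\theta}(\tau))+\bm{R}_T(\tau)$. To show $\sqrt{Th}\,\bm{R}_T(\tau)=o_P(1)$, I would bound the remainder by $\|\widehat{\bm{\theta}}(\tau)-\bm{\theta}(\tau)\|^2$ times a term controlled by the (locally bounded) second derivatives of $\bm{g}_j$; Theorem \ref{Theorem3.1}(1) and the analogous pointwise rate for $\widehat{\bm{\Omega}}(\tau)$ (established using Theorem \ref{Theorem2.1} on $\widehat{\bm{\eta}}_t\widehat{\bm{\eta}}_t^\top$) give $\|\widehat{\bm{\theta}}(\tau)-\bm{\theta}(\tau)\|=O_P(h^2+\sqrt{\log T/(Th)})$, whose square is $o((Th)^{-1/2})$ under the stated bandwidth condition. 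Applying $\bm{C}_j(\tau)$ to the bias vector $\tfrac12 h^2\tilde{c}_2(\mathrm{vec}(\bm{A}^{(2)})^\top,\mathrm{vech}(\bm{\Omega}^{(2)})^\top)^\top$ recovers $\tfrac12 h^2\tilde{c}_2\bm{B}_j^{(2)}(\tau)$ by the very definition in the theorem, and $\bm{C}_j(\tau)\bm{V}(\tau)\bm{C}_j(\tau)^\top=\bm{\Sigma}_{\bm{B}_j}(\tau)$ gives the asymptotic variance.

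The main obstacle is the Cholesky differentiation: I need to verify nonsingularity of $\bm{L}_d(\bm{I}_{d^2}+\bm{K}_{dd})(\bm{\omega}(\tau)\otimes\bm{I}_d)\bm{L}_d^\top$ uniformly in $\tau$, which follows from uniform positive definiteness of $\bm{\Omega}(\tau)$ in Assumption \ref{Ass5}.3 (hence invertibility of the Cholesky factor $\bm{\omega}(\tau)$), and to keep the elimination/commutation matrix bookkeeping straight. The product-rule expansion of $\bm{\Phi}^j$ for general $j$ is slightly cumbersome but purely algebraic, and once the Jacobian identity is verified, the CLT is delivered by Slutsky and the delta method with no additional probabilistic input beyond Theorem \ref{Theorem3.1}.
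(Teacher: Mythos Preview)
Your proposal is correct and takes essentially the same approach as the paper: the paper's proof simply invokes the delta method on the joint CLT of Theorem~\ref{Theorem3.1}(2), citing L\"utkepohl (2005, p.~111) for the Jacobian computation, while you spell out the Jacobian of $(\bm{A},\bm{\Omega})\mapsto\bm{J}\bm{\Phi}^j\bm{J}^\top\bm{\omega}$ and the remainder control explicitly. Your derivation of $\bm{C}_{j,1}$ via the product rule for $\bm{\Phi}^j$ and of $\bm{C}_{j,2}$ via the Cholesky differential is exactly the ``standard argument'' the paper gestures at, so the two proofs are the same in substance.
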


To close this section, we comment on how to construct the confidence interval. Since $\widehat{\bm{\Phi}}(\tau)\to_P \bm{\Phi}(\tau)$, $\widehat{\bm{\omega}}(\tau)\to_P\bm{\omega}(\tau)$ and $\widehat{\bm{V}}(\tau)\to_P \bm{V}(\tau)$ by Theorem \ref{Theorem3.1}, it is straightforward to have $\widehat{\bm{\Sigma}}_{\bm{B}_j}(\tau)\to_P\bm{\Sigma}_{\bm{B}_j}(\tau)$, where $\widehat{\bm{\Sigma}}_{\bm{B}_j}(\tau)$ has a form identical to $\bm{\Sigma}_{\bm{B}_j}(\tau)$ but replacing $\bm{\Phi}(\tau)$, $\bm{\omega}(\tau)$ and $ \bm{V}(\tau)$ with their estimators, respectively.

\medskip

We next show in Section \ref{Section4} about how to apply the proposed model and estimation method to an empirical data. Our findings show that the estimated coefficient matrices and impulse response functions capture various time--varying features.

\section{Empirical Study}\label{Section4}

\renewcommand{\theequation}{4.\arabic{equation}}

\setcounter{equation}{0}

In this section, we study the transmission mechanism of the monetary policy, and infer the long--run level of inflation (i.e., trend inflation) and the natural rate of unemployment (NAIRU). The trend inflation and NAIRU are of central position in setting monetary policy since the Federal Reserve Bank aims to mitigate deviations of inflation and unemployment from their long--run targets.  See \cite{stock2016core} for more relevant discussions.

As well documented, inflation is higher and more volatile during 1970--1980, but substantially decreases in the subsequent period, which is often referred to as the Great Moderation (\citealp{primiceri2005time}). The literature has considered two main classes of explanations: bad policy or bad luck. The first type of explanations focuses on the changes in the transmission mechanism \cite[e.g.,][]{cogley2005drifts}, while the second regards it as a consequence of changes in the size of exogenous shocks \cite[e.g.,][]{sims2006}. In what follows, we revisit the arguments associated with the Great Moderation using our approach. Also, we use the VMA$(\infty)$ representation of the VAR$(p)$ model to infer the path of trend inflation and NAIRU over time.

\subsection{In--Sample Study}

First, we estimate the time--varying VAR$(p)$ model using three commonly adopted macroeconomic variables of the literature (\citealp{primiceri2005time,cogley2010inflation}), which are the inflation rate (measured by the 100 times the year--over--year log change in the GDP deflator), the unemployment rate, representing the non--policy block, and the interest rate (measured by the average value for the Federal funds rates over the quarter), representing the monetary policy block. To isolate the monetary policy shocks, the interest rate is ordered last in the VAR model, and is treated as the monetary policy instrument. The identification requirement is that the monetary policy actions affect the inflation and the unemployment with at least one period of lag \cite[]{primiceri2005time}. The data are quarterly observations measured at an annual rate from 1954:Q3 to 2020:Q1, which are taken from the Federal Reserve Bank of St. Louis economic database. Figure \ref{Figure1} plots the three macro variables.

The order of the VAR$(p)$ model and the optimal bandwidth are determined by the information criterion \eqref{eq13} and the cross validation criterion \eqref{EqB.1}, respectively. We obtain $\widehat{\mathsf{p}}=3$ and $\widehat{h}_{cv}=0.435$. In the literature, the lag length is often assumed to be known with the values varying from $2$ to $4$, while our data--driven method indicates that 3 is the optimal value.

We first consider measuring the changes in the size of exogenous shocks. Figure \ref{Figure2} plots the estimated volatilities of the innovations as well as the associated 95\% confidence intervals. The figure exhibits evidence for a general decline in unconditional volatilities. Our results thus support the ``bad luck'' explanations \cite[e.g.,][]{primiceri2005time,sims2006}.

We then consider the responses of the inflation to the monetary policy shocks. Figure \ref{Figure3} plots the time--varying impulse responses of the inflation to a structural monetary shock as well as the associated 95\% confidence intervals. It is clear that the confidence intervals are much wider at the beginning of the sample period implying higher uncertainty before 1970. On the other hand, the structural responses of the inflation seem to be statistically insignificant from 1970 to 2010. Thus, we conclude that the monetary shocks have less influence on the inflation before and during the period of the Great Moderation.

Finally, we investigate the trend inflation and the NAIRU. \cite{petrova2019quasi} considers a Bayesian time--varying VAR(2) model, and induces the long--run mean of $\bm{x}_t$ by $\bm{\mu}_t=\lim_{p\rightarrow\infty}E_t(\bm{x}_{t+p})=(\bm{I}-\bm{A}_t)^{-1}\bm{a}_t$, where $\bm{a}_t$ is the intercept term and $\bm{A}_t$ is the autoregressive coefficients. The main difference between our method and Petrova's method is that we invert time--varying VAR$(p)$ model to the time--varying MA$(\infty)$ model, and then explicitly estimate the underlying trends of the inflation and the NAIRU using model \eqref{eq7}.

Figure \ref{Figure4} plots the estimates of the trend inflation and the NAIRU. It is obvious that the underlying trend of the inflation is high in the 1970s, but decreases in the subsequent period. After the Great Moderation, the long--run level of the inflation is below, but quite close to the Federal Reserve's target of 2\%, which indicates that the inflation is more anchored now than in the 1970s. However, the NAIRU is less persistent and fluctuates over time.

\subsection{Out--of--Sample Forecasting}

In this subsection, we focus on the out--of--sample forecasting, and compare the forecasts of a Bayesian time--varying parameter VAR with stochastic volatility (TVP--SV) \cite[cf.,][]{primiceri2005time}, as well as a VAR model with constant parameters (CVAR).

Specifically, we consider the 1--8 quarters ahead forecasts. That is to forecast

\begin{eqnarray*}
\overline{\bm{x}}_{t+1|t+h}=h^{-1}\sum_{i=1}^{h}\bm{x}_{t+i}
\end{eqnarray*}
for $h=1$, $2$, $4$ and $8$, where $\bm{x}_t$ includes the values of the three aforementioned macro variables at the date $t$. The forecasts of $\overline{\bm{x}}_{t+1|t+h}$ are constructed by $\widehat{\overline{\bm{x}}}_{t+1|t+h}=\widehat{\bm{A}}_t\bm{z}_t$, in which $\widehat{\bm{A}}_t$ is estimated from CVAR, TVP--SV and the time--varying VAR model using the available data at time $t$ and $\bm{z}_t=[1,\bm{x}_{t}^\top,\bm{x}_{t-1}^\top,\bm{x}_{t-2}^\top]^\top$. The expanding window scheme is adopted. For comparison, we compute the root of mean square errors (RMSE) for CVAR as a benchmark, and the RMSE ratios for others. The out--of--sample forecast period covers 1985:Q2--end, about 35 years.

The forecasting results are presented in Table \ref{Table1}, in which the values represent the ratios of the RMSEs of the corresponding method over the RMSEs of the benchmark method (i.e., CVAR). The result shows that the time--varying VAR model and TVP--SV perform much better than CVAR, which implies the desirability of introducing variations in forecasting models. In addition, the time--varying VAR model has a better forecasting performance than TVP--SV with the increase of the forecast horizon.

{
\small
\begin{table}[H]
  \centering
  \caption{The 1--8 quarters ahead forecast. The values represent the ratios of the RMSEs of the corresponding method over the RMSEs of the benchmark method (i.e., CVAR). In each panel, the numbers in bold font of each column indicate the method that provides the best out--of--sample forecast for given $h$.}\label{Table1}
    \begin{tabular}{l rrrr}
    \hline\hline
          &$h=1Q$   &$h=2Q$    &$h=4Q$  &$h=8Q$  \\
    \hline
           & \multicolumn{4}{c}{Inflation rate, 1985:Q1--end} \\
    TVP--SV & \textbf{0.9771} &0.9841  &0.9850  &0.9880 \\
    time--varying VAR & 0.9986 & \textbf{0.9732}  &\textbf{0.9656} &\textbf{0.9855} \\
    \hline
          & \multicolumn{4}{c}{Unemployment rate, 1985:Q1--end} \\
    TVP--SV & \textbf{0.9119}  &0.9683 &1.0228  &1.0519 \\
    time--varying VAR & 0.9344 & \textbf{0.9671}  &\textbf{0.9928} &1.0076 \\
    \hline
          & \multicolumn{4}{c}{Interest rate, 1985:Q1--end} \\
    TVP--SV & \textbf{0.8388} &\textbf{0.8773} & \textbf{0.9435} & 1.0066 \\
    time--varying VAR & 0.9720 &0.9743 &0.9738 & \textbf{0.9892} \\
    \hline\hline
    \end{tabular}%
\end{table}
}

\begin{figure}[H]
\centering
{\includegraphics[width=16cm]{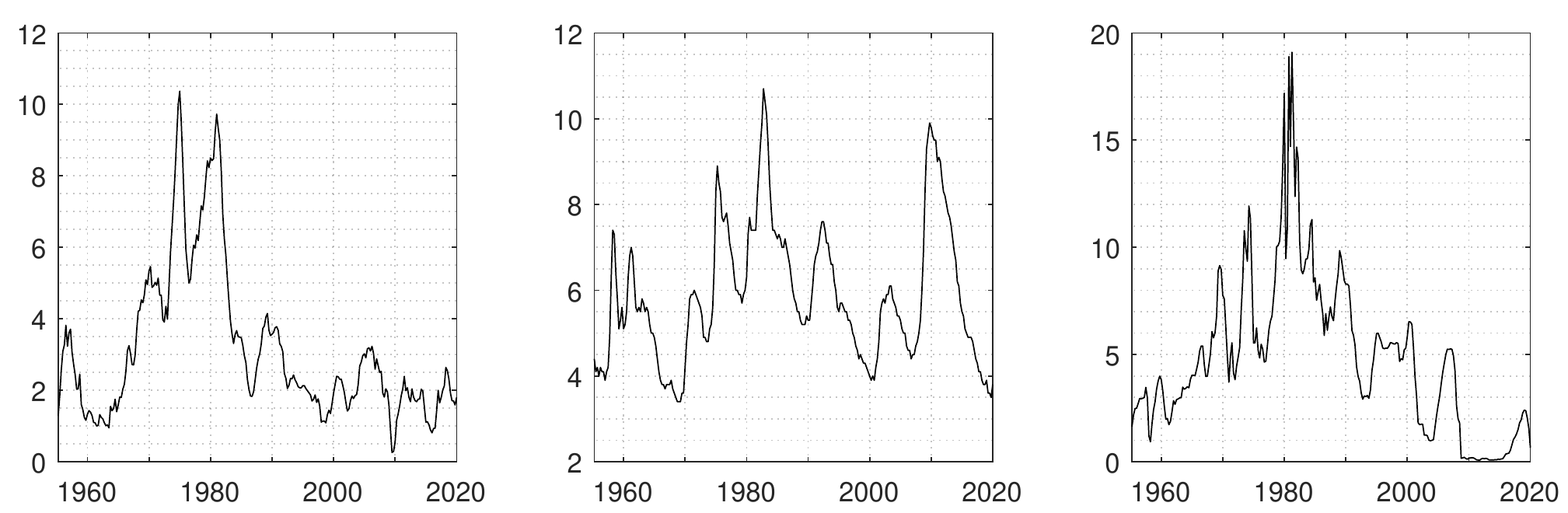}}
\caption{Plots of the inflation (left), the unemployment rate (middle) and the interest rate (right)}\label{Figure1}
\end{figure}

\begin{figure}[H]
\centering
{\includegraphics[width=16cm]{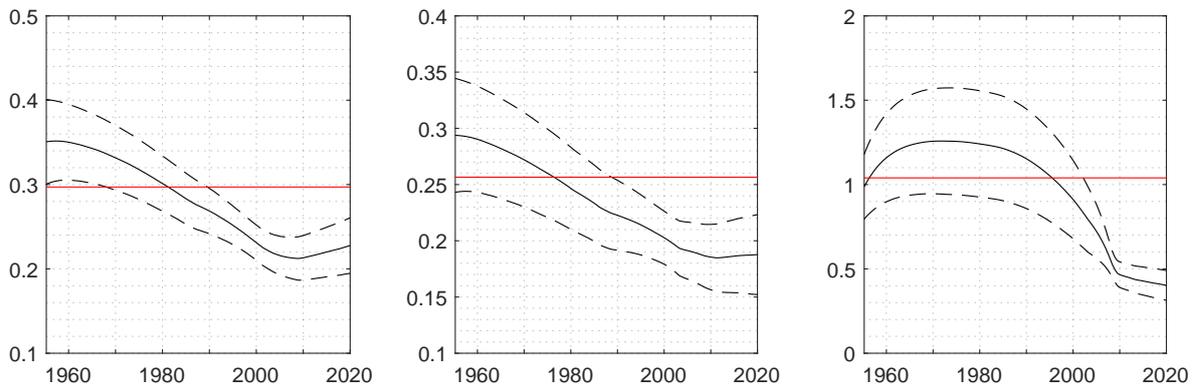}}
\caption{The estimated volatilities of the innovations in the inflation equation (left), the unemployment equation (middle) and the interest rate equation (right) as well as the associated 95\% confidence intervals. The red line denotes the estimated volatilities using the constant VAR model.}\label{Figure2}
\end{figure}

\begin{figure}[H]
\centering
{\includegraphics[width=10cm]{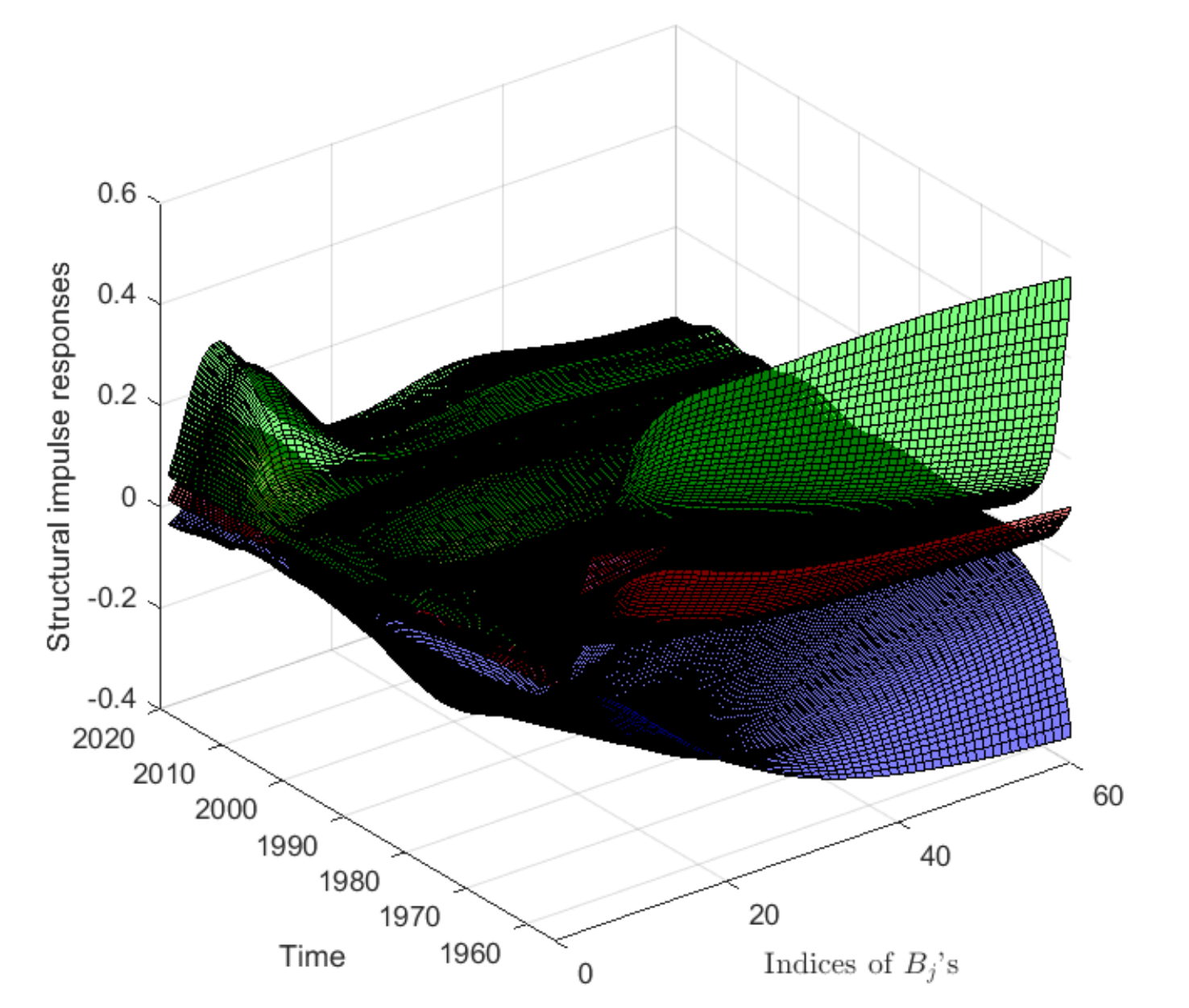}}
\caption{The time--varying structural impulse response of the inflation to monetary policy shocks as well as the associated 95\% confidence interval.}\label{Figure3}
\end{figure}

\begin{figure}[H]
\centering
{\includegraphics[width=16cm]{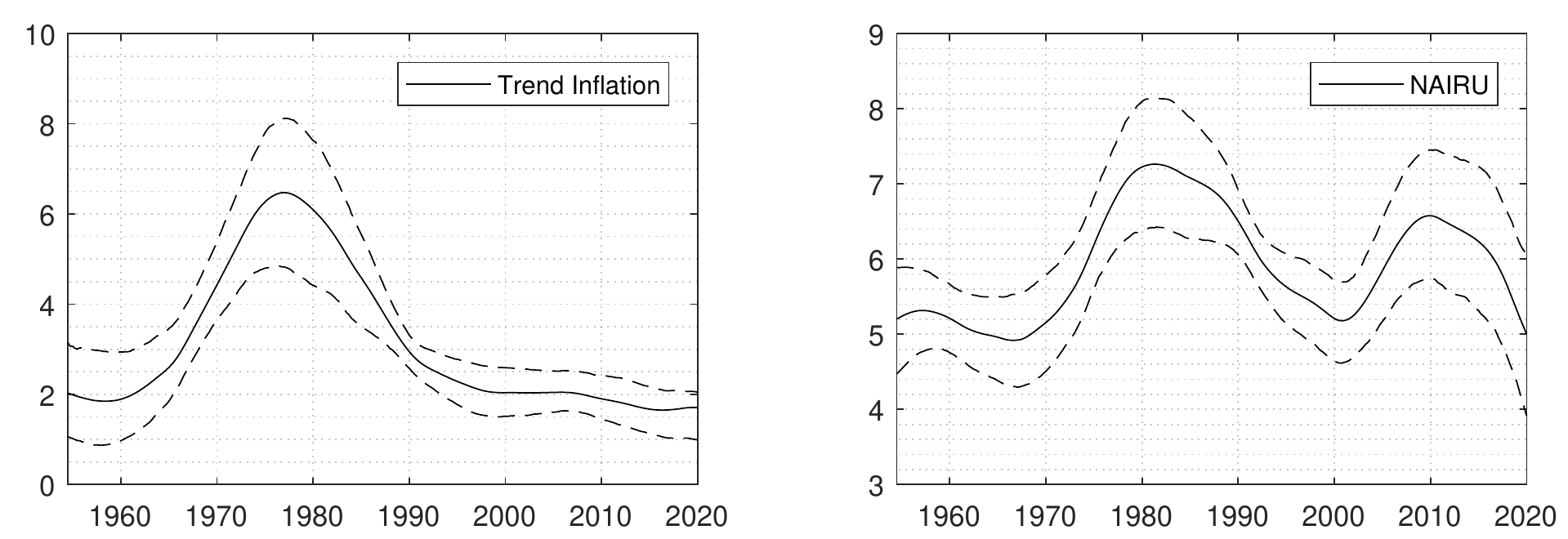}}
\caption{The estimated trends of the inflation and the NAIRU as well as the associated 95\% confidence intervals}\label{Figure4}
\end{figure}

\section{Conclusion}\label{Section5}

\renewcommand{\theequation}{5.\arabic{equation}}

\setcounter{equation}{0}

This paper has proposed a class of time--varying VMA($\infty$) models, which nest, for instance, time--varying VAR, time--varying VARMA, and so forth as special cases. Both the estimation methodology and asymptotic theory have been established accordingly. In the empirical study, we have investigated the transmission mechanism of monetary policy using U.S. data, and uncover a fall in the volatilities of exogenous shocks. Our findings include (i) monetary policy shocks have less influence on inflation before and during the so--called Great Moderation, (ii) inflation is more anchored recently, and (iii) the long--run level of inflation is below, but quite close to the Federal Reserve's target of two percent after the beginning of the Great Moderation period. In addition, in Appendix B of the online supplementary material, we have evaluated the finite--sample performance of the proposed model and estimation theory.

There are several directions for possible extensions. The first one is to test about whether the $d$--dimensional components of the {\rm VAR}$(p)$ process is cross--sectionally independent for the case where the dimensionality, $d$, and the number of lags, $p$, may diverge along with the sample size, $T$. The second one is about model specification testing to check whether some of the time--varying coefficient matrices, $\bm{A}_j(\tau)$, may just be constant matrices, $\bm{A}_{j0}$. Existing studies by \cite{Gao08}, \cite{pgy14}, and \cite{cw19} may be useful for both issues. The third one is to allow for some cointegrated structure in our settings. The recent work by \cite{zry19} provides us with a good reference. We wish to leave such issues for future study.

{\footnotesize
\bibliography{ma}
}

\clearpage


\section*{Appendix A}

\renewcommand{\theequation}{A.\arabic{equation}}
\renewcommand{\thesection}{A.\arabic{section}}
\renewcommand{\thefigure}{A.\arabic{figure}}
\renewcommand{\thetable}{A.\arabic{table}}
\renewcommand{\thelemma}{A.\arabic{lemma}}
\renewcommand{\theassumption}{A.\arabic{assumption}}
\renewcommand{\thetheorem}{A.\arabic{theorem}}

\setcounter{equation}{0}
\setcounter{lemma}{0}
\setcounter{section}{0}
\setcounter{table}{0}
\setcounter{figure}{0}
\setcounter{assumption}{0}

{\small

For the sake of presentation, we first provide some notation and mathematical sybmols in Appendix \ref{AppendixA.1}, and then present the dependent wild bootstrap (DWB) procedure with the associated asymptotic properties in Appendix \ref{AppendixA.2}. Some proofs of the main results are provided in Appendix \ref{AppA.3}. Simulations, some secondary results and omitted proofs are given in the online supplementary Appendix B. In what follows, $M$ and $O(1)$ always stand for constants, and may be different at each appearance.

\section{Notation and Mathematical Symbols}\label{AppendixA.1}

For ease of notation, we define three matrices $\bm{\Sigma}(\tau)$, $\bm{V}(\tau)$ and $\bm{\Phi}(\tau)$ with their estimators respectively. First, for $\forall \tau\in (0,1)$, let

\begin{eqnarray}\label{EqA.1}
 \bm{\Sigma}(\tau)&=&\left(\begin{matrix}
      1          & \bm{\mu}^\top(\tau)         & \cdots & \bm{\mu}^\top(\tau) \\
      \bm{\mu}(\tau)  & \bm{\Sigma}_{0}(\tau)     & \cdots & \bm{\Sigma}_{p-1}^\top(\tau) \\
      \vdots     & \vdots                 & \ddots & \vdots \\
      \bm{\mu}(\tau)  & \bm{\Sigma}_{p-1}(\tau)   & \cdots & \bm{\Sigma}_{0}(\tau)
    \end{matrix}\right),
\end{eqnarray}
in which $\bm{\mu}(\tau)$ and $\bm{B}_j(\tau)$ are defined in Proposition \ref{Proposition3.1} and $\bm{\Sigma}_m(\tau)=\bm{\mu}(\tau)\bm{\mu}(\tau)^\top+\sum_{j=0}^{\infty}\bm{B}_j(\tau)\bm{B}_{j+m}^\top(\tau)$ for $m=0,\ldots,p-1$. We define the estimator of $\bm{\Sigma}(\tau)$ as

\begin{eqnarray}\label{EqA.2}
\widehat{\bm{\Sigma}}(\tau)=\frac{1}{T}\sum_{t=1}^{T}\bm{z}_{t-1}\bm{z}_{t-1}^\top K_h(\tau_t-\tau),
\end{eqnarray}
where $\bm{z}_t$ is defined in \eqref{eq11}.

\medskip

Next, we let

\begin{eqnarray}\label{EqA.3}
\bm{V}(\tau)=\left(\begin{matrix}
                 \bm{V}_{1,1}(\tau) & \bm{V}_{2,1}^\top(\tau) \\
                 \bm{V}_{2,1}(\tau) & \bm{V}_{2,2}(\tau)
               \end{matrix} \right),
\end{eqnarray}
where

\begin{eqnarray}\label{EqA.4}
\bm{V}_{1,1}(\tau)&=&\tilde{v}_0\bm{\Sigma}^{-1}(\tau)\otimes \bm{\Omega}(\tau),\nonumber \\
\bm{V}_{2,1}(\tau)&=& \lim_{T\rightarrow \infty}\frac{h}{T}\sum_{t=1}^{T} E\left( \mathrm{vech} (\bm{\eta}_t\bm{\eta}_t^\top )\bm{\eta}_t^\top \bm{Z}_{t-1}^\top \right)K_h (\tau_t-\tau)^2\cdot  (\bm{\Sigma}^{-1}(\tau)\otimes \bm{I}_d ) , \nonumber \\
 \bm{V}_{2,2}(\tau)&=&\lim_{T\rightarrow\infty} \frac{h}{T}\sum_{t=1}^{T}E\left(\mathrm{vech}(\bm{\eta}_t\bm{\eta}_t^\top)\mathrm{vech}(\bm{\eta}_t\bm{\eta}_t^\top)^\top \right) K_h (\tau_t-\tau)^2 \nonumber \\
 &&-\tilde{v}_0\mathrm{vech}\left(\bm{\Omega}(\tau)\right) \mathrm{vech}\left(\bm{\Omega}(\tau)\right)^\top.
\end{eqnarray}
The estimator of $\bm{V}(\tau)$ is then defined as follows.

\begin{eqnarray}\label{EqA.5}
\widehat{\bm{V}}(\tau)&=&\left(\begin{matrix}
\widehat{\bm{V}}_{1,1}(\tau) & \widehat{\bm{V}}_{2,1}^\top(\tau) \\
\widehat{\bm{V}}_{2,1}(\tau) & \widehat{\bm{V}}_{2,2}(\tau)
  \end{matrix}\right),
\end{eqnarray}
where $\bm{\widehat{V}}_{1,1}(\tau)$, $\bm{\widehat{V}}_{2,1}(\tau)$ and $\bm{\widehat{V}}_{2,2}(\tau)$ have the forms identical to their counterparts of \eqref{EqA.4}, but we replace $ \bm{\Sigma}(\tau)$, $ \bm{\eta}_t$ and $\bm{\Omega}(\tau)$ with their estimators presented in \eqref{EqA.2} and \eqref{eq12}.

Finally, recall the following definition:
\begin{eqnarray}\label{EqA.6}
\bm{\Phi}(\tau)=\left(\begin{matrix}
       \bm{A}_{1}(\tau) & \cdots & \bm{A}_{p-1}(\tau) & \bm{A}_{p}(\tau)  \\
       \bm{I}_d & \cdots& \bm{0}_d & \bm{0}_d\\
       \vdots & \ddots&\vdots & \vdots\\
       \bm{0}_d &\cdots &\bm{I}_d & \bm{0}_d\\
    \end{matrix} \right).
\end{eqnarray}
Replacing $\bm{A}_j(\tau)$'s of \eqref{EqA.6} with their estimators obtained from \eqref{eq12} yields the estimator $\widehat{\bm{\Phi}}(\tau)$ straight away.

\section{Dependent Wild Bootstrap}\label{AppendixA.2}

We now present the detailed dependent wild bootstrap (DWB) procedure which is used to establish the confidence interval associated with Theorem \ref{Theorem2.2}.

\begin{enumerate}
\item For $\forall\tau\in(0,1)$, let $\widetilde{\bm{\mu}}(\tau)$ be the same as defined in \eqref{eq8} using an over--smoothing bandwidth $\widetilde{h}$. Obtain residuals $\bm{\widehat{e}}_t=\bm{x}_t-\widetilde{\bm{\mu}}(\tau_t)$ for $t\ge 1$.

\item Generate the bootstrap observations $\bm{x}_t^*=\widetilde{\bm{\mu}}(\tau_t)+\bm{e}_t^*$ for $t\ge 1$, where $\bm{e}_t^*=\xi_t^* \bm{\widehat{e}}_t$, $\{\xi_t^*\}$ is an $l$-dependent process satisfying $E(\xi_t^*)=0$, $E|\xi_t^*|^2=1$, $E|\xi_t^*|^\delta<\infty$ with $\delta>2$, and $E(\xi_t^*\xi_s^*)=a\left((t-s)/l\right)$ with some kernel function $a\left(\cdot\right)$ and tuning parameter $l$.

\item For $\forall\tau\in(0,1)$, let $\widehat{\bm{\mu}}^*(\tau)$ be the same as \eqref{eq8} but using $\{\bm{x}_t^*\}$.

\item Repeat Steps 2--3 $J$ times. Let $\bm{q}_{\alpha}(\tau)$ be the $\alpha$--quantile of $J$ statistics $\widehat{\bm{\mu}}^*(\tau)-\widetilde{\bm{\mu}}(\tau)$, and denote the $(1-\alpha)\cdot 100\%$ confidence interval of $\widehat{\bm{\mu}}(\tau)$ as $\left[\bm{\widehat{\mu}}(\tau)-\bm{q}_{1-\alpha/2}(\tau),\bm{\widehat{\mu}}(\tau)-\bm{q}_{\alpha/2}(\tau)\right]$.
\end{enumerate}

The above DWB procedure requires a tuning parameter $l$, which is the so--called ``block length'' (\citealp{shao2010dependent}). The following conditions are required to ensure the validity of the DWB procedure.

\begin{assumption}\label{AssA.1}
Suppose that $l\to \infty$, $\max \{\widetilde{h},h/\widetilde{h} \}\to 0$ and $l\cdot\max\{1/\sqrt{Th},\widetilde{h}^4,1/T\widetilde{h}\}\to 0$. Additionally, let $a(\cdot)$ be a symmetric kernel defined on $[-1,1]$ satisfying that $a(0)=1$ and $a(\cdot)$ is continuous at $0$ with $a^{(1)}(0)<\infty$.
\end{assumption}

We summarize the asymptotic properties of the DWB method by the next theorem.
\begin{theorem}\label{TheoremA.1}
Let Assumptions \ref{Ass2}--\ref{Ass4} and \ref{AssA.1} hold. For $\forall\tau \in (0,1)$, as $T\to \infty$,

\begin{enumerate}
\item $\sup_{\bm{w}\in \mathbb{R}^d}\left|\text{\normalfont Pr}^*\left[\sqrt{Th}\left(\bm{\widehat{\mu}}^*(\tau)-\bm{\widetilde{\mu}}(\tau)\right)\leq \bm{w}\right]-\Pr\left[\sqrt{Th}\left(\bm{\widehat{\mu}}(\tau)-\bm{\mu}(\tau)\right)\leq \bm{w}\right]\right|=o_P(1)$,

\item $ \liminf_{T\to \infty}\Pr\left(\bm{\mu}(\tau)\in \left[\bm{\widehat{\mu}}(\tau)-\bm{q}_{1-\alpha/2}(\tau),\bm{\widehat{\mu}}(\tau)-\bm{q}_{\alpha/2}(\tau)\right]\right)= 1-\alpha $,
\end{enumerate}
where $\text{\normalfont Pr}^*$ denotes the probability measure induced by the DWB procedure.
\end{theorem}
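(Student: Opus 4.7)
The plan is to prove Part 1 by establishing a bootstrap CLT for $\sqrt{Th}(\widehat{\bm{\mu}}^*(\tau) - \widetilde{\bm{\mu}}(\tau))$ that matches the limit of $\sqrt{Th}(\widehat{\bm{\mu}}(\tau) - \bm{\mu}(\tau))$ in both the asymptotic variance and the leading bias term, and then to deduce Part 2 by a quantile continuous-mapping argument. The first step is the decomposition
\begin{equation*}
\widehat{\bm{\mu}}^*(\tau) - \widetilde{\bm{\mu}}(\tau) = \bm{D}_T(\tau) + \bm{S}_T^*(\tau),
\end{equation*}
where $\bm{D}_T(\tau) = \big(\sum_t K_h(\tau_t-\tau)\big)^{-1}\sum_t K_h(\tau_t-\tau)\widetilde{\bm{\mu}}(\tau_t) - \widetilde{\bm{\mu}}(\tau)$ is the conditional mean $E^*[\widehat{\bm{\mu}}^*(\tau)\mid\mathcal{F}_T]-\widetilde{\bm{\mu}}(\tau)$ and $\bm{S}_T^*(\tau) = \big(\sum_t K_h(\tau_t-\tau)\big)^{-1}\sum_t K_h(\tau_t-\tau)\xi_t^* \widehat{\bm{e}}_t$. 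A second-order Taylor expansion of $\bm{\mu}$ around $\tau$, together with the decomposition $\widetilde{\bm{\mu}}(\tau_t)-\widetilde{\bm{\mu}}(\tau)=[\bm{\mu}(\tau_t)-\bm{\mu}(\tau)]+[(\widetilde{\bm{\mu}}-\bm{\mu})(\tau_t)-(\widetilde{\bm{\mu}}-\bm{\mu})(\tau)]$ in which the second bracket is controlled by a uniform bound on $\widetilde{\bm{\mu}}-\bm{\mu}$ inside a window of width $h$, yields $\bm{D}_T(\tau) = \tfrac{1}{2}h^2\tilde{c}_2 \bm{\mu}^{(2)}(\tau) + o_P((Th)^{-1/2})$ under the rate constraints of Assumption~\ref{AssA.1}, matching the bias in Theorem~\ref{Theorem2.2}.

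For the stochastic part, the $\widehat{\bm{e}}_t$ are $\mathcal{F}_T$-measurable, so only $\{\xi_t^*\}$ is random in the bootstrap world. Compute
\begin{equation*}
Th\cdot \mathrm{Var}^*\big(\bm{S}_T^*(\tau)\big) = \frac{(h/T)\sum_{t,s} K_h(\tau_t-\tau)K_h(\tau_s-\tau)\,a\!\left(\tfrac{t-s}{l}\right)\widehat{\bm{e}}_t \widehat{\bm{e}}_s^\top}{\big(T^{-1}\sum_t K_h(\tau_t-\tau)\big)^2}.
\end{equation*}
Replace $\widehat{\bm{e}}_t$ by $\bm{e}_t := \sum_j \bm{B}_j(\tau_t)\bm{\epsilon}_{t-j}$ (the contribution from $\widetilde{\bm{\mu}}-\bm{\mu}$ is uniformly negligible by Theorem~\ref{Theorem2.1}), change variables $k=t-s$, and localise by the kernel. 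Each lag $k$ contributes approximately $\tilde{v}_0\,a(k/l)\bm{\Gamma}_k(\tau)$, where $\bm{\Gamma}_k(\tau)$ is the lag-$k$ autocovariance of $\{\bm{e}_t\}$ frozen at time $\tau$. Since $a(0)=1$, $a$ is continuous at $0$, $l\to\infty$ with $l/\sqrt{Th}\to 0$, and Assumption~\ref{Ass3} provides summability of $\{\bm{B}_j(\tau)\}$, dominated convergence gives $\sum_k a(k/l)\bm{\Gamma}_k(\tau)\to \big(\sum_i \bm{B}_i(\tau)\big)\big(\sum_j \bm{B}_j^\top(\tau)\big)$, so $Th\cdot \mathrm{Var}^*(\bm{S}_T^*(\tau))\to_P \bm{\Sigma}_{\bm{\mu}}(\tau)$. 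Theorem~\ref{Theorem2.1} serves as the workhorse for converting empirical kernel sums to their population counterparts and for controlling the residual-replacement error.

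For conditional asymptotic normality, $\sqrt{Th}\bm{S}_T^*(\tau)$ is a weighted sum of an $l$-dependent mean-zero array; I would apply a big-block/small-block CLT (or equivalently a martingale array CLT after reindexing $\xi_t^*$ into blocks of length $2l$). The Lindeberg condition is verified by combining $E|\xi_t^*|^\delta<\infty$ with the uniform bound $\max_t\|\widehat{\bm{e}}_t\|=O_P(T^{1/\delta})$ implied by Assumption~\ref{Ass2}, while $l/\sqrt{Th}\to 0$ renders the small-block contribution negligible. Combining this with the bias match yields conditional distributional agreement; since the limit $N(\bm{0},\bm{\Sigma}_{\bm{\mu}}(\tau))$ is continuous, a P\'olya-type uniformisation argument upgrades pointwise conditional convergence to the uniform statement in Part 1. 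Part 2 then follows routinely: the bootstrap quantiles $\bm{q}_{\alpha}(\tau)$ converge in probability to those of the limit, so the coverage converges to $1-\alpha$.

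The principal obstacle is simultaneously calibrating the triple $(h,\widetilde{h},l)$. The over-smoothing condition $h/\widetilde{h}\to 0$ must transfer the leading bias cleanly through the pilot $\widetilde{\bm{\mu}}$; the rates $l/\sqrt{Th}\to 0$ and $l\widetilde{h}^4\to 0$ must ensure the DWB kernel $a(\cdot/l)$ recovers the long-run covariance without destabilising the $l$-dependent CLT; and the residual-replacement step must be executed uniformly enough that $\widetilde{\bm{\mu}}-\bm{\mu}$ does not perturb the leading covariance. Reconciling these three requirements through repeated application of Theorem~\ref{Theorem2.1} is the delicate core of the argument.
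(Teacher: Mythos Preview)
Your proposal is correct and follows essentially the same architecture as the paper: the same bias--plus--stochastic decomposition $\bm{D}_T+\bm{S}_T^*$, the same transfer of the leading bias $\tfrac12 h^2\tilde c_2\bm{\mu}^{(2)}(\tau)$ through the over-smoothed pilot via $h/\widetilde h\to 0$, the same reduction from $\widehat{\bm{e}}_t$ to $\bm{e}_t$, the same conditional-variance-to-long-run-covariance step, and the same big-block/small-block Lindeberg CLT conditional on the sample.

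The one place your write-up diverges in execution is the tool invoked for the residual-replacement and sample-to-population steps. Theorem~\ref{Theorem2.1} controls single kernel sums at a \emph{fixed} lag $p$, whereas here you face a double sum $\sum_{t,s}K_h(\tau_t-\tau)K_h(\tau_s-\tau)a((t-s)/l)\,\widehat{\bm{e}}_t\widehat{\bm{e}}_s^\top$ over a growing range of lags; the paper does not route this through Theorem~\ref{Theorem2.1}. Instead it (i) bounds the residual-replacement error by a direct $EE^*$ second-moment computation, obtaining $\max_t E\|\widehat{\bm{e}}_t-\bm{e}_t\|^2\cdot O(l)=O\big(\widetilde h^{\,4}+(T\widetilde h)^{-1}\big)\cdot O(l)$, which is precisely where the rate $l\cdot\max\{\widetilde h^{\,4},(T\widetilde h)^{-1}\}\to 0$ of Assumption~\ref{AssA.1} is consumed, and (ii) handles the lag-sum not by a bare dominated-convergence argument but by first replacing each $\bm{e}_t\bm{e}_{t+i}^\top$ with $E(\bm{e}_t\bm{e}_{t+i}^\top)$ via a truncation argument uniform in $i$, and then splitting $\sum_i(a(i/l)-1)E(\bm{e}_t\bm{e}_{t+i}^\top)$ at an auxiliary sequence $d_T$ with $d_T\to\infty$ and $d_T^2/l\to 0$. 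These are refinements of execution rather than a different strategy; your plan goes through once these two steps are implemented with the appropriate direct bounds in place of the blanket appeal to Theorem~\ref{Theorem2.1}.
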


We comment on some practical issues.

\textbf{Bandwidth Selection:} Since the observations are dependent, we use the modified cross--validation criterion proposed by \cite{chu1991comparison}. Specifically, it is a ``leave--(2k+1)--out'' version of cross--validation and $h_{mcv}$ is selected to minimize the following objective function:
\begin{equation*}
  h_{mcv}=\min_h \sum_{t=1}^{T}\left(\bm{x}_t-\widehat{\bm{\mu}}_{k,h}(\tau_t)\right)^\top\left(\bm{x}_t-\widehat{\bm{\mu}}_{k,h}(\tau_t)\right),
\end{equation*}
where $\widehat{\bm{\mu}}_{k,h}(\tau)=\left(\sum_{t:|t-\tau T|>k}K\left(\frac{\tau_t-\tau}{h}\right)\right)^{-1}\sum_{t:|t-\tau T|>k}\bm{x}_t K\left(\frac{\tau_t-\tau}{h}\right)$. Moreover, in the first step of the bootstrap procedure, we follow the suggestion of
\cite{buhlmann1998sieve} to use $\widetilde{h}=c_0\cdot h_{mcv}^{5/9}$ with $c_0=2$.

\medskip

\textbf{Tuning parameter:} In the second step of the bootstrap procedure, we choose the kernel function $a(\cdot)$ and the bandwidth $l$ as in \cite{shao2010dependent}.

\section{Proofs of the Main Results}\label{AppA.3}

\begin{proof}[Proof of Theorem \ref{Theorem2.1}]

\item

\noindent (1). By Lemma \ref{LemmaB.3}, we have

\begin{equation*}
\bm{x}_t=\bm{\mu}_t+\bm{\mathbb{B}}_t(1)\bm{\epsilon}_t+\widetilde{\mathbb{B}}_t(L)\bm{\epsilon}_{t-1}-\widetilde{\mathbb{B}}_t(L)\bm{\epsilon}_t,
\end{equation*}
where $\bm{\mathbb{B}}_t(1)$ and $\widetilde{\mathbb{B}}_t(L)$ have been defined in equation (\ref{eq2}). We are then able to write

\begin{eqnarray*}
& &\sup_{\tau \in[a,b]} \left\|\sum_{t=1}^{T}\bm{W}_{T,t}(\tau)(\bm{x}_t-E(\bm{x}_t)) \right\|\\
&\leq &\sup_{\tau \in[a,b]} \left\|\sum_{t=1}^{T}\bm{W}_{T,t}(\tau)\bm{\mathbb{B}}_{t}(1)\bm{\epsilon}_t\right\|+\sup_{\tau \in[a,b]}  \left\|\bm{W}_{T,1}(\tau)\widetilde{\mathbb{B}}_1(L)\bm{\epsilon}_0\right\|+\sup_{\tau \in[a,b]}  \left\|\bm{W}_{T,T}(\tau)\widetilde{\mathbb{B}}_T(L)\bm{\epsilon}_T\right\|\\
& &+\sup_{\tau \in[a,b]} \left\|\sum_{t=1}^{T-1} \left(\bm{W}_{T,t+1}(\tau)\widetilde{\mathbb{B}}_{t+1}(L)-\bm{W}_{T,t}(\tau)\widetilde{\mathbb{B}}_t(L)\right) \bm{\epsilon}_t\right\|\\
&:=&I_{T,1}+I_{T,2}+I_{T,3}+I_{T,4},
\end{eqnarray*}
where the definitions of $I_{T,j}$ for $j=1,\ldots, 4$ should be obvious.

By Lemma \ref{LemmaB.5}, we have $I_{T,1}=O_{P}\left(\sqrt{d_T\log T}\right)$. Also, it's easy to see that $I_{T,2}=O_P(d_T)$ and $I_{T,3}=O_{P}(d_T)$, because $E \|\widetilde{\mathbb{B}}_1(L)\bm{\epsilon}_0 \|<\infty$ and $E\|\widetilde{\mathbb{B}}_T(L)\bm{\epsilon}_T\|<\infty$ in view of the fact that

\begin{eqnarray*}
\|\widetilde{\mathbb{B}}_1(1) \|\leq\sum_{j=0}^{\infty} \| \widetilde{\bm{B}}_{j,1} \|<\infty \quad\text{and}\quad \|\widetilde{\mathbb{B}}_T(1) \|\leq\sum_{j=0}^{\infty} \| \widetilde{\bm{B}}_{j,T} \|<\infty
\end{eqnarray*}
by Lemma \ref{LemmaB.3}. Thus, we just need to consider $I_{T,4}$ below.

Note that

\begin{enumerate}
\item[(1).] $  \sum_{t=1}^{T-1}\|\widetilde{\mathbb{B}}_{t+1}(1)-\widetilde{\mathbb{B}}_{t}(1) \|=O(1) $  by Lemma \ref{LemmaB.3};

\item[(2).]  $ T^{2/\delta}d_T\log T\to 0 $ and $ \sup_{\tau\in[a,b ]}\sum_{t=1}^{T-1}\left\|\bm{W}_{T,t+1}(\tau)-\bm{W}_{T,t}(\tau) \right\|=O\left(d_T\right)$ by the conditions in the body of this theorem;

\item[(3).]  $\max_{1\leq t \leq T-1}\|\widetilde{\mathbb{B}}_{t+1}(L)\bm{\epsilon}_t \|=O_P(T^{1/\delta})$ by $E \| \widetilde{\mathbb{B}}_{t+1}(L)\bm{\epsilon}_t \|^\delta < \infty$ and

\begin{eqnarray*}
\max_{1\leq t \leq T-1}\|\widetilde{\mathbb{B}}_{t+1}(L)\bm{\epsilon}_t \|\leq\left(\sum_{t=1}^{T-1} \|\widetilde{\mathbb{B}}_{t+1}(L)\bm{\epsilon}_t \|^\delta \right)^{1/\delta}=O_P(T^{1/\delta}).
\end{eqnarray*}
\end{enumerate}
Hence, write
\begin{equation*}
\begin{split}
     & \sup_{\tau\in[a,b ]} \left\|\sum_{t=1}^{T-1}\left(\bm{W}_{T,t+1}(\tau)\widetilde{\mathbb{B}}_{t+1}(L)-\bm{W}_{T,t}(\tau)\widetilde{\mathbb{B}}_t(L)\right)\bm{\epsilon}_t\right\| \\
      =\ &\sup_{\tau\in[a,b ]} \left\|\sum_{t=1}^{T-1}(\bm{W}_{T,t+1}(\tau)-\bm{W}_{T,t}(\tau))\widetilde{\mathbb{B}}_{t+1}(L)\bm{\epsilon}_t+\bm{W}_{T,t}(\tau)(\widetilde{\mathbb{B}}_{t+1}(L)-\widetilde{\mathbb{B}}_t(L))\bm{\epsilon}_t\right\| \\
      \leq\ &\max_{1\leq t \leq T-1}\|\widetilde{\mathbb{B}}_{t+1}(L)\bm{\epsilon}_t \| \cdot \sup_{\tau\in[a,b]}\sum_{t=1}^{T-1} \|\bm{W}_{T,t+1}(\tau)-\bm{W}_{T,t}(\tau) \|\\
     &+\sup_{\tau \in [a,b],1\leq t\leq T}\left\|\bm{W}_{T,t}(\tau)\right\| \cdot \sum_{t=1}^{T-1} \|(\widetilde{\mathbb{B}}_{t+1}(L)-\widetilde{\mathbb{B}}_{t}(L))\bm{\epsilon}_t \|\\
     =\ &O_P (T^{1/\delta}\cdot d_T) +   O_P(d_T )=o_P (\sqrt{d_T \log T} ).
\end{split}
\end{equation*}
The first result then follows.

\medskip

\noindent (2). Below, we consider $p=0$ only. The cases with fixed $p\ge 1$ can be verified in a similar manner, so omitted.

\begin{equation*}
  \begin{split}
  &\sup_{ \tau \in [a,b]} \left\| \sum_{t=1}^{T}\mathrm{vec}\left(\bm{W}_{T,t}(\tau)(\bm{x}_t\bm{x}_t^\top-E(\bm{x}_t\bm{x}_t^\top))\right) \right\| \\
 \leq\ & 2\sup_{ \tau \in [a,b]}  \left\|\sum_{t=1}^{T}(\bm{I}_d\otimes \bm{W}_{T,t}(\tau))\sum_{j=0}^{\infty}(\bm{B}_{j,t}\otimes \bm{\mu}_t)\bm{\epsilon}_{t-j} \right\|\\
       &+\sup_{ \tau \in [a,b]}  \left\|\sum_{t=1}^{T}(\bm{I}_d\otimes \bm{W}_{T,t}(\tau))\sum_{j=0}^{\infty}(\bm{B}_{j,t}\otimes \bm{B}_{j,t})\left(\mathrm{vec}\left(\bm{\epsilon}_{t-j}\bm{\epsilon}_{t-j}^\top\right)-\mathrm{vec}\left(\bm{I}_d\right)\right) \right\|\\
       &+2\sup_{ \tau \in [a,b]}  \left\|\sum_{t=1}^{T}(\bm{I}_d\otimes \bm{W}_{T,t}(\tau))\sum_{r=1}^{\infty}\sum_{j=0}^{\infty}(\bm{B}_{j+r,t}\otimes \bm{B}_{j,t})\mathrm{vec}\left(\bm{\epsilon}_{t-j}\epsilon_{t-j-r}\right)\right\|\\
 :=\ &2I_{T,1}+I_{T,2}+2I_{T,3},
  \end{split}
\end{equation*}
wherein $I_{T,1} =O_{P}(\sqrt{d_T \log T})$ by a proof similar to the first result of this theorem.

Consider $I_{T,2}$. Using Lemma \ref{LemmaB.3}, write

\begin{equation*}
  \begin{split}
     I_{T,2}\leq\ & \sup_{\tau\in[a,b]} \left\|\sum_{t=1}^{T}(\bm{I}_d\otimes \bm{W}_{T,t}(\tau))\bm{\mathbb{B}}_t^0(1)\left(\mathrm{vec}\left(\bm{\epsilon}_t\bm{\epsilon}_t^\top\right)-\mathrm{vec}\left(\bm{I}_d\right) \right)\right\|\\
       &+\sup_{\tau\in[a,b]}\left\|(\bm{I}_d\otimes \bm{W}_{T,1}(\tau))\widetilde{\mathbb{B}}_{1}^0(L) \mathrm{vec}\left(\bm{\epsilon}_0 \bm{\epsilon}_0^\top\right)\right\|+\sup_{ \tau \in[a,b]} \left\|(\bm{I}_d\otimes \bm{W}_{T,T}(\tau)) \widetilde{\mathbb{B}}_{T}^0(L) \mathrm{vec}\left(\bm{\epsilon}_T\bm{\epsilon}_T^\top\right)\right\|\\
       &+\sup_{\tau\in[a,b]}  \left\| \sum_{t=1}^{T-1} \left( (\bm{I}_d\otimes \bm{W}_{T,t+1}(\tau))\widetilde{\mathbb{B}}_{t+1}^0(L)-(\bm{I}_d\otimes \bm{W}_{T,t}(\tau))\widetilde{\mathbb{B}}_{t}^0(L)\right)\cdot\mathrm{vec}\left(\bm{\epsilon}_t\bm{\epsilon}_t^\top\right)\right\|\\
      :=\ &I_{T,21}+I_{T,22}+I_{T,23}+I_{T,24}.
  \end{split}
\end{equation*}
By Lemma \ref{LemmaB.6}, we have $I_{T,21}=O_P\left( \sqrt{d_T \log T}\right)$. Also, $I_{T,22}=O_P(d_T)$ and $I_{T,23}=O_P(d_T)$, because $\|\widetilde{\mathbb{B}}_{1}^0(1)\| < \infty$ and $\|\widetilde{\mathbb{B}}_{T}^0(1)\| < \infty$ by Lemma \ref{LemmaB.3}. Similar to the proof of the first result, for $I_{T,24}$, we write

\begin{equation*}
  \begin{split}
       & \sup_{\tau \in [a,b]}  \left\|\sum_{t=1}^{T-1}\left( (\bm{I}_d\otimes \bm{W}_{T,t+1}(\tau)) \widetilde{\mathbb{B}}_{t+1}^0(L)-(\bm{I}_d\otimes \bm{W}_{T,t}(\tau))\widetilde{\mathbb{B}}_{t}^0(L)\right)\mathrm{vec}\left(\bm{\epsilon}_t\bm{\epsilon}_t^\top\right)\right\| \\
     \leq\  &  \sqrt{d} \sup_{\tau \in[a,b], 1\leq t\leq T} \left\|\bm{W}_{T,t+1}(\tau)\right\| \cdot \sum_{t=1}^{T-1}\left\|\left( \widetilde{\mathbb{B}}_{t+1}^0(L)-\widetilde{\mathbb{B}}_{t}^0(L)\right)\mathrm{vec}\left(\bm{\epsilon}_t\bm{\epsilon}_t^\top\right)\right\|\\
       & +\sqrt{d} \max_t \left\|  \widetilde{\mathbb{B}}_{t}^0(L) \mathrm{vec}\left(\bm{\epsilon}_t\bm{\epsilon}_t^\top\right)\right\| \cdot\sup_{\tau \in [a,b]} \sum_{t=1}^{T-1} \left\|\bm{W}_{T,t+1}(\tau)-\bm{W}_{T,t}(\tau)\right\| = o_P\left(\sqrt{d_T \log T}\right),
  \end{split}
\end{equation*}
where we have used the facts that
\begin{eqnarray*}
(1). &&T^{4/\delta}d_T\log T\to 0;\\
(2). &&\max_{t\ge 1} \left\|\widetilde{\mathbb{B}}_{t}^0(L) \mathrm{vec}\left(\bm{\epsilon}_t\bm{\epsilon}_t^\top\right)\right\|=O_P(T^{2/\delta});\\
(3). &&\sup_{\tau\in[a,b ]}\sum_{t=1}^{T-1}\left\|\bm{W}_{T,t+1}(\tau)-\bm{W}_{T,t}(\tau) \right\|=O(d_T);\\
(4). &&\sum_{t=1}^{T-1}\left\|\widetilde{\mathbb{B}}_{t+1}^0(1)-\widetilde{\mathbb{B}}_{t}^0(1)\right\|=O(1).
\end{eqnarray*}
Then we can conclude that $I_{T,24}=O_{P}\left(\sqrt{d_T \log T}\right)$.

We now consider $I_{T,3}$. Using Lemma \ref{LemmaB.3}, we have

\begin{equation*}
  \begin{split}
       & \sup_{\tau \in [a,b]} \left\|\sum_{t=1}^{T}(\bm{I}_d\otimes \bm{W}_{T,t}(\tau))\sum_{r=1}^{\infty}\sum_{j=0}^{\infty}(\bm{B}_{j+r,t}\otimes \bm{B}_{j,t})\mathrm{vec}\left(\bm{\epsilon}_{t-j}\bm{\epsilon}_{t-j-r}\right)\right\|\\
      \leq\ & \sup_{\tau \in [a,b]}  \left\|\sum_{t=1}^{T}(\bm{I}_d\otimes \bm{W}_{T,t}(\tau))\bm{\zeta}_t\bm{\epsilon}_t \right\|+\sup_{\tau \in [a,b]}  \left\|(\bm{I}_d\otimes W_{T,1}(\tau))\sum_{r=1}^{\infty} \widetilde{\mathbb{B}}_{1}^r(L)\mathrm{vec}\left(\bm{\epsilon}_0\bm{\epsilon}_{-r}^\top\right)\right\|\\
       &+\sup_{\tau \in [a,b]}  \left\|(\bm{I}_d\otimes \bm{W}_{T,T}(\tau))\sum_{r=1}^{\infty} \widetilde{\mathbb{B}}_{T}^r(L)\mathrm{vec}\left(\bm{\epsilon}_T\bm{\epsilon}_{T-r}^\top\right)\right\|\\
       &+\sup_{\tau \in [a,b]}  \left\|\sum_{t=1}^{T-1}\sum_{r=1}^{\infty}\left((\bm{I}_d\otimes \bm{W}_{T,t+1}(\tau))\widetilde{\mathbb{B}}_{t+1}^r(L)-(\bm{I}_d\otimes \bm{W}_{T,t}(\tau))\widetilde{\mathbb{B}}_{t}^r(L)\right)\mathrm{vec}\left(\bm{\epsilon}_t\bm{\epsilon}_{t-r}^\top\right)\right\|  \\
     :=\ &I_{T,31}+I_{T,32}+I_{T,33}+I_{T,34},
  \end{split}
\end{equation*}
where $\bm{\zeta}_t$ is defined in Lemma \ref{LemmaB.6}.

By Lemma \ref{LemmaB.6}, $I_{T,31}=O_P\left(\sqrt{d_T \log T}\right)$. Moreover, $I_{T,32}=O_P(d_T)$ and $I_{T,33}=O_P(d_T)$, because $\sum_{r=1}^{\infty}\|\widetilde{\mathbb{B}}_{1}^r(1)\|<\infty$ and $\sum_{r=1}^{\infty}\|\widetilde{\mathbb{B}}_{T}^r(1)\|<\infty$ by Lemma \ref{LemmaB.3}. For $I_{T,34}$, we write

\begin{eqnarray*}
 && \sup_{\tau \in [a,b]}  \left\|\sum_{t=1}^{T-1}\sum_{r=1}^{\infty}\left((\bm{I}_d\otimes \bm{W}_{T,t+1}(\tau))\widetilde{\mathbb{B}}_{t+1}^r(L)-(\bm{I}_d\otimes \bm{W}_{T,t}(\tau))\widetilde{\mathbb{B}}_{t}^r(L)\right)\mathrm{vec}\left(\bm{\epsilon}_t\bm{\epsilon}_{t-r}^\top\right)\right\| \\
    &  \leq  &\sqrt{d}\sup_{\tau \in[a,b],1\leq t\leq T}  \left\|\bm{W}_{T,t}(\tau) \right\| \cdot \sum_{t=1}^{T-1}\left\|\sum_{r=1}^{\infty}\left(\widetilde{\mathbb{B}}_{t+1}^r(L)-\widetilde{\mathbb{B}}_{t}^r(L)\right)\mathrm{vec}\left(\bm{\epsilon}_t\bm{\epsilon}_{t-r}^\top\right)\right\| \\
       && + \sqrt{d} \max_t \left\|\sum_{r=1}^{\infty} \widetilde{\mathbb{B}}_{t}^r(L) \mathrm{vec}\left(\bm{\epsilon}_t\bm{\epsilon}_{t-r}^\top\right)\right\|\cdot\sup_{\tau \in[a,b]}\sum_{t=1}^{T-1} \left\|\bm{W}_{T,t+1}(\tau)-\bm{W}_{T,t}(\tau)\right\| = o_P(\sqrt{d_T \log T}),
\end{eqnarray*}
where we have used the fact that

\begin{eqnarray*}
(1). && T^{4/\delta}d_T\log T\to 0 ;\\
(2). && \max_t \left\|\sum_{r=1}^{\infty} \widetilde{\mathbb{B}}_{t}^r(L) \mathrm{vec}\left(\bm{\epsilon}_t\bm{\epsilon}_{t-r}^\top\right)\right\|=O_P(T^{2/\delta});\\
(3). && \sum_{t=1}^{T-1}\sum_{r=1}^{\infty} \|\widetilde{\mathbb{B}}_{t+1}^r(1)-\widetilde{\mathbb{B}}_{t}^r(1) \|=O(1);\\
(4). && \sup_{\tau\in[a,b ]}\sum_{t=1}^{T-1}\left\|\bm{W}_{T,t+1}(\tau)-\bm{W}_{T,t}(\tau) \right\|=O({d_T}).
\end{eqnarray*}

Based on the above development, the proof of the case with $p=0$ is complete. The proof is now completed.
\end{proof}
\medskip

\begin{proof}[Proof of Theorem \ref{Theorem2.2}]

\item

Write

\begin{equation*}
\begin{split}
&\frac{1}{\sqrt{Th}}\sum_{t=1}^{T}(\bm{x}_t-\bm{\mu}(\tau_t))K\left(\frac{\tau_t-\tau}{h}\right)\\
=\ &\frac{1}{\sqrt{Th}}\sum_{t=1}^{T}\bm{\mathbb{B}}_t(1)\bm{\epsilon}_t K\left(\frac{\tau_t-\tau}{h}\right)+\frac{1}{\sqrt{Th}}\widetilde{\mathbb{B}}_1(L)\bm{\epsilon}_0 K\left(\frac{\tau_1-\tau}{h}\right)-\frac{1}{\sqrt{Th}}\widetilde{\mathbb{B}}_T(L)\bm{\epsilon}_T K\left(\frac{\tau_T-\tau}{h}\right)
 \\
     &+\frac{1}{\sqrt{Th}}\sum_{t=1}^{T-1}\left(\widetilde{\mathbb{B}}_{t+1}(L)K\left(\frac{\tau_{t+1}-\tau}{h}\right)-\widetilde{\mathbb{B}}_{t}(L)K\left(\frac{\tau_t-\tau}{h}\right)\right)\bm{\epsilon}_t\\
    =\ & \frac{1}{\sqrt{Th}}\sum_{t=1}^{T}\bm{\mathbb{B}}_t(1)\bm{\epsilon}_t K\left(\frac{\tau_t-\tau}{h}\right)+o_P(1),
\end{split}
\end{equation*}
where the second equality follows from similar arguments to the proof of Theorem \ref{Theorem2.1}.

For the bias term, we have for any $\tau \in(0,1)$
\begin{eqnarray*}
\frac{1}{Th}\sum_{t=1}^{T}\bm{\mu}(\tau_t)K\left(\frac{\tau_t-\tau}{h}\right)=\bm{\mu}(\tau)+\frac{1}{2}h^2\widetilde{c}_2\bm{\mu}^{(2)}(\tau)+o(h^2)+O\left(\frac{1}{Th}\right).
\end{eqnarray*}

Since
\begin{eqnarray*}
\mathrm{Var}\left( \frac{1}{\sqrt{Th}}\sum_{t=1}^{T}\bm{\mathbb{B}}_t(1)\bm{\epsilon}_t K\left(\frac{\tau_t-\tau}{h}\right) \right)&=& \frac{1}{Th}\sum_{t=1}^{T}\bm{\mathbb{B}}_t(1)\bm{\mathbb{B}}_t^\top(1) K\left(\frac{\tau_t-\tau}{h}\right)^2\\
&\to&\widetilde{v}_0 \left\{ \sum_{j=0}^{\infty}\bm{B}_j(\tau)\right\}\left\{\sum_{j=0}^{\infty}\bm{B}_j^\top(\tau)\right\},
\end{eqnarray*}
we then use the Cram\'er-Wold device to prove its asymptotic normality. That is to show that for any conformable vector $\bm{l}$
\begin{eqnarray*}
\frac{1}{\sqrt{Th}}\sum_{t=1}^{T}\bm{l}^\top\bm{\mathbb{B}}_t(1)\bm{\epsilon}_t K\left(\frac{\tau_t-\tau}{h}\right)\to_D N\left(\bm{0},\widetilde{v}_0 \bm{l}^\top\left\{ \sum_{j=0}^{\infty}\bm{B}_j(\tau)\right\}\left\{\sum_{j=0}^{\infty}\bm{B}_j^\top(\tau)\right\}\bm{l}\right).
\end{eqnarray*}
Let $\bm{Z}_{t}(\tau)=\frac{1}{\sqrt{Th}}\bm{l}^\top \bm{\mathbb{B}}_t(1)\bm{\epsilon}_t K\left(\frac{\tau_t-\tau}{h}\right)$. By the law of large numbers for martingale differences and the assumption $E\left(\bm{\epsilon}_t \bm{\epsilon}_t^\top|\mathcal{F}_{t-1}\right)=\bm{I}_d$ a.s., we have for any $\tau\in(0,1)$
\begin{equation*}
  \sum_{t=1}^{T}\bm{Z}_{t}^2(\tau)\to_P\widetilde{v}_0 \bm{l}^\top\left\{ \sum_{j=0}^{\infty}\bm{B}_j(\tau)\right\}\left\{\sum_{j=0}^{\infty}\bm{B}_j^\top(\tau)\right\}\bm{l}.
\end{equation*}

Furthermore, for any $\nu > 0$ and $\tau\in(0,1)$, by H\"older's inequality and Markov's inequality,
\begin{eqnarray*}
  &&\sum_{t=1}^{T}E\left(\bm{Z}_t^2(\tau) I\left(|\bm{Z}_t(\tau)| > \nu\right) \right)\\
     &\leq & \frac{1}{Th}\sum_{t=1}^{T}K^2\left(\frac{\tau_t-\tau}{h}\right)\left(E|\bm{l}^\top\bm{\mathbb{B}}_t(1)\bm{\epsilon}_t|^\delta \right)^{2/\delta}\left(\frac{E|\bm{l}^\top\bm{\mathbb{B}}_t(1)\bm{\epsilon}_t|^\delta}{(Th)^{\delta/2}\nu^\delta}\right)^{(\delta-2)/\delta} \\
     & = &O\left( \frac{1}{(Th)^{{\delta-2}/2}}\right)=o(1).
\end{eqnarray*}
By Lemma \ref{LemmaB.1}, the proof is now completed.
\end{proof}

\bigskip

\begin{proof}[Proof of Theorem \ref{Theorem3.1}]

\item

\noindent (1). Similar to the proof of Proposition \ref{Proposition3.1}, one can show that $\sum_{t=1}^{T}\left\|\bm{x}_t\bm{x}_t^\top-\widetilde{\bm{x}}_t\widetilde{\bm{x}}_t^\top\right\|=O_P(1)$. Therefore, Lemma \ref{LemmaB.7} are still valid for the time--varying VAR process \eqref{eq9}. For example, consider the uniform convergence results, by Lemma \ref{LemmaB.7},
\begin{eqnarray*}
  &&\sup_{\tau\in[0,1]}\left\|\frac{1}{T}\sum_{t=1}^{T}\left(\bm{x}_t\bm{x}_t-E(\widetilde{\bm{x}}_t\widetilde{\bm{x}}_t^\top)\right)K_h\left( \tau_t-\tau\right)\right\|\\
 &\leq& \sup_{\tau\in[0,1]}\left\|\frac{1}{T}\sum_{t=1}^{T}\left(\widetilde{\bm{x}}_t\widetilde{\bm{x}}_t^\top-E(\widetilde{\bm{x}}_t\widetilde{\bm{x}}_t^\top)\right)K_h\left( \tau_t-\tau\right)\right\|+\left(\frac{1}{T}\sup_{\tau,\tau_t}\left| K_h(\tau_t-\tau)\right| \right) \cdot \sum_{t=1}^{T}\left\|\bm{x}_t\bm{x}_t^\top- \widetilde{\bm{x}}_t\widetilde{\bm{x}}_t^\top \right\|    \\
   &=& O_P\left(\sqrt{\frac{\log T}{Th}} \right)+O_P\left(\frac{1}{Th} \right)= O_P\left(\sqrt{\frac{\log T}{Th}} \right).
\end{eqnarray*}
Hence, in the following we will directly apply Lemma \ref{LemmaB.7} to the time-varying VAR process.

For notational simplicity, let
\begin{eqnarray*}
\bm{S}_{T,k}(\tau) &=&\frac{1}{T}\sum_{t=1}^{T}\bm{z}_{t-1} \bm{z}_{t-1}^{\top}\left(\frac{\tau_t-\tau}{h}\right)^k K_h(\tau_t-\tau) \text{ for } 0\leq k\leq2,\\
\bm{M}(\tau_t) &=& \bm{A}(\tau_t)-\bm{A}(\tau)-\bm{A}^{(1)}(\tau)(\tau_t-\tau)-\frac{1}{2}\bm{A}^{(2)}(\tau)(\tau_t-\tau)^2.
\end{eqnarray*}

We now begin our investigation. Since

\begin{equation*}
\bm{x}_t =\bm{Z}_{t-1}^\top \mathrm{vec}\left(\bm{A}(\tau)+\bm{A}^{(1)}(\tau)(\tau_t-\tau)+\frac{1}{2}\bm{A}^{(2)}(\tau)(\tau_t-\tau)^2+\bm{M}(\tau_t)\right)+\bm{\eta}_t,
\end{equation*}
we write
\begin{equation*}
\begin{split}
    &\mathrm{vec} (\bm{\widehat{A}}(\tau)-\bm{A}(\tau))\\
    =\ &\left(\frac{1}{T}\sum_{t=1}^{T} \bm{Z}_{t-1}\bm{Z}_{t-1}^\top K_h(\tau_t-\tau)\right)^{-1}\left(\frac{1}{T}\sum_{t=1}^{T}\bm{Z}_{t-1}\bm{x}_t  K_h(\tau_t-\tau)\right)-\mathrm{vec}\left(\bm{A}(\tau)\right)\\
    =\ &\left(\bm{S}_{T,0}^{-1}(\tau)\otimes \bm{I}_d\right)\left\{\left(\bm{S}_{T,1}(\tau)\otimes \bm{I}_d\right)h\mathrm{vec}\left(\bm{A}^{(1)}(\tau)\right)+\left( \bm{S}_{T,2}(\tau)\otimes \bm{I}_d\right)\frac{1}{2}h^2\mathrm{vec}\left(\bm{A}^{(2)}(\tau)\right)\right\}\\
     & +\left(\bm{S}_{T,0}^{-1}(\tau)\otimes \bm{I}_d\right)\left(\frac{1}{T}\sum_{t=1}^{T}(\bm{z}_{t-1}\bm{z}_{t-1}^{\top} \otimes \bm{I}_d )\mathrm{vec}\left(\bm{M}(\tau_t)\right)K_h(\tau_t-\tau)\right)\\
     & +\left( \bm{S}_{T,0}^{-1}(\tau)\otimes \bm{I}_d\right)\left(\frac{1}{T}\sum_{t=1}^{T}(\bm{z}_{t-1}\otimes \bm{I}_d)\bm{\eta}_t K_h(\tau_t-\tau)\right) := \, I_{T,1}+I_{T,2}+I_{T,3}.\\
\end{split}
\end{equation*}

By standard arguments for the local constant kernel estimator and the uniform convergence results in Lemma \ref{LemmaB.7}, we have
\begin{eqnarray*}
\|I_{T,1}+I_{T,2}\|=O(h^2)+O_P(h\sqrt{\log T/(Th)})
\end{eqnarray*}
uniformly over $\tau \in [h,1-h]$. By Lemma \ref{LemmaB.8}, we have $I_{T,3}=O_P ( (\frac{\log T}{Th} )^{\frac{1}{2}} )$ uniformly over $\tau \in [0,1]$. Thus, the first result follows.

\medskip

\noindent (2). We begin our investigation on the asymptotic normality by writing

\begin{eqnarray*}
 &&\frac{1}{Th}\sum_{t=1}^{T}\bm{\widehat{\eta}}_t\bm{\widehat{\eta}}_t^\top K\left(\frac{\tau_t-\tau}{h}\right)\\
&=&\frac{1}{Th}\sum_{t=1}^{T}\left(\bm{\eta}_t+\bm{\widehat{\eta}}_t-\bm{\eta}_t\right)\left(\bm{\eta}_t+\bm{\widehat{\eta}}_t-\bm{\eta}_t\right)^\top K\left(\frac{\tau_t-\tau}{h}\right)\\
&=& \frac{1}{Th}\sum_{t=1}^{T}\bm{\eta}_t\bm{\eta}_t^\top K\left(\frac{\tau_t-\tau}{h}\right)+\frac{1}{Th}\sum_{t=1}^{T}(\bm{\widehat{\eta}}_t-\bm{\eta}_t)(\bm{\widehat{\eta}}_t-\bm{\eta}_t)^\top K\left(\frac{\tau_t-\tau}{h}\right) \\
& &+\frac{1}{Th}\sum_{t=1}^{T}\bm{\eta}_t(\bm{\widehat{\eta}}_t-\bm{\eta}_t)^\top K\left(\frac{\tau_t-\tau}{h}\right)+\frac{1}{Th}\sum_{t=1}^{T}(\bm{\widehat{\eta}}_t-\bm{\eta}_t)\bm{\eta}_t^\top K\left(\frac{\tau_t-\tau}{h}\right)\\
&:=& \bm{I}_{T,4}+\bm{I}_{T,5}+\bm{I}_{T,6}+\bm{I}_{T,7}.
\end{eqnarray*}
Let $c_T=h^2+\sqrt{\frac{\log T}{Th}}$. By the first result, for $\forall \tau\in(0,1)$ we have
\begin{eqnarray*}
&&\left\|\frac{1}{Th}\sum_{t=1}^{T}(\bm{\widehat{\eta}}_t-\bm{\eta}_t)(\bm{\widehat{\eta}}_t-\bm{\eta}_t)^\top  K\left(\frac{\tau_t-\tau}{h}\right)\right\|\\
&\leq& \sup_{ \tau_t \in [h,1-h]}\|\bm{\widehat{A}}(\tau_t)-\bm{A}(\tau_t) \|^2 \cdot \frac{1}{Th}\sum_{t=1}^{T}\left\|\bm{z}_{t-1}\right\|^2 K\left(\frac{\tau_t-\tau}{h}\right)=O_P(c_T^2).
\end{eqnarray*}
By Lemma \ref{LemmaB.8}, $\bm{I}_{T,6}$ and $\bm{I}_{T,7}$ are both $o_P((Th)^{-1/2})$. Hence,

\begin{eqnarray*}
\sqrt{Th}\left(\frac{1}{Th}\sum_{t=1}^{T}\bm{\widehat{\eta}}_t\bm{\widehat{\eta}}_t^\top K\left(\frac{\tau_t-\tau}{h}\right)-\frac{1}{Th}\sum_{t=1}^{T}\bm{\eta}_t\bm{\eta}_t^\top K\left(\frac{\tau_t-\tau}{h}\right)-o_P(h^4)\right)=o_P(1).
\end{eqnarray*}

The above development yields that
\begin{eqnarray*}
&&\sqrt{Th}\left(\begin{matrix}
\mathrm{vec}\left(\bm{\widehat{A}}(\tau)-\bm{A}(\tau)-\frac{1}{2}h^2\tilde{c}_2\bm{A}^{(2)}(\tau)\right)-o_P(h^2) \\
\mathrm{vech}\left(\bm{\widehat{\Omega}}(\tau)-\bm{\Omega}(\tau)-\frac{1}{2}h^2\tilde{c}_2\bm{\Omega}^{(2)}(\tau)\right)-o_P(h^2)
                 \end{matrix} \right)\\
&=&\left(\begin{matrix}
           \left( \bm{\Sigma}^{-1}(\tau)\otimes \bm{I}_d\right)\left(\frac{1}{\sqrt{Th}}\sum_{t=1}^{T}\bm{Z}_{t-1} \bm{\eta}_t
           K\left(\frac{\tau_t-\tau}{h}\right)\right) \\
           \frac{1}{\sqrt{Th}}\sum_{t=1}^{T}\mathrm{vech}\left(\bm{\eta}_t\bm{\eta}_t^\top-\bm{\Omega}(\tau_t) \right)K\left(\frac{\tau_t-\tau}{h}\right)
         \end{matrix} \right)+o_P(1)\\
&:=&\bm{I}_{T,8}+o_P(1).
\end{eqnarray*}
Below, we focus on $\bm{I}_{T,8}$. First, show $\text{Var}(\bm{I}_{T,8})\to \bm{V}(\tau)$. Let

\begin{eqnarray*}
\text{Var}(\bm{I}_{T,8})=\left(\begin{matrix}
                                 \widetilde{\bm{V}}_{1,1}(\tau) & \widetilde{\bm{V}}_{2,1}^\top(\tau) \\
                                 \widetilde{\bm{V}}_{2,1}(\tau) & \widetilde{\bm{V}}_{2,2}(\tau)
                               \end{matrix}\right),
\end{eqnarray*}
where the definition of each block should be obvious. Moreover, simple algebra shows that $\widetilde{\bm{V}}_{i,j}(\tau) \to  \bm{V}_{i,j}(\tau)$ for $i,j\in\{1,2\}$.

By construction, $\bm{I}_{T,8}$ is a summation of m.d.s., we thus use Lemma \ref{LemmaB.1} and Cram\'er-Wold device to prove its asymptotic normality. It suffices to show that $\bm{d}^\top \bm{I}_{T,8}\to_D N\left(\bm{0},\bm{d}^\top\bm{V}(\tau)\bm{d}\right)$ for any conformable unit vector $\bm{d}$. Let
\begin{equation*}
  \bm{Z}_{T,t}(\tau)=\frac{1}{\sqrt{Th}}\bm{d}^\top\left(\begin{matrix}
           \left( \bm{\Sigma}^{-1}(\tau)\otimes \bm{I}_d\right)\left(\bm{Z}_{t-1}\bm{\eta}_t
           K\left(\frac{\tau_t-\tau}{h}\right)\right) \\
           \mathrm{vech}\left(\bm{\eta}_t\bm{\eta}_t^\top-\bm{\Omega}(\tau_t) \right)K\left(\frac{\tau_t-\tau}{h}\right)
         \end{matrix} \right).
\end{equation*}
By the law of large numbers for martingale differences, we have $\sum_{t=1}^T \bm{Z}_{T,t}^2(\tau) - \sum_{t=1}^T E(\bm{Z}_{T,t}^2(\tau)|\mathcal{F}_{t-1})\to_P 0$. Since conditional on $\mathcal{F}_{t-1}$ the third and fourth moments of $\bm{\epsilon}_t$ equal to the corresponding unconditional moments a.s., we can prove that $\sum_{t=1}^T E (\bm{Z}_{T,t}^2(\tau)|\mathcal{F}_{t-1} )\to_P \bm{d}^\top \bm{V}(\tau) \bm{d}$.

Furthermore, for any $\nu > 0$ and $\tau\in(0,1)$, similar to the proof of Theorem \ref{Theorem2.2}, we have
  \begin{equation*}
   \sum_{t=1}^{T}E\left(\left(\bm{d}^\top \bm{Z}_{T,t}(\tau)\right)^2I\left(|\bm{d}^\top \bm{Z}_{T,t}(\tau)|>\nu\right)  \right)\to 0.
  \end{equation*}
The result follows by Lemma \ref{LemmaB.1}.

\medskip

(3). By Lemma \ref{LemmaB.7} and the second result of Theorem \ref{Theorem3.1}, we have $\widehat{\bm{V}}_{1,1}(\tau)\to_P \bm{V}_{1,1}(\tau)$. Similar to the proof of the second result of this theorem, by the uniform convergence results of $\widehat{\bm{A}}(\tau)$, we can replace $\widehat{\bm{\eta}}_t$ with $\bm{\eta}_t$ in the following calculations. Therefore,
  \begin{eqnarray*}
     \widehat{\bm{V}}_{2,1}(\tau) &=& \frac{1}{Th}\sum_{t=1}^{T}\mathrm{vech}\left(\bm{\eta}_t\bm{\eta}_t^\top\right)\bm{\eta}_t^\top
    \bm{Z}_{t-1}^\top K^2\left(\frac{\tau_t-\tau}{h}\right)\left( \bm{\Sigma}^{-1}(\tau)\otimes \bm{I}_d\right)+o_P(1)\to_P\bm{V}_{2,1}(\tau),
  \end{eqnarray*}
  and
  \begin{eqnarray*}
    \widehat{\bm{V}}_{2,2}(\tau) &=& \frac{1}{Th}\sum_{t=1}^{T}\mathrm{vech}(\bm{\eta}_t\bm{\eta}_t^\top)\mathrm{vech}(\bm{\eta}_t\bm{\eta}_t^\top)^\top K^2\left(\frac{\tau_t-\tau}{h}\right)\\
    &&-\widetilde{v}_0\mathrm{vech}\left(\bm{\Omega}(\tau)\right)\mathrm{vech}\left(\bm{\Omega}(\tau)\right)^\top+o_P(1)\to_P \bm{V}_{2,2}(\tau).
  \end{eqnarray*}
The proof is now completed.
\end{proof}

\begin{proof}[Proof of Theorem \ref{Theorem3.2}]
\item

We prove that $\lim_{T\to \infty}\Pr\left(\text{IC}(\mathsf{p}) < \text{IC}(p)\right)=0$ for all $\mathsf{p}\neq p$ and $\mathsf{p}\leq \mathsf{P}$.

Note that
$$
\text{IC}(\mathsf{p})-\text{IC}(p)=\log[\text{RSS}(\mathsf{p})/\text{RSS}(p)]+(\mathsf{p}-p)\chi_T.
$$
For $\mathsf{p} < p$, Lemma \ref{LemmaB.9} implies that $\text{RSS}(\mathsf{p})/\text{RSS}(p) > 1 + \nu$ for some $\nu > 0 $ with large probability for all large $T$. Thus, $\log[\text{RSS}(\mathsf{p})/\text{RSS}(p)] \geq \nu/2$ for large $T$. Because $\chi_T\to 0$, we have $\text{IC}(\mathsf{p})-\text{IC}(p)\geq \nu/2-(p-\mathsf{p})\chi_T \geq \nu/3$ for large $T$ with large probability. Thus $\Pr\left(\text{IC}(\mathsf{p}) < \text{IC}(p)\right) \to 0$ for $\mathsf{p} < p$.

Next, consider $\mathsf{p} > p$. Lemma \ref{LemmaB.9} implies that $\log[\text{RSS}(\mathsf{p})/\text{RSS}(p)]=1+O_P(c_T \phi_T)$. Hence, $\log[\text{RSS}(\mathsf{p})/\text{RSS}(p)]=O_P(c_T \phi_T)$. Because $(\mathsf{p}-p) \chi_T \geq \chi_T $, which converges to zero at a slower rate than $c_T \phi_T$, it follows that
$$
\Pr\left(\text{IC}(\mathsf{p}) < \text{IC}(p)\right)\leq \Pr\left(O_P(c_T \phi_T)+ \chi_T < 0\right) \to 0.
$$
The proof is now completed.
\end{proof}

\begin{proof}[Proof of Theorem \ref{Theorem3.3}]

\item

Given the joint distribution of $\mathrm{vec} (\bm{\widehat{A}}(\tau) )$ and $\mathrm{vech} (\bm{\widehat\Omega}(\tau) )$ in Theorem \ref{Theorem3.1}, Theorem \ref{Theorem3.3} can by easily obtained by the Delta method since
$$
\sqrt{Th}\mathrm{vec}\left(\bm{\widehat{B}}_j(\tau)-\bm{B}_j(\tau)\right)=\sqrt{Th}\mathrm{vec}\left(\bm{J} \widehat{\bm{\Phi}}^j(\tau) \bm{J}^\top\widehat{\bm{\omega}}(\tau)-\bm{J} \bm{\Phi}^j(\tau) \bm{J}^\top\bm{\omega}(\tau)\right).
$$
Then, by standard arguments of the Delta method (see \citealp{lutkepohl2005new}, p.111), we can show that
\begin{eqnarray*}
 \sqrt{Th}\left(\mathrm{vec}\left(\bm{\widehat{B}}_j(\tau)-\bm{B}_j(\tau)\right)-\frac{1}{2}h^2\tilde{c}_2\bm{B}_j^{(2)}(\tau)
   \right)\to_D N\left(0,\bm{\Sigma}_{\bm{B}_j}(\tau)\right),
\end{eqnarray*}
where $\bm{\Sigma}_{\bm{B}_j}(\tau)$ has been defined in the body of the theorem.
\end{proof}

\clearpage

\setcounter{page}{1}

\begin{center}
{\large \textbf{Online Supplementary Appendix B to ``A Class of Time--Varying VMA($\infty$) Models: Nonparametric Kernel Estimation and Application"}}
\end{center}

%

This file includes the simulations, preliminary lemmas and proofs which are omitted in the main text. Specifically, the simulations are summarized in Appendix \ref{AppendixB.1}; Appendix \ref{AppendixB.2} presents the preliminary lemmas which are helpful to derive the main results of the paper; Appendix \ref{AppB.3} includes the omitted proofs of the main results;  the proofs of the secondary lemmas are presented in Appendix \ref{AppB.2}.


\section*{Appendix B}

\renewcommand{\theequation}{B.\arabic{equation}}
\renewcommand{\thesection}{B.\arabic{section}}
\renewcommand{\thefigure}{B.\arabic{figure}}
\renewcommand{\thetable}{B.\arabic{table}}
\renewcommand{\thelemma}{B.\arabic{lemma}}

\setcounter{equation}{0}
\setcounter{lemma}{0}
\setcounter{section}{0}
\setcounter{table}{0}
\setcounter{figure}{0}

\section{Simulation}\label{AppendixB.1}

In this section, we exam the above theoretical findings using intensive simulation. The Epanechnikov kernel $K(u)=0.75(1-u^2)I(|u|\leq 1) $ is adopted throughout the numerical studies of this paper for simplicity. For each estimation conducted below, we always select the number of lag by \eqref{eq13} by searching the estimate of $p$ over a sufficiently large range, say $\{1,\ldots, \lfloor \sqrt{Th}\rfloor\}$. Moreover, for each given $\mathsf{p}$, the bandwidth is selected by minimizing the following cross--validation criterion function

\begin{equation}\label{EqB.1}
\text{CV}(h)=\sum_{t=1}^{T}\left(\bm{x}_t-\bm{a}_{-t}(\tau_t)-\sum_{j=1}^{\mathsf{p}}\bm{A}_{j,-t}(\tau_t) \bm{x}_{t-j}\right)^\top\left(\bm{x}_t-\bm{a}_{-t}(\tau_t)-\sum_{j=1}^{\mathsf{p}}\bm{A}_{j,-t}(\tau_t) \bm{x}_{t-j}\right),
\end{equation}
where $\bm{a}_{-t}(\cdot)$, and $\bm{A}_{j,-t}(\cdot)$ are obtained using \eqref{eq10} but leaving the $t$th observation out. Once $\widehat{\mathsf{p}}$ is obtained, the rest calculation is relatively straightforward.

\medskip

We now start describing the data generating process. Let $\bm{\epsilon}_t$ be i.i.d. draws from $N(\bm{0}_{2\times 1}, \bm{I}_2)$ with $E\left(\bm{\epsilon}_t\bm{\epsilon}_t^\top\right)=\bm{I}_2$. Consider
\begin{equation*}
\bm{x}_t=\bm{a}(\tau_t)+\bm{A}_1(\tau_t)\bm{x}_{t-1}+\bm{A}_2(\tau_t)\bm{x}_{t-2}+\bm{\eta}_t,\quad \bm{\eta}_t=\bm{\omega}(\tau_t)\bm{\epsilon}_t\quad\text{with}\quad t=1,\ldots ,T,
\end{equation*}
where
\begin{eqnarray*}
\bm{a}(\tau)&=&[0.5\sin(2\pi \tau),0.5\cos(2\pi \tau)]^\top ,\nonumber \\
\bm{A}_1(\tau)&=&\left(\begin{matrix}
                       0.8\exp{(-0.5+\tau)} & 0.8(\tau-0.5)^3 \\
                       0.8(\tau-0.5)^3 & 0.8+0.3\sin(\pi \tau)
                       \end{matrix} \right), \nonumber \\
\bm{A}_2(\tau)&=&\left(\begin{matrix}
                       -0.2\exp{(-0.5+\tau)} & 0.8(\tau-0.5)^2 \\
                       0.8(\tau-0.5)^2 & -0.4+0.3\cos(\pi \tau)
                       \end{matrix} \right),  \nonumber \\
\bm{\omega}(\tau)&=&\left(\begin{matrix}
                    1.5+0.2\exp{(0.5-\tau)}& 0 \\
                       0.2(1.5+0.5(\tau-0.5)^2)(1.5+0.2\exp{(0.5-\tau)})   & 1.5+0.5(\tau-0.5)^2
                       \end{matrix} \right).
\end{eqnarray*}
We consider the sample size $T\in \{200, 400, 800\}$, and conduct 1000 replications for each choice of $T$.

Based on 1000 replications, we first report the percentages of $\widehat{\mathsf{p}} < 2$, $\widehat{\mathsf{p}} = 2$, and $\widehat{\mathsf{p}} > 2$ respectively in Table \ref{TableB.1} . It can be seen that the information criterion \eqref{eq13} performs reasonably well. The percentage associated to $\widehat{\mathsf{p}}=2$ increases as the sample size goes up.

{
\small
\begin{table}[htbp]
  \centering
  \caption{The percentages of $\widehat{\mathsf{p}} < 2$, $\widehat{\mathsf{p}} = 2$, and $\widehat{\mathsf{p}} > 2$}\label{TableB.1}
    \begin{tabular}{l ccc}
    \hline
    $T$ & $\widehat{\mathsf{p}} < 2$ & $\widehat{\mathsf{p}} = 2$ & $\widehat{\mathsf{p}} > 2$\\
    \hline
   200 &    0.061 & 0.895 & 0.044 \\
   400 &    0.011 & 0.964 & 0.025 \\
   800 &    0.001 & 0.983 & 0.016 \\
   \hline
    \end{tabular}
\end{table}
}

Next, we evaluate the estimates of $\bm{A}(\tau)$ and $\bm{\Omega}(\tau)$, and calculate the root mean square error (RMSE) as follows.
\begin{equation*}
\left\{\frac{1}{nT} \sum_{n=1}^{1000} \sum_{t=1}^{T}\|\widehat{\bm{\theta}}^{(n)}(\tau_t)-\bm{\theta}(\tau_t) \|^2\right\}^{1/2}
\end{equation*}
for $\bm{\theta}(\cdot)\in\left\{\bm{A}(\cdot),\bm{\Omega}(\cdot) \right\}$, where $\widehat{\bm{\theta}}^{(n)}(\tau)$ is the estimate of $\bm{\theta}(\tau)$ for the $n$-th replication. Of interest, we also examine the finite sample coverage probabilities of the confidence intervals based on our asymptotic theories. In the following, we compute the average of coverage probabilities for grid points in $\{\tau_t,t=1,\ldots,T\}$. The RMSEs and empirical coverage probabilities are reported in Table \ref{TableB.2}. As shown in Table \ref{TableB.2}, the RMSE decreases as the sample size goes up. The finite sample coverage probabilities are smaller than their nominal level (95\%) for small $T$, but are fairly close to 95\% as $T$ increases.

{
\small
\begin{table}[htbp]
  \centering
  \caption{The RMSEs and the empirical coverage probabilities with 95\% nominal level.}\label{TableB.2}
    \begin{tabular}{lccccc}
    \hline
        & \multicolumn{2}{c}{$\bm{A}(\tau)$}& & \multicolumn{2}{c}{$\bm{\Omega}(\tau)$}\\
        \cline{2-3} \cline{5-6}
   $ T$     & \text{RMSE} & \text{Coverage rate}& & \text{RMSE} & \text{Coverage rate} \\
    \hline
    200   & 0.4937 & 0.9265& & 0.8228 & 0.8687 \\
    400   & 0.3703 & 0.9291& & 0.7143 & 0.9059 \\
    800   & 0.2785 & 0.9319& & 0.6205 & 0.9226 \\
    \hline
    \end{tabular}
\end{table}
}

\section{Preliminary Lemmas}\label{AppendixB.2}

We present the preliminary lemmas below, which help facilitate the development of the main results.

\begin{lemma}\label{LemmaB.1}
Suppose $\{Z_t,\mathcal{F}_{t}\}$ is a martingale difference sequence, $S_T=\sum_{t=1}^{T}Z_t$, $U_T=\sum_{t=1}^{T}Z_t^2$ and $s_T^2=E(U_T^2)=E(S_T^2)$. If $s_T^{-2}U_T^2 \to_P 1$ and $\sum_{t=1}^{T}E[Z_{T,t}^2I(|Z_{T,t}|>\nu)] \to 0$ for any $\nu > 0$ with $Z_{T,t}=s_T^{-1}Z_t$, then as $T \to \infty$, $ s_T^{-1}S_T \to_D N(0,1).$
\end{lemma}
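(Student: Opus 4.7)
The plan is to recognize Lemma \ref{LemmaB.1} as a standard triangular-array martingale central limit theorem of McLeish type, and to verify its two defining hypotheses for the normalized array $Z_{T,t} := s_T^{-1} Z_t$. Since $s_T$ is deterministic, $\{Z_{T,t}, \mathcal{F}_t\}$ inherits the martingale difference property from $\{Z_t,\mathcal{F}_t\}$, and the two conditions stated in the lemma are exactly the hypotheses one needs (reading the normalized quadratic variation as $s_T^{-2}\sum_t Z_t^2 \to_P 1$, which is the natural interpretation given $s_T^2 = E(U_T)$).

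The first step is to upgrade the Lindeberg-type hypothesis to asymptotic negligibility of the individual summands. For any $\nu>0$, Chebyshev on the truncated second moment gives
$$ \Pr\Bigl(\max_{1\le t\le T} |Z_{T,t}| > \nu\Bigr) \le \sum_{t=1}^{T} \Pr\bigl(|Z_{T,t}| > \nu\bigr) \le \nu^{-2}\sum_{t=1}^{T} E\bigl[Z_{T,t}^2\, I(|Z_{T,t}| > \nu)\bigr] \to 0, $$
so $\max_{1\le t\le T} |Z_{T,t}| \to_P 0$. The second step is already given: the quadratic variation $\sum_{t=1}^{T} Z_{T,t}^2$ converges in probability to $1$ by hypothesis.

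With negligibility, the Lindeberg bound, and convergence of the quadratic variation in hand, the conclusion $S_T/s_T = \sum_t Z_{T,t} \to_D N(0,1)$ follows from McLeish's martingale CLT. The textbook route introduces the auxiliary product $T_T(u) := \prod_{t=1}^{T} (1 + i u Z_{T,t})$, uses the m.d.s. property repeatedly to obtain $E[T_T(u)] = 1$, and exploits the expansion
$$ T_T(u) = \exp\Bigl( i u\, S_T/s_T \;-\; \tfrac{1}{2} u^2 \sum_{t=1}^{T} Z_{T,t}^2 \;+\; R_T(u) \Bigr), $$
so that convergence of the characteristic function of $S_T/s_T$ to $\exp(-u^2/2)$ reduces to showing $R_T(u)\to_P 0$ and that $\{T_T(u)\}$ is uniformly integrable.

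The only delicate point, and the main obstacle, is controlling the remainder $R_T(u)$ and justifying the passage from convergence in probability to convergence of expectations in the characteristic function argument; both follow from the negligibility of $\max|Z_{T,t}|$ together with the Lindeberg bound, by now classical arguments. In practice, rather than reproducing McLeish's computation from scratch, I would simply cite a packaged version such as Hall and Heyde (1980, Theorem 3.2) once the two bullet hypotheses verified above are in place.
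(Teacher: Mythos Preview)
Your proposal is correct and, if anything, does more than the paper: the paper does not prove Lemma~\ref{LemmaB.1} at all but simply records it as a known result from \cite{hall2014martingale}. Your verification that the stated hypotheses amount to the negligibility and quadratic-variation conditions of McLeish, followed by the citation of Hall and Heyde, is exactly in line with this, and your correction of the evident typos ($s_T^2 = E(U_T)$ and $s_T^{-2}U_T \to_P 1$ rather than $U_T^2$) is the intended reading.
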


Lemma \ref{LemmaB.1} can be found in \cite{hall2014martingale}.

\medskip

\begin{lemma}\label{LemmaB.2}
Let $\{Z_t,\mathcal{F}_{t}\}$ be a martingale difference sequence. Suppose that $\left| Z_t\right|\leq M$ for a constant $M$, $t=1,\ldots,T$. Let $V_T=\sum_{t=1}^{T}\text{\normalfont Var}\left(Z_t|\mathcal{F}_{t-1}\right)\leq V$ for some $V>0$. Then for any given $\nu> 0$,
\begin{equation*}
  \Pr\left(\left|\sum_{t=1}^{T}Z_t\right| > \nu\right)\leq \exp\left\{-\frac{\nu^2}{2(V+M\nu)}\right\}.
\end{equation*}
\end{lemma}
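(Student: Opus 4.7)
The result is Freedman's inequality for bounded martingale differences, and my plan is to prove it by the exponential martingale (Chernoff) method. It suffices to establish the one-sided bound $\Pr(S_T>\nu) \le \exp\{-\nu^2/(2(V+M\nu))\}$ for $S_T=\sum_{t=1}^T Z_t$; applying the same argument to $\{-Z_t\}$ together with a union bound then delivers the two-sided statement. First I would exponentiate and invoke Markov: for any $\lambda\in(0,1/M)$,
$$\Pr(S_T>\nu) \le e^{-\lambda\nu}\,E[\exp(\lambda S_T)].$$

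The heart of the argument is a conditional moment-generating-function estimate. With $\sigma_t^2 := \text{Var}(Z_t\mid \mathcal{F}_{t-1})$, I would show that for $\lambda M<1$,
$$E\!\left[e^{\lambda Z_t}\mid \mathcal{F}_{t-1}\right] \le \exp\!\left(\frac{\lambda^2\sigma_t^2}{2(1-\lambda M)}\right).$$
To obtain this, expand $e^{\lambda Z_t} = 1+\lambda Z_t+\sum_{k\ge 2}\lambda^k Z_t^k/k!$, use $Z_t^k \le |Z_t|^k \le M^{k-2}Z_t^2$ termwise for $k\ge 2$, and take conditional expectation (noting $E[Z_t\mid \mathcal{F}_{t-1}]=0$) to reach $E[e^{\lambda Z_t}\mid\mathcal{F}_{t-1}] \le \exp(\sigma_t^2 (e^{\lambda M}-1-\lambda M)/M^2)$. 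The target bound then follows from the elementary inequality $e^x-1-x \le x^2/(2(1-x))$ on $[0,1)$, which itself reduces to the coefficient comparison $1/k! \le 1/2$ for $k\ge 2$.

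Given the MGF estimate, I would introduce the process $W_t=\exp(\lambda S_t - \lambda^2 V_t/(2(1-\lambda M)))$, where $V_t=\sum_{s\le t}\sigma_s^2$; the conditional bound gives $E[W_t\mid \mathcal{F}_{t-1}]\le W_{t-1}$, so $W_t$ is a supermartingale with $E[W_T]\le 1$, and combining with the hypothesis $V_T\le V$ yields $E[e^{\lambda S_T}]\le \exp(\lambda^2 V/(2(1-\lambda M)))$. Substituting back, $\Pr(S_T>\nu) \le \exp(-\lambda\nu+\lambda^2 V/(2(1-\lambda M)))$. The final step is tuning: choosing $\lambda=\nu/(V+M\nu)\in(0,1/M)$ gives $1-\lambda M = V/(V+M\nu)$ and collapses the exponent to exactly $-\nu^2/(2(V+M\nu))$. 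The main obstacle is obtaining the precise coefficient $1/(2(1-\lambda M))$ in the MGF bound; that is what matches the prescribed $V+M\nu$ in the denominator after optimization, while everything else is bookkeeping once the supermartingale structure is set up.
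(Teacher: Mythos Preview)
The paper does not supply a proof here; it merely records the lemma as Proposition~2.1 of \cite{freedman1975tail}. Your outline is the standard exponential--supermartingale argument that Freedman himself uses, and the key steps---the conditional MGF bound via $e^x-1-x\le x^2/(2(1-x))$ on $[0,1)$, the supermartingale $W_t$, and the tuning $\lambda=\nu/(V+M\nu)$---are all correct. One small caveat: your union-bound step for the two-sided version introduces an extra factor of $2$ in front of the exponential; the constant exactly as stated in the lemma is really the one-sided bound, so this minor imprecision lies in the paper's statement rather than in your argument.
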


Lemma \ref{LemmaB.2} is the Proposition 2.1 of \cite{freedman1975tail}.

\medskip

\begin{lemma}\label{LemmaB.3}
The following algebraic decompositions hold true.

\begin{enumerate}
\item $\bm{\mathbb{B}}_t(L)=\sum_{j=0}^{\infty} \bm{B}_{j,t}L^j$ can be decomposed as $\bm{\mathbb{B}}_t(L)= \bm{\mathbb{B}}_t(1) -(1-L)\widetilde{\mathbb{B}}_t(L)$, where $\widetilde{\mathbb{B}}_t(L) = \sum_{j=0}^{\infty}\bm{\widetilde{B}}_{j,t}L^j$ and $\bm{\widetilde{B}}_{j,t}=\sum_{k=j+1}^{\infty}\bm{B}_{k,t}$.

\item $\bm{\mathbb{B}}_{t}^r(L)=\sum_{j=0}^{\infty}(\bm{B}_{j+r,t}\otimes \bm{B}_{j,t}) L^j$ can be decomposed as $\bm{\mathbb{B}}_{t}^r(L) = \bm{\mathbb{B}}_{t}^r(1)-(1-L)\mathbb{\widetilde{B}}_{t}^r(L)$,
where $\mathbb{\widetilde{B}}_{t}^r(L)=\sum_{j=0}^{\infty}\bm{\widetilde{B}}_{j,t}^rL^j$ and $\bm{\widetilde{B}}_{j,t}^r=\sum_{k=j+1}^{\infty}(\bm{B}_{k+r,t}\otimes \bm{B}_{k,t})$.
\end{enumerate}

In addition, let Assumption \ref{Ass1} hold, then

\begin{enumerate}[resume]
\item $\max_{t\ge 1}\sum_{j=0}^{\infty} \|\bm{\widetilde{B}}_{j,t}\| < \infty$;

\item $\limsup_{T\to\infty}\sum_{t=1}^{T-1} \|\widetilde{\mathbb{B}}_{t+1}(1)-\widetilde{\mathbb{B}}_{t}(1) \| < \infty$;

\item $\max_{t\ge 1} \sum_{j=0}^{\infty} \|\bm{\widetilde{B}}_{j,t}^r\|<\infty$;

\item $\max_{t\ge 1} \sum_{r=1}^{\infty} \|\mathbb{\widetilde{B}}_{t}^r(1) \| <\infty$;

\item $\limsup_{T\to\infty}\sum_{t=1}^{T-1} \sum_{r=0}^{\infty} \|\mathbb{\widetilde{B}}_{t+1}^r(1)-\mathbb{\widetilde{B}}_{t}^r(1) \| < \infty$.
\end{enumerate}
\end{lemma}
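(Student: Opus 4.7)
The proof splits naturally into three blocks: the algebraic decompositions (parts 1--2), the termwise tail bounds (parts 3 and 5), and the uniform time--variation bounds (parts 4, 6, 7). My plan is first to dispatch the algebraic content, then to reduce every norm bound to a Fubini--style interchange of summation that exploits Assumption \ref{Ass1}.

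For parts 1--2, I would simply expand $(1-L)\widetilde{\mathbb{B}}_t(L)$ formally and collapse the sum using the telescoping identities $\bm{\widetilde{B}}_{j,t}-\bm{\widetilde{B}}_{j-1,t}=-\bm{B}_{j,t}$ for $j\geq 1$ and $\bm{\widetilde{B}}_{0,t}=\bm{\mathbb{B}}_t(1)-\bm{B}_{0,t}$. The same manipulation works for part 2 with $\bm{B}_{k+r,t}\otimes\bm{B}_{k,t}$ in place of $\bm{B}_{k,t}$; absolute convergence of the rearranged series is then inherited from part 5.

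The heart of the remaining estimates is a single swap of summation. For part 3 this gives
\[\sum_{j=0}^{\infty}\|\bm{\widetilde{B}}_{j,t}\|\leq\sum_{j=0}^{\infty}\sum_{k=j+1}^{\infty}\|\bm{B}_{k,t}\|=\sum_{k=1}^{\infty}k\,\|\bm{B}_{k,t}\|,\]
which is uniformly finite in $t$ by Assumption \ref{Ass1}. The analogous rearrangement for part 5 yields
\[\sum_{j=0}^{\infty}\|\bm{\widetilde{B}}_{j,t}^r\|\leq\sum_{k=1}^{\infty}k\,\|\bm{B}_{k+r,t}\|\,\|\bm{B}_{k,t}\|,\]
and since $\|\bm{B}_{k+r,t}\|\leq\sum_{l=1}^{\infty}\|\bm{B}_{l,t}\|$ is uniformly bounded in $(t,r)$, this bound is uniform in $t$ and $r$. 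Part 6 follows from the same rearrangement together with the index change $l=k+r$, after which the triple sum factorises as $\bigl(\sum_{k}k\|\bm{B}_{k,t}\|\bigr)\bigl(\sum_{l}\|\bm{B}_{l,t}\|\bigr)$. Part 4 reduces to the compact form $\widetilde{\mathbb{B}}_t(1)=\sum_{k=1}^{\infty}k\,\bm{B}_{k,t}$, obtained again by the same swap; the triangle inequality followed by the second clause of Assumption \ref{Ass1} then finishes that estimate.

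The main obstacle will be part 7, which couples time variation, tails in $k$, and summation over $r$ simultaneously. My plan is to telescope the Kronecker difference,
\[\bm{B}_{k+r,t+1}\otimes\bm{B}_{k,t+1}-\bm{B}_{k+r,t}\otimes\bm{B}_{k,t}=(\bm{B}_{k+r,t+1}-\bm{B}_{k+r,t})\otimes\bm{B}_{k,t+1}+\bm{B}_{k+r,t}\otimes(\bm{B}_{k,t+1}-\bm{B}_{k,t}),\]
and then use $l=k+r$ to convert $\sum_{r\geq 0}$ into a tail sum $\sum_{l\geq k}$. After this rearrangement each of the two resulting triple sums factors into a term summable in $t$ by Assumption \ref{Ass1} (such as $\sum_{t,l}l\,\|\bm{B}_{l,t+1}-\bm{B}_{l,t}\|$) multiplied by a companion factor uniformly bounded in $t$ (such as $\sup_t\sum_l\|\bm{B}_{l,t}\|$). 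The delicate point is choosing the order of rearrangement so that the companion factor is free of $r$; the substitution $l=k+r$ is precisely what achieves this.
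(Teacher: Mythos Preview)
Your proposal is correct and follows essentially the same route as the paper's proof: the algebraic decompositions are handled by direct telescoping, and every norm estimate is reduced to a Fubini interchange $\sum_{j\ge 0}\sum_{k\ge j+1}=\sum_{k\ge 1}k$ together with the Kronecker telescoping identity and the substitution $l=k+r$ for part~7. The only cosmetic difference is that the paper leaves the $r$--sum in place and bounds $\sum_{r\ge 0}\|\bm{B}_{k+r,\cdot}\|$ directly rather than rewriting it as $\sum_{l\ge k}$, but the resulting factorisations and the use of Assumption~\ref{Ass1} are identical.
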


\medskip

\begin{lemma}\label{LemmaB.4}
Let Assumptions \ref{Ass1} and \ref{Ass2} hold. Suppose $\left\{\bm{W}_{T,t}\right\}_{t=1}^T$ is  a sequence $m \times d$ deterministic matrices satisfying (1). $\sum_{t=1}^{T} \|\bm{W}_{T,t} \|=O(1)$, (2). $\max_{t\ge 1}\|\bm{W}_{T,t}\|=o(1)$, and (3). $\sum_{t=1}^{T-1}\left\|\bm{W}_{T,t+1}-\bm{W}_{T,t}\right\|=o(1)$. As $T\to \infty$,
\begin{eqnarray*}
\sum_{t=1}^{T}\bm{W}_{T,t}\left(\bm{x}_t-E(\bm{x}_t)\right) \to_P \bm{0} \quad \text{and} \quad \sum_{t=1}^{T}\bm{W}_{T,t}\left(\bm{x}_t\bm{x}_{t+p}^\top-E\left(\bm{x}_t\bm{x}_{t+p}^\top\right)\right)\to_P \bm{0},
\end{eqnarray*}
where $m\ge 1$ is fixed, and $p$ is a fixed non--negative integer.
\end{lemma}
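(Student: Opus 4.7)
The plan is to mirror the BN-decomposition strategy of Theorem \ref{Theorem2.1}, but with the uniform concentration arguments there replaced by pointwise $L^1$/$L^2$ estimates, since the three conditions here are stronger (the new relative sizes are $o(1)$ rather than $O(d_T)$). Concretely, for the first assertion I would first apply Lemma \ref{LemmaB.3}(1) to write $\bm{x}_t-E(\bm{x}_t)=\bm{\mathbb{B}}_t(1)\bm{\epsilon}_t+\widetilde{\mathbb{B}}_t(L)\bm{\epsilon}_{t-1}-\widetilde{\mathbb{B}}_t(L)\bm{\epsilon}_t$, and then, via the Abel-summation trick used in the proof of Theorem \ref{Theorem2.1}(1), decompose
\begin{equation*}
\sum_{t=1}^T \bm{W}_{T,t}\bigl(\bm{x}_t-E\bm{x}_t\bigr)=\sum_{t=1}^T \bm{W}_{T,t}\bm{\mathbb{B}}_t(1)\bm{\epsilon}_t + R_{T,1}+R_{T,2}+R_{T,3},
\end{equation*}
where $R_{T,1},R_{T,2}$ collect the endpoint pieces $\bm{W}_{T,1}\widetilde{\mathbb{B}}_1(L)\bm{\epsilon}_0$ and $\bm{W}_{T,T}\widetilde{\mathbb{B}}_T(L)\bm{\epsilon}_T$, and $R_{T,3}=\sum_{t=1}^{T-1}(\bm{W}_{T,t+1}\widetilde{\mathbb{B}}_{t+1}(L)-\bm{W}_{T,t}\widetilde{\mathbb{B}}_t(L))\bm{\epsilon}_t$.

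For the leading martingale term I would use Assumption \ref{Ass2} (m.d.s. with $E(\bm{\epsilon}_t\bm{\epsilon}_t^\top|\mathcal{F}_{t-1})=\bm{I}_d$) to get $E\|\sum_t \bm{W}_{T,t}\bm{\mathbb{B}}_t(1)\bm{\epsilon}_t\|^2\le \sup_t\|\bm{\mathbb{B}}_t(1)\|^2\sum_t \|\bm{W}_{T,t}\|^2$, and then bound $\sum_t\|\bm{W}_{T,t}\|^2\le \max_t\|\bm{W}_{T,t}\|\cdot \sum_t\|\bm{W}_{T,t}\|=o(1)\cdot O(1)=o(1)$, where $\sup_t\|\bm{\mathbb{B}}_t(1)\|<\infty$ follows from Assumption \ref{Ass1}; Chebyshev then finishes. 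For $R_{T,1}$ and $R_{T,2}$, Lemma \ref{LemmaB.3} gives $E\|\widetilde{\mathbb{B}}_1(L)\bm{\epsilon}_0\|+E\|\widetilde{\mathbb{B}}_T(L)\bm{\epsilon}_T\|=O(1)$, so these terms are $o_P(1)$ because $\max_t\|\bm{W}_{T,t}\|=o(1)$. For $R_{T,3}$ I would split it into $\sum_t(\bm{W}_{T,t+1}-\bm{W}_{T,t})\widetilde{\mathbb{B}}_{t+1}(L)\bm{\epsilon}_t$ and $\sum_t \bm{W}_{T,t}(\widetilde{\mathbb{B}}_{t+1}(L)-\widetilde{\mathbb{B}}_t(L))\bm{\epsilon}_t$: the first is $O_{L^1}(1)\cdot o(1)$ by condition (3) and $\sup_t E\|\widetilde{\mathbb{B}}_{t+1}(L)\bm{\epsilon}_t\|<\infty$; the second is $o(1)\cdot O(1)$ by condition (2) and part (4) of Lemma \ref{LemmaB.3}. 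Markov then yields $R_{T,3}=o_P(1)$.

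For the second assertion the strategy is identical but uses the product BN decomposition in part (2) of the proof of Theorem \ref{Theorem2.1}, which splits $\bm{x}_t\bm{x}_{t+p}^\top-E(\bm{x}_t\bm{x}_{t+p}^\top)$ into (i) a first-order m.d.s. piece involving $\bm{\mathbb{B}}_t(1)\bm{\epsilon}_t$ cross-multiplied with $\bm{\mu}_t$, (ii) a second-order m.d.s. piece built from $\bm{\mathbb{B}}_t^0(1)(\mathrm{vec}(\bm{\epsilon}_t\bm{\epsilon}_t^\top)-\mathrm{vec}(\bm{I}_d))$, (iii) a cross-lag piece in $\sum_{r\ge1}\bm{\mathbb{B}}_t^r(1)\mathrm{vec}(\bm{\epsilon}_t\bm{\epsilon}_{t-r}^\top)$, together with (iv) boundary and telescoped remainders built from $\widetilde{\mathbb{B}}_t^r(L)$. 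Each m.d.s. block is again controlled by $\sum_t\|\bm{W}_{T,t}\|^2=o(1)$ plus a fourth-moment bound on $\bm{\epsilon}_t$ (standard under the extra assumption $\max_tE(\|\bm{\epsilon}_t\|^4|\mathcal{F}_{t-1})<\infty$ a.s., which is implicit here exactly as in Theorem \ref{Theorem2.1}(2)); the endpoint pieces use $\sup_t\sum_{r\ge1}\|\widetilde{\mathbb{B}}_t^r(1)\|<\infty$ from parts (5)--(6) of Lemma \ref{LemmaB.3} together with $\max_t\|\bm{W}_{T,t}\|=o(1)$; the telescoped pieces use $\sum_t\|\bm{W}_{T,t+1}-\bm{W}_{T,t}\|=o(1)$ and the summability $\limsup_T\sum_t\sum_{r\ge0}\|\widetilde{\mathbb{B}}_{t+1}^r(1)-\widetilde{\mathbb{B}}_t^r(1)\|<\infty$ from part (7) of Lemma \ref{LemmaB.3}.

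The main obstacle is bookkeeping in the second assertion: one must carefully justify that each of the three algebraic pieces produced by the quadratic BN decomposition remains an m.d.s. (which needs the conditional-moment form of Assumption \ref{Ass2}, so $\bm{\epsilon}_t\bm{\epsilon}_t^\top-\bm{I}_d$ and $\bm{\epsilon}_t\bm{\epsilon}_{t-r}^\top$ are both m.d.s. adapted to $\{\mathcal{F}_t\}$), and that the double-indexed remainders still telescope cleanly in $(t,r)$. Once that bookkeeping is in place, no probabilistic ingredient beyond what appears in Theorem \ref{Theorem2.1} is required: the whole lemma reduces to Markov/Chebyshev estimates that exploit the $o(1)$ upgrade of the weight conditions to push each of the four BN blocks to zero in probability.
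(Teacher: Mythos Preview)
Your strategy for the first assertion is exactly the paper's: same BN decomposition, same four-way split, same $L^2$ bound for the martingale leading term via $\sum_t\|\bm{W}_{T,t}\|^2\le \max_t\|\bm{W}_{T,t}\|\cdot\sum_t\|\bm{W}_{T,t}\|=o(1)$, and same $L^1$ bounds for the boundary and telescoped remainders using Lemma \ref{LemmaB.3}(3)--(4). Nothing to change there.

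For the second assertion, your decomposition and overall bookkeeping match the paper's, but there is one genuine gap: you invoke a fourth-moment bound on $\bm{\epsilon}_t$ and claim it is ``implicit here exactly as in Theorem \ref{Theorem2.1}(2)''. It is not. Lemma \ref{LemmaB.4} is stated under Assumptions \ref{Ass1} and \ref{Ass2} only, and Assumption \ref{Ass2} gives merely $\max_t E\|\bm{\epsilon}_t\|^\delta<\infty$ for some $\delta>2$. The extra condition $\max_t E(\|\bm{\epsilon}_t\|^4|\mathcal{F}_{t-1})<\infty$ appears in Theorem \ref{Theorem2.1}(2) but is deliberately \emph{not} assumed in Lemma \ref{LemmaB.4}. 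Consequently, for the diagonal block built from $\bm{Z}_t:=\mathrm{vec}(\bm{\epsilon}_t\bm{\epsilon}_t^\top-\bm{I}_d)$ you cannot simply bound $E\|\sum_t(\bm{I}_d\otimes\bm{W}_{T,t})\bm{\mathbb{B}}_t^0(1)\bm{Z}_t\|^2$, because that would require $E\|\bm{Z}_t\|^2\asymp E\|\bm{\epsilon}_t\|^4$, which is unavailable.

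The paper closes this gap by a truncation argument: since $E\|\bm{Z}_t\|^{\delta/2}\le 4E\|\bm{\epsilon}_t\|^\delta<\infty$, the family $\{\bm{Z}_t\}$ is uniformly integrable; split $\bm{Z}_t=\bm{Z}_{1,t}+\bm{Z}_{2,t}$ with $\bm{Z}_{1,t}=\bm{Z}_tI(\|\bm{Z}_t\|\le\lambda_\nu)$, bound the $\bm{Z}_{1,t}$-part in $L^2$ (now $E\|\bm{Z}_{1,t}\|^2\le\lambda_\nu^2$), and bound the $\bm{Z}_{2,t}$-part in $L^1$ using $E\|\bm{Z}_{2,t}\|<\nu$. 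Both pieces are then $o(1)$ by conditions (1)--(2). Note that the cross-lag block $\sum_r\bm{\mathbb{B}}_t^r(1)\mathrm{vec}(\bm{\epsilon}_t\bm{\epsilon}_{t-r}^\top)$ does \emph{not} need truncation: its $L^2$ norm factors through $E(\|\bm{\epsilon}_t\|^2|\mathcal{F}_{t-1})\cdot\|\bm{\epsilon}_{t-r}\|^2$ and hence only requires second moments. Once you replace the fourth-moment step by this truncation, your argument coincides with the paper's.
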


\medskip

\begin{lemma}\label{LemmaB.5}
Let Assumptions \ref{Ass1} and \ref{Ass2} hold, and let $\left\{\bm{W}_{T,t}(\cdot)\right\}_{t=1}^T$ be a sequence of $m \times d$ matrices of functions, where $m\ge 1$ is fixed, and each functional component is Lipschitz continuous and defined on a compact set $[a,b]$. Moreover, suppose that (1).
$\sup_{\tau\in[a,b]}\sum_{t=1}^{T} \|\bm{W}_{T,t}(\tau) \|=O(1)$, and (2). $T^{\frac{2}{\delta}}d_T \log T\to 0$, where $d_T=\sup_{\tau\in[a,b],t\ge 1} \|\bm{W}_{T,t}(\tau) \|$. As $T\to \infty$, $\sup_{\tau\in[a,b]} \left\|\sum_{t=1}^{T}\bm{W}_{T,t}(\tau)\bm{\mathbb{B}}_{t}(1)\bm{\epsilon}_t \right\|=O_{P}\left(\sqrt{d_T\log T} \right)$.
\end{lemma}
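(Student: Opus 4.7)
The strategy is the standard truncation--discretization--Freedman recipe for uniform martingale concentration. Set $\bm{U}_{T,t}(\tau) := \bm{W}_{T,t}(\tau)\bm{\mathbb{B}}_t(1)$. Since $\|\bm{\mathbb{B}}_t(1)\| \le \sum_j \|\bm{B}_{j,t}\|$ is uniformly bounded in $t$ by Assumption \ref{Ass1}, the weight $\bm{U}_{T,t}(\tau)$ is deterministic, Lipschitz in $\tau$, and inherits $\sup_\tau \sum_t \|\bm{U}_{T,t}(\tau)\| = O(1)$ and $\sup_{\tau,t}\|\bm{U}_{T,t}(\tau)\| = O(d_T)$. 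For each fixed $\tau$, $\sum_t \bm{U}_{T,t}(\tau)\bm{\epsilon}_t$ is an $\mathcal{F}_t$-martingale, so Freedman's inequality (Lemma \ref{LemmaB.2}) can do the heavy lifting once boundedness of the summands is arranged.

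First I would truncate at level $M_T = T^{1/\delta}$ (inflated by a slowly growing factor $L_T\to\infty$, if needed, to make $T M_T^{-\delta}\to 0$) and write $\bm{\epsilon}_t = \bar{\bm{\epsilon}}_t + (\bm{\epsilon}_t - \bar{\bm{\epsilon}}_t)$ with $\bar{\bm{\epsilon}}_t := \bm{\epsilon}_t\,I(\|\bm{\epsilon}_t\|\le M_T) + \bm{b}_t$ and $\bm{b}_t := E[\bm{\epsilon}_t\,I(\|\bm{\epsilon}_t\|>M_T)\mid\mathcal{F}_{t-1}]$. Because $\{\bm{\epsilon}_t\}$ is an m.d.s., so is $\{\bar{\bm{\epsilon}}_t\}$, and $\|\bar{\bm{\epsilon}}_t\|\le 2M_T$. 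The tail piece $\bm{\epsilon}_t\,I(\|\bm{\epsilon}_t\|>M_T)$ vanishes on the high-probability event $\{\max_t\|\bm{\epsilon}_t\|\le M_T\}$, while the centering contribution is bounded uniformly by $\sup_\tau\|\sum_t\bm{U}_{T,t}(\tau)\bm{b}_t\| \le d_T\sum_t\|\bm{b}_t\| = O_P(d_T T M_T^{1-\delta}) = O_P(d_T T^{1/\delta})$, which is $o(\sqrt{d_T\log T})$ by the hypothesis $T^{2/\delta}d_T\log T\to 0$. Here I used $E\|\bm{b}_t\|\le M_T^{1-\delta}E\|\bm{\epsilon}_t\|^\delta$.

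Next I would discretize $[a,b]$ by an $N$-point $\varepsilon$-net of spacing $(b-a)/N$. The uniform Lipschitz property of $\bm{W}_{T,t}$ bounds the approximation error by $O_P(T M_T/N)$, so a polynomial choice such as $N = T^{1+2/\delta}$ makes this $o_P(\sqrt{d_T\log T})$ while keeping $\log N = O(\log T)$. At each grid point $\tau_k^*$ and each coordinate $i=1,\dots,m$, I apply Lemma \ref{LemmaB.2} to the bounded m.d.s.\ $\{(\bm{U}_{T,t}(\tau_k^*)\bar{\bm{\epsilon}}_t)_i\}_t$: the per-term bound is $M = O(d_T M_T)$ and, since $E[\|\bm{\epsilon}_t\|^2\mid\mathcal{F}_{t-1}] = d$, the sum of conditional variances is $V \le d\cdot d_T\sum_t \|\bm{U}_{T,t}(\tau_k^*)\| = O(d_T)$. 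Picking $\nu = C\sqrt{d_T\log T}$, the crucial observation is that $M\nu/V = O(M_T\sqrt{d_T\log T}) = O((T^{2/\delta}d_T\log T)^{1/2})\to 0$, so Freedman's bound yields an exponent of order $-cC^2\log T$ and tail probability $T^{-cC^2}$. A union bound over the $mN$ pairs $(i,k)$ and a sufficiently large $C$ close the argument.

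The main technical obstacle is calibrating $M_T$: it must be large enough that both the truncation tail event is negligible and the bias $\sum_t\bm{U}_{T,t}(\tau)\bm{b}_t$ is $o(\sqrt{d_T\log T})$, yet small enough that $M\nu$ does not overtake $V$ in Freedman's denominator (which would depress the exponent below $\log T$ and collapse the union bound). The hypothesis $T^{2/\delta}d_T\log T\to 0$ is precisely tuned so that $M_T\asymp T^{1/\delta}$ balances these competing requirements and yields the sharp $\sqrt{d_T\log T}$ rate.
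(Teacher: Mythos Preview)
Your approach is essentially the paper's: discretize $[a,b]$, truncate the innovations, and apply Freedman's inequality (Lemma \ref{LemmaB.2}) at the grid points followed by a union bound. The paper in fact discretizes first and controls the continuity piece $J_{T,2}$ using only $E\|\bm{\epsilon}_t\|=O(1)$, so that the approximation error is $O(T\delta_T)$ rather than $O(TM_T\delta_T)$; the truncation is invoked only for the grid-point maximum $J_{T,1}$.

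Your discretization step, however, contains a genuine miscalculation. You claim that with $N=T^{1+2/\delta}$ the approximation error $O_P(TM_T/N)=O_P(T^{-1/\delta})$ is $o_P(\sqrt{d_T\log T})$. But the hypothesis $T^{2/\delta}d_T\log T\to 0$ forces $d_T\log T=o(T^{-2/\delta})$ and hence $\sqrt{d_T\log T}=o(T^{-1/\delta})$: your bound is \emph{larger} than the target rate, not smaller. No fixed polynomial choice of $N$ can repair this if you insist on the pointwise bound $\|\bar{\bm{\epsilon}}_t\|\le 2M_T$, because $d_T$ may be any $o(T^{-2/\delta}/\log T)$. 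The remedy is to bound the continuity piece in $L^1$: since $E\|\bar{\bm{\epsilon}}_t\|\le 2E\|\bm{\epsilon}_t\|=O(1)$, the error is $O(T/N)$ in expectation, and then the paper's choice $N\asymp T/\gamma_T$ with $\gamma_T=\sqrt{d_T\log T}$ yields exactly $O(\gamma_T)$. As long as $d_T$ is polynomially small in $T$ (which covers all the kernel applications in the paper), $\log N=O(\log T)$ and a sufficiently large constant in the Freedman exponent absorbs the union bound over the $mN$ coordinates.
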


\medskip

 \begin{lemma}\label{LemmaB.6}
Let the conditions of Lemma \ref{LemmaB.5} hold. Suppose $T^{\frac{4}{\delta}}d_T\log T \rightarrow 0$, $\max_{t\ge 1} E[\|\bm{\epsilon}_t\|^4|\mathcal{F}_{t-1}]<\infty$ a.s. and
$\sup_{\tau\in[a,b]}\sum_{t=1}^{T-1}\left\|\bm{W}_{T,t+1}(\tau)-\bm{W}_{T,t}(\tau)\right\|=o(1)$.
As $T\to \infty$
\begin{enumerate}
\item $\sup_{\tau\in [a,b]} \left\|\sum_{t=1}^{T}\left(\bm{I}_d\otimes \bm{W}_{T,t}(\tau)\right)\bm{\mathbb{B}}_{t}^0(1)\left(\mathrm{vec}[\bm{\epsilon}_t \bm{\epsilon}_t^\top]- \mathrm{vec}[\bm{I}_d] \right)  \right\|=O_{P}\left(\sqrt{d_T\log T} \right)$;

\item $\sup_{\tau\in [a,b]}\left\|\sum_{t=1}^{T}\left(\bm{I}_d\otimes \bm{W}_{T,t}(\tau)\right)\bm{\zeta}_{t} \bm{\epsilon}_t \right\|=O_{P}\left(\sqrt{d_T\log T}\right)$;
\end{enumerate}
where $\bm{\zeta}_{t} = \sum_{r=1}^{\infty} \sum_{s=0}^{\infty}\{\bm{B}_{s+r,t}\bm{\epsilon}_{t-r}\}\otimes \bm{B}_{s,t}$.
\end{lemma}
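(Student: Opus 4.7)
The plan is to reduce both assertions to the m.d.s. structure that makes Lemma \ref{LemmaB.5} work, and then rerun the same discretization-plus-Freedman scheme with adjusted moment bookkeeping. For part (1), Assumption \ref{Ass2} gives $E\bigl(\mathrm{vec}(\bm{\epsilon}_t\bm{\epsilon}_t^\top)-\mathrm{vec}(\bm{I}_d)\mid\mathcal{F}_{t-1}\bigr)=\bm{0}$ a.s., so for each fixed $\tau$ the summand $\bm{Y}_{t}(\tau)=(\bm{I}_d\otimes\bm{W}_{T,t}(\tau))\bm{\mathbb{B}}_t^0(1)\bigl(\mathrm{vec}(\bm{\epsilon}_t\bm{\epsilon}_t^\top)-\mathrm{vec}(\bm{I}_d)\bigr)$ is a m.d.s. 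For part (2), $\bm{\zeta}_t$ depends only on $\{\bm{\epsilon}_{t-r}\}_{r\ge 1}$ so it is $\mathcal{F}_{t-1}$-measurable, and hence $\bm{Y}_t(\tau)=(\bm{I}_d\otimes\bm{W}_{T,t}(\tau))\bm{\zeta}_t\bm{\epsilon}_t$ is also a m.d.s. In both cases I would apply Freedman's inequality (Lemma \ref{LemmaB.2}) to the Cram\'er--Wold projections $\bm{\ell}^\top \bm{Y}_t(\tau)$ to obtain a pointwise tail bound, and then extend to a uniform bound over $[a,b]$ by covering.

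For the variance inputs to Freedman, I would first note that the uniform boundedness of $\|\bm{\mathbb{B}}_t^0(1)\|$ and the summability of $\{\bm{B}_{j,t}\}$ in Assumption \ref{Ass1} (together with $\max_t E(\|\bm{\epsilon}_t\|^4\mid\mathcal{F}_{t-1})<\infty$) give $E(\|\bm{Y}_t(\tau)\|^2\mid\mathcal{F}_{t-1})\le C\|\bm{W}_{T,t}(\tau)\|^2$ in both parts; thus $V_T=\sum_t E(\|\bm{Y}_t(\tau)\|^2\mid\mathcal{F}_{t-1})\le C\,d_T\sup_\tau\sum_t\|\bm{W}_{T,t}(\tau)\|=O(d_T)$. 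Since $\bm{\epsilon}_t$ is unbounded, I would then truncate each summand at a level $M_T$ chosen so that the bounded part fits into Freedman. For part (1), $\|\mathrm{vec}(\bm{\epsilon}_t\bm{\epsilon}_t^\top)\|=\|\bm{\epsilon}_t\|^2$ has $\delta/2$ moments, giving $\max_t\|\mathrm{vec}(\bm{\epsilon}_t\bm{\epsilon}_t^\top)\|=O_P(T^{2/\delta})$; for part (2), a moment computation using orthogonality of $\{\bm{\epsilon}_{t-r}\}_{r\ge 1}$ yields $\max_t\|\bm{\zeta}_t\bm{\epsilon}_t\|=O_P(T^{2/\delta})$ as well. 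Taking $M_T\asymp\sqrt{d_T}\,T^{2/\delta}$ makes the truncation residual negligible, and the stronger assumption $T^{4/\delta}d_T\log T\to 0$ is exactly what is needed to also keep $M_T\cdot\nu_T/V_T$ controlled in the exponent when $\nu_T=c_0\sqrt{d_T\log T}$. Feeding this into Lemma \ref{LemmaB.2} yields $\Pr(\|\sum_t \bm{Y}_t(\tau)\|>c_0\sqrt{d_T\log T})\le T^{-c}$ for any $c>0$ by choosing $c_0$ large.

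To pass from pointwise to uniform, I would discretize $[a,b]$ by a grid $\mathcal{G}_T$ of $\asymp T^\kappa$ equally spaced points (with $\kappa$ chosen large but fixed) and take a union bound over $\mathcal{G}_T$ using the tail estimate above. For $\tau$ in between grid points, the oscillation $\sum_t\bigl[\bm{Y}_t(\tau)-\bm{Y}_t(\tau_*)\bigr]$ is bounded, via the Lipschitz continuity of $\bm{W}_{T,t}(\cdot)$, by $|\tau-\tau_*|$ times $\max_t\|\mathrm{vec}(\bm{\epsilon}_t\bm{\epsilon}_t^\top)\|\cdot\sum_t\|\bm{W}_{T,t}(\tau_*)\|$ in part (1) (and the analogous quantity for $\bm{\zeta}_t\bm{\epsilon}_t$ in part (2)); with the grid spacing $T^{-\kappa}$ and the oscillation control $\sup_\tau\sum_t\|\bm{W}_{T,t+1}(\tau)-\bm{W}_{T,t}(\tau)\|=o(1)$, this cross-grid fluctuation is made $o_P(\sqrt{d_T\log T})$.

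The main obstacle I anticipate is the simultaneous tuning of three quantities --- the truncation level $M_T$, the Freedman threshold $\nu_T$, and the grid spacing --- so that the truncation residual, the Freedman tail (after the union bound), and the between-grid oscillation are all $o_P(\sqrt{d_T\log T})$. The reason the hypothesis here is $T^{4/\delta}d_T\log T\to 0$ rather than the weaker $T^{2/\delta}d_T\log T\to 0$ of Lemma \ref{LemmaB.5} is precisely that in part (1) the effective innovation $\mathrm{vec}(\bm{\epsilon}_t\bm{\epsilon}_t^\top)-\mathrm{vec}(\bm{I}_d)$ has moment index halved compared to $\bm{\epsilon}_t$ itself, so every $T^{1/\delta}$ factor that appeared in the proof of Lemma \ref{LemmaB.5} must be upgraded to $T^{2/\delta}$; the rest of the argument is a direct transcription.
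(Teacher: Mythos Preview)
Your approach to part (1) matches the paper's: discretize, truncate the m.d.s. innovation at level $T^{2/\delta}$, control the truncated remainder via the $\delta/2$-moments of $\mathrm{vec}(\bm{\epsilon}_t\bm{\epsilon}_t^\top)$, and apply Freedman (Lemma \ref{LemmaB.2}) on the grid followed by a Lipschitz bound between grid points. That part is fine.

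The gap is in part (2). You claim that ``$E(\|\bm{Y}_t(\tau)\|^2\mid\mathcal{F}_{t-1})\le C\|\bm{W}_{T,t}(\tau)\|^2$ in both parts'', but for $\bm{Y}_t(\tau)=(\bm{I}_d\otimes\bm{W}_{T,t}(\tau))\bm{\zeta}_t\bm{\epsilon}_t$ this is false: since $\bm{\zeta}_t$ is $\mathcal{F}_{t-1}$-measurable, one has
\[
E\bigl(\|\bm{\zeta}_t\bm{\epsilon}_t\|^2\mid\mathcal{F}_{t-1}\bigr)=\mathrm{tr}\!\left(\sum_{r,r^*\ge 1}\bm{\mathbb{B}}_t^r(1)(\bm{\epsilon}_{t-r}\bm{\epsilon}_{t-r^*}^\top\otimes\bm{I}_d)\bm{\mathbb{B}}_t^{r^*}(1)^\top\right),
\]
which is a \emph{random} quadratic form in the past innovations, not bounded by a deterministic constant. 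Consequently $V_T=\sum_t\|\bm{W}_{T,t}(\tau)\|^2\,E(\|\bm{\zeta}_t\bm{\epsilon}_t\|^2\mid\mathcal{F}_{t-1})$ is not automatically $O(d_T)$; showing $V_T=O_P(d_T)$ is itself a nontrivial step. The paper spends the bulk of its proof of part (2) establishing precisely
\[
\max_{l}\sum_{t=1}^T\|\bm{W}_{T,t}(s_l)\|\,E\bigl(\|\bm{\zeta}_t\bm{\epsilon}_t\|^2\mid\mathcal{F}_{t-1}\bigr)=O_P(1),
\]
by expanding the conditional second moment into a diagonal piece $\widetilde{X}_{at}$ (containing $\bm{\epsilon}_{t-r}\bm{\epsilon}_{t-r}^\top-\bm{I}_d$) and an off-diagonal piece $\widetilde{X}_{bt}$, and then running a \emph{second} BN-type decomposition $\mathbb{F}_{r,t}(L)=\mathbb{F}_{r,t}(1)-(1-L)\widetilde{\mathbb{F}}_{r,t}(L)$ on each, followed by another truncation-plus-Freedman argument for the leading martingale term. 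This is exactly where the extra hypothesis $\sup_\tau\sum_{t}\|\bm{W}_{T,t+1}(\tau)-\bm{W}_{T,t}(\tau)\|=o(1)$ is used (to control the telescoping remainder in that second BN decomposition), not in the between-grid oscillation as you suggest. Without this step, Freedman cannot be invoked for part (2) because the required bound on the predictable quadratic variation is unavailable.
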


\medskip

\begin{lemma}\label{LemmaB.7}
Let Assumptions \ref{Ass2}, \ref{Ass3}, and \ref{Ass4} hold. As $T\to \infty$,
\begin{enumerate}
    \item for $\tau\in(0,1)$,
    \begin{eqnarray*}
    &&  \frac{1}{T}\sum_{t=1}^{T}\bm{x}_t\left(\frac{\tau_t-\tau}{h}\right)^kK_h(\tau_t-\tau)-\widetilde{c}_k\bm{\mu}(\tau)\to_P \bm{0},\\
    &&    \frac{1}{T}\sum_{t=1}^{T}\bm{x}_t\bm{x}_{t+p}^\top\left(\frac{\tau_t-\tau}{h}\right)^k K_h(\tau_t-\tau)-\widetilde{c}_k\bm{\Sigma}_{p}(\tau)\to_P \bm{0},
    \end{eqnarray*}
        where $\bm{\Sigma}_{p}(\tau)=\bm{\mu}(\tau)\bm{\mu}^\top(\tau)+\sum_{j=0}^{\infty}\bm{B}_j(\tau) \bm{B}_{j+p}^\top(\tau)$ for fixed integer $k,p\geq0$;
    \item given $\frac{T^{1-\frac{2}{\delta}}h}{\log T} \rightarrow \infty$,
\begin{equation*}
 \sup_{\tau\in[h, 1-h]} \left\|\frac{1}{T}\sum_{t=1}^{T}\bm{x}_t \left(\frac{\tau_t-\tau}{h}\right)^k K_h(\tau_t-\tau)-\widetilde{c}_k\bm{\mu}(\tau)\right\|=O_{P}\left(h^2+\left(\frac{\log T}{Th} \right)^\frac{1}{2}\right);
\end{equation*}

\item given $\frac{T^{1-\frac{4}{\delta}}h}{\log T} \rightarrow \infty$ and $\max_{t\ge 1} E[\|\bm{\epsilon}_t\|^4 |\mathcal{F}_{t-1} ] < \infty $ a.s.,

\begin{equation*}
 \sup_{\tau\in [h,1-h]} \left\|\frac{1}{T}\sum_{t=1}^{T}\bm{x}_t \bm{x}_{t+p}^\top \left(\frac{\tau_t-\tau}{h}\right)^k K_h(\tau_t-\tau)-\widetilde{c}_k\bm{\Sigma}_{p}(\tau) \right\|=O_{P}\left(h^2+\left(\frac{\log T}{Th}\right)^\frac{1}{2}\right).
\end{equation*}
\end{enumerate}
\end{lemma}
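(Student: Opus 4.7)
The plan is to split each kernel average into a deterministic bias term and a mean--zero fluctuation term, then invoke the results already established upstream. Specifically, define the $d\times d$ deterministic weights
$$\bm{W}_{T,t}(\tau) \;=\; \frac{1}{T}\left(\frac{\tau_t-\tau}{h}\right)^k K_h(\tau_t-\tau)\,\bm{I}_d,$$
so the statistic of interest in part~1 (resp.\ part~3) equals $\sum_{t=1}^{T}\bm{W}_{T,t}(\tau)\bm{x}_t$ (resp.\ $\sum_t\bm{W}_{T,t}(\tau)\bm{x}_t\bm{x}_{t+p}^\top$), and write
$$\sum_{t=1}^{T}\bm{W}_{T,t}(\tau)\bm{x}_t \;=\; \sum_{t=1}^{T}\bm{W}_{T,t}(\tau)\bigl(\bm{x}_t - E\bm{x}_t\bigr) \;+\; \sum_{t=1}^{T}\bm{W}_{T,t}(\tau)E(\bm{x}_t),$$
and analogously for the second--moment case.

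For the stochastic term I will verify the hypotheses of Theorem~\ref{Theorem2.1} (for parts~2 and~3) and of Lemma~\ref{LemmaB.4} (for part~1). Since $u^{k}K(u)$ is bounded and Lipschitz on $[-1,1]$, one obtains $d_T := \sup_{\tau,t}\|\bm{W}_{T,t}(\tau)\| = O((Th)^{-1})$, while $\sup_{\tau\in[h,1-h]}\sum_t\|\bm{W}_{T,t}(\tau)\| = O(1)$ by a Riemann--sum argument using the compact support of $K$. The key book--keeping step is the total variation
$$\sup_{\tau\in[h,1-h]}\sum_{t=1}^{T-1}\|\bm{W}_{T,t+1}(\tau)-\bm{W}_{T,t}(\tau)\| \;=\; O\!\bigl((Th)^{-1}\bigr) \;=\; O(d_T),$$
which follows because each of the $O(Th)$ nonzero increments has size $O((T^{2}h^{2})^{-1})$, plus at most two boundary--jump terms of size $O((Th)^{-1})$. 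Together with the bandwidth assumptions $T^{1-2/\delta}h/\log T\to\infty$ (part~2) and $T^{1-4/\delta}h/\log T\to\infty$ (part~3), Theorem~\ref{Theorem2.1} then produces the rate $O_P(\sqrt{d_T\log T}) = O_P(\sqrt{\log T/(Th)})$ uniformly on $[h,1-h]$, while Lemma~\ref{LemmaB.4} provides the corresponding in--probability convergence pointwise in $\tau\in(0,1)$.

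For the deterministic (bias) term I will use Taylor expansion together with Riemann--sum approximation. Because $\bm{\mu}$ is $C^{2}$ by Assumption~\ref{Ass3} and $K$ is symmetric, the linear term vanishes and
$$\frac{1}{T}\sum_{t=1}^{T}\bm{\mu}(\tau_t)\left(\frac{\tau_t-\tau}{h}\right)^{k}K_h(\tau_t-\tau) \;=\; \tilde{c}_k\bm{\mu}(\tau) + O(h^{2}) + O\!\bigl((Th)^{-1}\bigr),$$
uniformly on $[h,1-h]$. For the second--moment case, the m.d.s.\ property with $E(\bm{\epsilon}_t\bm{\epsilon}_t^{\top}\mid\mathcal{F}_{t-1})=\bm{I}_d$ gives
$$E\!\bigl(\bm{x}_t\bm{x}_{t+p}^{\top}\bigr) \;=\; \bm{\mu}(\tau_t)\bm{\mu}(\tau_{t+p})^{\top} + \sum_{j=0}^{\infty}\bm{B}_j(\tau_t)\bm{B}_{j+p}^{\top}(\tau_{t+p}),$$
and the summability condition $\sup_{\tau}\sum_{j}j\|\bm{B}_j^{(\ell)}(\tau)\|<\infty$ for $\ell=0,1$ in Assumption~\ref{Ass3} legitimizes termwise Taylor expansion of this series, yielding $\tilde{c}_k\bm{\Sigma}_p(\tau) + O(h^{2})$ after Riemann--sum approximation.

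The main obstacle is verifying that the total variation of $\bm{W}_{T,t}(\tau)$ is of the sharp order $O(d_T)$ rather than a weaker $O(1)$; it is this refinement that matches the rate delivered by Theorem~\ref{Theorem2.1} and ultimately gives the advertised $\sqrt{\log T/(Th)}$ stochastic error. A secondary point is uniformity of the Riemann--sum error near the boundary, which is why parts~2 and~3 restrict $\tau$ to $[h,1-h]$: on that interval the kernel support lies inside $[0,1]$, so no boundary correction is needed and the $O((Th)^{-1})$ error is absorbed into the $h^{2}+\sqrt{\log T/(Th)}$ rate.
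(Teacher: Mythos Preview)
Your proposal is correct and follows essentially the same route as the paper: define $\bm{W}_{T,t}(\tau)=\frac{1}{T}\left(\frac{\tau_t-\tau}{h}\right)^k K_h(\tau_t-\tau)\,\bm{I}_d$, apply Lemma~\ref{LemmaB.4} for the pointwise statement and Theorem~\ref{Theorem2.1} for the uniform statements, and handle the deterministic bias by Taylor expansion. In fact your write-up is more explicit than the paper's own proof, which simply asserts that the hypotheses of Lemma~\ref{LemmaB.4} and Theorem~\ref{Theorem2.1} are ``straightforward to verify'' and leaves the bias analysis implicit; your verification of the total-variation bound $\sum_t\|\bm{W}_{T,t+1}(\tau)-\bm{W}_{T,t}(\tau)\|=O((Th)^{-1})=O(d_T)$ and your treatment of the second-moment bias via termwise differentiation under $\sup_\tau\sum_j j\|\bm{B}_j^{(\ell)}(\tau)\|<\infty$ fill in exactly the details the paper omits.
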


\medskip

\begin{lemma}\label{LemmaB.8}
Let Assumptions \ref{Ass2}, \ref{Ass4} and \ref{Ass5} hold. Suppose $\frac{T^{1-\frac{4}{\delta}}h}{\log T} \rightarrow \infty$ and $\max_{t\ge 1} E[\left\|\bm{\epsilon}_t \right\|^4 |\mathcal{F}_{t-1}]< \infty $ a.s. As $T\to \infty$,
  \begin{enumerate}
    \item $\sup_{\tau \in [0,1]}\left\|\frac{1}{Th}\sum_{t=1}^{T} \bm{Z}_{t-1}\bm{\eta}_t K\left(\frac{\tau_t-\tau}{h} \right)\right\|=O_P\left( \left(\frac{\log T}{Th} \right)^{\frac{1}{2}} \right)$;
    \item $\frac{1}{\sqrt{Th}}\sum_{t=1}^{T}\bm{\eta}_t \left(\bm{\eta}_t-\bm{\widehat{\eta}}_t\right)^\top K\left(\frac{\tau_t-\tau}{h} \right)=o_P(1)$ for $\forall \tau \in (0,1)$.
  \end{enumerate}
\end{lemma}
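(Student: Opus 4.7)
For Part 1, the summand $\bm{Z}_{t-1}\bm{\eta}_t=(\bm{z}_{t-1}\otimes\bm{I}_d)\bm{\omega}(\tau_t)\bm{\epsilon}_t$ is a martingale difference with respect to $\{\mathcal{F}_t\}$, so $E(\bm{Z}_{t-1}\bm{\eta}_t)=0$. My plan is to reduce to the second part of Theorem~\ref{Theorem2.1} by substituting $\bm{\eta}_t=\bm{x}_t-\bm{a}(\tau_t)-\sum_{j=1}^{p}\bm{A}_j(\tau_t)\bm{x}_{t-j}$. The sum then becomes a finite linear combination of kernel-weighted partial sums of the form $\sum_t \bm{W}_{T,t}(\tau)\bm{x}_r\bm{x}_{r+q}^{\top}$ for fixed shifts $q\in\{0,1,\ldots,p\}$, with smooth weights $\bm{W}_{T,t}(\tau)=T^{-1}K_h(\tau_t-\tau)\bm{G}_q(\tau_t)$ satisfying $\sup_\tau\sum_t\|\bm{W}_{T,t}(\tau)\|=O(1)$, $d_T=O(1/(Th))$, and the Lipschitz-in-$\tau$ condition inherited from $\bm{G}_q$ and $K$. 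The hypothesis $T^{1-4/\delta}h/\log T\to\infty$ is exactly $T^{4/\delta}d_T\log T\to 0$, so the second part of Theorem~\ref{Theorem2.1} bounds each component by $O_P(\sqrt{d_T\log T})=O_P(\sqrt{\log T/(Th)})$ uniformly; the deterministic recentering terms across components combine, via $E(\bm{z}_{t-1}\bm{\eta}_t^{\top})=0$, into lower-order kernel averages of smooth functions and cancel at the same rate.

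For Part 2, I would use $\widehat{\bm{\eta}}_t-\bm{\eta}_t=-(\widehat{\bm{A}}(\tau_t)-\bm{A}(\tau_t))\bm{z}_{t-1}$ and decompose, as in the proof of Theorem~\ref{Theorem3.1}(1), $\widehat{\bm{A}}(\tau_t)-\bm{A}(\tau_t)=B(\tau_t)+R(\tau_t)$, where $B(\tau_t)$ is the deterministic Taylor/kernel bias of order $O(h^2)$ uniformly and
\[
R(\tau_t)=\left(\frac{1}{T}\sum_u \bm{\eta}_u\bm{z}_{u-1}^{\top}K_h(\tau_u-\tau_t)\right)\bm{S}_{T,0}^{-1}(\tau_t).
\]
The bias piece $(Th)^{-1/2}\sum_t\bm{\eta}_t\bm{z}_{t-1}^{\top}B(\tau_t)^{\top}K((\tau_t-\tau)/h)$ is a m.d.s.\ in $t$ with conditional variance of order $Th\cdot h^4$, hence of order $O_P(h^2)=o_P(1)$.

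The variance piece becomes a double sum which I would split into the diagonal $t=u$ and the off-diagonal $t\ne u$. For the diagonal, writing $\bm{\eta}_t\bm{\eta}_t^{\top}=\bm{\Omega}(\tau_t)+[\bm{\eta}_t\bm{\eta}_t^{\top}-\bm{\Omega}(\tau_t)]$ yields a deterministic contribution of size $O(1/\sqrt{Th})$ and a m.d.s.\ fluctuation of size $O_P(1/(Th))$, both $o_P(1)$. For the off-diagonal, I would split the inner index as $u<t$ versus $u>t$: in the $u<t$ block the partial inner sum $T^{-1}\sum_{u<t}\bm{z}_{u-1}\bm{\eta}_u^{\top}K_h(\tau_u-\tau_t)$ is $\mathcal{F}_{t-1}$-measurable and, by a partial-sum version of Part~1, is $O_P(\sqrt{\log T/(Th)})$ uniformly in $\tau_t$, so the outer sum in $t$ is a m.d.s.\ whose conditional second moment is controlled by this rate squared times $O_P(1)$, yielding $O_P(\log T/(Th))=o_P(1)$; the $u>t$ block is symmetric after relabelling. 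The main obstacle will be this off-diagonal analysis, where one must preserve the martingale cancellation between $\bm{\eta}_t$ and $\bm{\eta}_u$ while simultaneously handling the interplay between the two distinct kernel scales $K((\tau_t-\tau)/h)$ and $K_h(\tau_u-\tau_t)$, and verify that the uniform rate from Part~1 carries through after being multiplied by the outer kernel-weighted martingale factor.
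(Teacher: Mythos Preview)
Your route is valid but differs from the paper's in both parts.

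\medskip
\noindent\textbf{Part 1.} The paper does not substitute for $\bm{\eta}_t$ and reduce to Theorem~\ref{Theorem2.1}(2). It exploits directly that $\bm{Z}_{t-1}$ is $\mathcal{F}_{t-1}$--measurable, so $\bm{Z}_{t-1}\bm{\eta}_t$ is a martingale difference, and simply reruns the covering/truncation/Freedman machinery of Lemma~\ref{LemmaB.6}(2) with $\bm{\zeta}_t$ replaced by $\bm{Z}_{t-1}$; the only extra ingredient is the analogue of \eqref{EqB.5}, namely $\max_l\sum_t\|\bm{W}_{T,t}(s_l)\|E(\|\bm{Z}_{t-1}\bm{\eta}_t\|^2\,|\,\mathcal{F}_{t-1})=O_P(1)$, which follows from the kernel average of $\|\bm{z}_{t-1}\|^2$ being $O_P(1)$. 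Your substitution approach works---the deterministic $\bm{A}_j(\tau_t)$'s factor through expectations and the centerings combine to $E(\bm{z}_{t-1}\bm{\eta}_t^\top)=0$---but it is longer and requires invoking Theorem~\ref{Theorem2.1}(2) separately for each lag pair after vectorizing to put $\bm{A}_j(\tau_t)$ into the weight.

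\medskip
\noindent\textbf{Part 2.} Both you and the paper split $\widehat{\bm{A}}(\tau_t)-\bm{A}(\tau_t)$ into bias and stochastic pieces, then the stochastic double sum into diagonal and off--diagonal. The difference is the off--diagonal treatment. The paper indexes by the gap $i=|u-t|$, centers $\bm{z}_{t-1}^\top\bm{z}_{t+i-1}$ at its expectation, and then expands the centered piece explicitly via the MA representation into terms $J_{T,4221}$--$J_{T,4224}$ whose second moments are computed directly, each yielding $O_P((Th)^{-1/2})$ or smaller. Your martingale--conditioning argument is cleaner but rests on two points you must make explicit: (i) $\bm{S}_{T,0}^{-1}(\tau_t)$ depends on the full sample and must be replaced by its deterministic limit $\bm{\Sigma}^{-1}(\tau_t)$ before you can claim the $u<t$ inner sum is $\mathcal{F}_{t-1}$--measurable (the paper does this replacement explicitly); and (ii) you need a genuine \emph{partial--sum} version of Part~1, i.e.\ $\sup_t\|\,(Th)^{-1}\sum_{u<t}\bm{z}_{u-1}\bm{\eta}_u^\top K((\tau_u-\tau_t)/h)\,\|=O_P(\sqrt{\log T/(Th)})$, which is strictly stronger than Part~1 itself. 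This is achievable by enlarging the covering/union bound to include the $T$ choices of upper endpoint (equivalently, a martingale maximal inequality), but it must be stated and checked. Note also that after relabelling, the $u>t$ block is not literally symmetric: the inner sum then carries \emph{both} kernels $K((\tau_t-\tau)/h)$ and $K((\tau_u-\tau_t)/h)$, so its uniform control again needs the partial--sum extension. With these two points secured your argument goes through; the paper's explicit moment computation avoids the partial--sum issue at the price of a longer case analysis.
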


\medskip

\begin{lemma}\label{LemmaB.9}
Let Assumptions \ref{Ass2}, \ref{Ass4} and \ref{Ass5} hold. Suppose $\frac{T^{1-\frac{4}{\delta}}h}{\log T} \rightarrow \infty$ and $\max_{t\ge 1} E[\left\|\bm{\epsilon}_t \right\|^4 |\mathcal{F}_{t-1}]< \infty $ a.s. As $T\to \infty$,
  \begin{enumerate}
    \item if $\mathsf{p}\geq p$, then $\text{RSS}(\mathsf{p})=\frac{1}{T}\sum_{t=1}^{T}E\left(\bm{\eta}_t^\top\bm{\eta}_t\right) + O_P\left(c_T \phi_T\right)$;
    \item if $\mathsf{p} < p$, then $\text{RSS}(\mathsf{p})=\frac{1}{T}\sum_{t=1}^{T}E\left(\bm{\eta}_t^\top\bm{\eta}_t\right) + c + o_P\left(1\right)$ with some constant $c >0$.
  \end{enumerate}
\end{lemma}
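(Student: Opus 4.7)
The two cases are handled by different mechanisms. When $\mathsf{p}\ge p$ the true equation is nested inside the $\mathsf{p}$--lag specification (after padding the lag--$(p+1),\ldots,\mathsf{p}$ coefficient matrices with zeros), so Theorem~\ref{Theorem3.1}(1) yields the uniform rate $c_T$ for $\widehat{\bm{A}}_{\mathsf{p}}(\tau)$ and the fitted residuals stay close to the true $\bm{\eta}_t$. When $\mathsf{p}<p$ the kernel least squares converges instead to a pseudo--true coefficient path that cannot reproduce the omitted lags, producing the strictly positive limiting bias $c$.

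For Case~1 I would write, with $\bm{A}_{\mathsf{p}}^{0}(\tau)$ denoting the padded true coefficient,
\[
\widehat{\bm{\eta}}_{\mathsf{p},t}=\bm{\eta}_t-\Delta_t\,\bm{z}_{\mathsf{p},t-1},\qquad \Delta_t:=\widehat{\bm{A}}_{\mathsf{p}}(\tau_t)-\bm{A}_{\mathsf{p}}^{0}(\tau_t),
\]
so that $T\cdot\text{RSS}(\mathsf{p})=\sum_t\bm{\eta}_t^{\top}\bm{\eta}_t-2\sum_t\bm{\eta}_t^{\top}\Delta_t\bm{z}_{\mathsf{p},t-1}+\sum_t\bm{z}_{\mathsf{p},t-1}^{\top}\Delta_t^{\top}\Delta_t\bm{z}_{\mathsf{p},t-1}$. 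The baseline equals $\sum_t E(\bm{\eta}_t^{\top}\bm{\eta}_t)$ up to an m.d.s.\ LLN error that will be absorbed in the order $O_P(c_T\phi_T)$ relevant for Theorem~\ref{Theorem3.2}. The quadratic remainder is bounded by $\sup_{\tau\in[h,1-h]}\|\Delta_t\|^{2}\cdot T^{-1}\sum_t\|\bm{z}_{\mathsf{p},t-1}\|^{2}=O_P(c_T^{2})$ using Theorem~\ref{Theorem3.1}(1) and Lemma~\ref{LemmaB.7}; since $h\le 1$ gives $c_T\le\phi_T$, this is $O_P(c_T\phi_T)$. The cross term is the delicate piece: substituting the Nadaraya--Watson representation $\Delta_t=\tfrac12 h^{2}\tilde c_{2}\bm{A}^{(2)}(\tau_t)+T^{-1}\sum_s K_h(\tau_s-\tau_t)\bm{\eta}_s\bm{z}_{\mathsf{p},s-1}^{\top}\bm{\Sigma}_{\mathsf{p}}^{-1}(\tau_t)+\text{l.o.t.}$, the bias contribution becomes $O_P(h^{2}/\sqrt{T})$ by the m.d.s.\ property of $\bm{\eta}_t$, while the stochastic contribution splits into a diagonal piece of order $O_P((Th)^{-1})$ and an off--diagonal double sum which is an m.d.s.\ in the later time index with variance $O((T^{2}h)^{-1})$ and hence $O_P((T\sqrt h)^{-1})$. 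All three pieces are $O_P(c_T\phi_T)$.

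For Case~2 I would introduce the pseudo--true coefficient $\bm{A}_{\mathsf{p}}^{\star}(\tau)=\bm{\Gamma}_{\mathsf{p}}(\tau)\bm{\Sigma}_{\mathsf{p}}^{-1}(\tau)$, where $\bm{\Sigma}_{\mathsf{p}}(\tau)$ and $\bm{\Gamma}_{\mathsf{p}}(\tau)$ are the population limits of the kernel moments in \eqref{eq12}; Lemma~\ref{LemmaB.7} combined with continuity of matrix inversion gives $\sup_{\tau}\|\widehat{\bm{A}}_{\mathsf{p}}(\tau)-\bm{A}_{\mathsf{p}}^{\star}(\tau)\|=o_P(1)$. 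Setting $\bm{u}_t=\bm{x}_t-\bm{A}_{\mathsf{p}}^{\star}(\tau_t)\bm{z}_{\mathsf{p},t-1}$ and repeating the Case~1 decomposition with $\bm{A}_{\mathsf{p}}^{\star}$ in place of $\bm{A}_{\mathsf{p}}^{0}$ yields $\text{RSS}(\mathsf{p})=T^{-1}\sum_t\bm{u}_t^{\top}\bm{u}_t+o_P(1)=T^{-1}\sum_t E(\bm{u}_t^{\top}\bm{u}_t)+o_P(1)$ after an m.d.s.\ LLN. Because $\bm{\eta}_t$ is orthogonal to every $\mathcal{F}_{t-1}$--measurable quantity, $E\|\bm{u}_t\|^{2}-E\|\bm{\eta}_t\|^{2}$ equals the expected squared population residual from the best linear prediction of $\sum_{j=\mathsf{p}+1}^{p}\bm{A}_j(\tau_t)\bm{x}_{t-j}$ by $\bm{z}_{\mathsf{p},t-1}$; this quantity is strictly positive since $\bm{A}_p(\cdot)\not\equiv 0$ (by minimality of $p$) and $\bm{x}_{t-p}$ cannot be linearly recovered from $\bm{z}_{\mathsf{p},t-1}$ in view of the nondegeneracy of $\bm{\Omega}(\tau)$. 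The Ces\`aro average of these excesses defines the constant $c>0$.

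The hard part is the cross term in Case~1: a naive Cauchy--Schwarz bound gives only $O_P(c_T)$, which is too crude, and the sharp rate $c_T\phi_T$ required by Theorem~\ref{Theorem3.2} forces one both (i) to exploit the m.d.s.\ cancellation when the deterministic bias of $\widehat{\bm{A}}_{\mathsf{p}}$ is paired against $\bm{\eta}_t$, and (ii) to split the stochastic component of $\widehat{\bm{A}}_{\mathsf{p}}(\tau_t)-\bm{A}_{\mathsf{p}}^{0}(\tau_t)$ into a diagonal kernel piece and an off--diagonal martingale double sum treated by a Freedman/Bernstein--type concentration in the spirit of Lemma~\ref{LemmaB.2}. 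A secondary technical difficulty in Case~2 is verifying the uniform invertibility and smoothness of $\bm{\Sigma}_{\mathsf{p}}(\tau)$, and then the strict positivity of $c$; both follow from the VAR stability in Assumption~\ref{Ass5}.1 and the uniform positive definiteness of $\bm{\Omega}(\tau)$ in Assumption~\ref{Ass5}.3.
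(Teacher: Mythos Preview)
Your overall architecture matches the paper's proof: the same three–term expansion of $\text{RSS}(\mathsf{p})$ in Case~1, the same idea of substituting the full kernel–estimator representation into the cross term and splitting its stochastic part into a diagonal $(Th)^{-1}$ piece and an off–diagonal martingale double sum, and in Case~2 the same pseudo–true coefficient plus positive–definiteness argument.

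There is, however, a concrete boundary gap in your Case~1 treatment. You bound the quadratic remainder by $\sup_{\tau\in[h,1-h]}\|\Delta_t\|^{2}\cdot T^{-1}\sum_t\|\bm{z}_{\mathsf{p},t-1}\|^{2}$, but the sum runs over \emph{all} $t$, including those with $\tau_t\in[0,h)\cup(1-h,1]$, where Theorem~\ref{Theorem3.1}(1) does not apply; there the bias of the local–constant estimator is $O(h)$, not $O(h^{2})$, so $\|\Delta_t\|=O_P(\phi_T)$ rather than $O_P(c_T)$. The paper deals with this by splitting the sum into boundary and interior blocks and using that the boundary contains only $O(Th)$ points, yielding a contribution $O_P(h\,\phi_T^{2})$; since $h\phi_T\le c_T$ this is still $O_P(c_T\phi_T)$. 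The same issue recurs in your cross term: writing $\Delta_t=\tfrac12 h^{2}\tilde c_{2}\bm{A}^{(2)}(\tau_t)+\text{stochastic}+\text{l.o.t.}$ is only valid for $\tau_t\in[h,1-h]$. At the boundary the odd kernel moment does not vanish, producing an $O(h)$ first–derivative bias which the paper isolates as a separate term (its $I_{T,3}$, controlled via $\bm{S}_{T,1}(\tau_t)=O_P((\log T/(Th))^{1/2})$ in the interior and $O(1)$ at the edge). Without this split your ``l.o.t.'' label is not justified. The fix is routine—mirror the paper's boundary/interior decomposition—but as written the stated bounds do not cover all $t$.
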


\section{Omitted Proofs of the Main Results}\label{AppB.3}

We present the omitted proofs of the main results in this section.

\begin{proof}[Proof of Proposition \ref{Proposition2.1}]
\item

\noindent (1). Start from Example \ref{Example1}. Let $\rho$ denote the largest eigenvalue of $\bm{\Phi}_t$ uniformly over $t$. Then, $\rho < 1$ by the condition in Proposition \ref{Proposition2.1}. Similar to the proof of Proposition 2.4 in \cite{dahlhaus2009empirical}, we have $\max_{t\ge 1} \|\prod_{i=0}^{j-1}\bm{\Phi}_{t-i}\|\leq M \rho^j$, which yields that

\begin{eqnarray*}
\max_{t\geq1} \sum_{j=1}^{\infty}j \left\|\bm{B}_{j,t}\right\|=\max_{t\geq1} \sum_{j=1}^{\infty}j\left\| \bm{J}\prod_{i=0}^{j-1}\bm{\Phi}_{t-i}\bm{J}^\top\right\|
\leq M \sum_{j=1}^{\infty}j \rho^j=O(1).
\end{eqnarray*}
In addition, for any conformable matrices $\{\bm{A}_i\}$ and $\{\bm{B}_i\}$, since

\begin{equation*}
\prod_{i=1}^{r}\bm{A}_i-\prod_{i=1}^{r}\bm{B}_i=\sum_{j=1}^{r}\left( \prod_{k=1}^{j-1}\bm{A}_k \right)\left(\bm{A}_j-\bm{B}_j\right)\left(\prod_{k=j+1}^{r}\bm{B}_k\right),
\end{equation*}
we then obtain that

\begin{eqnarray*}
&&\limsup_{T\to\infty} \sum_{t=1}^{T-1} \sum_{j=1}^{\infty}j \left\|\bm{J}\left(\prod_{i=0}^{j-1}\bm{\Phi}_{t+1-i}- \prod_{i=0}^{j-1}\bm{\Phi}_{t-i}\right)\bm{J}^\top\right\| \\
&=& \limsup_{T\to\infty} \sum_{t=1}^{T-1} \sum_{j=1}^{\infty}j \left\|\bm{J}\sum_{m=1}^{j}\left(\prod_{k=1}^{m-1}\bm{\Phi}_{t+2-k}\right)\left(\bm{\Phi}_{t+2-m}-\bm{\Phi}_{t+1-m} \right)\left(\prod_{k=m}^{j}\bm{\Phi}_{t+1-k}\right)\bm{J}^\top\right\| \\
&\leq&M \sum_{j=1}^{\infty}j^2 \rho^{j-1}\limsup_{T\to\infty} \sum_{t=1}^{T-1}\left\|\bm{\Phi}_{t+1}-\bm{\Phi}_{t}\right\|=O(1)
\end{eqnarray*}
given the condition in Proposition \ref{Proposition2.1}.

\medskip

Consider Example \ref{Example2}. Similar to Example \ref{Example1},

\begin{eqnarray*}
\max_{t\geq 1}\sum_{b=1}^{\infty}b\left\|\bm{D}_{b,t}\right\|\leq M\max_{t\geq 1}\sum_{b=1}^{\infty}b\sum_{j=b-q}^{b}\left\|\bm{B}_{j,t}\right\|\leq M\sum_{b=1}^{\infty}b \rho^b=O(1).
\end{eqnarray*}
In addition,

\begin{eqnarray*}
    && \limsup_{T\to \infty} \sum_{t=1}^{T-1}\sum_{b=1}^{\infty}b\left\|\bm{D}_{b,t+1}-\bm{D}_{b,t}\right\|\\
    &\leq&  \limsup_{T\to \infty}  \sum_{t=1}^{T-1}\sum_{b=1}^{\infty}b\sum_{j=\max(0,b-q)}^{b}\left\|\bm{B}_{j,t+1}-\bm{B}_{j,t}\right\|\left\| \bm{\Theta}_{b-j,t+1-j}\right\|\\
    &&+\limsup_{T\to \infty}  \sum_{t=1}^{T-1}\sum_{b=1}^{\infty}b\sum_{j=\max(0,b-q)}^{b}\left\|\bm{B}_{j,t}\right\|\left\| \bm{\Theta}_{b-j,t+1-j}-\bm{\Theta}_{b-j,t-j}\right\|\\
    &\leq& M \limsup_{T\to \infty}  \sum_{t=1}^{T-1}\sum_{b=1}^{\infty}b\left\|\bm{B}_{b,t+1}-\bm{B}_{b,t}\right\|\\
    &&+\sum_{b=1}^{\infty}b\sum_{j=\max(0,b-q)}^{b}\left\|\bm{B}_{j,t}\right\| \limsup_{T\to \infty}  \sum_{t=1}^{T-1}\left\| \bm{\Theta}_{b-j,t+1-j}-\bm{\Theta}_{b-j,t-j}\right\| =O(1).
\end{eqnarray*}

\medskip

\noindent (2). By the condition of Proposition \ref{Proposition2.1}.3,

\begin{eqnarray*}
  &&      \max_{t\geq 1}\sum_{j=1}^{\infty}j\left\|\bm{B}_{j,t}\right\|\leq \max_{t\geq 1}\sum_{j=1}^{\infty}j \sum_{l=0}^{j}\left\|\bm{\Psi}_{l,t}\right\|\left\|\bm{\Theta}_{j-l,t-l}\right\|\\
        &=&\max_{t\geq 1}\sum_{j=0}^{\infty}\sum_{l=j+1}^{\infty}l\left\|\bm{\Psi}_{j,t}\right\|\left\|\bm{\Theta}_{l-j,t-j}\right\|=\max_{t\geq 1}\sum_{j=0}^{\infty}\left\|\bm{\Psi}_{j,t}\right\|\sum_{l=1}^{\infty}(l+j)\left\|\bm{\Theta}_{l,t-j}\right\|\\
        &=&\max_{t\geq 1}\sum_{j=0}^{\infty}j\left\|\bm{\Psi}_{j,t}\right\|\sum_{l=1}^{\infty}\left\|\bm{\Theta}_{l,t-j}\right\| +\max_{t\geq 1}\sum_{j=0}^{\infty}\left\|\bm{\Psi}_{j,t}\right\|\sum_{l=1}^{\infty}l\left\|\bm{\Theta}_{l,t-j}\right\|=O(1).
     \end{eqnarray*}
In addition,

\begin{eqnarray*}
\limsup_{T\to \infty}\sum_{t=1}^{T-1}\sum_{j=1}^{\infty}j\left\|\bm{B}_{j,t+1}-\bm{B}_{j,t}\right\| &\leq& \limsup_{T\to \infty}\sum_{t=1}^{T-1}\sum_{j=1}^{\infty}j\sum_{l=0}^{j}\left\| \bm{\Psi}_{l,t+1} \right\|\left\|\bm{\Theta}_{j-l,t+1-l}-\bm{\Theta}_{j-l,t-l}\right\| \\
  &&+ \limsup_{T\to \infty}\sum_{t=1}^{T-1}\sum_{j=1}^{\infty}j\sum_{l=0}^{j}\left\| \bm{\Psi}_{l,t+1} -\bm{\Psi}_{l,t} \right\|\left\|\bm{\Theta}_{j-l,t-l}\right\| \\
  &:=&I_{T,1}+I_{T,2}.
\end{eqnarray*}
We show that $I_{T,1}$ is bounded below, and the proof of $I_{T,2}$ can be established similarly.

\begin{eqnarray*}
   I_{T,1} &\leq& \max_{t\geq 1}\sum_{j=0}^{\infty}j\left\|\bm{\Psi}_{j,t}\right\|\limsup_{T\to \infty}\sum_{t=1}^{T-1}\sum_{l=1}^{\infty}\left\|\bm{\Theta}_{l,t+1-j}-\bm{\Theta}_{l,t-j}\right\| \\
  &&+\max_{t\geq 1}\sum_{j=0}^{\infty}\left\|\bm{\Psi}_{j,t}\right\|\limsup_{T\to \infty}\sum_{t=1}^{T-1}\sum_{l=1}^{\infty}l\left\|\bm{\Theta}_{l,t+1-j}-\bm{\Theta}_{l,t-j}\right\| =O(1).
\end{eqnarray*}
The proof is now completed.
\end{proof}
\medskip

\begin{proof}[Proof of Proposition \ref{Proposition3.1}]

\item
Consider the VMA representation of $\bm{x}_t$:
\begin{eqnarray*}
\bm{x}_t= \bm{\mu}_{t} +\bm{B}_{0,t}\bm{\epsilon}_t+\bm{B}_{1,t}\bm{\epsilon}_{t-1}+\bm{B}_{2,t}\bm{\epsilon}_{t-2}+\cdots,
\end{eqnarray*}
where  $\bm{B}_{0,t}=\bm{\omega}(\tau_t)$, $\bm{B}_{j,t} =\bm{\Psi}_{j,t}\bm{\omega}(\tau_{t-j})$,
$\bm{\Psi}_{j,t}=\bm{J}\prod_{m=0}^{j-1}\bm{\Phi}(\tau_{t-m}) \bm{J}^\top$ for $j\geq 1$, $\bm{\mu}_t=\bm{a}(\tau_t)+\sum_{j=1}^{\infty}\bm{\Psi}_{j,t} \bm{a}(\tau_{t-j})$ and $\tau_{t-j}=\frac{t-j}{T}I(t\geq j)$.

\noindent First, we investigate the validity of VMA representation of $\bm{x}_t$ and $\widetilde{\bm{x}}_t$. Let $\rho_A$ denote the largest eigenvalue of $\bm{\Phi}(\tau)$ uniformly over $\tau\in [0,1]$. Then, $\rho_A<1$ by Assumption \ref{Ass5}.1.

\noindent Similar to the proof of Proposition 2.4 in \cite{dahlhaus2009empirical}, we have $\max_{t\geq 1} \left\|\prod_{m=0}^{j-1}\bm{\Phi}(\tau_{t-m}) \right\| \leq M \rho_A^j$. It follows that $\left\|{\rm E}(\bm{x_t})\right\| \leq \sum_{j=0}^{\infty} \left\|\bm{\Psi}_{j,t} \right\|\cdot \left\|\bm{a}(\tau_{t-j})\right\| \leq M \sum_{j=0}^{\infty} \rho_A^j < \infty$ and
\begin{eqnarray*}
  \left\|{\rm Var}(\bm{x_t})\right\| &=& \left\|\sum_{j=0}^{\infty}\bm{B}_{j,t}\bm{B}_{j,t}^\top\right\| \leq \sum_{j=0}^{\infty}\left\|\bm{B}_{j,t}\right\|^2 \leq M \sum_{j=0}^{\infty}\rho_A^{2j} < \infty.
\end{eqnarray*}
Similarly, we have $\left\|{\rm E}\left(\widetilde{\bm{x}}_t \right)\right\| < \infty$ and $\left\|{\rm Var}\left(\widetilde{\bm{x}}_t \right)\right\| < \infty$.

Then, we need to verify that $\max_{t\geq 1}E\left\|\bm{x}_t-\widetilde{\bm{x}}_t \right\|=O(T^{-1})$. For any conformable matrices $\{\bm{A}_i\}$ and $\{\bm{B}_i\}$, since
\begin{equation*}
  \prod_{i=1}^{r}\bm{A}_i-\prod_{i=1}^{r}\bm{B}_i=\sum_{j=1}^{r}\left( \prod_{k=1}^{j-1}\bm{A}_k \right)\left(\bm{A}_j-\bm{B}_j\right)\left(\prod_{k=j+1}^{r}\bm{B}_k\right)
\end{equation*}
we have
\begin{eqnarray*}
  &&\left\|\bm{B}_{j,t} -\bm{B}_j(\tau_t)\right\| = \left\|\bm{J}\prod_{m=0}^{j-1}\bm{\Phi}(\tau_{t-m})\bm{J}^\top\bm{\omega}(\tau_{t-j})-\bm{J}\bm{\Phi}^j(\tau_t)\bm{J}^\top\bm{\omega}(\tau_t)\right\|\\
   &=& \left\| \left(\bm{J}\prod_{m=0}^{j-1}\bm{\Phi}(\tau_{t-m})\bm{J}^\top-\bm{J}\bm{\Phi}^j(\tau_t)\bm{J}^\top\right)\bm{\omega}(\tau_t)+\bm{J}\prod_{m=0}^{j-1}\bm{\Phi}(\tau_{t-m})\bm{J}^\top\left( \bm{\omega}(\tau_{t-j})-\bm{\omega}(\tau_t)\right)\right\|\\
   &\leq&M\sum_{i=1}^{j-1}\left\|\bm{\Phi}^i(\tau_t)(\bm{\Phi}(\tau_{t-i})-\bm{\Phi}(\tau_t))\prod_{m=i+1}^{j-1}\bm{\Phi}(\tau_{t-m}) \right\|+M \rho_A^j \frac{j}{T}\\
   &\leq&M\sum_{i=1}^{j-1}\frac{i}{T}\rho_A^{j-1}+M \rho_A^j \frac{j}{T}=O(T^{-1}),
\end{eqnarray*}
which follows that
\begin{eqnarray*}
  E\left\|\bm{x}_t-\widetilde{\bm{x}}_t \right\| &\leq &\sum_{j=1}^{\infty}\left\|\bm{\Psi}_{j,t}\bm{a}(\tau_{t-j})-\bm{\Psi}_j(\tau_t)\bm{a}(\tau_t)\right\|+ \sum_{j=1}^{\infty}\left\|\bm{B}_{j,t} -\bm{B}_j(\tau_t)\right\| \cdot E\left\|\bm{\epsilon}_t\right\|\\
   &\leq& M\sum_{j=1}^{\infty} \left(\sum_{i=1}^{j-1}\frac{i}{T}\rho_A^{j-1}+ \rho_A^j \frac{j}{T}\right) =O\left(T^{-1}\right).
\end{eqnarray*}

Finally, we check whether the MA coefficients of $\widetilde{x}_t$ satisfy Assumption \ref{Ass3}. For $\bm{\mu}(\tau)$, the series $\sum_{j=0}^{\infty} \bm{\Psi}_j(\tau) \bm{a}(\tau)$ converges uniformly on $[0,1]\in R$ since for every $\nu > 0$ there exists an $N_\nu >0$ such that
\begin{equation*}
  \left\| \bm{\Psi}_{m+1}(\tau) \bm{a}(\tau)+\cdots+\bm{\Psi}_{n}(\tau) \bm{a}(\tau)\right\| \leq M \sum_{j=m+1}^{n} \rho_A^j < \nu
\end{equation*}
whenever $n>m>N_\nu$.

By the term-by-term differentiability theorem, we have $\bm{\mu}^{(1)}(\tau)=\sum_{j=0}^{\infty}\left(\bm{\Psi}_j^{(1)}(\tau) \bm{a}(\tau)+ \bm{\Psi}_j(\tau) \bm{a}^{(1)}(\tau)\right)$,  where $\bm{\Psi}_j^{(1)}(\tau)=\bm{J}\left( \sum_{i=0}^{j-1}\bm{\Phi}^i(\tau)\bm{\Phi}^{(1)}(\tau)\bm{\Phi}^{j-1-i}(\tau)\right)\bm{J}^\top$.

Therefore, we can conclude that $\bm{\mu}(\cdot)$ and $\bm{\Psi}_j(\cdot)$ is first-order continuously differentiable. Similarly, we can show the second-order continuously differentiability of $\bm{\mu}(\cdot)$ and $\bm{\Psi}_j(\cdot)$.

In addition, since $\sup_{\tau\in[0,1]} \left\|\bm{\Phi}^j(\tau) \right\| \leq M \rho_A^j$, we have
\begin{equation*}
  \sum_{j=0}^{\infty}j\left\|\bm{B}_j(\tau)\right\|=\sum_{j=0}^{\infty}j\left\|\bm{\Psi}_j(\tau) \bm{\omega}(\tau)\right\|\leq M \sum_{j=0}^{\infty}j \rho_A^j < \infty,
\end{equation*}
and
\begin{eqnarray*}
\sum_{j=0}^{\infty}j\left\|\bm{B}_j^{(1)}(\tau)\right\| &=& \sum_{j=0}^{\infty}j\left\|\bm{\Psi}_j^{(1)}(\tau)\bm{\omega}(\tau)+\bm{\Psi}_j(\tau)\bm{\omega}^{(1)}(\tau)\right\|
  \leq M\sum_{j=0}^{\infty}(j^2\rho_A^{j-1} + j\rho_A^{j})< \infty.
  \nonumber
\end{eqnarray*}

The proof is therefore completed.

\end{proof}
\medskip

\begin{proof}[Proof of Theorem \ref{TheoremA.1}]

\item

Note that $\bm{x}_t^*=\widetilde{\bm{\mu}}(\tau_t)+\bm{e}_t^*$, so we can write

\begin{equation*}
\begin{split}
\widehat{\bm{\mu}}^*(\tau)-\widetilde{\bm{\mu}}(\tau)= & \left(\sum_{t=1}^{T}{W}_{T,t}(\tau)\bm{\widetilde{\mu}}(\tau_t)-\bm{\widetilde{\mu}}(\tau)\right)+ \sum_{t=1}^{T}W_{T,t}(\tau)\bm{e}_t^* \\
:= & \bm{I}_{T,1}+ \bm{I}_{T,2},
\end{split}
\end{equation*}
where $W_{T,t}(\tau)=K(\frac{\tau_t-\tau}{h})/ \sum_{t=1}^{T}K(\frac{\tau_t-\tau}{h})$.

We start our investigation from $ \bm{I}_{T,1}$, and write

\begin{eqnarray*}
 \bm{I}_{T,1} &= & \left(\frac{1}{Th}\sum_{t=1}^{T}K\left(\frac{\tau_t-\tau}{h}\right) \frac{1}{T\widetilde{h}}\sum_{s=1}^{T}\bm{\mu}(\tau_s)K\left(\frac{\tau_s-\tau_t}{\widetilde{h}}\right)-\frac{1}{T\widetilde{h}}\sum_{s=1}^{T}\bm{\mu}(\tau_s)K\left(\frac{\tau_s-\tau}{\widetilde{h}}\right) \right)\\
&&+\frac{1}{\sqrt{T\widetilde{h}}}\left(\sum_{t=1}^{T}K\left(\frac{\tau_t-\tau}{h}\right)\bm{Z}_T(\tau_t)-\bm{Z}_T(\tau)\right)+O_P\left(\frac{1}{Th} \right) \\
&:= &\bm{I}_{T,11}+ \bm{I}_{T,12}+O_P\left(\frac{1}{Th}\right),
\end{eqnarray*}
where the definitions of $ \bm{I}_{T,11}$ and $ \bm{I}_{T,12}$ should be obvious, $\bm{Z}_T(\tau)=\frac{1}{\sqrt{T\widetilde{h}}}\sum_{t=1}^{T}\bm{e}_tK\left(\frac{\tau_t-\tau}{\widetilde{h}}\right)$ and $\bm{e}_t=\sum_{j=0}^{\infty}\bm{B}_j(\tau_t)\bm{\epsilon}_{t-j}$. Similar to the development of Lemma \ref{LemmaB.4}, we can show that $\| \bm{I}_{T,12}\| =O_P((T\widetilde{h})^{-1/2} )$, which in connection of Assumption \ref{AssA.1} yields

\begin{eqnarray*}
\sqrt{Th}\| \bm{I}_{T,12}\| =O_P ((h/\widetilde{h})^{1/2} )=o_P(1).
\end{eqnarray*}
For $ \bm{I}_{T,11}$, by the definition of Riemann integral, we have

\begin{equation*}
\begin{split}
\bm{I}_{T,11}=\ & \int_{-1}^{1}K(u)\int_{-1}^{1}K(v)\left(\bm{\mu}(\tau+v\widetilde{h}+uh)-\bm{\mu}(\tau+v\widetilde{h})\right)dv du +O\left(\frac{1}{Th}\right)\\
=\ & \frac{1}{2}h^2\widetilde{c}_2\bm{\mu}^{(2)}(\tau)+O(h^2(h+\widetilde{h}))+O\left(\frac{1}{Th}\right).
\end{split}
\end{equation*}

Thus, we just need to focus on $ \bm{I}_{T,2}$, and show that

\begin{equation*}
\frac{1}{\sqrt{Th}}\sum_{t=1}^{T}K\left(\frac{\tau_t-\tau}{h}\right)\bm{e}_t^*\to_{D^*} N\left(\bm{0},\tilde{v}_0\left\{ \sum_{j=0}^{\infty}\bm{B}_j(\tau)\right\}\left\{\sum_{j=0}^{\infty}\bm{B}_j^\top(\tau)\right\}\right).
\end{equation*}
Using the Cram\'er-Wold device, this is enough to show for any conformable unit vector $\bm{d} $,
\begin{equation*}
\frac{1}{\sqrt{Th}}\sum_{t=1}^{T}K\left(\frac{\tau_t-\tau}{h}\right)\bm{d}^\top \bm{e}_t^*\to_{D^*} N\left(\bm{0},\tilde{v}_0\bm{d}^\top\left\{ \sum_{j=0}^{\infty}\bm{B}_j(\tau)\right\}\left\{\sum_{j=0}^{\infty}\bm{B}_j^\top(\tau)\right\}\bm{d}\right).
\end{equation*}

For $\forall \tau \in [h+\widetilde{h},1-h-\widetilde{h}]$, we write

\begin{eqnarray*}
\frac{1}{\sqrt{Th}}\sum_{t=1}^{T}K\left(\frac{\tau_t-\tau}{h}\right)\bm{d}^\top\bm{\widehat{e}}_t\xi_t^* & = & {Z}_T^*(\tau)+\frac{1}{\sqrt{Th}}\sum_{t=1}^{T}K\left(\frac{\tau_t-\tau}{h}\right)\bm{d}^\top(\bm{\widehat{e}}_t-\bm{e}_t)\xi_t^* \\
& = &{Z}_T^*(\tau)+o_P(1),
\end{eqnarray*}
where ${Z}_T^*(\tau)=\frac{1}{\sqrt{Th}}\sum_{t=1}^{T}K\left(\frac{\tau_t-\tau}{h}\right)\bm{d}^\top\bm{e}_t\xi_t^*$, and the second equality follows from
\begin{equation*}
\begin{split}
&EE^*\left\|\frac{1}{\sqrt{Th}}\sum_{t=1}^{T}K\left(\frac{\tau_t-\tau}{h}\right)\bm{d}^\top(\bm{\widehat{e}}_t-\bm{e}_t)\xi_t^*\right\|^2\\
\leq\ &\max_{\lfloor T(\tau-h)\rfloor \leq t \leq \lceil T(\tau+h)\rceil}E\left\|\bm{\widehat{e}}_t-\bm{e}_t\right\|^2\left(\frac{1}{Th}\sum_{t=1}^{T}\sum_{s=1}^{T}K\left(\frac{\tau_t-\tau}{h}\right)K\left(\frac{\tau_s-\tau}{h}\right)E^*(\xi_t^*\xi_s^*)\right)
 \\
 =\ & O\left(\widetilde{h}^4+1/(T\widetilde{h})\right)O(l)= o\left(1\right),
\end{split}
\end{equation*}
where $EE^*[\cdot]$ stands for taking expectation of the variables with respect to the bootstrap draws first, and then taking the exception with respect to the sample.

\medskip

In the following, we first show that

\begin{eqnarray*}
 \mathrm{Var}^*({Z}_T^*(\tau))^2=\tilde{v}_0\bm{d}^\top\left\{ \sum_{j=0}^{\infty}\bm{B}_j(\tau)\right\}\left\{\sum_{j=0}^{\infty}\bm{B}_j^\top(\tau)\right\}\bm{d}+o_P(1)
\end{eqnarray*}
and then prove its normality by blocking techniques.

Condition on the original sample, we have
\begin{eqnarray}\label{EqB.2}
&&E^*({Z}_T^*(\tau))^2\nonumber\\
& =\ &\frac{1}{Th}\sum_{t=1}^{T}\sum_{s=1}^{T}K\left(\frac{\tau_t-\tau}{h}\right)K\left(\frac{\tau_s-\tau}{h}\right)\bm{d}^\top \bm{e}_t\bm{e}_s^\top \bm{d} E^*(\xi_t^* \xi_s^*)\nonumber\\
&=&\frac{1}{Th}\sum_{t=1}^{T}K^2\left(\frac{\tau_t-\tau}{h}\right)\bm{d}^\top \bm{e}_t\bm{e}_{t}^\top \bm{d}+ \frac{1}{Th}\sum_{i=1}^{T-1}\sum_{t=1}^{T-i}K\left(\frac{\tau_t-\tau}{h}\right)K\left(\frac{\tau_{t+i}-\tau}{h}\right)\bm{d}^\top \bm{e}_t\bm{e}_{t+i}^\top \bm{d} a(i/l)\nonumber\\
&&+\frac{1}{Th}\sum_{i=1}^{T-1}\sum_{t=1}^{T-i}K\left(\frac{\tau_t-\tau}{h}\right)K\left(\frac{\tau_{t+i}-\tau}{h}\right)\bm{d}^\top \bm{e}_{t+i}\bm{e}_{t}^\top \bm{d} a(i/l).
\end{eqnarray}
For the first term on the right hand side of \eqref{EqB.2}, by Lemma \ref{LemmaB.4}, it is straightforward to obtain that

\begin{equation*}
\frac{1}{Th}\sum_{t=1}^{T}K^2\left(\frac{\tau_t-\tau}{h}\right) \bm{d}^\top \bm{e}_t\bm{e}_{t}^\top \bm{d}=\frac{1}{Th}\sum_{t=1}^{T}K^2\left(\frac{\tau_t-\tau}{h}\right)\bm{d}^\top E\left(\bm{e}_t\bm{e}_{t}^\top\right)\bm{d} +o_P(1).
\end{equation*}
For the second and third terms on the right hand side of \eqref{EqB.2}, we have

\begin{eqnarray*}
 &&E\left\| \frac{1}{Th}\sum_{i=1}^{T-1}\sum_{t=1}^{T-i}K\left(\frac{\tau_t-\tau}{h}\right)K\left(\frac{\tau_{t+i}-\tau}{h}\right)\left(\bm{e}_t\bm{e}_{t+i}^\top-E(\bm{e}_t\bm{e}_{t+i}^\top)\right) a(i/l)\right\|\\
 &\le &\frac{1}{Th}\sum_{i=1}^{T-1}a(i/l) E\left\| \sum_{t=1}^{T-i}K\left(\frac{\tau_t-\tau}{h}\right)K\left(\frac{\tau_{t+i}-\tau}{h}\right)\left(\bm{e}_t\bm{e}_{t+i}^\top-E(\bm{e}_t\bm{e}_{t+i}^\top)\right) \right\|.
\end{eqnarray*}

We now take a careful look at $E\left\| \sum_{t=1}^{T-i}K\left(\frac{\tau_t-\tau}{h}\right)K\left(\frac{\tau_{t+i}-\tau}{h}\right)\left(\bm{e}_t\bm{e}_{t+i}^\top-E(\bm{e}_t\bm{e}_{t+i}^\top)\right) \right\|$. Simple algebra shows that $E(\bm{e}_t \bm{e}_{t+i}^\top)=\sum_{j=0}^{\infty}\bm{B}_{j,t} \bm{B}_{j+i,t+i}^\top$. Applying vector transformation, we can write

\begin{eqnarray*}
&&\frac{1}{Th}\sum_{t=1}^{T-i}K\left(\frac{\tau_t -\tau}{h} \right) K\left(\frac{\tau_{t+i} -\tau}{h} \right)\mathrm{vec}\left[\bm{e}_t \bm{e}_{t+i}^\top-E(\bm{e}_t \bm{e}_{t+i}^\top)\right]\\
&=& \frac{1}{Th}\sum_{t=1}^{T-i}K\left(\frac{\tau_t -\tau}{h} \right) K\left(\frac{\tau_{t+i} -\tau}{h} \right) \sum_{j=0}^{\infty} \left(\bm{B}_{j+i,t+i}\otimes \bm{B}_{j,t}\right) \mathrm{vec}\left[\bm{\epsilon}_{t-j}\bm{\epsilon}_{t-j}^\top- \bm{I}_d\right]\\
& & + \frac{1}{Th}\sum_{t=1}^{T-i}K\left(\frac{\tau_t -\tau}{h} \right) K\left(\frac{\tau_{t+i} -\tau}{h} \right) \sum_{j=0}^{\infty}\sum_{m=0,\neq j+i}^{\infty}\left(\bm{B}_{m,t+i}\otimes \bm{B}_{j,t}\right) \mathrm{vec}\left[\bm{\epsilon}_{t-j}\bm{\epsilon}_{t+i-m}^\top\right]\\
&:= & \bm{I}_{T,3}+ \bm{I}_{T,4}.
\end{eqnarray*}
Let $\bm{w}_t=\mathrm{vec}\left[\bm{\epsilon}_t\bm{\epsilon}_t^\top-\bm{I}_d\right]$. By Assumption \ref{Ass2}, $E\left\|\bm{w}_t\right\|^{\delta/2}\leq 4 E\left\|\bm{\epsilon}_t\right\|^{\delta}< \infty$ for some $\delta>2$, which implies that $\{\bm{w}_t\}$ is uniformly integrable. Hence, for every $\nu>0$, there exists a $\lambda_\nu > 0$ such that $E\left\|\bm{w}_t I\left(\left\|\bm{w}_t\right\|> \lambda_\nu\right) \right\| < \nu$. Define $\bm{w}_{1,t}=\bm{w}_t I\left(\left\|\bm{w}_t\right\|\leq \lambda_\nu\right)$ and $\bm{w}_{2,t}=\bm{w}_t-\bm{w}_{1,t}=\bm{w}_t I\left(\left\|\bm{w}_t\right\| > \lambda_\nu\right)$.

Similar to the proof of Theorem 2.22 in \cite{hall2014martingale},

  \begin{equation*}
    \begin{split}
        E\| \bm{I}_{T,3}\|\leq\ & E\Bigg\|\frac{1}{Th}\sum_{t=1}^{T-i}K\left(\frac{\tau_t -\tau}{h} \right) K\left(\frac{\tau_{t+i} -\tau}{h} \right)\\
        &\cdot \sum_{j=0}^{\infty} \left(\bm{B}_{j+i,t+i}\otimes \bm{B}_{j,t}\right) \left(\bm{w}_{1,t-j}-E\left(\bm{w}_{1,t-j}|\mathcal{F}_{t-j-1}\right) \right)\Bigg\|\\
        &+E\Bigg\|\frac{1}{Th}\sum_{t=1}^{T-i}K\left(\frac{\tau_t -\tau}{h} \right) K\left(\frac{\tau_{t+i} -\tau}{h} \right)\\
        &\cdot\sum_{j=0}^{\infty} \left(\bm{B}_{j+i,t+i}\otimes \bm{B}_{j,t}\right) \left(\bm{w}_{2,t-j}-E\left(\bm{w}_{2,t-j}|\mathcal{F}_{t-j-1}\right) \right)\Bigg\|\\
    :=\ & I_{T,31}+I_{T,32}.\\
    \end{split}
  \end{equation*}
  For $ I_{T,31}$,
  \begin{eqnarray*}
   I_{T,31}&\leq  &\frac{1}{Th}\sum_{j=0}^{\infty}\Bigg\{E\Bigg\|\sum_{t=1}^{T-i} K\left(\frac{\tau_t -\tau}{h} \right) K\left(\frac{\tau_{t+i} -\tau}{h} \right) \\
       &&\cdot \left(\bm{B}_{j+i,t+i}\otimes \bm{B}_{j,t}\right) \left(\bm{w}_{1,t-j}-E\left(\bm{w}_{1,t-j}|\mathcal{F}_{t-j-1}\right) \right)\Bigg\|^2\Bigg\}^{\frac{1}{2}} \\
        &\leq  &\frac{1}{Th}\sum_{j=0}^{\infty}\Bigg\{\sum_{t=1}^{T-i}K^2\left(\frac{\tau_t -\tau}{h} \right) K^2\left(\frac{\tau_{t+i} -\tau}{h} \right)\\
        &&\cdot\left\|\bm{B}_{j+i,t+i}\right\|^2 \left\|\bm{B}_{j,t}\right\|^2 E\left\|\bm{w}_{1,t-j}\right\|^2\Bigg\}^{\frac{1}{2}} \\
        &\leq &\frac{1}{\sqrt{Th}}\max_{t\ge 1}\left(\sum_{j=0}^{\infty} \left\|\bm{B}_{j,t}\right\| \left\|\bm{B}_{j+i,t+i}\right\| \right)\\
        &&\cdot\left\{\frac{1}{Th}\sum_{t=1}^{T-i}K^2\left(\frac{\tau_t -\tau}{h} \right) K^2\left(\frac{\tau_{t+i} -\tau}{h} \right) E\left\|\bm{w}_{1,t-j}\right\|^2\right\}^{\frac{1}{2}}\\
        &:= &\beta_i \phi_{T,1}(\tau),
  \end{eqnarray*}
  where $\beta_i=\max_{t\ge 1} \sum_{j=0}^{\infty} \left\|\bm{B}_{j,t}\right\| \left\|\bm{B}_{j+i,t+i}\right\|$ satisfying
  \begin{equation*}
  \sum_{i=1}^{\infty}\beta_i \leq \sum_{i=1}^{\infty}\max_{t\ge 1}\sum_{j=0}^{\infty} \left\|\bm{B}_{j,t}\right\| \left\|\bm{B}_{j+i,t+i}\right\| \leq \left(\sup_{\tau}\sum_{j=0}^{\infty} \left\|\bm{B}_{j}(\tau)\right\|\right)^2< \infty,
  \end{equation*}
  and $\phi_{T,1}(\tau)=O\left(\frac{\lambda_\nu}{\sqrt{Th}}\right)$. For $I_{T,32}$,
  \begin{equation*}
    \begin{split}
        I_{T,32}\leq\ & 2\max_{t\ge 1}\sum_{j=0}^{\infty} \left\|\bm{B}_{j,t}\right\| \left\|\bm{B}_{j+i,t+i}\right\|\frac{1}{Th}\sum_{t=1}^{T-i} K\left(\frac{\tau_t -\tau}{h} \right) K\left(\frac{\tau_{t+i} -\tau}{h} \right) E\left\|\bm{w}_{2,t-j}\right\| \\
         := &\beta_i\phi_{T,2}(\tau),
    \end{split}
  \end{equation*}
  where $\phi_{T,2}(\tau)=O(\nu)$. As we can make $\nu$ arbitrarily small, it follows that
\begin{eqnarray*}
\phi_T=\sup_{\tau\in[0,1]}(\phi_{T,1}(\tau)+\phi_{T,2}(\tau))\rightarrow0
\end{eqnarray*}
  as $T\rightarrow\infty$.

  For $ \bm{I}_{T,4}$, we have
  \begin{equation*}
    \begin{split}
       E\| \bm{I}_{T,4}\|\leq\ & \frac{1}{\sqrt{Th}}\Bigg( E\Bigg\|\frac{1}{\sqrt{Th}}\sum_{t=1}^{T-i}K\left(\frac{\tau_t -\tau}{h} \right) K\left(\frac{\tau_{t+i} -\tau}{h} \right) \\
       & \cdot \sum_{j=0}^{\infty}\sum_{m=0,\neq j+i}^{\infty}\left(\bm{B}_{m,t+i}\otimes \bm{B}_{j,t}\right) \mathrm{vec}\left[\bm{\epsilon}_{t-j}\bm{\epsilon}_{t+i-m}^\top\right]\Bigg\|^2\Bigg)^{\frac{1}{2}} \\
        \leq &O(1)\frac{1}{\sqrt{Th}},
    \end{split}
  \end{equation*}
  since $E\left(\mathrm{vec}\left[\bm{\epsilon}_{t-j}\bm{\epsilon}_{t+i-m}^\top\right] \mathrm{vec}\left[\bm{\epsilon}_{s-j}\bm{\epsilon}_{s+i-m}^\top\right]^\top\right),\ m\neq j+i$ can only be non-zero if $t= s$.

Based on the above development and $\sum_{i=0}^{\infty}a(i/l)=\sum_{i=0}^{l}a(i/l)=O(l)$, we conclude that
\begin{eqnarray*}
 &&E\left\| \frac{1}{Th}\sum_{i=1}^{T-1}\sum_{t=1}^{T-i}K\left(\frac{\tau_t-\tau}{h}\right)K\left(\frac{\tau_{t+i}-\tau}{h}\right)\left(\bm{e}_t\bm{e}_{t+i}^\top-E(\bm{e}_t\bm{e}_{t+i}^\top)\right) a(i/l)\right\|\\
 &\leq &M\phi_T \sum_{i=0}^{\infty}\beta_i+\frac{M}{\sqrt{Th}}\sum_{i=0}^{\infty}a(i/l)\leq \frac{M l}{\sqrt{Th}}=o(1)
\end{eqnarray*}
since $\sum_{i=1}^{\infty}\beta_i < \infty$ and $\lim_{T\rightarrow\infty}\phi_T=0$.

\medskip

We now just need to focus on $\frac{1}{Th}\sum_{i=1}^{T-1}\sum_{t=1}^{T-i}K\left(\frac{\tau_t-\tau}{h}\right)K\left(\frac{\tau_{t+i}-\tau}{h}\right)E(\bm{e}_t\bm{e}_{t+i}^\top)a(i/l)$. Note that

\begin{equation*}
  \begin{split}
       &\frac{1}{Th}\sum_{i=1}^{T-1}\sum_{t=1}^{T-i}K\left(\frac{\tau_t-\tau}{h}\right)K\left(\frac{\tau_{t+i}-\tau}{h}\right)E(\bm{e}_t\bm{e}_{t+i}^\top)a(i/l)\\
    =\ &\frac{1}{Th}\sum_{i=1}^{T-1}\sum_{t=1}^{T-i}K\left(\frac{\tau_t-\tau}{h}\right)K\left(\frac{\tau_{t+i}-\tau}{h}\right)E(\bm{e}_t\bm{e}_{t+i}^\top)\\
    &+\frac{1}{Th}\sum_{i=1}^{T-1}\sum_{t=1}^{T-i}K\left(\frac{\tau_t-\tau}{h}\right)K\left(\frac{\tau_{t+i}-\tau}{h}\right)E(\bm{e}_t\bm{e}_{t+i}^\top)(a(i/l)-1).
  \end{split}
\end{equation*}
It is then sufficient to show that the second term is $o(1)$ since
\begin{equation*}
\begin{split}
  &\frac{1}{Th}\sum_{t=1}^{T}\sum_{s=1}^{T}K\left(\frac{\tau_t-\tau}{h}\right)K\left(\frac{\tau_{s}-\tau}{h}\right)E(\bm{e}_t\bm{e}_{s}^\top)\\
  =\ & \mathrm{Var}\left(\frac{1}{\sqrt{Th}}\sum_{t=1}^{T}K\left(\frac{\tau_t-\tau}{h}\right)\bm{e}_t \right)
 \to \widetilde{v}_0 \left\{ \sum_{j=0}^{\infty}\bm{B}_j(\tau)\right\}\left\{\sum_{j=0}^{\infty}\bm{B}_j^\top(\tau)\right\}
\end{split}
\end{equation*}
by the proof of Theorem \ref{Theorem2.2}.

Let $d_T$ satisfy $\frac{1}{d_T}+\frac{d_T^2}{l}\rightarrow 0$. The second term is then bounded by
\begin{equation*}
  \begin{split}
       &\left\|\frac{1}{Th}\sum_{i=1}^{T-1}\sum_{t=1}^{T-i}K\left(\frac{\tau_t-\tau}{h}\right)K\left(\frac{\tau_{t+i}-\tau}{h}\right)E(\bm{e}_t\bm{e}_{t+i}^\top)(a(i/l)-1)\right\| \\
      \leq &M \sum_{i=1}^{d_T}\max_t \left\|E(\bm{e}_t\bm{e}_{t+i}^\top)\right\| |a(i/l)-1|+M\sum_{i=d_T+1}^{\infty}\max_t \left\|E(\bm{e}_t\bm{e}_{t+i}^\top)\right\| |a(i/l)-1|\\
      \leq &M \sum_{i=1}^{d_T}(1-a(i/l))+M\sum_{i=d_T+1}^{\infty}\max_t \left\|E(\bm{e}_t\bm{e}_{t+i}^\top)\right\|=o(1),
  \end{split}
\end{equation*}
since

\begin{eqnarray*}
&&\sum_{i=1}^{d_T}(1-a(i/l))\leq \sum_{i=1}^{d_T}(-a^{(1)}(0) i/l+o(i/l))\leq Md_T^2/l=o(1),\\
&&\sum_{i=d_T+1}^{\infty}\max_t \left\|E(\bm{e}_t\bm{e}_{t+i}^\top)\right\|=o(1)\text{ as }d_T\rightarrow \infty.
\end{eqnarray*}

Conditional on the original sample, we now use standard arguments of a blocking technique to show the asymptotic normality of the stochastic term. Now let $Z_{T}^*(\tau)=\sum_{j=1}^{k}X_{T,j}^*(\tau)+\sum_{j=1}^{k}Y_{T,j}^*(\tau)$, where
$$
X_{T,j}^*(\tau)=\frac{1}{\sqrt{Th}}\sum_{t=B_j+1}^{B_j+r_1}K\left(\frac{\tau_t-\tau}{h}\right)\bm{d}^\top \bm{e}_t\xi_{t}^*,\quad Y_{T,j}^*(\tau)=\frac{1}{\sqrt{Th}}\sum_{t=B_j+r_1+1}^{B_j+r_1+r_2}K\left(\frac{\tau_t-\tau}{h}\right)\bm{d}^\top \bm{e}_t\xi_{t}^*,
$$
with $B_j=(j-1)(r_1+r_2)$ and $k=\lceil T/(r_1+r_2) \rceil$. Let $r_1=r_1(T)$ and $r_2=r_2(T)$ satisfying $r_1/(Th)+l/(r_1)\rightarrow 0$ and $r_2/r_1+l/r_2 \rightarrow 0$. We first show that $\sum_{j=1}^{k}Y_{T,j}^*(\tau)=o_P(1)$. Since $r_1>l$ for large enough $T$ and the blocks $Y_{T,j}^*$ are mutual independent conditionally on the original data, then we have
\begin{equation*}
\begin{split}
&EE^*\left(\sum_{j=1}^{k}Y_{T,j}^*(\tau)\right)^2= E\left(\sum_{j=1}^{k}E^*(Y_{T,j}^*(\tau))^2
  \right)\\
    \leq\ &\frac{1}{Th}\sum_{i=-r_2+1}^{r_2-1}a(i/l)\max_t\left\|E(\bm{e}_t\bm{e}_{t+i}^\top)\right\|\sum_{j=1}^{k}\sum_{t=B_j+r_1+1}^{B_j+r_1+r_2-|i|}K\left(\frac{\tau_t-\tau}{h}\right)K\left(\frac{\tau_{t+|i|}-\tau}{h}\right)\\
 \leq\ &M \frac{1}{Th} \max_{0\leq i\leq r_2-1} \sum_{j=1}^{k}\sum_{t=B_j+r_1+1}^{B_j+r_1+r_2-i}K\left(\frac{\tau_t-\tau}{h}\right)K\left(\frac{\tau_{t+i}-\tau}{h}\right)\\
 \leq\ &M\frac{kr_2h}{Th}\leq M \frac{r_2}{r_1}=o(1).
\end{split}
\end{equation*}
We employ the Lindeberg CLT to establish the asymptotic normality of $\sum_{j=1}^{k}X_{T,j}^*(\tau)$ as the blocks $X_{T,j}^*(\tau)$ are independent when $r_2>l$ for large enough $T$. As discussed before, we have already shown that the asymptotic variance is equal to $\widetilde{v}_0 \bm{d}^\top\left\{ \sum_{j=0}^{\infty}\bm{B}_j(\tau)\right\}\left\{\sum_{j=0}^{\infty}\bm{B}_j^\top(\tau)\right\}\bm{d}$. We then need to verify that for every $\nu>0$,
$$
\sum_{j=1}^{k}E^*\left(\frac{X_{T,j}^*(\tau)^2}{E^*\left(\sum_{j=1}^{k}X_{T,j}^*(\tau)\right)^2}I\left(\frac{X_{T,j}^*(\tau)^2}{E^*\left(\sum_{j=1}^{k}X_{T,j}^*(\tau)\right)^2}>\nu \right) \right)=o_P(1).
$$
Conditional on original sample, by H\"older's inequality, Chebyshev's inequality and Minkowski's inequality, we have
\begin{eqnarray*}
 &&\sum_{j=1}^{k}E^*\left(\frac{X_{T,j}^*(\tau)^2}{E^*\left(\sum_{j=1}^{k}X_{T,j}^*(\tau)\right)^2}I\left(\frac{X_{T,j}^*(\tau)^2}{E^*\left(\sum_{j=1}^{k}X_{T,j}^*(\tau)\right)^2}>\nu \right) \right)\\
    & \leq & \sum_{j=1}^{k}\left(E^*\left(\frac{X_{T,j}^*(\tau)^2}{E^*(\sum_{j=1}^{k}
     X_{T,j}^*(\tau))^2} \right)^\frac{\delta}{2}\right)^{\frac{2}{\delta}} \left(\frac{E^*\left(\frac{X_{T,j}^*(\tau)^2}{E^*(\sum_{j=1}^{k}
     X_{T,j}^*(\tau))^2} \right)^\frac{\delta}{2}}{\nu^{\frac{\delta}{2}}} \right)^{\frac{\delta-2}{\delta}} \\
     & = & \nu^{\frac{2-\delta}{2}} \sum_{j=1}^{k} \frac{E^*(X_{T,j}^*(\tau))^\delta}{\left(E^*(\sum_{j=1}^{k}
     X_{T,j}^*(\tau))^2\right)^{\frac{\delta}{2}} }\le  \nu^{\frac{2-\delta}{2}} \sum_{j=1}^{k} \frac{\sum_{t=B_j+1}^{B_j+r_1}\left(\frac{1}{\sqrt{Th}}K\left(\frac{\tau_t-\tau}{h}\right)\bm{d}^\top \bm{e}_t\right)^\delta E^*\left(\xi_{t}^* \right)^\delta }{\left(E^*(\sum_{j=1}^{k}
     X_{T,j}^*(\tau))^2\right)^{\frac{\delta}{2}} }\\
&\leq &\nu^{\frac{2-\delta}{2}}\frac{1}{(Th)^{\frac{\delta-2}{\delta}}} \frac{\frac{1}{Th}\sum_{t=1}^{T}\left(K\left(\frac{\tau_t-\tau}{h}\right)\bm{d}^\top \bm{e}_t\right)^\delta E^*\left( \xi_{t}^*\right)^\delta }{\left(E^*(\sum_{j=1}^{k}
     X_{T,j}^*(\tau))^2\right)^{\frac{\delta}{2}}}=O_P\left(\frac{1}{(Th)^{\frac{\delta-2}{\delta}}}\right)=o_P(1).
\end{eqnarray*}
The proof is now completed.
\end{proof}

\section{Proofs of the Preliminary Lemmas}\label{AppB.2}

\begin{proof}[Proof of Lemma \ref{LemmaB.3}]
\item

\noindent (1). The first result follows from the standard BN decomposition (e.g., \citealp{phillips1992asymptotics}), so the details are omitted.

\medskip

\noindent (2). For the second decomposition, write

\begin{eqnarray*}
(1-L)\widetilde{\mathbb{B}}_t^r(L)&=&\sum_{j=0}^{\infty}\left(L^j\sum_{k=j+1}^{\infty}(\bm{B}_{k+r,t}\otimes \bm{B}_{k,t})-L^{j+1}\sum_{k=j+1}^{\infty}(\bm{B}_{k+r,t}\otimes \bm{B}_{k,t}) \right) \\
&=  &\sum_{j=0}^{\infty}\left(L^{j+1}\sum_{k=j+2}^{\infty}(\bm{B}_{k+r,t}\otimes \bm{B}_{k,t})-L^{j+1}\sum_{k=j+1}^{\infty}(\bm{B}_{k+r,t}\otimes \bm{B}_{k,t}) \right) \\
      & &+ \sum_{k=1}^{\infty}(\bm{B}_{k+r,t}\otimes \bm{B}_{k,t})\\
&=  &-\sum_{j=0}^{\infty}L^{j+1}(\bm{B}_{j+1+r,t}\otimes \bm{B}_{j+1,t}) +  \sum_{k=1}^{\infty} (\bm{B}_{k+r,t}\otimes \bm{B}_{k,t})\\
&= &-\sum_{j=0}^{\infty}L^j(\bm{B}_{j+r,t}\otimes \bm{B}_{j,t}) +  \sum_{k=0}^{\infty} (\bm{B}_{k+r,t}\otimes \bm{B}_{k,t}) \\
&= &\bm{\mathbb{B}}_{t}^r(1)-\bm{\mathbb{B}}_{t}^r(L).
\end{eqnarray*}

\medskip

\noindent (3). By Assumption \ref{Ass1},

\begin{eqnarray*}
\max_{t\ge 1}\sum_{j=0}^{\infty}\left\|\bm{\widetilde{B}}_{j,t}\right\|\leq \max_{t\ge 1}\sum_{j=0}^{\infty}\sum_{k=j+1}^{\infty}\left\|\bm{B}_{k,t}\right\|=\max_{t\ge 1}\sum_{j=1}^{\infty}j\left\|\bm{B}_{j,t}\right\|<\infty.
\end{eqnarray*}

\medskip

\noindent (4). By Assumption \ref{Ass1},
\begin{eqnarray*}
\sum_{t=1}^{T-1}\|\widetilde{\mathbb{B}}_{t+1}(1)-\widetilde{\mathbb{B}}_{t}(1)\| &\leq &\sum_{t=1}^{T-1} \sum_{j=0}^{\infty}\sum_{k=j+1}^{\infty}\left\|\bm{B}_{k,t+1}-\bm{B}_{k,t}\right\| \\
&=&\sum_{t=1}^{T-1} \sum_{j=1}^{\infty}j\left\|\bm{B}_{j,t+1}-\bm{B}_{j,t}\right\| < \infty.
\end{eqnarray*}

\medskip

\noindent (5). By Assumption \ref{Ass1},

  \begin{equation*}
    \begin{split}
\max_{t\ge 1} \sum_{j=0}^{\infty}\left\| \bm{\widetilde{B}}_{j,t}^r \right\| & \leq \max_{t\ge 1} \sum_{j=0}^{\infty}\sum_{k=j+1}^{\infty}\left\|\bm{B}_{k+r,t}\otimes \bm{B}_{k,t} \right\| = \max_{t\ge 1} \sum_{j=0}^{\infty}\sum_{k=j+1}^{\infty}\left\| \bm{B}_{k+r,t}\right\|\cdot\left\|\bm{B}_{k,t}\right\|\\
       & =\max_{t\ge 1} \sum_{j=1}^{\infty}j\left\|\bm{B}_{j+r,t}\right\| \cdot \left\|\bm{B}_{j,t}\right\| \leq M \max_{t\ge 1} \sum_{j=1}^{\infty}j\left\|\bm{B}_{j,t}\right\| < \infty.
    \end{split}
  \end{equation*}

\medskip

\noindent (6). Write

\begin{equation*}
  \begin{split}
  \max_{t\ge 1}   \sum_{r=1}^{\infty}\left\| \widetilde{\mathbb{B}}_t^r(1) \right\|&\leq  \max_{t\ge 1} \sum_{r=1}^{\infty}\sum_{j=0}^{\infty}\sum_{k=j+1}^{\infty} \left\|\bm{B}_{k+r,t}\right\| \cdot \left\|\bm{B}_{k,t}\right\| \\
  &= \max_{t\ge 1} \sum_{j=0}^{\infty} \sum_{k=j+1}^{\infty} \left\|  \bm{B}_{k,t}\right\| \cdot \left( \sum_{r=1}^{\infty}\left\| \bm{B}_{k+r,t}\right\|\right)\\
     &\leq  \max_{t\ge 1} \left(\sum_{r=1}^{\infty}\left\|\bm{B}_{r,t}\right\|\right)\cdot\left(\sum_{j=1}^{\infty}j\left\|\bm{B}_{j,t}\right\|\right) < \infty.
  \end{split}
\end{equation*}

\medskip

\noindent (7). Write

\begin{eqnarray*}
&&\sum_{t=1}^{T-1}\sum_{r=0}^{\infty}\sum_{j=0}^{\infty}\left\|\widetilde{\mathbb{B}}_{t+1}^r(1)-\widetilde{\mathbb{B}}_t^r(1)\right\|
        \leq \sum_{t=1}^{T-1}\sum_{r=0}^{\infty}\sum_{j=0}^{\infty} \left\|\sum_{k=j+1}^{\infty}  \bm{B}_{k+r,t+1}\otimes \bm{B}_{k,t+1}-\bm{B}_{k+r,t}\otimes \bm{B}_{k,t}\right\|\\
    & \leq &
      \sum_{t=1}^{T-1} \sum_{r=0}^{\infty}\sum_{j=0}^{\infty}\sum_{k=j+1}^{\infty}\left(\left\| \bm{B}_{k+r,t+1}-\bm{B}_{k+r,t}\right\|\cdot\left\|  \bm{B}_{k,t+1}\right\|+\left\|\bm{B}_{k+r,t}\right\|\cdot \left\| \bm{B}_{k,t+1} -\bm{B}_{k,t} \right\| \right)\\
      &= &\sum_{t=1}^{T-1} \sum_{j=0}^{\infty}\sum_{k=j+1}^{\infty}\left( \left\| \bm{B}_{k,t+1}\right\| \cdot \sum_{r=0}^{\infty}\left\|\bm{B}_{k+r,t+1}-\bm{B}_{k+r,t}\right\| + \left\| \bm{B}_{k,t+1}-\bm{B}_{k,t}\right\| \cdot \sum_{r=0}^{\infty}\left\|\bm{B}_{k+r,t}\right\| \right)\\
      &\leq & \left(\sum_{t=1}^{T-1}\sum_{r=1}^{\infty}\left\| \bm{B}_{r,t+1}-\bm{B}_{r,t}\right\|\right)\cdot\left(\max_{t\ge 1} \sum_{k=1}^{\infty}k\left\| \bm{B}_{k,t}\right\| \right)\\
      &&+\left(\sum_{t=1}^{T-1}\sum_{k=1}^{\infty}k\left\| \bm{B}_{k,t+1}-\bm{B}_{k,t}\right\|\right)\cdot \left(\max_{t\ge 1} \sum_{r=1}^{\infty}\left\| \bm{B}_{r,t}\right\| \right) < \infty.
\end{eqnarray*}

The proof is now completed.
\end{proof}

\medskip

\begin{proof}[Proof of Lemma \ref{LemmaB.4}]

\item

By Lemma \ref{LemmaB.3}, we have

\begin{equation*}
 \bm{x}_t=\bm{\mu}_t+\bm{\mathbb{B}}_t(1)\bm{\epsilon}_t+\widetilde{\mathbb{B}}_t(L)\bm{\epsilon}_{t-1}-\widetilde{\mathbb{B}}_t(L)\bm{\epsilon}_t,
\end{equation*}
which yields that

\begin{equation*}
\begin{split}
\sum_{t=1}^{T}\bm{W}_{T,t}\left(\bm{x}_t-E(\bm{x}_t)\right)=\ & \sum_{t=1}^{T}\bm{W}_{T,t}\bm{\mathbb{B}}_t(1)\bm{\epsilon}_t+\bm{W}_{T,1}\widetilde{\mathbb{B}}_1(L)\bm{\epsilon}_{0}-\bm{W}_{T,T}\widetilde{\mathbb{B}}_T(L)\bm{\epsilon}_{T} \\
     &+\sum_{t=1}^{T-1}\left(\bm{W}_{T,t+1}\widetilde{\mathbb{B}}_{t+1}(L)-\bm{W}_{T,t}\widetilde{\mathbb{B}}_{t}(L)\right)\bm{\epsilon}_t\\
:=\ &\bm{I}_{T,1}+\bm{I}_{T,2}+\bm{I}_{T,3}+\bm{I}_{T,4}.
\end{split}
\end{equation*}

For $\bm{I}_{T,1}$,
\begin{equation*}
\begin{split}
E\left\|\sum_{t=1}^{T}\bm{W}_{T,t}\bm{\mathbb{B}}_t(1)\bm{\epsilon}_t\right\|^2  &= \mathrm{tr}\left(\sum_{t=1}^{T}\bm{W}_{T,t}\bm{\mathbb{B}}_t(1)E(\bm{\epsilon}_t\bm{\epsilon}_t^\top) \bm{\mathbb{B}}_t^\top(1)\bm{W}_{T,t}^\top\right) \\
     &\leq M\sum_{t=1}^{T}\left\|\bm{W}_{T,t}\right\|^2\leq M\max_{t\ge 1}\left\|\bm{W}_{T,t}\right\|\sum_{t=1}^{T}\left\|\bm{W}_{T,t}\right\|=o(1).
\end{split}
\end{equation*}
Hence, $\|\bm{I}_{T,1}\|=o_P(1)$.

Also, $\|\bm{I}_{T,2}\| =o_P(1)$ and $\|\bm{I}_{T,3}\|=o_P(1)$, since $\max_{t\ge 1}\left\|\bm{W}_{T,t}\right\|=o(1)$, $E\|\widetilde{\mathbb{B}}_1(L)\bm{\epsilon}_0\| < \infty$ and $E \|\widetilde{\mathbb{B}}_T(L)\bm{\epsilon}_t\| < \infty$ by Lemma \ref{LemmaB.3}.

For $\bm{I}_{T,4}$,
\begin{eqnarray}\label{EqB.3}
&& \sum_{t=1}^{T-1}\left(\bm{W}_{T,t+1}\widetilde{\mathbb{B}}_{t+1}(L)-\bm{W}_{T,t}\widetilde{\mathbb{B}}_{t}(L)\right)\bm{\epsilon}_t\nonumber  \\
     &=  &\sum_{t=1}^{T-1}\left(\bm{W}_{T,t+1}-\bm{W}_{T,t}\right)\widetilde{\mathbb{B}}_{t+1}(L)\bm{\epsilon}_t+\sum_{t=1}^{T-1}\bm{W}_{T,t}\left(\widetilde{\mathbb{B}}_{t+1}(L)-\widetilde{\mathbb{B}}_{t}(L)\right)\bm{\epsilon}_t.
\end{eqnarray}
Note that for the first term on the right hand side of \eqref{EqB.3}
\begin{eqnarray*}
   && E\left\|\sum_{t=1}^{T-1}\left(\bm{W}_{T,t+1}-\bm{W}_{T,t}\right)\widetilde{\mathbb{B}}_{t+1}(L)\bm{\epsilon}_t\right\| \\
   &\leq& \max_{t\geq 1}E\left\|\widetilde{\mathbb{B}}_{t+1}(L)\bm{\epsilon}_t\right\|\cdot \sum_{t=1}^{T-1}\left\|\bm{W}_{T,t+1}-\bm{W}_{T,t}\right\|=o(1)
\end{eqnarray*}
by Lemma \ref{LemmaB.3} and the conditions on $\bm{W}_{T,t}$. For the second term on the right hand side of \eqref{EqB.3}, write
\begin{equation*}
  \begin{split}
     & E\left\|\sum_{t=1}^{T-1}
     \bm{W}_{T,t}\left(\widetilde{\mathbb{B}}_{t+1}(L)-\widetilde{\mathbb{B}}_{t}(L)\right)\bm{\epsilon}_t\right\|\\
   \leq\ & \max_{t\ge 1}E\left\|\bm{\epsilon}_t\right\|\cdot \max_{t\ge 1} \left\|\bm{W}_{T,t}\right\| \sum_{t=1}^{T-1}\|\widetilde{\mathbb{B}}_{t+1}(1)-\widetilde{\mathbb{B}}_{t}(1) \|\\
   \leq\ & M \max_{t\ge 1} \left\|\bm{W}_{T,t}\right\| \sum_{t=1}^{T-1}\sum_{j=0}^{\infty}\sum_{k=j+1}^{\infty}
     \left\|\bm{B}_{j,t+1}-\bm{B}_{j,t}\right\|\\
   =\ & M \max_{t\ge 1} \left\|\bm{W}_{T,t}\right\| \sum_{t=1}^{T-1}\sum_{j=1}^{\infty}j
    \left\|\bm{B}_{j,t+1}-\bm{B}_{j,t}\right\|=o(1).
  \end{split}
\end{equation*}
Thus, we have proved that $\|\sum_{t=1}^{T}\bm{W}_{T,t}\left(\bm{x}_t-E(\bm{x}_t)\right)\| =o_P(1)$.

\medskip

We now prove $\|\sum_{t=1}^{T}\bm{W}_{T,t}\left(\bm{x}_t\bm{x}_{t+p}^\top-E\left(\bm{x}_t\bm{x}_{t+p}^\top\right)\right)\| =o_P(1)$. Start from $p=0$ and write

\begin{equation*}
\begin{split}
   \bm{x}_t \bm{x}_t^\top=\ & \bm{\mu}_t\bm{\mu}_t^\top+\bm{\mu}_t\sum_{j=0}^{\infty}\bm{\epsilon}_{t-j}^\top \bm{B}_{j,t}^\top+\sum_{j=0}^{\infty}\bm{B}_{j,t}\bm{\epsilon}_{t-j}\bm{\mu}_t^\top+\sum_{j=0}^{\infty}\bm{B}_{j,t}\bm{\epsilon}_{t-j}\bm{\epsilon}_{t-j}^\top \bm{B}_{j,t}^\top \\
     & +\sum_{r=1}^{\infty}\sum_{j=0}^{\infty}\bm{B}_{j,t}\bm{\epsilon}_{t-j}\bm{\epsilon}_{t-j-r}^\top \bm{B}_{j+r,t}^\top+  \sum_{r=1}^{\infty}\sum_{j=0}^{\infty}\bm{B}_{j+r,t}\epsilon_{t-j-r}\bm{\epsilon}_{t-j}^\top \bm{B}_{j,t}^\top,
\end{split}
\end{equation*}
which yields that

\begin{eqnarray*}
  &&\mathrm{vec}\left[\bm{W}_{T,t}\left(\bm{x}_t\bm{x}_t^\top-E\left(\bm{x}_t\bm{x}_t^\top\right)\right)\right]\\
 &= &\left(\bm{I}_d \otimes \bm{W}_{T,t}\right)\sum_{j=0}^{\infty}\left(\bm{B}_{j,t}\otimes\bm{\mu}_t\right)\bm{\epsilon}_{t-j}+\left(\bm{I}_d \otimes \bm{W}_{T,t}\right)\sum_{j=0}^{\infty}\left(\bm{\mu}_t\otimes \bm{B}_{j,t}\right)\bm{\epsilon}_{t-j}\\
   &   &+\left(\bm{I}_d\otimes \bm{W}_{T,t} \right)\sum_{j=0}^{\infty} \left(\bm{B}_{j,t}\otimes \bm{B}_{j,t}\right)\mathrm{vec}[\bm{\epsilon}_{t-j}\bm{\epsilon}_{t-j}^\top-\bm{I}_d]\\
    &  &+\left(\bm{I}_d\otimes \bm{W}_{T,t} \right)\sum_{r=1}^{\infty}\sum_{j=0}^{\infty}(\bm{B}_{j+r,t}\otimes \bm{B}_{j,t})\mathrm{vec}[\bm{\epsilon}_{t-j}\bm{\epsilon}_{t-j-r}^\top]\\
    &  &+\left(\bm{I}_d\otimes \bm{W}_{T,t} \right)\sum_{r=1}^{\infty}\sum_{j=0}^{\infty}(\bm{B}_{j,t}\otimes \bm{B}_{j+r,t})\mathrm{vec}[\bm{\epsilon}_{t-j-r}\bm{\epsilon}_{t-j}^\top].
\end{eqnarray*}
Consequently, we obtain

\begin{equation*}
  \begin{split}
       & \left\|\sum_{t=1}^{T} \bm{W}_{T,t}\left(\bm{x}_t\bm{x}_t^\top-E\left(\bm{x}_t\bm{x}_t^\top\right)\right)\right\| \\
 \leq\ & 2\left\|\sum_{t=1}^{T}\left(\bm{I}_d \otimes \bm{W}_{T,t}\right)\sum_{j=0}^{\infty}\left(\bm{\mu}_t\otimes \bm{B}_{j,t}\right)\bm{\epsilon}_{t-j}\right\|\\
 & +\left\|\sum_{t=1}^{T}\left(\bm{I}_d \otimes \bm{W}_{T,t}\right)\sum_{j=0}^{\infty} \left(\bm{B}_{j,t}\otimes \bm{B}_{j,t}\right)\mathrm{vec}[\bm{\epsilon}_{t-j}\bm{\epsilon}_{t-j}^\top-\bm{I}_d]\right\|\\
       & +2\left\|\sum_{t=1}^{T}\left(\bm{I}_d\otimes \bm{W}_{T,t} \right)\sum_{r=1}^{\infty}\sum_{j=0}^{\infty}(\bm{B}_{j+r,t}\otimes \bm{B}_{j,t})\mathrm{vec}[\bm{\epsilon}_{t-j}\bm{\epsilon}_{t-j-r}^\top]\right\|\\
 :=\ &I_{T,5}+I_{T,6}+I_{T,7}.\\
  \end{split}
\end{equation*}

By the development of $\sum_{t=1}^{T}\bm{W}_{T,t}\left(\bm{x}_t-E(\bm{x}_t)\right)$, it is easy to know that $I_{T,5}$ is $o_P(1)$.

For $I_{T,6}$, by Lemma \ref{LemmaB.3}, write

\begin{equation*}
  \begin{split}
     I_{T,6}\leq\ &\left\|\sum_{t=1}^{T}\left(\bm{I}_d \otimes \bm{W}_{T,t}\right)\mathbb{B}_t^0(1)\left(\mathrm{vec}(\bm{\epsilon}_t\bm{\epsilon}_t^\top)-\mathrm{vec}(\bm{I}_d)\right)\right\| \\
       & +\left\|\left(\bm{I}_d \otimes \bm{W}_{T,1}\right)\widetilde{\mathbb{B}}_1^0(L)\mathrm{vec}(\bm{\epsilon}_0 \bm{\epsilon}_0^\top)\right\|+ \left\|\left(\bm{I}_d \otimes \bm{W}_{T,T}\right)\widetilde{\mathbb{B}}_T^0(L)\mathrm{vec}(\bm{\epsilon}_T\bm{\epsilon}_T^\top)\right\|\\
       & +\left\|\sum_{t=1}^{T-1}\left(\left(\bm{I}_d \otimes \bm{W}_{T,t+1}\right)\widetilde{\mathbb{B}}_{t+1}^0(L)- \left(\bm{I}_d \otimes \bm{W}_{T,t}\right)\widetilde{\mathbb{B}}_{t}^0(L) \right)\mathrm{vec}\left(\bm{\epsilon}_{t}\bm{\epsilon}_{t}^\top\right)\right\| \\
     :=\ &I_{T,61}+I_{T,62}+I_{T,63}+I_{T,64}.\\
  \end{split}
\end{equation*}
Let $\bm{Z}_t=\mathrm{vec}(\bm{\epsilon}_t\bm{\epsilon}_t^\top-\bm{I}_d)$ for notational simplicity. By Assumption \ref{Ass2}, for any $\nu > 0$, there exists $\lambda_{\nu} > 0$ that for all $t$, $ E\left[\left\|\bm{Z}_t\right\| \cdot I\left(\left\|\bm{Z}_t \right\|> \lambda_{\nu} \right)  \right] < \nu$. Then let further $\bm{Z}_{1,t}=\bm{Z}_t\cdot I\left(\left\|\bm{Z}_t\right\| \leq \lambda_\nu \right)$ and $\bm{Z}_{2,t}=\bm{Z}_t-\bm{Z}_{1,t}$. We are now ready to write

\begin{eqnarray}\label{EqB.4}
I_{T,61}&\leq & \left\|\sum_{t=1}^{T}\left(\bm{I}_d \otimes \bm{W}_{T,t}\right){\mathbb{B}}_t^0(1)\left(\bm{Z}_{1,t}-E\left( \bm{Z}_{1,t}|\mathcal{F}_{t-1}\right)\right)\right\|\nonumber \\
    & &+\left\|\sum_{t=1}^{T}\left(\bm{I}_d \otimes \bm{W}_{T,t}\right){\mathbb{B}}_t^0(1)\left(\bm{Z}_{2,t}-E\left( \bm{Z}_{2,t}|\mathcal{F}_{t-1}\right)\right)\right\|.
\end{eqnarray}
For the first term on the right hand side of \eqref{EqB.4}, by Lemma \ref{LemmaB.3}, write

\begin{eqnarray*}
 && E\left\|\sum_{t=1}^{T}\left(\bm{I}_d \otimes \bm{W}_{T,t}\right){\mathbb{B}}_t^0(1)\left(\bm{Z}_{1,t}-E\left(\bm{Z}_{1,t}|\mathcal{F}_{t-1}\right)\right)\right\|^2 \\
  &= &\mathrm{tr}\Bigg\{ \sum_{t=1}^{T}\sum_{s=1}^{T}(\bm{I}_d \otimes \bm{W}_{T,t}){\mathbb{B}}_t^0(1)E\left(\left(\bm{Z}_{1,t}-E\left(\bm{Z}_{1,t}|\mathcal{F}_{t-1}\right)\right)\left(\bm{Z}_{1,s}-E\left(\bm{Z}_{1,s}|\mathcal{F}_{s-1}\right)\right)^\top \right)\\
  &&\cdot{\mathbb{B}}_s^{0,\top}(1)(\bm{I}_d \otimes \bm{W}_{T,s}^\top)\Bigg\} \\
    &  \leq & M\left(\max_{t\ge 1}\sum_{j=0}^{\infty} \left\|\bm{B}_{j,t} \right\|^2\right)^2 \sum_{t=1}^{T} \left\|\bm{W}_{T,t}\right\|^2 E\left\|\bm{Z}_{1,t}\right\|^2 \leq M \lambda_{\nu}^2 \max_{t\ge 1}\left\| \bm{W}_{T,t}\right\|\sum_{t=1}^{T}\left\|\bm{W}_{T,t}\right\|=o(1).
\end{eqnarray*}
For the second term on the right hand side of \eqref{EqB.4}, we have

\begin{equation*}
E\left\|\sum_{t=1}^{T}\left(\bm{I}_d \otimes \bm{W}_{T,t}\right){\mathbb{B}}_t^0(1)\left(\bm{Z}_{2,t}-E\left( \bm{Z}_{2,t}|\mathcal{F}_{t-1}\right)\right)\right\|\leq M \sum_{t=1}^{T}\left\|\bm{W}_{T,t}\right\| E\left\|\bm{Z}_{2,t}\right\|\le M\nu.
\end{equation*}
By choosing $\nu$ sufficiently small, and then it follows that $I_{T,61}=o_P(1)$. Similar to the proof of $\bm{I}_{T,2}$ and $\bm{I}_{T,3}$, we can prove that $I_{T,62}$ and $I_{T,63}$ are $o_P(1)$. For $I_{T,64}$, we have
\begin{equation*}
  \begin{split}
I_{T,64}\leq\ &\left\|\sum_{t=1}^{T-1}\left( \bm{I}_d\otimes (\bm{W}_{T,t+1}-\bm{W}_{T,t})\right)\widetilde{\mathbb{B}}_{t+1}^0(L)\mathrm{vec}\left(\bm{\epsilon}_t\bm{\epsilon}_t^\top\right)  \right\|   \\
         &+ \left\|\sum_{t=1}^{T-1}\left( \bm{I}_d\otimes \bm{W}_{T,t}\right)\left(\widetilde{\mathbb{B}}_{t+1}^0(L)-\widetilde{\mathbb{B}}_{t}^0(L)\right)\mathrm{vec}\left(\bm{\epsilon}_t\bm{\epsilon}_t^\top\right)\right\|. \\
  \end{split}
\end{equation*}
Similar to the proof of $\bm{I}_{T,4}$, by Lemma \ref{LemmaB.3}, we can prove that $I_{T,64}$ is $o_P(1)$. Then we can conclude that $I_{T,6}=o_P(1)$.

For $I_{T,7}$, using Lemma \ref{LemmaB.3}, we have

\begin{equation*}
  \begin{split}
  I_{T,7}\leq\ & \left\|\sum_{t=1}^{T}(\bm{I}_d\otimes \bm{W}_{T,t})\sum_{r=1}^{\infty}\bm{\mathbb{B}}_t^r(1)\mathrm{vec}\left(\bm{\epsilon}_t\bm{\epsilon}_{t-r}^\top\right)\right\|+\left\|(\bm{I}_d\otimes \bm{W}_{T,1})\sum_{r=1}^{\infty}\widetilde{\mathbb{B}}_1^r(L)\mathrm{vec}\left(\bm{\epsilon}_0\bm{\epsilon}_{-r}^\top\right) \right\|\\
       & + \left\|(\bm{I}_d\otimes \bm{W}_{T,T})\sum_{r=1}^{\infty}\widetilde{\mathbb{B}}_T^r(L)\mathrm{vec}\left(\bm{\epsilon}_T\bm{\epsilon}_{T-r}^\top\right) \right\|\\
       & +\left\|\sum_{t=1}^{T-1}\sum_{r=1}^{\infty}\left((\bm{I}_d\otimes \bm{W}_{T,t+1})\widetilde{\mathbb{B}}_{t+1}^r(L)-(\bm{I}_d\otimes \bm{W}_{T,t})\widetilde{\mathbb{B}}_{t}^r(L) \right)\mathrm{vec}\left(\bm{\epsilon}_t\bm{\epsilon}_{t-r}^\top\right) \right\|\\
  :=\ & I_{T,71}+I_{T,72}+I_{T,73}+I_{T,74}.\\
  \end{split}
\end{equation*}
For $I_{T,71}$, by Lemma \ref{LemmaB.3}, we further write

\begin{equation*}
  \begin{split}
  & E\left\|\sum_{t=1}^{T}(\bm{I}_d\otimes \bm{W}_{T,t})\sum_{r=1}^{\infty}\bm{\mathbb{B}}_t^r(1)\mathrm{vec}\left(\bm{\epsilon}_t\bm{\epsilon}_{t-r}^\top\right)\right\|^2 \\
=\  &E \, \mathrm{tr}\left\{\sum_{t=1}^{T}\sum_{s=1}^{T}(\bm{I}_d\otimes \bm{W}_{T,t})\sum_{r,k=1}^{\infty}\bm{\mathbb{B}}_t^r(1)\mathrm{vec}\left(\bm{\epsilon}_t\bm{\epsilon}_{t-r}^\top\right)\mathrm{vec}^\top\left(\bm{\epsilon}_s\bm{\epsilon}_{s-k}^\top\right) \mathbb{B}_s^{k,\top}(1)(\bm{I}_d\otimes \bm{W}_{T,s}^\top) \right\}  \\
\leq\ & M \sum_{t=1}^{T}\left\|\bm{W}_{T,t}\right\|^2 \sum_{r=1}^{\infty}\left\|\bm{\mathbb{B}}_t^r(1)\right\|^2\leq M \left(\max_{t\ge 1}\sum_{r=1}^{\infty}\left\|\bm{\mathbb{B}}_t^r(1)\right\|\right)^2\max_{t\ge 1}\left\|\bm{W}_{T,t}\right\|\sum_{t=1}^{T}\left\| \bm{W}_{T,t}\right\|=o(1).
  \end{split}
\end{equation*}
In addition, similar to the proof of $\bm{I}_{T,2}$ to $\bm{I}_{T,4}$, we can show that $I_{T,72}$ to $I_{T,74}$ are $o_P(1)$.

Combining the above results,  we have proved the case of $p=0$.

\medskip

Similar to the development of $p=0$, we can consider the case with $p\ge 1$ given $p$ is a fixed number. The details are omitted due to similarity. The proof is now complete.
\end{proof}

\medskip

\begin{proof}[Proof of Lemma \ref{LemmaB.5}]

\item

In the following proof, we cover the interval $[a,b]$ by a finite number of subintervals $\{S_l\}$, which are centered at $s_l$ with the length denoted by $\delta_T$. Denote the number of these intervals by $N_T$, then $N_T=O(\delta_T^{-1})$. In addition, let $\delta_T=O(T^{-1}\gamma_T )$  with $\gamma_T=\sqrt{d_T \log T}$.

Write

\begin{equation*}
\begin{split}
  \sup_{\tau\in[a,b]} \left\|\sum_{t=1}^{T}\bm{W}_{T,t}(\tau)\bm{\mathbb{B}}_{t}(1)\bm{\epsilon}_t\right\|
   \leq\ &\max _{1 \leq l \leq N_{T}} \left\|\sum_{t=1}^{T} \bm{W}_{T,t}(s_l)\bm{\mathbb{B}}_{t}(1)\bm{\epsilon}_t\right\| \\
     & + \max_{1\leq l \leq N_T} \sup_{\tau \in S_l} \left\|\sum_{t=1}^{T}\left(\bm{W}_{T,t}(\tau)-\bm{W}_{T,t}(s_l)\right) \bm{\mathbb{B}}_{t}(1)\bm{\epsilon}_t \right\|\\
   :=\ & J_{T,1}+J_{T,2}.
\end{split}
\end{equation*}

For $J_{T,2}$, since $\bm{W}_{T,t}(\cdot)$ is Lipschitz continuous and $\max_{t \geq 1}\left\|\bm{\mathbb{B}}_{t}(1)\right\| < \infty$ by Assumption \ref{Ass1}, we have
\begin{eqnarray*}
  E|J_{T,2}| &\leq& \sum_{t=1}^{T} \max_{1\leq l \leq N_T} \sup_{\tau \in S_l}\left\|\bm{W}_{T,t}(\tau)-\bm{W}_{T,t}(s_l)\right\| E\left\|\bm{\mathbb{B}}_{t}(1)\bm{\epsilon}_t\right\|\\
   &\leq& M T \delta_T \max_{t \geq 1}E\|\mathbb{B}_t(1)\bm{\epsilon}_t\|=O(\gamma_T).
\end{eqnarray*}

For $J_{T,1}$, we apply the truncation method. Define $\bm{\epsilon}_t^\prime=\bm{\epsilon}_t I(\|\bm{\epsilon}_t\|\leq T^{\frac{1}{\delta}})$ and $\bm{\epsilon}_t^{\prime\prime}=\bm{\epsilon}_t-\bm{\epsilon}_t^\prime$, where $\delta$ is defined in Assumption \ref{Ass2}, and $I(\cdot)$ is the indicator function. Write
\begin{eqnarray*}
J_{T,1}&= &\max_{1\leq l \leq N_T} \left\| \sum_{t=1}^{T} \bm{W}_{T,t}(s_l) \bm{\mathbb{B}}_t(1)\left(\bm{\epsilon}_t^\prime+\bm{\epsilon}_t^{\prime\prime}-E\left(\bm{\epsilon}_t^\prime+\bm{\epsilon}_t^{\prime\prime}|\mathcal{F}_{t-1}\right)\right) \right\| \\
     &\leq &\max_{1\leq l \leq N_T}\left\|\sum_{t=1}^{T}  \bm{W}_{T,t}(s_l) \bm{\mathbb{B}}_t(1)\left(\bm{\epsilon}_t^\prime-E(\bm{\epsilon}_t^\prime|\mathcal{F}_{t-1})\right)\right\|+\max_{1\leq l \leq N_T}\left\| \sum_{t=1}^{T} \bm{W}_{T,t}(s_l)\bm{\mathbb{B}}_t(1)\bm{\epsilon}_t^{\prime\prime}\right\|\\
    & &+\max_{1\leq l \leq N_T}\left\| \sum_{t=1}^{T}  \bm{W}_{T,t}(s_l) \bm{\mathbb{B}}_t(1)E(\bm{\epsilon}_t^{\prime\prime}|\mathcal{F}_{t-1})\right\|\\
     &:= &J_{T,11}+J_{T,12}+J_{T,13}.
\end{eqnarray*}

Start from $J_{T,12}$. By H\"older's inequality and Markov's inequality,
\begin{eqnarray*}
E|J_{T,12}| &\leq & O(1)d_T \sum_{t=1}^{T}E\left\| \bm{\epsilon}_t^{\prime\prime}\right\| = O(1)d_T \sum_{t=1}^{T}E \| \bm{\epsilon}_t I(\|\bm{\epsilon}_t\|\ge T^{\frac{1}{\delta}}) \|\\
&\le &O(1)d_T \sum_{t=1}^{T}\left\{ E \| \bm{\epsilon}_t \|^\delta\right\}^{\frac{1}{\delta}}\left\{ E \|I(\|\bm{\epsilon}_t\|\ge T^{\frac{1}{\delta}}) \|\right\}^{\frac{\delta-1}{\delta}}\\
&= &O(1)d_T \sum_{t=1}^{T}\left\{ E \| \bm{\epsilon}_t \|^\delta\right\}^{\frac{1}{\delta}}\left\{ \Pr(\|\bm{\epsilon}_t\|\ge T^{\frac{1}{\delta}}) \right\}^{\frac{\delta-1}{\delta}}\\
&\le &O(1)d_T \sum_{t=1}^{T}\left\{ E \| \bm{\epsilon}_t \|^\delta\right\}^{\frac{1}{\delta}}\left\{\frac{ E \| \bm{\epsilon}_t \|^\delta}{T}\right\}^{\frac{\delta-1}{\delta}}\\
&=&O(T^\frac{1}{\delta}d_T)=o\left( \sqrt{d_T \log T}\right),
\end{eqnarray*}
where the second inequality follows from H\"older's inequality, and the third inequality follows from Markov's inequality. Similarly, $ J_{T,13} =O_P(T^\frac{1}{\delta}d_T)=o_P\left( \sqrt{d_T \log T}\right)$.

We now turn to $J_{T,11}$. For notational simplicity, let $\bm{Y}_t= \bm{W}_{T,t}(s_l) \bm{\mathbb{B}}_t(1)\left(\bm{\epsilon}_t^\prime-E(\bm{\epsilon}_t^\prime|\mathcal{F}_{t-1})\right)$ for $1\le t\le T$ and $A_T=2T^{\frac{1}{\delta}}d_T\max_{t\ge 1}\left\|\bm{\mathbb{B}}_t(1)\right\| $. Simple algebra shows that $\left\|\bm{Y}_t\right\|\leq A_T$ uniformly in $t$ and $s_l$. By Assumption \ref{Ass2} and the first condition in the body of this lemma,
\begin{equation*}
\max_{1\leq l\leq N_T}\sum_{t=1}^{T}E\left(\left\|\bm{Y}_t\right\|^2|\mathcal{F}_{t-1}\right) \leq M d_T \max_{1\leq l\leq N_T}\sum_{t=1}^{T}\left\|\bm{W}_{T,t}(s_l)\right\| E\left(\left\|\bm{\epsilon}_t\right\|^2|\mathcal{F}_{t-1}\right)
     =O_{a.s.}(d_T).
\end{equation*}

By Lemma \ref{LemmaB.2} and $T^{\frac{2}{\delta}}d_T \log T \to 0$, choose some $\beta>0$ (such as $\beta=4$), and write
\begin{eqnarray*}
&&\Pr\left(J_{T,11} > \sqrt{\beta M} \gamma_T \right)\\
&= &\Pr\left( J_{T,11} > \sqrt{\beta M} \gamma_T , \ \max_{1\leq l \leq N_T}\left\|\sum_{t=1}^{T}E (\bm{Y}_t \bm{Y}_t^\top |\mathcal{F}_{t-1}) \right\|\leq M d_T\right)\\
&&+\Pr\left( J_{T,11} > \sqrt{\beta M} \gamma_T , \ \max_{1\leq l \leq N_T}\left\|\sum_{t=1}^{T}E (\bm{Y}_t \bm{Y}_t^\top |\mathcal{F}_{t-1}) \right\| > M d_T\right)\\
&\leq & \Pr\left( J_{T,11} > \sqrt{\beta M} \gamma_T , \ \max_{1\leq l \leq N_T}\left\|\sum_{t=1}^{T}E (\bm{Y}_t \bm{Y}_t^\top |\mathcal{F}_{t-1}) \right\|\leq M d_T\right)\\
&&+\Pr\left(\max_{1\leq l \leq N_T}\left\|\sum_{t=1}^{T}E (\bm{Y}_t \bm{Y}_t^\top |\mathcal{F}_{t-1}) \right\| > M d_T\right)\\
&\leq & N_T \exp\left(-\frac{\beta M\gamma_T^2 }{2(Md_T+\gamma_T 2 A_T )}\right)+0\\
&\leq &N_T \exp\left(-\frac{\beta}{2} \log T \right)=O(\delta_T^{-1}) T^{-\frac{\beta}{2}}\to 0.
\end{eqnarray*}

Based on the above development, the proof is now complete.
\end{proof}

\medskip

\begin{proof}[Proof of Lemma \ref{LemmaB.6}]

\item

(1). Similar to the proof of Lemma \ref{LemmaB.5}, we use a finite number of subintervals $\{S_l\}$ to cover the interval $[a,b]$, which are centered at $s_l$ with the length $\delta_T$. Denote the number of these intervals by $N_T$ then $N_T=O(\delta_T^{-1})$. In addition, let $\delta_T=O(T^{-1}\gamma_T)$ with $\gamma_T=\sqrt{d_T \log T}$.

\begin{equation*}
\begin{split}
  &\sup _{\tau \in[a,b]} \left\| \sum_{t=1}^{T}(\bm{I}_d\otimes \bm{W}_{T,t}(\tau))\bm{\mathbb{B}}_{t}^0(1)\left(\mathrm{vec}\left(\bm{\epsilon}_t \bm{\epsilon}_t^\top\right)- \mathrm{vec}(\bm{I}_d) \right) \right\| \\
 \leq\ &  \max _{1 \leq l \leq N_{T}} \left\| \sum_{t=1}^{T}(\bm{I}_d\otimes \bm{W}_{T,t}(s_l))\bm{\mathbb{B}}_{t}^0(1)\left(\mathrm{vec}\left(\bm{\epsilon}_t \bm{\epsilon}_t^\top\right)- \mathrm{vec}(\bm{I}_d) \right)\right\| \\
     & + \max_{1\leq l \leq N_T} \sup_{\tau \in S_l} \left\| \sum_{t=1}^{T}\left(\bm{I}_d\otimes (\bm{W}_{T,t}(\tau)-\bm{W}_{T,t}(s_l))\right)\bm{\mathbb{B}}_{t}^0(1)\left(\mathrm{vec}\left(\bm{\epsilon}_t \bm{\epsilon}_t^\top\right)- \mathrm{vec}(\bm{I}_d) \right)  \right\|\\
    :=\ &J_{T,1}+J_{T,2}.
\end{split}
\end{equation*}

Start from $J_{T,2}$. Similar to the proof of Lemma \ref{LemmaB.5}, since

\begin{eqnarray*}
\left\|\bm{\mathbb{B}}_{t}^0(1)\right\| \leq \sum_{j=0}^{\infty}\left\|\bm{B}_{j,t}\right\|^2 \leq\left(\sum_{j=0}^{\infty}\left\|\bm{B}_{j,t}\right\|\right)^2 <\infty
\end{eqnarray*}
by Assumption \ref{Ass1}, we have
\begin{equation*}
  E|J_{T,2}|\leq M T \delta_T \max_{t\ge 1}E\left\|\bm{\mathbb{B}}_{t}^0(1)\left(\mathrm{vec}\left(\bm{\epsilon}_t \bm{\epsilon}_t^\top\right)- \mathrm{vec}(\bm{I}_d) \right)\right\|=O(\gamma_T).
\end{equation*}

We then apply the truncation method. Define $\bm{u}_t=\bm{\mathbb{B}}_{t}^0(1)\left(\mathrm{vec}\left(\bm{\epsilon}_t \bm{\epsilon}_t^\top\right)- \mathrm{vec}\left(\bm{I}_d\right) \right)$, $\bm{u}_t^\prime=\bm{u}_t I(\|\bm{u}_t\|\leq T^{\frac{2}{\delta}} )$ and $\bm{u}_t^{\prime\prime}=\bm{u}_t-\bm{u}_t^\prime$. For $J_{T,1}$, write

\begin{equation*}
\begin{split}
  J_{T,1}=\ &\max_{1\leq l \leq N_T} \left\|\sum_{t=1}^{T}(\bm{I}_d\otimes \bm{W}_{T,t}(s_l))\left(\bm{u}_t^\prime+\bm{u}_t^{\prime\prime}-E(\bm{u}_t^\prime+\bm{u}_t^{\prime\prime}|\mathcal{F}_{t-1}) \right) \right\| \\
   \leq\  &\max_{1\leq l \leq N_T}\left\| \sum_{t=1}^{T}(\bm{I}_d\otimes \bm{W}_{T,t}(s_l)) \left(\bm{u}_t^\prime-E(\bm{u}_t^\prime|\mathcal{F}_{t-1})\right)\right\|+ \max_{1\leq l \leq N_T}\left\|\sum_{t=1}^{T} (\bm{I}_d\otimes \bm{W}_{T,t}(s_l))\bm{u}_t^{\prime\prime}\right\|\\
     &+\max_{1\leq l \leq N_T}\left\|\sum_{t=1}^{T}(\bm{I}_d\otimes \bm{W}_{T,t}(s_l)) E(\bm{u}_t^{\prime\prime}|\mathcal{F}_{t-1}) \right\|\\
    :=\ &J_{T,11}+J_{T,12}+J_{T,13}.\\
\end{split}
\end{equation*}
As in the proof of Lemma \ref{LemmaB.5}, we can show that $J_{T,12}=o_P(\sqrt{d_T\log T})$ and $J_{T,13}=o_P(\sqrt{d_T\log T})$ respectively. We focus on $J_{T,11}$ below.

For any $1\leq l \leq N_T$, let $\bm{Y}_t= (\bm{I}_d\otimes \bm{W}_{T,t}(s_l)) (\bm{u}_{t}^\prime-E(\bm{u}_{t}^\prime|\mathcal{F}_{t-1})) $. We then have $E\left(\bm{Y}_t|\mathcal{F}_{t-1}\right)=0$ and $\left\|\bm{Y}_t\right\|\leq 2 T^{2/\delta}d_T \max_t\left\|\bm{\mathbb{B}}_t^0(1)\right\|$. Since $ \max_{t\ge 1} E\left(\left\|\bm{\epsilon}_t\right\|^4|\mathcal{F}_{t-1}\right)< \infty$ a.s., we can write

\begin{equation*}
\max_{1\leq l \leq N_T}\left\|\sum_{t=1}^{T}E (\bm{Y}_t \bm{Y}_t^\top|\mathcal{F}_{t-1}) \right\|
\leq M d_T \max_{t\ge 1} E\left( \left\|\bm{u}_t\right\|^2|\mathcal{F}_{t-1}\right) \max_{1\leq l \leq N_T} \sum_{t=1}^{T} \left\|\bm{W}_{T,t}(s_l)\right\|=O_{a.s.}\left(d_T\right).
\end{equation*}
Similar to Lemma \ref{LemmaB.2}, choose $\beta>0$ (such as $\beta=4$). In view of the fact that $T^\frac{4}{\delta}d_T\log T \rightarrow 0$, we write

\begin{equation*}
  \begin{split}
\Pr\left(J_{T,11} > \sqrt{\beta M} \gamma_T \right)
  =\ &\Pr\left( J_{T,11} > \sqrt{\beta M} \gamma_T , \max_{1\leq l \leq N_T}\left\|\sum_{t=1}^{T}E (\bm{Y}_t \bm{Y}_t^\top |\mathcal{F}_{t-1}) \right\|\leq M d_T\right)\\
  &+\Pr\left( J_{T,11} > \sqrt{\beta M} \gamma_T , \max_{1\leq l \leq N_T}\left\|\sum_{t=1}^{T}E (\bm{Y}_t \bm{Y}_t^\top |\mathcal{F}_{t-1}) \right\| > M d_T\right)\\
   \leq\ & \Pr\left( J_{T,11} > \sqrt{\beta M} \gamma_T , \max_{1\leq l \leq N_T}\left\|\sum_{t=1}^{T}E (\bm{Y}_t \bm{Y}_t^\top |\mathcal{F}_{t-1}) \right\|\leq M d_T\right)\\
   &+\Pr\left(\max_{1\leq l \leq N_T}\left\|\sum_{t=1}^{T}E (\bm{Y}_t \bm{Y}_t^\top |\mathcal{F}_{t-1}) \right\| > M d_T\right)\\
     \leq\  & N_T \exp\left(-\frac{\beta M\gamma_T^2 }{2(Md_T+M\gamma_T T^\frac{2}{\delta}d_T )}\right)+0\\
     \leq\  &N_T \exp\left(-\frac{\beta}{2} \log T \right)=N_T T^{-\frac{\beta}{2}}=o(1).\\
  \end{split}
\end{equation*}
The first result then follows.

\medskip

\noindent (2). Let $\{S_l\}$ be a finite number of subintervals covering the interval $[a,b]$, which are centered at $s_l$ with the length $\delta_T$. Denote the number of these intervals by $N_T$ then $N_T=O(\delta_T^{-1})$. In addition, let $\delta_T=O(T^{-1}\gamma_T)$ with $\gamma_T=\sqrt{d_T \log T}$. Then
\begin{equation*}
\begin{split}
  \sup _{\tau \in [a,b]} \left\|\sum_{t=1}^{T}\left(\bm{I}_d\otimes \bm{W}_{T,t}(\tau)\right)\bm{\zeta}_{t}\bm{\epsilon}_{t} \right\| \leq\ &\max _{1 \leq l \leq N_{T}} \left\| \sum_{t=1}^{T}\left(\bm{I}_d\otimes \bm{W}_{T,t}(s_l)\right)\bm{\zeta}_{t}\bm{\epsilon}_{t}\right\| \\
     & + \max_{1\leq l \leq N_T} \sup_{\tau \in S_l} \left\|\sum_{t=1}^{T}\left( \bm{I}_d \otimes (\bm{W}_{T,t}(\tau)-\bm{W}_{T,t}(s_l)) \right)\bm{\zeta}_{t}\bm{\epsilon}_{t} \right\|\\
     :=\ &J_{T,3}+J_{T,4}.\\
\end{split}
\end{equation*}

Consider $J_{T,4}$. By the fact that $|\mathrm{tr}(\bm{A})|\leq d \left\|\bm{A}\right\|$ for any $d\times d$ matrix $\bm{A}$ and Assumption \ref{Ass1},
\begin{eqnarray*}
  E\left\|\bm{\zeta}_{t}\bm{\epsilon}_{t}\right\| &=& E\left\|\sum_{r=1}^{\infty}\left(\sum_{s=0}^{\infty}\bm{B}_{s+r,t}\otimes\bm{B}_{s,t}\right)\mathrm{vec}\left(\bm{\epsilon}_t \bm{\epsilon}_{t-r}^\top\right) \right\| \\
   &\leq& \left(E\left\|\sum_{r=1}^{\infty}\left(\sum_{s=0}^{\infty}\bm{B}_{s+r,t}\otimes\bm{B}_{s,t}\right)\mathrm{vec}\left(\bm{\epsilon}_t \bm{\epsilon}_{t-r}^\top\right) \right\|^2\right)^{1/2}\\
   &\leq&\left\{\mathrm{tr}\left[\sum_{r=1}^{\infty}\left(\sum_{s=0}^{\infty}\bm{B}_{s+r,t}\otimes \bm{B}_{s,t} \right)\cdot\left(\bm{I}_d\otimes\bm{I}_d\right)\cdot\left(\sum_{s=0}^{\infty}\bm{B}_{s+r,t}\otimes \bm{B}_{s,t}\right)^\top\right]\right\}^{1/2}\\
   &\leq&M \left(\sum_{r=1}^{\infty} \left\|\sum_{s=0}^{\infty}\bm{B}_{s+r,t}\otimes \bm{B}_{s,t}\right\|^2\right)^{1/2}\\
   &\leq&M \left(\sum_{r=1}^{\infty} \left(\sum_{s=0}^{\infty}\left\|\bm{B}_{s+r,t}\right\|^2 \right)\cdot \left(\sum_{s=0}^{\infty}\left\|\bm{B}_{s,t}\right\|^2\right)\right)^{1/2}\\
   &=&M \left(\left(\sum_{r=1}^{\infty}r\left\|\bm{B}_{r,t}\right\|^2 \right)\cdot \left(\sum_{s=0}^{\infty}\left\|\bm{B}_{s,t}\right\|^2\right)\right)^{1/2}\\
   &\leq&M \left(\left(\sum_{r=1}^{\infty}r\left\|\bm{B}_{r,t}\right\| \right)^2\cdot \left(\sum_{s=0}^{\infty}\left\|\bm{B}_{s,t}\right\|^2\right)\right)^{1/2}<\infty.\\
\end{eqnarray*}
Similarly, we have
\begin{equation*}
  E|J_{T,4}|\leq MT\delta_T \max_{t\ge 1}E\left\|\bm{\zeta}_{t}\bm{\epsilon}_t\right\|=O(\gamma_T).
\end{equation*}

Before investigating $J_{T,3}$, we first show that

\begin{eqnarray}\label{EqB.5}
\max_{1\leq l\leq N_T}\sum_{t=1}^{T}\left\|\bm{W}_{T,t}(s_l)\right\|E\left(\left\|\bm{\zeta}_t\bm{\epsilon}_t\right\|^2 |\mathcal{F}_{t-1} \right)=O_P(1).
\end{eqnarray}
Note that

\begin{equation*}
  \begin{split}
       & \max_{1\leq l\leq N_T}\sum_{t=1}^{T}\left\|\bm{W}_{T,t}(s_l)\right\|E\left(\left\|\bm{\zeta}_t\bm{\epsilon}_t\right\|^2 |\mathcal{F}_{t-1} \right) \\
      \leq\ & \max_{1\leq l\leq N_T}\left|\sum_{t=1}^{T}\left\|\bm{W}_{T,t}(s_l)\right\|\left(E\left(\left\|\bm{\zeta}_t\bm{\epsilon}_t\right\|^2 |\mathcal{F}_{t-1} \right)-E\left\|\bm{\zeta}_t\bm{\epsilon}_t\right\|^2\right) \right|\\
      &+\max_{1\leq l\leq N_T}\sum_{t=1}^{T}\left\|\bm{W}_{T,t}(s_l)\right\|E\left\|\bm{\zeta}_t\bm{\epsilon}_t\right\|^2
  \end{split}
\end{equation*}
and $\max_{1\leq l\leq N_T}\sum_{t=1}^{T}\left\|\bm{W}_{T,t}(s_l)\right\|E\left\|\bm{\zeta}_t\bm{\epsilon}_t\right\|^2=O(1)$. Thus, to prove \eqref{EqB.5}, it is sufficient to show

\begin{eqnarray*}
\max_{1\leq l\leq N_T}\left|\sum_{t=1}^{T}\left\|\bm{W}_{T,t}(s_l)\right\|\left(E\left(\left\|\bm{\zeta}_t\bm{\epsilon}_t\right\|^2 |\mathcal{F}_{t-1} \right)-E\left\|\bm{\zeta}_t\bm{\epsilon}_t\right\|^2\right)\right|=o_P(1).
\end{eqnarray*}
In order to do so, we write

\begin{equation*}
\begin{split}
     &\max_{1\leq l\leq N_T}\left|\sum_{t=1}^{T}\left\|\bm{W}_{T,t}(s_l)\right\|\left(E\left(\left\|\bm{\zeta}_t\bm{\epsilon}_t\right\|^2 |\mathcal{F}_{t-1} \right)-E\left\|\bm{\zeta}_t\bm{\epsilon}_t\right\|^2\right) \right|\\
    =\ &\max_{1\leq l\leq N_T}\left|\sum_{t=1}^{T}\left\|\bm{W}_{T,t}(s_l)\right\|\mathrm{tr}\left(\sum_{r,r^*=1}^{\infty}\bm{\mathbb{B}}_{t}^r(1) (\bm{\epsilon}_{t-r}\bm{\epsilon}_{t-r^*}^\top \otimes \bm{I}_d)\bm{\mathbb{B}}_{t}^{r^*,\top}(1)-\sum_{r=1}^{\infty}\bm{\mathbb{B}}_{t}^r(1) \bm{\mathbb{B}}_{t}^{r,\top}(1)\right)\right|\\
     \leq\ &d^2\cdot \max_{1\leq l\leq N_T}\left\| \sum_{t=1}^{T}\left\|\bm{W}_{T,t}(s_l)\right\| \sum_{r=1}^{\infty}\left(\bm{\mathbb{B}}_t^r(1)\otimes \bm{\mathbb{B}}_t^r(1)\right)\left(\mathrm{vec}\left(\bm{\epsilon}_{t-r}\bm{\epsilon}_{t-r}^\top \otimes \bm{I}_d\right)-\mathrm{vec}(\bm{I}_{d^2})\right) \right\| \\
       & + 2d^2\cdot\max_{1\leq l\leq N_T}\left\|\sum_{t=1}^{T}\left\|\bm{W}_{T,t}(s_l)\right\| \sum_{r=1}^{\infty}\sum_{j=1}^{\infty}\left(\bm{\mathbb{B}}_t^{r+j}(1)\otimes \bm{\mathbb{B}}_t^r(1) \right) \mathrm{vec}\left(\bm{\epsilon}_{t-r}\bm{\epsilon}_{t-r-j}^\top \otimes \bm{I}_d\right) \right\|\\
      :=\ &J_{T,5}+J_{T,6}.
\end{split}
\end{equation*}

Let $\mathbb{F}_{r,t}(L)=\sum_{j=1}^{\infty}\bm{\mathbb{B}}_{t}^{r+j}(1)\otimes \bm{\mathbb{B}}_{t}^j(1) L^j$. Similar to the second result of Lemma \ref{LemmaB.3}, we have
\begin{equation}\label{EqB.6}
  \mathbb{F}_{r,t}(L)=\mathbb{F}_{r,t}(1)-(1-L)\widetilde{\mathbb{F}}_{r,t}(L)
\end{equation}
where $\widetilde{\mathbb{F}}_{r,t}(L)=\sum_{j=1}^{\infty}\widetilde{\mathbb{F}}_{rj,t}L^j$ and $\widetilde{\mathbb{F}}_{rj,t}=\sum_{k=j+1}^{\infty}\bm{\mathbb{B}}_{t}^{r+k}(1)\otimes \bm{\mathbb{B}}_{t}^k(1) $. For notational simplicity, denote
\begin{equation*}
  \begin{split}
     \widetilde{X}_{at}& = \sum_{j=1}^{\infty}\left(\bm{\mathbb{B}}_{t}^{j}(1)\otimes \bm{\mathbb{B}}_{t}^j(1)\right) \mathrm{vec}\left(\bm{\epsilon}_{t-j}\bm{\epsilon}_{t-j}^\top \otimes \bm{I}_d\right),\\
       \widetilde{X}_{bt}& =\sum_{r=1}^{\infty}\sum_{j=1}^{\infty}
\left(\bm{\mathbb{B}}_{t}^{r+j}(1)\otimes \bm{\mathbb{B}}_{t}^j(1)\right)\mathrm{vec}\left(\bm{\epsilon}_{t-j}\bm{\epsilon}_{t-r-j}^\top \otimes \bm{I}_d\right).
  \end{split}
\end{equation*}
Applying \eqref{EqB.6} to $\widetilde{X}_{at}$ and $\widetilde{X}_{bt}$ yields that
\begin{equation*}
\begin{split}
   \widetilde{X}_{at} & =\mathbb{F}_{0,t}(1)\mathrm{vec}\left(\bm{\epsilon}_{t}\bm{\epsilon}_{t}^\top \otimes \bm{I}_d\right)-(1-L)\widetilde{\mathbb{F}}_{0,t}(L)\mathrm{vec}\left(\bm{\epsilon}_{t}\bm{\epsilon}_{t}^\top \otimes \bm{I}_d\right),\\
     \widetilde{X}_{bt}&  =\sum_{r=1}^{\infty}\mathbb{F}_{r,t}(1)\mathrm{vec}\left(\bm{\epsilon}_{t}\bm{\epsilon}_{t-r}^\top \otimes \bm{I}_d\right)-(1-L)\sum_{r=1}^{\infty}\widetilde{\mathbb{F}}_{r,t}(L)\mathrm{vec}\left(\bm{\epsilon}_{t}\bm{\epsilon}_{t-r}^\top \otimes \bm{I}_d\right).
\end{split}
\end{equation*}

For $J_{T,5}$, summing up $\widetilde{X}_{at}$ over $t$ yields
\begin{eqnarray*}
&&\max_{1\leq l\leq N_T}\left\| \sum_{t=1}^{T}\left\|\bm{W}_{T,t}(s_l)\right\|\sum_{r=1}^{\infty}\left(\bm{\mathbb{B}}_{t}^r(1)\otimes \bm{\mathbb{B}}_{t}^r(1)\right)\left(\mathrm{vec}\left(\bm{\epsilon}_{t-r}\bm{\epsilon}_{t-r}^\top \otimes \bm{I}_d\right)-\mathrm{vec} (\bm{I}_{d^2} )\right)\right\| \\
&\leq & \max_{1\leq l\leq N_T}\left\| \sum_{t=1}^{T}\left\|\bm{W}_{T,t}(s_l)\right\|\mathbb{F}_{0,t}(1)\left(\mathrm{vec}\left(\bm{\epsilon}_{t}\bm{\epsilon}_{t}^\top \otimes \bm{I}_d\right)-\mathrm{vec}\left(\bm{I}_{d^2}\right)\right)\right\|\\
& & +\max_{1\leq l\leq N_T}\left\| \left\|\bm{W}_{T,1}(s_l)\right\|\widetilde{\mathbb{F}}_{0,1}(L)\mathrm{vec}\left(\bm{\epsilon}_{0}\bm{\epsilon}_{0}^\top \otimes \bm{I}_d\right) \right\|+\sup_{0\leq \tau \leq 1}\left\|\left\|\bm{W}_{T,T}(s_l)\right\|\widetilde{\mathbb{F}}_{0,T}(L)\mathrm{vec}\left(\bm{\epsilon}_{T}\bm{\epsilon}_{T}^\top \otimes \bm{I}_d\right)\right\|\\
& & +\max_{1\leq l\leq N_T}\left\|\sum_{t=1}^{T-1}\left(\left\|\bm{W}_{T,t+1}(s_l)\right\|\widetilde{\mathbb{F}}_{0,t+1}(L)-\left\|\bm{W}_{T,t}(s_l)\right\|\widetilde{\mathbb{F}}_{0,t}(L)\right)\mathrm{vec}\left(\bm{\epsilon}_{t}\bm{\epsilon}_{t}^\top \otimes \bm{I}_d\right) \right\|\\
&:= &J_{T,51}+J_{T,52}+J_{T,53}+J_{T,54}.
\end{eqnarray*}
Similar to the proof of Lemma \ref{LemmaB.6}.1, we can show that $J_{T,51}=O_P\left(\sqrt{d_T \log T}\right)$, since
\begin{equation*}
  \begin{split}
     \max_{t\ge 1}\left\|\mathbb{F}_{0,t}(1)\right\|&\leq\max_{t\ge 1}\sum_{j=1}^{\infty}\left\|\sum_{k=0}^{\infty}\bm{B}_{k+j,t}\otimes \bm{B}_{k,t}\right\|^2\leq\max_{t\ge 1}\sum_{j=1}^{\infty}\left(\sum_{k=0}^{\infty}\left\|\bm{B}_{k+j,t}\right\|^2 \right) \left(\sum_{k=0}^{\infty}\left\|\bm{B}_{k,t}\right\|^2\right)\\
       & \leq \max_{t\ge 1}\left(\sum_{k=0}^{\infty}\left\|\bm{B}_{k,t}\right\|^2\right) \left(\sum_{j=1}^{\infty}j\left\|\bm{B}_{j,t}\right\|^2\right)< \infty.
  \end{split}
\end{equation*}
Also, we can show that $J_{T,52}=O_P(d_T)$ and $J_{T,53}=O_P(d_T)$, since
\begin{eqnarray*}
\max_{t\ge 1}\left\|\widetilde{\mathbb{F}}_{0,t}(1)\right\|& \leq &\max_{t\ge 1} \sum_{j=1}^{\infty} \sum_{k=j+1}^{\infty}\left\|\bm{\mathbb{B}}_{t}^k(1)\right\|^2
        \leq\max_{t\ge 1} \sum_{r=1}^{\infty} \sum_{k=r+1}^{\infty}\left(\sum_{j=0}^{\infty}\left\|\bm{B}_{j+k,t}\right\|^2\right) \left(\sum_{j=0}^{\infty}\left\|\bm{B}_{j,t}\right\|^2 \right) \nonumber \\
       & \leq &\max_{t\ge 1} \left(\sum_{j=0}^{\infty}\left\|\bm{B}_{j,t}\right\|^2 \right) \left(\sum_{r=1}^{\infty}\sum_{k=r+1}^{\infty}(k-r)\left\|\bm{B}_{k,t}\right\|^2 \right) \nonumber \\
       &\leq &\max_{t\ge 1} \left(\sum_{j=0}^{\infty}\left\|\bm{B}_{j,t}\right\|^2 \right) \left(\sum_{r=1}^{\infty} \frac{r(r+1)}{2}\left\|\bm{B}_{r+1,t}\right\|^2 \right)\nonumber \\
       &\leq &\max_{t\ge 1} \left(\sum_{j=0}^{\infty}\left\|\bm{B}_{j,t}\right\|^2 \right) \left(\sum_{j=1}^{\infty}j^2\left\|\bm{B}_{j,t}\right\|^2 \right)<\infty.
\end{eqnarray*}

We can easily show $J_{T,54}=o_P\left(1\right)$, since
\begin{eqnarray*}
&&\sup_{\tau\in[a,b]}\left(\sum_{t=1}^{T-1}\left\|\bm{W}_{T,t+1}(\tau)\right\|-\left\|\bm{W}_{T,t}(\tau)\right\|\right)\\
&\leq&\sup_{\tau\in[a,b]}\sum_{t=1}^{T-1}\left\|\bm{W}_{T,t+1}(\tau)-\bm{W}_{T,t}(\tau)\right\|=o(1)
\end{eqnarray*}
and
\begin{eqnarray*}
&&\sum_{t=1}^{T-1}\left\|\widetilde{\mathbb{F}}_{0,t+1}(1)-\widetilde{\mathbb{F}}_{0,t}(1)\right\|\leq \sum_{t=1}^{T-1} \sum_{r=1}^{\infty}\sum_{k=r+1}^{\infty}\left\|\bm{\mathbb{B}}_{t+1}^k(1)\otimes \bm{\mathbb{B}}_{t+1}^k(1)-\bm{\mathbb{B}}_{t}^k(1)\otimes \bm{\mathbb{B}}_{t}^k(1)\right\| \\
  &\leq & \sum_{t=1}^{T-1} \sum_{r=1}^{\infty}\sum_{k=r+1}^{\infty}\left\|\bm{\mathbb{B}}_{t+1}^k(1)-\bm{\mathbb{B}}_{t}^k(1)\right\|\cdot\left(\left\|\bm{\mathbb{B}}_{t+1}^k(1)\right\|+\left\|\bm{\mathbb{B}}_{t}^k(1)\right\|\right)\\
  &\leq & M\sum_{t=1}^{T-1} \sum_{r=1}^{\infty}\sum_{k=r+1}^{\infty}\sum_{j=0}^{\infty}\left\|\bm{B}_{j+k,t+1}\otimes \bm{B}_{j,t+1}-\bm{B}_{j+k,t}\otimes \bm{B}_{j,t}\right\|\\
  &\leq & M\sum_{t=1}^{T-1} \sum_{r=1}^{\infty}\sum_{k=r+1}^{\infty}\sum_{j=0}^{\infty}\left(\left\|\bm{B}_{j+k,t+1}-\bm{B}_{j+k,t}\right\|\cdot \left\|\bm{B}_{j,t+1} \right\|+ \left\|\bm{B}_{j,t+1}-\bm{B}_{j,t}\right\|\cdot\left\|\bm{B}_{j+k,t} \right\|\right)\\
  &\leq  & M\sum_{t=1}^{T-1} \sum_{r=1}^{\infty}\sum_{k=r+1}^{\infty}\left(\left\|\bm{B}_{k,t+1}-\bm{B}_{k,t}\right\|\cdot\sum_{j=0}^{\infty} \left\|\bm{B}_{j,t+1} \right\|+\left\|\bm{B}_{k,t} \right\|\cdot\sum_{j=0}^{\infty} \left\|\bm{B}_{j,t+1}-\bm{B}_{j,t}\right\| \right)\\
  &\leq & M \left(\sum_{t=1}^{T-1}\sum_{k=1}^{\infty}k \left\|\bm{B}_{k,t+1}-\bm{B}_{k,t}\right\| \right)\cdot\left(\max_{t\ge 1}\sum_{j=0}^{\infty} \left\|\bm{B}_{j,t+1}\right\|\right)\\
       &&+M \left(\max_{t\ge 1}\sum_{k=1}^{\infty}k\left\|\bm{B}_{k,t}\right\| \right)\cdot\left(\sum_{t=1}^{T-1}\sum_{j=0}^{\infty} \left\|\bm{B}_{j,t+1}-\bm{B}_{j,t}\right\|\right)\\
 & = &O(1).
\end{eqnarray*}
Based on the above development, we conclude that $J_{T,5}=o_P(1)$.

Next, we focus on $J_{T,6}$, and write
\begin{eqnarray*}
  && \max_{1\leq l\leq N_T}\left\|\sum_{t=1}^{T}\left\|\bm{W}_{T,t}(s_l) \right\|\sum_{r=1}^{\infty}\sum_{j=1}^{\infty}\bm{\mathbb{B}}_{t}^{r+j}(1)\otimes \bm{\mathbb{B}}_{t}^r(1)\mathrm{vec}\left(\bm{\epsilon}_{t-r}\bm{\epsilon}_{t-r-j}^\top \otimes \bm{I}_d\right)\right\|\\
  &\leq &\max_{1\leq l\leq N_T}\left\|\sum_{t=1}^{T}\left\|\bm{W}_{T,t}(s_l) \right\|\sum_{r=1}^{\infty}\mathbb{F}_{r,t}(1)\mathrm{vec}\left(\bm{\epsilon}_t \bm{\epsilon}_{t-r}^\top \otimes \bm{I}_d\right)\right\| \\
   &&+\max_{1\leq l\leq N_T}\left\|\left\|\bm{W}_{T,1}(s_l) \right\|\sum_{r=1}^{\infty}\widetilde{\mathbb{F}}_{r,1}(L)\mathrm{vec}\left(\bm{\epsilon}_0\bm{\epsilon}_{-r}^\top\otimes \bm{I}_d\right)\right\|\\
   &&+\max_{1\leq l\leq N_T}\left\|\left\|\bm{W}_{T,T}(s_l) \right\|\sum_{r=1}^{\infty}\widetilde{\mathbb{F}}_{r,T}(L)\mathrm{vec}\left(\bm{\epsilon}_t\bm{\epsilon}_{T-r}^\top\otimes \bm{I}_d\right)\right\| \\
       &&+\max_{1\leq l\leq N_T}\left\|\sum_{t=1}^{T-1}\sum_{r=1}^{\infty}\left(\left\|\bm{W}_{T,t+1}(s_l) \right\|\widetilde{\mathbb{F}}_{r,t+1}(L)-\left\|\bm{W}_{T,t}(s_l) \right\|\widetilde{\mathbb{F}}_{r,t}(L) \right)\mathrm{vec}\left(\bm{\epsilon}_t \bm{\epsilon}_{t-r}^\top \otimes \bm{I}_d\right)\right\|\\
      &:= &J_{T,61}+J_{T,62}+J_{T,63}+J_{T,64}.
\end{eqnarray*}

We can show that $J_{T,62}$ and $J_{T,63}$ are $O_P(d_T)$, since
\begin{eqnarray*}
&&\max_{t\ge 1}\left\|\sum_{r=1}^{\infty}\widetilde{\mathbb{F}}_{r,t}(1)\right\|\leq \max_{t\ge 1}\sum_{r=1}^{\infty}\sum_{j=1}^{\infty}\left\|\widetilde{\mathbb{F}}_{rj,t}\right\|\leq \max_{t\ge 1}\sum_{r=1}^{\infty}\sum_{j=1}^{\infty}\sum_{k=j+1}^{\infty}\left\|\bm{\mathbb{B}}_{t}^{r+k}(1)\right\| \left\|\bm{\mathbb{B}}_{t}^k(1)\right\|\\
&\leq  &\max_{t\ge 1}\left(\sum_{r=1}^{\infty}\left\|\bm{\mathbb{B}}_{t}^r(1)\right\|\right)\left(\sum_{j=1}^{\infty}j\left\|\bm{\mathbb{B}}_{t}^j(1)\right\|\right)\leq M \max_{t\ge 1}\sum_{j=1}^{\infty}j\sum_{k=0}^{\infty}\left\|\bm{B}_{k+j,t} \right\|\left\|\bm{B}_{k,t} \right\| \\
&\leq  & M \max_{t\ge 1}\left(\sum_{j=1}^{\infty}j\left\|\bm{B}_{j,t}\right\|\right) \left(\sum_{k=0}^{\infty}\left\|\bm{B}_{k,t}\right\|\right)<\infty.
\end{eqnarray*}

Similar to $J_{T,54}$, we have $J_{T,64}=o_P\left(1\right)$, since
\begin{eqnarray*}
 && \sum_{t=1}^{T-1} \sum_{r=1}^{\infty}\left\|\widetilde{\mathbb{F}}_{r,t+1}(1)-\widetilde{\mathbb{F}}_{r,t}(1)\right\|\\
    &\leq &\sum_{t=1}^{T-1}\sum_{r=1}^{\infty} \sum_{j=1}^{\infty}\sum_{k=j+1}^{\infty}\left\|\bm{\mathbb{B}}_{t+1}^{r+k}(1)\otimes \bm{\mathbb{B}}_{t+1}^k(1)-\bm{\mathbb{B}}_{t}^{r+k}(1)\otimes \bm{\mathbb{B}}_{t}^k(1)\right\|\\
    &\leq &\sum_{t=1}^{T-1}\sum_{r=1}^{\infty} \sum_{j=1}^{\infty}\sum_{k=j+1}^{\infty}\left(\left\|\bm{\mathbb{B}}_{t+1}^{r+k}(1)-\bm{\mathbb{B}}_{t}^{r+k}(1)\right\| \left\|\bm{\mathbb{B}}_{t+1}^k(1) \right\|+\left\|\bm{\mathbb{B}}_{t+1}^k(1)-\bm{\mathbb{B}}_{t}^k(1)\right\| \left\|\bm{\mathbb{B}}_{t}^{r+k}(1) \right\|\right)\\
    &\leq & \left(\sum_{t=1}^{T-1}\sum_{r=1}^{\infty}\left\|\bm{\mathbb{B}}_{t+1}^r(1)-\bm{\mathbb{B}}_{t}^r(1)\right\| \right)\left(\max_{t\ge 1}\sum_{j=1}^{\infty}j \left\|\bm{\mathbb{B}}_{t}^j\right\| \right)\\
    &&+\left(\sum_{t=1}^{T-1}\sum_{j=1}^{\infty}j \left\|\bm{\mathbb{B}}_{t+1}^j(1)-\bm{\mathbb{B}}_{t}^j(1)\right\| \right)\left(\max_{t\ge 1}\sum_{r=1}^{\infty}\left\|\bm{\mathbb{B}}_{t}^r\right\| \right)\\
    &= &O(1).
\end{eqnarray*}

Now consider term $J_{T,61}$. Define $\bm{u}_t=\sum_{r=1}^{\infty}\mathbb{F}_{r,t}(1)\mathrm{vec}\left(\bm{\epsilon}_t \bm{\epsilon}_{t-r}^\top \otimes \bm{I}_d\right)$, $\bm{u}_t^\prime=\bm{u}_t I(\|\bm{u}_t\|\leq T^{\frac{2}{\delta}} )$ and $\bm{u}_t^{\prime\prime}=\bm{u}_t-\bm{u}_t^\prime$. Then we have
\begin{equation*}
\begin{split}
  J_{T,61}=\ & \max_{1\leq l \leq N_T} \left\| \sum_{t=1}^{T}\left\| \bm{W}_{T,t}(s_l)\right\|\left(\bm{u}_t^\prime+\bm{u}_t^{\prime\prime}-E(\bm{u}_t^\prime+\bm{u}_t^{\prime\prime}|\mathcal{F}_{t-1})\right)  \right\| \\
    \leq\ & \max_{1\leq l \leq N_T}\left\|\sum_{t=1}^{T} \left\| \bm{W}_{T,t}(s_l)\right\|\left(\bm{u}_t^\prime-E(\bm{u}_t^\prime|\mathcal{F}_{t-1})\right) \right\|+ \max_{1\leq l \leq N_T}\left\| \sum_{t=1}^{T} \left\| \bm{W}_{T,t}(s_l)\right\|\bm{u}_t^{\prime\prime} \right\|\\
     &+\max_{1\leq l \leq N_T}\left\| \sum_{t=1}^{T}\left\| \bm{W}_{T,t}(s_l)\right\| E(\bm{u}_t^{\prime\prime}|\mathcal{F}_{t-1}) \right\|\\
    :=\ &J_{T,611}+J_{T,612}+J_{T,613}.\\
\end{split}
\end{equation*}
Using an argument as in the proof for $J_{T,12}$ of Lemma \ref{LemmaB.5}, we can show that $J_{T,612}$ and $J_{T,613}$ are $O_P\left(T^{\frac{2}{\delta}}d_T\right)$.

Next, consider $J_{T,611}$. For any $1\leq l \leq N_T$, let $\bm{Y}_t=\left\| \bm{W}_{T,t}(s_l)\right\|(\bm{u}_{t}^\prime-E(\bm{u}_{t}^\prime|\mathcal{F}_{t-1}))$. We then have $E\left(\bm{Y}_t|\mathcal{F}_{t-1}\right)=0$ and $\left\|\bm{Y}_t\right\|\leq 2 T^{2/\delta}d_T$. In addition, we have
\begin{eqnarray*}
&&\max_{1\leq l \leq N_T}\left\|\sum_{t=1}^{T}E (\bm{Y}_t \bm{Y}_t^\top|\mathcal{F}_{t-1}) \right\|\\
&\leq & 4 \max_{1\leq l \leq N_T}\sum_{t=1}^{T}\left\| \bm{W}_{T,t}(s_l)\right\|^2E\left( \left\|\bm{u}_t\right\|^2|\mathcal{F}_{t-1}\right)\\
&\leq & M\cdot d_T \max_{1\leq l \leq N_T}\sum_{t=1}^{T}\left\| \bm{W}_{T,t}(s_l)\right\|\sum_{r=1}^{\infty}\left\|\mathbb{F}_{r,t}(1)\right\|^2 \left\|\bm{\epsilon}_{t-r}\right\|^2\\
&\leq & M\cdot d_T \max_{t\ge 1}\sum_{r=1}^{\infty}\left\|\mathbb{F}_{r,t}(1)\right\|^2 \left\|\bm{\epsilon}_{t-r}\right\|^2 \\
&\leq & M\cdot d_T\sum_{r=1}^{\infty} \max_{t\ge 1}\left\|\mathbb{F}_{r,t}(1)\right\|^2 \left(\sum_{t=1}^{T}\left\|\bm{\epsilon}_{t-r}\right\|^\delta\right)^{\frac{2}{\delta}}\\
&= &O_P\left(d_T T^\frac{2}{\delta}\right).
\end{eqnarray*}
Therefore, we have $\max_{1\leq l \leq N_T}\left\|\sum_{t=1}^{T}E (\bm{Y}_t \bm{Y}_t^\top |\mathcal{F}_{t-1}) \right\| =O_P(d_T T^\frac{2}{\delta})$. By Lemma \ref{LemmaB.2}, and choosing $\beta=4$, we have
\begin{eqnarray*}
&&\Pr\left(J_{T,611} > \sqrt{\beta M}\sqrt{d_TT^{\frac{2}{\delta}}\log T}  \right) \\
 & = &\Pr\left( J_{T,611} > \sqrt{\beta M} \sqrt{d_TT^{\frac{2}{\delta}}\log T}, \max_{1\leq l \leq N_T}\left\|\sum_{t=1}^{T}E (\bm{Y}_t \bm{Y}_t^\top |\mathcal{F}_{t-1}) \right\|\leq M d_T T^{\frac{2}{\delta}}\right)\\
  &&+\Pr\left( J_{T,611} > \sqrt{\beta M} \sqrt{d_TT^{\frac{2}{\delta}}\log T} , \max_{1\leq l \leq N_T}\left\|\sum_{t=1}^{T}E (\bm{Y}_t \bm{Y}_t^\top |\mathcal{F}_{t-1}) \right\| > M d_T T^{\frac{2}{\delta}}\right)\\
  & \leq & \Pr\left( J_{T,611} > \sqrt{\beta M} \sqrt{d_TT^{\frac{2}{\delta}}\log T} , \max_{1\leq l \leq N_T}\left\|\sum_{t=1}^{T}E (\bm{Y}_t \bm{Y}_t^\top |\mathcal{F}_{t-1}) \right\|\leq M d_T T^{\frac{2}{\delta}}\right)\\
   &&+\Pr\left(\max_{1\leq l \leq N_T}\left\|\sum_{t=1}^{T}E (\bm{Y}_t \bm{Y}_t^\top |\mathcal{F}_{t-1}) \right\| > M d_T T^{\frac{2}{\delta}}\right)\\
   &  \leq & N_T \exp\left(-\frac{\beta Md_TT^{\frac{2}{\delta}}\log T}{2(Md_TT^{\frac{2}{\delta}}+M\sqrt{d_TT^{\frac{2}{\delta}}\log T} T^\frac{2}{\delta}d_T )}\right)+o(1)\\
   &  \leq &N_T \exp\left(-\frac{\beta}{2} \log T \right)=N_T T^{-\frac{\beta}{2}}=o(1)
\end{eqnarray*}
given $d_T T^{\frac{4}{\delta}} \log T\rightarrow 0$. Hence, we have $J_{T,611}=O_P (\{d_T T^{\frac{2}{\delta}}\log T\}^{1/2} ) $. Combining the above results, we have proved that $\sup_{ \tau \in[a,b]}\left|\sum_{t=1}^{T}\left\|\bm{W}_{T,t}(\tau)\right\|E\left(\left\|\bm{\zeta}_{t}\bm{\epsilon}_t\right\|^2 |\mathcal{F}_{t-1} \right)\right|=O_P(1)$.

Finally, we turn to $J_{T,3}$, and apply the truncation method. Let $\bm{u}_t=\bm{\zeta}_t\bm{\epsilon}_{t}$, $\bm{u}_t^\prime=\bm{u}_t I\left(\|\bm{u}_t\|\leq T^{\frac{2}{\delta}} \right)$ and $\bm{u}_t^{\prime\prime}=\bm{u}_t-\bm{u}_t^\prime$. Then we have
\begin{eqnarray*}
 J_{T,3}&=& \max_{1\leq l \leq N_T} \left\|\sum_{t=1}^{T}\left(\bm{I}_d\otimes \bm{W}_{T,t}(s_l) \right)\left(\bm{u}_t^\prime+\bm{u}_t^{\prime\prime}-E(\bm{u}_t^\prime+\bm{u}_t^{\prime\prime}|\mathcal{F}_{t-1})\right) \right\| \\
    &\leq &\max_{1\leq l \leq N_T}\left\|\sum_{t=1}^{T}\left(\bm{I}_d\otimes \bm{W}_{T,t}(s_l) \right)\left(\bm{u}_t^\prime-E(\bm{u}_t^\prime|\mathcal{F}_{t-1})\right)\right\|+ \max_{1\leq l \leq N_T}\left\|\sum_{t=1}^{T} \left(\bm{I}_d\otimes \bm{W}_{T,t}(s_l) \right)\bm{u}_t^{\prime\prime}\right\|\\
     &&+\max_{1\leq l \leq N_T}\left\| \sum_{t=1}^{T}\left(\bm{I}_d\otimes \bm{W}_{T,t}(s_l) \right)E(\bm{u}_t^{\prime\prime}|\mathcal{F}_{t-1}) \right\|\\
    &= &J_{T,31}+J_{T,32}+J_{T,33}.
\end{eqnarray*}
It's easy to show that $J_{T,32}=O_P (T^{\frac{2}{\delta}}d_T )$ and $J_{T,33}=O_P (T^{\frac{2}{\delta}}d_T )$. Thus, we focus on $J_{T,31}$.

For any $1\leq l \leq N_T$, let $\bm{Y}_t=(\bm{I}_d\otimes \bm{W}_{T,t}(s_l))(\bm{u}_{t}^\prime-E(\bm{u}_{t}^\prime|\mathcal{F}_{t-1})) $, then we have $E\left(\bm{Y}_t|\mathcal{F}_{t-1}\right)=0$ and $\left\|\bm{Y}_t\right\|\leq 2 T^{2/\delta}d_T$. Also,
\begin{eqnarray*}
\max_{1\leq l \leq N_T}\sum_{t=1}^{T}\left\|\bm{W}_{T,t}(s_l)\right\|E\left(\left\|\bm{\zeta}_{t}\bm{\epsilon}_t\right\|^2 |\mathcal{F}_{t-1} \right)=O_P(1),
\end{eqnarray*}
which yields that
\begin{equation*}
\max_{1\leq l \leq N_T}\left\|\sum_{t=1}^{T}E (\bm{Y}_t \bm{Y}_t^\top|\mathcal{F}_{t-1}) \right\| \leq M d_T \max_{1\leq l \leq N_T}\sum_{t=1}^{T}\left\|\bm{W}_{T,t}(s_l)\right\|E\left( \left\|\bm{u}_t\right\|^2|\mathcal{F}_{t-1}\right)=O_P\left(d_T \right).
\end{equation*}
Therefore, we have $\max_{1\leq l \leq N_T}\left\|\sum_{t=1}^{T}E (\bm{Y}_t \bm{Y}_t^\top |\mathcal{F}_{t-1}) \right\|= O_P(d_T)$. By Lemma \ref{LemmaB.2} and choosing $\beta=4$, we have
\begin{equation*}
  \begin{split}
\Pr\left(J_{T,31} > \sqrt{\beta M} \gamma_T \right)
  =\ &\Pr\left( J_{T,31} > \sqrt{\beta M} \gamma_T , \max_{1\leq l \leq N_T}\left\|\sum_{t=1}^{T}E (\bm{Y}_t \bm{Y}_t^\top |\mathcal{F}_{t-1}) \right\|\leq M d_T\right)\\
  &+\Pr\left( J_{T,31} > \sqrt{\beta M} \gamma_T , \max_{1\leq l \leq N_T}\left\|\sum_{t=1}^{T}E (\bm{Y}_t \bm{Y}_t^\top |\mathcal{F}_{t-1}) \right\| > M d_T\right)\\
   \leq\ & \Pr\left( J_{T,31} > \sqrt{\beta M} \gamma_T , \max_{1\leq l \leq N_T}\left\|\sum_{t=1}^{T}E (\bm{Y}_t \bm{Y}_t^\top |\mathcal{F}_{t-1}) \right\|\leq M d_T\right)\\
   &+\Pr\left(\max_{1\leq l \leq N_T}\left\|\sum_{t=1}^{T}E (\bm{Y}_t \bm{Y}_t^\top |\mathcal{F}_{t-1}) \right\| > M d_T\right)\\
     \leq\  & N_T \exp\left(-\frac{\beta M\gamma_T^2 }{2(Md_T+M\gamma_T T^\frac{2}{\delta}d_T )}\right)+o(1)\\
     \leq\  &N_T \exp\left(-\frac{\beta}{2} \log T \right)=N_T T^{-\frac{\beta}{2}}=o(1).\\
  \end{split}
\end{equation*}
given $d_T T^{\frac{4}{\delta}}\log T \rightarrow 0$.

We now have completed the proof of the second result.
\end{proof}

\medskip

\begin{proof}[Proof of Lemma \ref{LemmaB.7}]
\item
\noindent (1). For any fixed $\tau\in(0,1)$, let $\bm{W}_{T,t}=\bm{I_d}\frac{1}{T}\left(\frac{\tau_t-\tau}{h}\right)^kK_h\left(\tau_t-\tau\right)$. It is straightforward to verify the conditions imposed on $\bm{W}_{T,t}$, then the first result follows from Lemma \ref{LemmaB.4} immediately.

\medskip

\noindent (2)--(3). Let $\bm{W}_{T,t}(\tau)=\bm{I_d}\frac{1}{T}\left(\frac{\tau_t-\tau}{h}\right)^kK_h\left(\tau_t-\tau\right)$, then the second and third results follow from Theorem \ref{Theorem2.1}.

\end{proof}

\medskip

\begin{proof}[Proof of Lemma \ref{LemmaB.8}]

\item

\noindent (1). This proof is similar to Lemma \ref{LemmaB.6} with $\bm{Z}_{t-1}$ replacing $\bm{\zeta}_{t}$, so omitted here.

\medskip

\noindent (2). Note that
\begin{eqnarray*}
 && \frac{1}{\sqrt{Th}}\sum_{t=1}^{T}\bm{\eta}_t\left(\bm{\eta}_t-\bm{\widehat{\eta}}_t\right)^\top K\left(\frac{\tau_t-\tau}{h}\right) \\
   &   =  & \frac{1}{\sqrt{Th}}\sum_{t=1}^{T}\bm{\eta}_t\bm{z}_{t-1}^\top\left(\bm{\widehat{A}}(\tau_t)-\bm{A}(\tau_t)\right)^\top K\left(\frac{\tau_t-\tau}{h}\right) \\
    &  = & \frac{1}{\sqrt{Th}}\sum_{t=1}^{T}\bm{\eta}_t\bm{z}_{t-1}^\top h \bm{S}_{T,0}^{-1}(\tau_t) \bm{S}_{T,1}(\tau_t) \bm{A}^{(1),\top}(\tau_t)K\left(\frac{\tau_t-\tau}{h}\right)\\
  &    &+\frac{1}{\sqrt{Th}}\sum_{t=1}^{T}\bm{\eta}_t\bm{z}_{t-1}^\top \frac{1}{2} h^2 \bm{S}_{T,0}^{-1}(\tau_t) \bm{S}_{T,2}(\tau_t) \bm{A}^{(2),\top}(\tau_t)K\left(\frac{\tau_t-\tau}{h}\right)\\
    &  &+\frac{1}{\sqrt{Th}}\sum_{t=1}^{T}\bm{\eta}_t\bm{z}_{t-1}^\top \bm{S}_{T,0}^{-1}(\tau_t) \left(\frac{1}{Th}\sum_{s=1}^{T}\bm{z}_{s-1}\bm{z}_{s-1}^\top \bm{M}^\top(\tau_s)K\left(\frac{\tau_s-\tau_t}{h}\right) \right)K\left(\frac{\tau_t-\tau}{h}\right)\\
    &  &+\frac{1}{\sqrt{Th}}\sum_{t=1}^{T}\bm{\eta}_t\bm{z}_{t-1}^\top \bm{S}_{T,0}^{-1}(\tau_t) \left(\frac{1}{Th}\sum_{s=1}^{T}\bm{z}_{s-1}\bm{\eta}_s^\top K\left(\frac{\tau_s-\tau_t}{h}\right)\right)K\left(\frac{\tau_t-\tau}{h}\right)\\
   &   := & J_{T,1}+J_{T,2}+J_{T,3}+J_{T,4}.
\end{eqnarray*}

For $J_{T,1}$ to $J_{T,3}$, using Lemma \ref{LemmaB.7}, we can replace the sample covariance matrix with its converged value with rate $O_P\left(\sqrt{\frac{\log T}{Th}}\right)$ and hence it's easy to show that $J_{T,1}$ to $J_{T,3}$ are $o_P(1)$.

For $J_{T,4}$, for notational simplicity, we ignore $\bm{S}_{T,0}^{-1}(\tau_t)$ and hence,
\begin{equation*}
  \begin{split}
      J_{T,4}=\ &\frac{1}{(Th)^{3/2}}\sum_{t=1}^{T}\bm{\eta}_t \bm{z}_{t-1}^\top \bm{z}_{t-1}\bm{\eta}_t^\top K(0)K\left(\frac{\tau_t-\tau}{h}\right) \\
                &+\frac{1}{(Th)^{3/2}}\sum_{i=1}^{T-1} \sum_{t=1}^{T-i}\bm{\eta}_t \bm{z}_{t-1}^\top \bm{z}_{t+i-1}\bm{\eta}_{t+i}^\top K\left(\frac{i}{Th}\right)K\left(\frac{\tau_t-\tau}{h}\right) \\
                &+\frac{1}{(Th)^{3/2}}\sum_{i=1}^{T-1} \sum_{t=1}^{T-i}\bm{\eta}_{t+i} \bm{z}_{t+i-1}^\top \bm{z}_{t-1}\bm{\eta}_{t}^\top K\left(\frac{i}{Th}\right)K\left(\frac{\tau_t-\tau}{h}\right) \\
      :=\ &J_{T,41}+J_{T,42}+J_{T,43}.\\
  \end{split}
\end{equation*}
It's easy to see $J_{T,41}=O_P\left((Th)^{-1/2}\right)$. For $J_{T,42}$,
\begin{equation*}
  \begin{split}
      J_{T,42}=\ &\frac{1}{(Th)^{3/2}}\sum_{i=1}^{T-1} \sum_{t=1}^{T-i}\bm{\eta}_t E\left(\bm{z}_{t-1}^\top \bm{z}_{t+i-1}\right)\bm{\eta}_{t+i}^\top K\left(\frac{i}{Th}\right)K\left(\frac{\tau_t-\tau}{h}\right) \\
      &+\frac{1}{(Th)^{3/2}}\sum_{i=1}^{T-1} \sum_{t=1}^{T-i}\bm{\eta}_t\left(\bm{z}_{t-1}^\top \bm{z}_{t+i-1}-E\left(\bm{z}_{t-1}^\top \bm{z}_{t+i-1}\right)\right)\bm{\eta}_{t+i}^\top K\left(\frac{i}{Th}\right)K\left(\frac{\tau_t-\tau}{h}\right) \\
  :=\ &J_{T,421}+J_{T,422}.
  \end{split}
\end{equation*}
For $J_{T,421}$,
\begin{equation*}
  \begin{split}
     E\left\|J_{T,421}\right\|^2\leq\ &\frac{1}{(Th)^3}\sum_{i=1}^{T-1}\sum_{t=1}^{T-i}\left\{E\left(\bm{z}_{t-1}^\top \bm{z}_{t+i-1}\right)E\left(\bm{\eta}_t^\top\bm{\eta}_t\right)E\left(\bm{\eta}_{t+i}^\top\bm{\eta}_{t+i}\right)\right\}^2K^2\left(\frac{i}{Th}\right)K^2\left(\frac{\tau_t-\tau}{h}\right)  \\
      =\ & O\left(\frac{1}{Th} \right),
  \end{split}
\end{equation*}
which then yields that $J_{T,421}=O_P\left((Th)^{-1/2}\right)$. For $J_{T,422}$,
\begin{equation*}
  J_{T,422}=\frac{1}{(Th)^{3/2}}\sum_{i=1}^{T-1} \sum_{t=1}^{T-i}\bm{\eta}_t\sum_{m=1}^{p}\left(\bm{x}_{t-m}^\top \bm{x}_{t+i-m}-E\left(\bm{x}_{t-m}^\top \bm{x}_{t+i-m}\right)\right)\bm{\eta}_{t+i}^\top K\left(\frac{i}{Th}\right)K\left(\frac{\tau_t-\tau}{h}\right).
\end{equation*}
For notational simplicity, assume $p=1$ and thus
\begin{eqnarray*}
J_{T,422}&=&\frac{1}{(Th)^{3/2}}\sum_{i=1}^{T-1} \sum_{t=1}^{T-i}\bm{\eta}_t\left(\bm{\mu}_{t-1}^{\top}\sum_{j=0}^{\infty} \bm{\Psi}_{j,t+i-1}\bm{\eta}_{t+i-1-j} \right)\bm{\eta}_{t+i}^\top K\left(\frac{i}{Th}\right)K\left(\frac{\tau_t-\tau}{h}\right)\\
       && +\frac{1}{(Th)^{3/2}}\sum_{i=1}^{T-1} \sum_{t=1}^{T-i}\bm{\eta}_t\left(\sum_{j=0}^{\infty}\bm{\eta}_{t-1-j}^\top \bm{\Psi}_{j,t-1}^{\top} \bm{\mu}_{t+i-1}\right)\bm{\eta}_{t+i}^\top K\left(\frac{i}{Th}\right)K\left(\frac{\tau_t-\tau}{h}\right)\\
       && +\frac{1}{(Th)^{3/2}}\sum_{i=1}^{T-1} \sum_{t=1}^{T-i}\bm{\eta}_t\left(\sum_{j=0}^{\infty}\left(\bm{\eta}_{t-1-j}^\top\otimes \bm{\eta}_{t-1-j}^\top-E\left(\bm{\eta}_{t-1-j}^\top\otimes \bm{\eta}_{t-1-j}^\top\right) \right) \right.\\
       &&\cdot\left.\mathrm{vec}\left(\bm{\Psi}_{j,t-1}^{\top} \bm{\Psi}_{j+i,t+i-1}\right) \right)\bm{\eta}_{t+i}^\top K\left(\frac{i}{Th}\right)K\left(\frac{\tau_t-\tau}{h}\right)\\
       &&+\frac{1}{(Th)^{3/2}}\sum_{i=1}^{T-1} \sum_{t=1}^{T-i}\bm{\eta}_t\left(\sum_{j=0}^{\infty}\sum_{m=0,\neq j+i}^{\infty}\left(\bm{\eta}_{t+i-1-m}^\top\otimes \bm{\eta}_{t-1-j}^\top\right)
       \mathrm{vec}\left(\bm{\Psi}_{j,t-1}^{\top} \bm{\Psi}_{m,t+i-1}\right) \right)\\
       &&\cdot\bm{\eta}_{t+i}^\top K\left(\frac{i}{Th}\right)K\left(\frac{\tau_t-\tau}{h}\right)\\
  &:=&J_{T,4221}+J_{T,4222}+J_{T,4223}+J_{T,4224}.
\end{eqnarray*}
For $J_{T,4221}$,
\begin{eqnarray*}
&&E\left\|J_{T,4221}\right\|^2 \\ &\leq&\frac{1}{(Th)^3}\sum_{i=1}^{T-1}\sum_{t=1}^{T-i}E\left\|\bm{\eta}_t\left(\bm{\mu}_{t-1}^{\top}\sum_{j=0}^{\infty} \bm{\Psi}_{j,t+i-1}\bm{\eta}_{t+i-1-j} \right)\right\|^2 E\left\|\bm{\eta}_{t+i}\right\|^2 K^2\left(\frac{i}{Th}\right)K^2\left(\frac{\tau_t-\tau}{h}\right)\\
&=&O\left(\frac{1}{Th} \right).
\end{eqnarray*}
Similarly, $J_{T,4222}$ and $J_{T,4223}$ are $O_P\left((Th)^{-1/2}\right)$. For $J_{T,4224}$,
\begin{eqnarray*}
J_{T,4224}&=&\frac{1}{(Th)^{3/2}}\sum_{i=1}^{T-1} \sum_{t=1}^{T-i}\bm{\eta}_t\left(\sum_{j=0}^{\infty}\left(\bm{\eta}_{t}^\top\otimes \bm{\eta}_{t-1-j}^\top\right)
       \mathrm{vec}\left(\bm{\Psi}_{j,t-1}^{\top} \bm{\Psi}_{i-1,t+i-1}\right) \right)\bm{\eta}_{t+i}^\top K\left(\frac{i}{Th}\right)K\left(\frac{\tau_t-\tau}{h}\right)\\
       &&+\frac{1}{(Th)^{3/2}}\sum_{i=1}^{T-1} \sum_{t=1}^{T-i}\bm{\eta}_t\left(\sum_{j=0}^{\infty}\sum_{m=0,\neq j+i,\neq i-1}^{\infty}\left(\bm{\eta}_{t+i-1-m}^\top\otimes \bm{\eta}_{t-1-j}^\top\right) \right.\\
       &&\cdot\left.\mathrm{vec}\left(\bm{\Psi}_{j,t-1}^{\top} \bm{\Psi}_{m,t+i-1}\right) \right)\bm{\eta}_{t+i}^\top K\left(\frac{i}{Th}\right)K\left(\frac{\tau_t-\tau}{h}\right)\\
       &:=&J_{T,42241}+J_{T,42242}.
\end{eqnarray*}
Similar to the proof of $J_{T,4221}$, we can show that $J_{T,42242}=O_P\left((Th)^{-1/2}\right)$.

Let $\bm{w}_{t,i}=\sum_{j=0}^{\infty}\left(\bm{\eta}_t\bm{\eta}_t^\top \otimes\bm{\eta}_{t-1-j}^\top \right)\mathrm{vec}\left(\bm{\Psi}_{j,t-1}^{\top} \bm{\Psi}_{i-1,t+i-1}^{\top}\right)$. For $J_{T,42241}$,
\begin{eqnarray*}
E\left\|J_{T,42241}\right\|^2&=&\frac{1}{(Th)^3}\sum_{i_1=1}^{T-1}\sum_{i_2=1}^{T-1}\sum_{t_1=1}^{T-i_1}E\left\|\bm{\eta}_{t_1+i_1}^\top\bm{\eta}_{t_1+i_1}\right\|E\left\|\bm{w}_{t_1+i_1-i_2,i_2}^\top \bm{w}_{t_1,i_1}\right\|\\
&&\cdot K\left(\frac{i_1}{Th} \right)K\left(\frac{i_2}{Th} \right)K\left(\frac{\tau_{t_1}-\tau}{h} \right)K\left( \frac{\tau_{t_1+i_1-i_2}-\tau}{h}\right)\\
&\leq&\frac{M}{(Th)^3}\sum_{t_1=1}^{T}\left(\max_t\sum_{i=1}^{T-1}\left\|\bm{\Psi}_{i,t}\right\|\right)^2\left(\max_t\sum_{j=0}^{\infty}\left\|\bm{\Psi}_{j,t}\right\|\right)^2K\left( \frac{\tau_{t_1}-\tau}{h}\right)\\
&=&O\left((Th)^{-2}\right).
\end{eqnarray*}
Hence, $J_{T,42}=O_P\left((Th)^{-1/2}\right)$. Similar to $J_{T,42}$, $J_{T,43}=O_P\left((Th)^{-1/2}\right)$. The proof is now complete.
\end{proof}

\bigskip

Define $\bm{\Lambda}_{\mathsf{p}}(\tau)=[\bm{a}(\tau),\bm{A}_{\mathsf{p},1}(\tau),...,\bm{A}_{\mathsf{p},\mathsf{p}}(\tau)]$, where $\bm{A}_{\mathsf{p},j}(\tau)=\bm{A}_{j}(\tau)$ for $1\leq j \leq p$ and $\bm{A}_{\mathsf{p},j}(\tau)=0$ for $j > p$. Let $\bm{z}_{\mathsf{p},t-1}=[1,\bm{x}_{t-1}^\top,...,\bm{x}_{t-\mathsf{p}}^\top]^\top$, $\bm{M}_\mathsf{p}(\tau_t)=\bm{\Lambda}_\mathsf{p}(\tau_t)-\bm{\Lambda}_\mathsf{p}(\tau)-\bm{\Lambda}_\mathsf{p}^{(1)}(\tau)(\tau_t-\tau)-\frac{1}{2}h^2\bm{\Lambda}_\mathsf{p}^{(2)}(\tau)(\tau_t-\tau)^2$,
$\bm{\Lambda}_{\overline{\mathsf{p}}}(\tau)=[\bm{A}_{\mathsf{p},\mathsf{p}+1}(\tau),...,\bm{A}_{\mathsf{p},\mathsf{P}}(\tau)]$ and $\bm{z}_{\overline{\mathsf{p}},t-1}=[\bm{x}_{t-\mathsf{p}-1}^\top,...,\bm{x}_{t-\mathsf{P}}^\top]^\top$.

\begin{proof}[Proof of Lemma \ref{LemmaB.9}]
  \item

\noindent (1). Since $\mathsf{p}\geq p$, we have $\widehat{\bm{\eta}}_{\mathsf{p},t}=\bm{\eta}_t+\left(\bm{\Lambda}_\mathsf{p}(\tau_t)-\widehat{\bm{\Lambda}}_\mathsf{p}(\tau_t)\right) \bm{z}_{\mathsf{p},t-1}$ and
\begin{eqnarray*}
  \text{RSS}(\mathsf{p})&=&\frac{1}{T}\sum_{t=1}^{T}\bm{\eta}_t^\top\bm{\eta}_t+\frac{1}{T}\sum_{t=1}^{T}\bm{z}_{\mathsf{p},t-1}^\top\left(\bm{\Lambda}_\mathsf{p}(\tau_t)-\widehat{\bm{\Lambda}}_\mathsf{p}(\tau_t)\right)^\top \left(\bm{\Lambda}_\mathsf{p}(\tau_t)-\widehat{\bm{\Lambda}}_\mathsf{p}(\tau_t)\right) \bm{z}_{\mathsf{p},t-1}\\
  &&-2\frac{1}{T}\sum_{t=1}^{T}\mathrm{tr}\left(\bm{\eta}_t\left(\bm{\eta}_t-\widehat{\bm{\eta}}_{\mathsf{p},t}\right)^\top\right)\\
  &:=& \frac{1}{T}\sum_{t=1}^{T}\bm{\eta}_t^\top\bm{\eta}_t+ I_{T,1}+I_{T,2}.
\end{eqnarray*}
Since $\bm{\eta}_t^\top\bm{\eta}_t$ is m.d.s., we have $\frac{1}{T}\sum_{t=1}^{T}\bm{\eta}_t^\top\bm{\eta}_t=\frac{1}{T}\sum_{t=1}^{T}E(\bm{\eta}_t^\top\bm{\eta}_t)+T^{-1/2}$. By Theorem \ref{Theorem3.1},
\begin{eqnarray*}
  I_{T,1} &\leq& \frac{1}{T}\sum_{t=1}^{\lfloor Th \rfloor}\left\|\bm{z}_{\mathsf{p},t-1}\right\|^2 \cdot \left\|\widehat{\bm{\Lambda}}_\mathsf{p}(\tau_t)-\bm{\Lambda}_\mathsf{p}(\tau_t)\right\|^2 +\frac{1}{T}\sum_{t=\lfloor Th \rfloor+1}^{T-\lfloor Th \rfloor}\left\|\bm{z}_{\mathsf{p},t-1}\right\|^2 \cdot \left\|\widehat{\bm{\Lambda}}_\mathsf{p}(\tau_t)-\bm{\Lambda}_\mathsf{p}(\tau_t)\right\|^2\\
          && +\frac{1}{T}\sum_{t=T-\lfloor Th \rfloor+1}^{T}\left\|\bm{z}_{\mathsf{p},t-1}\right\|^2 \cdot \left\|\widehat{\bm{\Lambda}}_\mathsf{p}(\tau_t)-\bm{\Lambda}_\mathsf{p}(\tau_t)\right\|^2\\
          &\leq&\sup_{0\leq \tau \leq h}\left\|\widehat{\bm{\Lambda}}_\mathsf{p}(\tau_t)-\bm{\Lambda}_\mathsf{p}(\tau)\right\|^2\cdot\frac{1}{T}\sum_{t=1}^{\lfloor Th \rfloor}\left\|\bm{z}_{\mathsf{p},t-1}\right\|^2+\sup_{h\leq \tau \leq 1-h}\left\|\widehat{\bm{\Lambda}}_\mathsf{p}(\tau_t)-\bm{\Lambda}_\mathsf{p}(\tau)\right\|^2\cdot\frac{1}{T}\sum_{t=\lfloor Th \rfloor+1}^{T-\lfloor Th \rfloor}\left\|\bm{z}_{\mathsf{p},t-1}\right\|^2\\
          &&+\sup_{1-h\leq \tau \leq 1}\left\|\widehat{\bm{\Lambda}}_\mathsf{p}(\tau_t)-\bm{\Lambda}_\mathsf{p}(\tau)\right\|^2\cdot\frac{1}{T}\sum_{t=T-\lfloor Th \rfloor+1}^{T}\left\|\bm{z}_{\mathsf{p},t-1}\right\|^2\\
          &=&O_P\left( 2h(h+(\log T/(Th))^{1/2})^2+(1-2h)c_T^2\right)=O_P\left( (h^{3/2}+(\log T/(Th))^{1/2})^2 \right).
\end{eqnarray*}

For $I_{T,2}$,
\begin{eqnarray*}
    && \frac{1}{T}\sum_{t=1}^{T} \bm{\eta}_t\left(\bm{\eta}_t-\widehat{\bm{\eta}}_{\mathsf{p},t}\right)^\top = \frac{1}{T}\sum_{t=1}^{T} \bm{\eta}_t\bm{z}_{\mathsf{p},t-1}^\top\left(\bm{\widehat{\Lambda}}_\mathsf{p}(\tau_t)-\bm{\Lambda}_\mathsf{p}(\tau_t)\right)^\top \\
      &=&h\cdot \frac{1}{T}\sum_{t=1}^{T}\bm{\eta}_t\bm{z}_{\mathsf{p},t-1}^\top \bm{S}_{T,0}^{-1}(\tau_t) \bm{S}_{T,1}(\tau_t) \bm{A}_\mathsf{p}^{(1),\top}(\tau_t)\\
      &&+\frac{1}{2} h^2 \cdot \frac{1}{T}\sum_{t=1}^{T}\bm{\eta}_t\bm{z}_{\mathsf{p},t-1}^\top \bm{S}_{T,0}^{-1}(\tau_t) \bm{S}_{T,2}(\tau_t) \bm{A}_\mathsf{p}^{(2),\top}(\tau_t)\\
      &&+\frac{1}{T}\sum_{t=1}^{T}\bm{\eta}_t\bm{z}_{\mathsf{p},t-1}^\top \bm{S}_{T,0}^{-1}(\tau_t) \left(\frac{1}{Th}\sum_{s=1}^{T}\bm{z}_{\mathsf{p},s-1}\bm{z}_{s-1}^\top \bm{M}_\mathsf{p}^\top(\tau_s)K\left(\frac{\tau_s-\tau_t}{h}\right) \right)\\
      &&+\frac{1}{T}\sum_{t=1}^{T}\bm{\eta}_t\bm{z}_{\mathsf{p},t-1}^\top \bm{S}_{T,0}^{-1}(\tau_t) \left(\frac{1}{Th}\sum_{s=1}^{T}\bm{z}_{s-1}\bm{\eta}_s^\top K\left(\frac{\tau_s-\tau_t}{h}\right)\right)\\
      &:=&I_{T,3}+I_{T,4}+I_{T,5}+I_{T,6}.
\end{eqnarray*}
By the uniform convergence results stated in Theorem \ref{Theorem2.1}, we replace the weighed sample covariance with its converged value plus the rate $O_P\left((\log T/(Th))^{1/2}\right)$. For $I_{T,3}$, by the fact that $\bm{S}_{T,1}(\tau)=O_P\left( (\log T/(Th))^{1/2} \right)$ for $\tau \in [h,1-h]$,
\begin{eqnarray*}
\left\|I_{T,3}\right\|=O_P\left(T^{-\frac{1}{2}}h^{\frac{3}{2}}+h(\log T/(Th))^{1/2}\right).
\end{eqnarray*}
Similarly,

\begin{eqnarray*}
\|I_{T,4}\|+\|I_{T,5}\|=O_P\left(T^{-\frac{1}{2}}h^2+h^2(\log T/(Th))^{1/2}\right).
\end{eqnarray*}

For $I_{T,6}$, let $\bm{\Sigma}(\tau)=\plim_{T\to \infty}\bm{S}_{T,0}(\tau)$, we have
$$
I_{T,6}=\frac{1}{T}\sum_{t=1}^{T} \bm{\eta}_t\bm{z}_{\mathsf{p},t-1}^\top \bm{\Sigma}^{-1}(\tau_t) \left(\frac{1}{Th}\sum_{s=1}^{T}\bm{z}_{s-1}\bm{\eta}_s^\top K\left(\frac{\tau_s-\tau_t}{h}\right)\right)+O_P\left((Th)^{-1/2}(\log T/(Th))^{1/2}\right).
$$
Similar to the proof of $J_{T,4}$ in Lemma \ref{LemmaB.8}, we can show
$$
\frac{1}{T}\sum_{t=1}^{T} \bm{\eta}_t\bm{z}_{\mathsf{p},t-1}^\top \bm{\Sigma}^{-1}(\tau_t) \left(\frac{1}{Th}\sum_{s=1}^{T}\bm{z}_{s-1}\bm{\eta}_s^\top K\left(\frac{\tau_s-\tau_t}{h}\right)\right)=O_P((Th)^{-1}).
$$
Since $(Th)^{-1}+T^{-\frac{1}{2}}h^{\frac{3}{2}}=o\left(c_T \phi_T\right)$, result (1) follows.

\medskip

\noindent (2). For $\mathsf{p} < p$, we have $\widehat{\bm{\Lambda}}_\mathsf{p}(\tau)-\bm{\Lambda}_{\mathsf{p}}(\tau)=\bm{B}_\mathsf{p}(\tau)+o_P(1)$ uniformly over $\tau \in [0,1]$, where $\bm{B}_\mathsf{p}(\tau)$ is a nonrandom bias term. Since $\widehat{\bm{\eta}}_{\mathsf{p},t}=\bm{\eta}_{t}+\left(\bm{\Lambda}_{\mathsf{p}}(\tau_t)-\widehat{\bm{\Lambda}}_\mathsf{p}(\tau_t)\right)\bm{z}_{\mathsf{p},t-1}+\bm{\Lambda}_{\overline{\mathsf{p}}}(\tau_t)\bm{z}_{\overline{\mathsf{p}},t-1}$, by Theorem \ref{Theorem2.1}, we have
$$
  \text{RSS}(\mathsf{p})= \frac{1}{T}\sum_{t=1}^{T}E\left(\bm{\eta}_{t}^\top\bm{\eta}_{t}\right)+\frac{1}{T}\sum_{t=1}^{T}\mathrm{tr}\left( [\bm{B}_\mathsf{p}(\tau_t),\bm{\Lambda}_{\overline{\mathsf{p}}}(\tau_t)]E\left(\bm{z}_{\mathsf{P},t-1}\bm{z}_{\mathsf{P},t-1}^\top\right)[\bm{B}_\mathsf{p}(\tau_t),\bm{\Lambda}_{\overline{\mathsf{p}}}(\tau_t)]^\top \right)+o_P(1).
$$
Since $[\bm{B}_\mathsf{p}(\tau_t),\bm{\Lambda}_{\overline{\mathsf{p}}}(\tau_t)]\neq 0$ and $E\left(\bm{z}_{\mathsf{P},t-1}\bm{z}_{\mathsf{P},t-1}^\top\right)$ is a positive definite matrix, the result follows.
\end{proof}

\end{document}